\documentclass[11pt]{article}

\usepackage{authblk,amssymb,fullpage,latexsym,amsfonts,amsmath,amsthm}
\usepackage{color}
\usepackage{enumerate}
\usepackage{algorithm}
\usepackage{algpseudocode}
\usepackage{bm}
\usepackage{subfigure} 
\usepackage{mathtools}
\usepackage[titletoc,title]{appendix}
\usepackage[T1]{fontenc}
\usepackage{url}
\usepackage{float}
\usepackage{placeins}
\usepackage[labelfont=bf]{caption}
\usepackage[numbers,sort&compress]{natbib}
\usepackage{dsfont}
\usepackage{booktabs}
\usepackage{enumitem}
\usepackage{endnotes}
\usepackage{bbm}
\usepackage{comment}
\usepackage{xtab}
\usepackage{array}
\usepackage{arydshln}
\usepackage[textsize=tiny]{todonotes}
\usepackage{graphicx}
\usepackage{xcolor}
\usepackage{flafter}
\usepackage{multirow}
\usepackage{tikz}
\usepackage{bibunits}
\usepackage{hyperref}
\usepackage{cleveref}
\usepackage{soul}
\hypersetup{
colorlinks = true,
linkcolor = blue, %Colour of internal links
filecolor = magenta,
urlcolor = blue, %Colour for external hyperlinks
citecolor = blue %Colour of citations
}

\allowdisplaybreaks
\theoremstyle{plain}
\newtheorem{proposition}{Proposition}
\newtheorem{theorem}{Theorem}
\newtheorem{lemma}{Lemma}
\newtheorem{corollary}{Corollary}

\theoremstyle{definition}
\newtheorem{definition}{Definition}
\newtheorem{example}{Example}

\newcommand{\mP}{\mathbb{P}}
\newcommand{\mE}{\mathbb{E}}
\newcommand{\Var}{\mathrm{Var}}

\newcommand{\iid}{ \overset{\mathrm{i.i.d.}}{\sim}}

\newcommand{\uSB}{\hat{u}_\alpha^{\mathrm{SB}}}
\newcommand{\uSBmin}{\hat{u}_{\alpha,\mathrm{min}}^{\mathrm{SB}}}
\newcommand{\uTB}{\hat{u}_\alpha^{\mathrm{TB}}}
\newcommand{\ple}{p_{\scriptscriptstyle \le}}

\newcommand{\parheading}[1]{\par\smallskip\noindent\textbf{#1}\enspace\ignorespaces}

\defaultbibliography{reference}

\begin{document}

\begin{center}
\vspace*{0.2in}
{\LARGE Aggregation of Statistical Evidence under Exchangeability}
\vspace{0.3in}

{\large
Antonin Schrab$^{1}$ \quad
Rajen Shah$^{2}$ \quad
Arthur Gretton$^{3}$ \quad
Ilmun Kim$^{4,*}$
}

\vspace{0.15in}

\begin{tabular}{c}
$^1$Department of Computer Science and Technology, University of Cambridge, UK\\
$^2$Statistical Laboratory, University of Cambridge, UK\\
$^3$Gatsby Computational Neuroscience Unit, University College London, UK\\
$^4$Department of Mathematical Sciences, KAIST, South Korea
\end{tabular}

\def\thefootnote{*}
\footnotetext{Corresponding author: \texttt{ilmunk@kaist.ac.kr}.}
\def\thefootnote{\arabic{footnote}}

\vspace{0.2in}

\today
\end{center}

\begin{abstract} 
We study aggregation of statistical evidence under unknown and potentially complex dependence using group-invariance. Building on permutation-based constructions that treat transformed datasets as exchangeable units, we aggregate evidence across statistics for each transformed dataset and calibrate the resulting aggregates across transformations. We develop a finite-sample power and adaptivity theory for this framework, together with extensions to sequential and data-dependent aggregation that preserve validity. For single-batch aggregation, which uses one collection of transformed datasets for both standardization and calibration, we show that the critical values uniformly improve on deterministic calibrations valid under arbitrary dependence, including Bonferroni correction, while adapting to the unknown dependence structure. We also introduce a sequential alpha-spending version that permits early rejection when evidence is strong, and a two-batch extension that separates standardization from calibration to accommodate learned aggregation rules and reduce computation. Applications to adaptive nonparametric testing and conformal prediction illustrate how these results sharpen existing aggregation methods.
\medskip

\noindent
\emph{Keywords:} aggregation, conformal prediction, exchangeability, multiple testing, permutation test.
\end{abstract}

\section{Introduction}

Modern nonparametric testing and distribution-free inference routinely generate collections of test statistics or p-values whose dependence structure is unknown and often highly non-trivial. Such multiplicity arises naturally when aggregating over tuning-parameter grids, combining multiple statistics to capture complementary aspects of the alternative, splitting data repeatedly for stability, or merging multiple prediction sets in conformal inference. The central challenge in these problems is to aggregate evidence across multiple statistics or p-values while retaining rigorous finite-sample validity and high power.

A large body of existing work addresses this challenge by first converting each statistic into a p-value and then combining these p-values using procedures that are valid under arbitrary dependence \citep[e.g.,][]{ruger1978,rueschendorf1982,meng1994posterior,vovk2020combining,vovk2022admissible}. While such p-value merging methods enjoy universal validity, their calibration is necessarily driven by worst-case dependence scenarios, which restricts the class of admissible merging functions and often leads to overly conservative procedures. An alternative line of work considers max-type aggregation procedures \citep{albert2022adaptive,schrab2023mmd} with data-dependent calibration via Monte Carlo approximation. Although these methods can yield power improvements in practice, their guarantees are approximation-based, and uniform finite-sample type~I error control remains unresolved. More recently, \citep{guo2024rank} propose a rank-transformed subsampling approach for aggregating exchangeable statistics or p-values that is asymptotically tight; however, their validity guarantees are asymptotic and require knowledge of the limiting null distribution of a single test statistic.

In this work, we revisit aggregation through the exchangeability structure induced by group-invariance schemes under the null hypothesis. Our main theoretical object is what we refer to as single-batch (SB) aggregation, a row-wise construction that applies a single collection of randomized datasets both to standardize individual statistics and to calibrate the aggregated evidence. This construction is closely related to nonparametric combination~\citep{pesarinsalmaso2010} and Westfall--Young-type permutation methods~\citep{westfall1993resampling}. More broadly, the exchangeability structure underlying such aggregation methods has been used to establish finite-sample validity~\citep{shah2018goodness,paik2025integral}. Building on these foundations, we develop a finite-sample theory of power and dependence adaptivity for exchangeability-based aggregation. Our analysis shows that calibrating directly on the realized exchangeable array can yield less conservative thresholds than deterministic worst-case calibrations while preserving finite-sample validity. We also develop sequential and two-batch extensions that broaden the scope of the framework.

\subsection{Contributions}
Our main contributions are detailed as follows.
\begin{itemize}

	\item We formalize single-batch aggregation and analyze its power. We show that its critical value strictly dominates any deterministic calibration valid under arbitrary p-value dependence (\Cref{thm:single-batch power}), and that it adapts to the dependence structure induced by the group-invariance scheme (\Cref{prop: single-batch perfect rank alignment} and \Cref{prop:asymptotic-adaptation}). This can yield significant power improvements over classical p-value merging methods calibrated for worst-case dependence.

	\item We develop a sequential alpha-spending extension of SB minimum aggregation (\Cref{sec:sequential-alpha-spending-aggregation}), maintaining non-asymptotic level control while allowing rejection as soon as early ordered statistics provide strong evidence.

	\item We introduce a two-batch extension of the SB framework (\Cref{sec:two-batch-aggregation}) as a principled mechanism for separating standardization from calibration. This construction enables aggregation rules to be learned from a reference batch while preserving finite-sample validity on the testing batch, and offers substantial computational advantages in settings such as conformal prediction (\Cref{sec:conformal-prediction}).

	\item We demonstrate the scope of the framework through applications to adaptive hypothesis testing and conformal prediction, where our methods recover and extend recent results with simpler arguments, sharper power guarantees, and improved empirical performance.
\end{itemize}

\subsection{Related work}
Several lines of work have addressed the problem of aggregating multiple tests under unknown dependence.

\parheading{P-value merging under dependence.}
Classical p-value combination methods, such as Fisher's method~\citep{fisher1925statistical} and Stouffer's method~\citep{stouffer1949american}, are standard tools for combining evidence across multiple tests. Their usual calibrations, however, typically rely on independence assumptions among the input p-values and may fail to control type~I error under arbitrary dependence. In parallel, a class of universally valid p-value combination methods that remain valid under arbitrary dependence has been established~\citep[e.g.,][]{rueschendorf1982,meng1994posterior,ruger1978} and more recently revisited and systematically characterized by \citep{vovk2020combining,vovk2022admissible}, among others. While these methods enjoy broad validity, their calibration is necessarily driven by worst-case dependence scenarios and can therefore be conservative. Recent work by \citep{gasparin2025combining} shows that p-value aggregation can be strictly improved under exchangeability of the input p-values, by relaxing symmetry constraints and processing p-values sequentially.  In contrast to these approaches, which operate at the level of p-values and are calibrated for broad dependence classes, we leverage the data-level exchangeable structure induced by group-invariance under the null. This allows our procedures to adapt to the specific dependence structure of the transformed statistics and can yield strict power improvements over methods calibrated at the p-value level.

\parheading{Max-type aggregation with Monte Carlo calibration.}
A different line of work focuses on what we refer to as max-type aggregation procedures with data-dependent calibration via Monte Carlo approximation \citep{baraud2003adaptive,fromont2006,fromont2013,albert2015tests,albert2022adaptive,schrab2022ksd,schrab2025unified,schrab2023mmd,schrab2022efficient}. These methods construct a global test by rejecting whenever at least one marginal statistic exceeds an oracle threshold designed to tightly control the type~I error rate. Since this oracle threshold either depends on the unknown joint distribution of the statistics or is difficult to compute in closed form, it is approximated in practice via Monte Carlo calibration. While this approach can yield non-conservative procedures, the resulting finite-sample behavior depends critically on how the Monte Carlo calibration is implemented. In particular, the calibration step uses a separate batch of transformed statistics to approximate the rejection probability but does not enforce joint exchangeability between the calibration batch and the original data. This can result in a type~I error rate that exceeds the nominal level~$\alpha$ for finite $B$, as illustrated in \Cref{sec: maxT aggregation}. As further discussed in \Cref{sec: deferred maxT comparison}, the procedure can equivalently be viewed as aggregating p-values by taking their minimum and estimating a critical value via Monte Carlo approximation. The SB and TB constructions studied here avoid this finite-$B$ approximation issue by calibrating directly across exchangeable testing rows (resulting in valid type I error control), while accommodating general merging functions beyond the minimum p-value.

\parheading{Nonparametric combination and permutation-based constructions.}
Related ideas also appear in earlier work on nonparametric combination (NPC) methods and permutation-based tests. The NPC framework, systematized by \citep[][Chapter~4.2]{pesarinsalmaso2010}, combines dependent permutation tests by applying a combining function to permutation distributions of individual statistics, and has been applied to replicated designs \citep{solmi2014combining} as well as to testing global null hypotheses defined as intersections of multiple component hypotheses \citep{caughey2017nonparametric}. More recently, \citep{shah2018goodness} consider goodness-of-fit tests that aggregate multiple standardized statistics computed from the same data, with validity ensured by exchangeability of replicas generated under the null. Closely related ideas appear in \citep{paik2025integral}, who aggregate statistics across multiple function classes and calibrate the aggregate using a single collection of permutations. Thus, the exchangeability-based validity argument for SB aggregation follows the same principle as earlier permutation-combination work. Our contribution is instead to analyze this construction from the perspective of finite-sample power and dependence adaptivity, and to extend it through sequential spending and a two-batch construction that enables data-dependent aggregation while retaining non-asymptotic validity.

\parheading{Rank-transformed subsampling.}
\citep{guo2024rank} study rank-transformed subsampling methods for aggregating exchangeable statistics or p-values, and establish asymptotic tightness under suitable regularity conditions. Their approach is complementary to ours in that it targets subsampling-based aggregation and asymptotic regimes, whereas we focus on aggregation under group-invariance and derive non-asymptotic validity and power guarantees.

\subsection{Organization}

The remainder of the paper is organized as follows. \Cref{sec: background and prior work} introduces the setup and reviews existing approaches to aggregating multiple tests, including p-value merging methods and max-type aggregation with Monte Carlo calibration. \Cref{sec:single-batch-aggregation} presents the single-batch aggregation procedure, establishes its finite-sample validity, and develops a detailed power analysis, including adaptivity to dependence among p-values and several refined special cases. \Cref{sec:sequential-alpha-spending-aggregation} presents the sequential extension of SB minimum aggregation. \Cref{sec:two-batch-aggregation} introduces the two-batch aggregation procedure, establishes its finite-sample validity, and highlights settings in which separating 
standardization 
from calibration is essential. \Cref{sec: applications} presents examples on adaptive hypothesis testing and conformal prediction, which highlight the scope and practical implications of the proposed methods. \Cref{sec: simulations} presents numerical experiments that corroborate our theoretical findings and demonstrate the empirical performance of the proposed methods before we conclude in \Cref{sec: discussion}. Additional technical results and omitted proofs are collected in the supplementary material.

\subsection{Notation} 
For an integer $n \ge 1$, let $[n] \coloneqq \{1,2,\ldots,n\}$ and $[n]_0 \coloneqq \{0,1,2,\ldots,n\}$. For a real number $x$, we denote by $\lfloor x \rfloor$ and $\lceil x \rceil$ its floor and ceiling, respectively. A random variable $X$ is said to be \emph{super-uniform} if $\mP(X \le t) \le t$ for all $t \in [0,1]$.
For $q\in\mathbb R$, we denote by $\mathrm{Quantile}_{q}\{X_0,X_1,\ldots,X_B\}$ the empirical $q$-quantile of $X_0,X_1,\ldots,X_B$, defined as 
\begin{align*}
	\mathrm{Quantile}_{q}\{X_0,X_1,\ldots,X_B\}
	\coloneqq
	\inf\biggl\{
	t\in\mathbb R:
	\frac1{B+1}\sum_{i=0}^B
	\mathds{1}(X_i\le t)\ge q
	\biggr\},
\end{align*}
with the conventions $\mathrm{Quantile}_{q}=-\infty$ for $q\le0$ and $\mathrm{Quantile}_{q}=+\infty$ for $q>1$. For a sequence of random variables $(Y_n)$ and a random variable $Y$, we write $Y_n \overset{d}{\longrightarrow} Y$ to denote convergence in distribution and $Y_n \overset{p}{\longrightarrow} Y$ to denote convergence in probability. For a given set $A$, we use $|A|$ to denote its cardinality. For two real positive sequences $(a_n)$ and $(b_n)$, we write $a_n \lesssim b_n$ if there exists a constant $C>0$ such that $a_n \leq C b_n$. If both $a_n \lesssim b_n$ and $b_n \lesssim a_n$ hold, we write $a_n \asymp b_n$.

\section{Setup and existing aggregation methods} \label{sec: background and prior work}

This section introduces the group-invariance setup and transformed arrays used throughout the paper. We then review the classical permutation test, which we use as an umbrella term for invariance-based testing procedures induced by group transformations. Although the term “randomization test” is also common in the literature \citep[e.g.,][page 832]{lehmann2022}, we use the terminology “permutation test” throughout for brevity. Finally, we survey existing approaches for aggregating multiple tests, including methods that combine p-values via averaging or order statistics \citep[e.g.,][]{vovk2020combining,vovk2022admissible,gasparin2025combining}, as well as max-type aggregation procedures based on Monte Carlo calibration \citep[e.g.,][]{fromont2006,fromont2013,albert2022adaptive,schrab2023mmd,schrab2025optimal}.

\subsection{Group-invariance and transformed arrays}
\label{subsec:problem setup}

We work within a general group-invariance framework. Let $\mathcal{G}$ be a finite group of transformations acting on the sample space, and let $\mathbf{X}$ denote the observed data. The classical group-invariance hypothesis, also referred to as the randomization hypothesis \citep[][Definition~17.2.1]{lehmann2022}, is stated as follows.

\begin{definition}[Group-invariance hypothesis]
\label{def:group-invariance hypothesis} Under the null hypothesis, the distribution of $\mathbf{X}$ is invariant under the action of $\mathcal{G}$; that is, for every $g\in\mathcal{G}$, the random variables $g(\mathbf{X})$ and $\mathbf{X}$ have the same distribution.
\end{definition}

This framework encompasses a broad class of inference problems and underlies many distribution-free methods in statistics. In permutation testing, for example, $\mathcal{G}$ may consist of all permutations of the pooled sample in two-sample or $k$-sample problems, of sign-flip transformations in paired or symmetric designs, or of permutations that shuffle one variable while holding the other fixed in tests of independence~\citep[e.g.,][]{fisher1935design,pesarinsalmaso2010, good2005permutation,lehmann2022}. More generally, conditional randomization tests arise when $\mathcal{G}$ is induced by resampling a subset of features conditional on the others~\citep[e.g.,][]{candes2018panning,berrett2020conditional}. Similar exchangeability arguments also underpin conformal inference, where they justify rank-based calibration of conformity scores~\citep[e.g.,][]{vovk2005algorithmic,lei2018distribution,angelopoulos2024theoretical}.

Let $T^1,\ldots,T^K$ be real-valued test statistics computed from the data $\mathbf{X}$, and consider the problem of aggregating the evidence provided by these statistics into a single test of the group-invariance hypothesis. Let $g_0 \coloneqq \mathrm{id}$ and let $g_1,\ldots,g_B\in\mathcal{G}$ denote a collection of transformations. Unless stated otherwise, we assume that the transformations are chosen so that the transformed data $g_0(\mathbf{X}),g_1(\mathbf{X}),\ldots,g_B(\mathbf{X})$ are \emph{exchangeable} under the group-invariance hypothesis. This assumption includes, as special cases, the canonical Monte Carlo setting in which $g_1,\ldots,g_B$ are i.i.d.\ draws from the uniform distribution on $\mathcal{G}$ conditional on $\mathbf{X}$, as well as sampling without replacement from $\mathcal{G}\setminus\{\mathrm{id}\}$.

Define the statistics computed from the transformed data by
\begin{equation} \label{eq: transformed statistics}
	T_b^k \coloneqq T^k\bigl(g_b(\mathbf{X})\bigr),
	\quad b=0,1,\ldots,B,\ k=1,\ldots,K.	
\end{equation}
Under the group-invariance hypothesis, the collection of row vectors $(T_b^1,\ldots,T_b^K)$ is exchangeable in the index $b=0,1,\ldots,B$. This joint exchangeability is the basic structural fact behind row-wise permutation aggregation. Our analysis below asks how much power is gained by calibrating the aggregate on this realized exchangeable array, and how the construction can be extended to data-dependent aggregation rules.

\subsection{Classical permutation tests}
\label{sec: permutation test}

We begin by reviewing the classical permutation test for assessing the group-invariance hypothesis using a single test statistic $T$. The permutation test dates back to the seminal work of \citep{fisher1935design,pitman1937significance} and has since been extensively studied; see, for example, \cite{pesarinsalmaso2010} and \cite[][Chapter 17]{lehmann2022}. 

Given a test statistic $T$, designed such that larger values indicate stronger evidence against the null hypothesis, the permutation p-value corresponding to the observed statistic $T_0$ is defined as
\begin{align*}
	p(T_{0}) = \frac{1}{B+1} \sum_{b=0}^B \mathds{1}(T_{b} \geq T_{0}),
\end{align*}
where $T_0$ denotes the statistic computed from the original data, and $T_1, \ldots, T_B$ denote the statistics computed from the transformed datasets as in \eqref{eq: transformed statistics}. 
The permutation test rejects the null hypothesis when $p(T_{0}) \leq \alpha$. The permutation test can equivalently be expressed in terms of the test statistic itself. In particular, rejecting the null hypothesis when $p(T_{0}) \leq \alpha$ is equivalent to rejecting when
\begin{align*}
	T_{0} > \mathrm{Quantile}_{1-\alpha} \bigl\{T_{0}, T_{1}, \ldots, T_{B}\bigr\} = T_{(\lceil (1-\alpha)(B+1) \rceil)},
\end{align*}
where $T_{(k)}$ denotes the $k$-th smallest order statistic of $T_0, T_1, \ldots, T_B$. See e.g., \Cref{Lemma: permutation p-value} in \Cref{sec: useful facts on quantiles}. This formulation proves convenient in what follows, especially when we compare and calibrate aggregated statistics. 

Under exchangeability of $T_0,T_1,\ldots,T_B$, the permutation test has rejection probability at most $\alpha$ in finite samples; see, for example, \citep{romano2005exact,hemerik2018exact,ramdas2023permutation}, and \Cref{Lemma: permutation p-value}. As emphasized earlier, our primary interest lies in aggregating multiple test statistics $T_0^1, \ldots, T_0^K$ into a single test for evaluating the group-invariance hypothesis, while maintaining rigorous type~I error control and favorable power properties. We next review existing approaches to this problem. 

\subsection{P-value merging under arbitrary dependence}
\label{sec:merging}
\label{sec:p-value-merging}
In much of the existing literature on multiple testing and aggregation, individual test statistics are first converted into p-values, which are then combined to form a single global test. We review such p-value-based aggregation methods under arbitrary dependence.
Let $p_1,\ldots,p_K$ denote $K$ individual p-values, each assumed to be super-uniform.  A \emph{p-merging function} is a measurable mapping $f \colon [0,1]^K \to [0,1]$ such that, for any collection of input p-values, the aggregated random variable $f(p_1,\ldots,p_K)$ is itself super-uniform.  Equivalently, a p-merging function combines multiple p-values into a single valid p-value while guaranteeing type~I error control under \emph{arbitrary} dependence structures among the inputs.

Classical p-value merging procedures combine marginal p-values through deterministic rules that remain valid under arbitrary dependence, such as order-statistic merging and generalized-mean merging. Two prominent examples are the O-family~\citep{ruger1978} and the M-family~\citep{vovk2020combining}. These procedures are necessarily calibrated for worst-case dependence, which can lead to conservativeness and reduced power in realistic settings. Details on the O- and M-families, including their precise formulas, sharp constants, and admissibility properties, are recalled in \Cref{sec:p-merging-families}.

\subsection{Max-type aggregation with Monte Carlo calibration}
\label{sec: maxT aggregation}

We review a class of max-type (MaxT) aggregation procedures, following \citep{albert2022adaptive} and \citep{schrab2023mmd}, with earlier roots in \cite{baraud2003adaptive,fromont2006,fromont2013,albert2015tests}. For a single statistic $T^k$, the permutation test rejects the null hypothesis whenever
\begin{align*}
T_{0}^k > \mathrm{Quantile}_{1-\alpha}\bigl\{T_{0}^k, T_{1}^k, \ldots, T_{B}^k\bigr\}.
\end{align*}
To aggregate multiple statistics $T_0^1,\ldots,T_0^K$, the procedures of
\cite{albert2022adaptive,schrab2023mmd} construct a MaxT test that rejects the null hypothesis if at least one statistic exceeds its marginal permutation threshold after inflating the nominal significance level by a data-dependent factor $\tilde{u}_\alpha$, namely,
\begin{align} \label{Eq. max-type test}
	\max_{k \in [K]}
	\Bigl\{
	T_{0}^k - \mathrm{Quantile}_{1-\tilde{u}_\alpha}
	\bigl\{T_{0}^k,\ldots,T_{B}^k\bigr\}\Bigr\} > 0.
\end{align}
The correction factor $\tilde{u}_\alpha$ is intended to approximate the largest inflation of the nominal level that preserves type~I error control. In practice, \citep{albert2022adaptive} and \citep{schrab2023mmd} estimate this quantity via Monte Carlo calibration
\begin{align}
\label{eq: estimated u alpha}
\tilde{u}_\alpha
\coloneqq
\sup\biggl\{
u \in (0,1)
:
\frac{1}{B}
\sum_{b=1}^B
\mathds{1}
\biggl(
\max_{k \in [K]}
\Bigl\{
T_{B+b}^k
-
\mathrm{Quantile}_{1-u}
\bigl\{T_{0}^k,\ldots,T_{B}^k\bigr\}
\Bigr\}
> 0
\biggr)
\leq \alpha
\biggr\},
\end{align}
where $\tilde{u}_\alpha$ is set to zero if the supremum is taken over an empty set.\footnote{Prior work allows non-uniform non-negative weights $w_k$ in $\mathrm{Quantile}_{1-u w_k}\{T_0^k,\ldots,T_B^k\}$. We take $w_k=1$ here for notational simplicity.} Operationally, this procedure uses an additional batch of transformed statistics $T_{B+1}^k,\ldots,T_{2B}^k$ to Monte Carlo-approximate the population-level rejection probability, which may be viewed as an additional calibration step for type~I error control. While the supremum in \eqref{eq: estimated u alpha} is typically computed via bisection, we derive in \Cref{prop: u_alpha expression} a closed-form expression for $\tilde{u}_\alpha$, eliminating the need for iterative search.

Despite its practical appeal, the Monte Carlo calibration in \eqref{eq: estimated u alpha} does not in general guarantee finite-sample level control. This can already be seen in the single-statistic case $K=1$. Suppose that $T_0,T_1,\ldots,T_{2B}$ are exchangeable and almost surely distinct, and let $\tilde u_\alpha$ be the Monte Carlo critical value obtained from \eqref{eq: estimated u alpha} with $K=1$. Then the resulting test satisfies
\begin{align*}
\mP\!\left(
T_0 >
\mathrm{Quantile}_{1-\tilde u_\alpha}
\{T_0,\ldots,T_B\}
\right)
=
\frac{\lfloor B\alpha \rfloor+1}{B+1};
\end{align*}
see \Cref{prop:failure of size control without bisection}. This quantity can exceed $\alpha$ for finite $B$, although the discrepancy vanishes as $B\to\infty$; the empirical level study in \Cref{fig: figure_1} illustrates this finite-$B$ size inflation. Thus, the Monte Carlo step is only an approximation, not a finite-sample rank calibration. The SB and TB procedures below avoid this issue by calibrating directly over exchangeable testing indices.

\section{Single-batch aggregation}
\label{sec:single-batch-aggregation}

We begin with single-batch aggregation, which uses the same batch of transformations for both standardization and calibration. This section is organized as follows. We formulate the SB procedure in \Cref{sec:SB-procedure}, establish its finite-sample validity in \Cref{sec:SB-finite-sample-validity}, and study its power properties in \Cref{sec:SB-power-analysis}. In \Cref{sec: SB adaptivity}, we investigate its adaptivity to dependence among the test statistics. \Cref{sec: SB minimum merging function} provides a detailed treatment of the minimum merging function as a representative special case. Additional material on data-driven selection of merging functions within the SB framework is deferred to \Cref{sec: data-driven aggregation of merging functions}.

\subsection{Procedure}
\label{sec:SB-procedure}

\begin{algorithm}[t]
\caption{Single-Batch Aggregation (SB)}
\label{alg:SB}
\begin{algorithmic}[1]
\Require data $\mathbf{X}$; statistics $T^1,\ldots,T^K$; transformations $g_0,\ldots,g_B$ ($g_0=\mathrm{id}$); merging function $f$; level $\alpha$.
\State \textbf{Transformed statistics:} For each $b\in[B]_0$ and $k\in[K]$, compute $T_b^k \gets T^k(g_b(\mathbf{X}))$.
\State \textbf{Standardization via p-values:} For $b\in[B]_0$ and $k\in[K]$, compute the permutation p-value
\begin{align*}
p\big(T_b^k\big) \gets \frac{1}{B+1}\sum_{i=0}^B \mathds{1}\big(T_i^k \ge T_b^k\big).
\end{align*}
\State \textbf{Aggregation row-wise:} For each $b\in[B]_0$, set
\begin{align*}
f_b \gets f\bigl(p\big(T_b^1\big),\ldots,p\big(T_b^K\big)\bigr).
\end{align*}
\State \textbf{Calibration:} Compute the p-value
\begin{align*}
p_{\mathrm{SB}} \gets \frac{1}{B+1}\sum_{b=0}^B \mathds{1}\big(f_b \le f_0\big).
\end{align*}
\State \textbf{Decision rule:} Reject if $p_{\mathrm{SB}}\le\alpha$.
\end{algorithmic}
\end{algorithm}

We formulate the SB aggregation procedure, which has its roots in \cite{pesarinsalmaso2010,westfall1993resampling,solmi2014combining,caughey2017nonparametric,shah2018goodness,paik2025integral}. The central construction is to aggregate evidence across multiple test statistics while preserving exchangeability by applying the same set of transformations to all statistics simultaneously. While prior work has primarily focused on validity, a systematic analysis of power and adaptivity properties has remained largely unexplored. Moreover, some earlier formulations \citep[e.g.,][]{schrab2023mmd} rely on Monte Carlo permutation p-values that do not fully preserve joint exchangeability across transformed copies, which can lead to slight finite-sample liberality as pointed out in \Cref{prop:failure of size control without bisection}. Other approaches \citep[e.g.,][]{baraud2003adaptive,fromont2006,fromont2013,albert2015tests,albert2022adaptive} assume access to an oracle critical value, but such procedures are not practically implementable, requiring knowledge of the unknown joint null distribution. By contrast, the SB formulation enforces rank-based calibration across all transformations and is exactly implementable, requiring neither approximation nor oracle knowledge.

Consider $B+1$ transformations $g_0,\ldots,g_B$, with $g_0$ being the identity. For each statistic $T^k$ and each transformation index $b\in[B]_0$, compute
\begin{align*}
T_b^k \coloneqq T^k\bigl(g_b(\mathbf{X})\bigr).
\end{align*}
Using these values, define the permutation p-value
\begin{align} \label{Eq: individual p-value}
p(T_b^k)
\coloneqq
\frac{1}{B+1}\sum_{i=0}^B \mathds{1}\!\bigl(T_i^k \ge T_b^k\bigr),
\qquad b\in[B]_0,\ k\in[K],	
\end{align}
which ranks each $T_b^k$ among $\{T_0^k,\ldots,T_B^k\}$.

Let $f\colon [0,1]^K \to\mathbb{R}$ be a function that aggregates $K$ permutation p-values into a single number. No structural assumptions on $f$ are required for finite-sample validity of the SB procedure. For convenience of interpretation, however, we assume throughout that smaller values of $f$ indicate stronger evidence against the null hypothesis, consistent with the interpretation of p-values. Typical examples include the minimum, the average, and the median of p-values. For each transformation index $b$, aggregate the p-values computed from the corresponding transformed dataset via 
\begin{align*}
f_b \coloneqq f\bigl(p(T_b^1),\ldots,p(T_b^K)\bigr),
\qquad b\in[B]_0.
\end{align*}
Crucially, the statistics $T_b^1,\ldots,T_b^K$ are all computed under the same transformation $g_b$. Thus, the aggregation is performed \emph{row-wise} across statistics sharing a common transformation, inducing a coupling that preserves exchangeability. The SB aggregation test then compares the aggregated value $f_0$ (corresponding to the original data) to its transformed counterparts:
\begin{align*}
p_{\mathrm{SB}}
\coloneqq
\frac{1}{B+1}\sum_{b=0}^B \mathds{1}\bigl(f_b \le f_0\bigr),
\end{align*}
and rejects the null hypothesis whenever $p_{\mathrm{SB}}\le\alpha$. The SB aggregation procedure is summarized in \Cref{alg:SB}; a schematic illustration is deferred to \Cref{fig:diagram-SB} in \Cref{sec: diagrams}.

\subsection{Finite-sample validity}
\label{sec:SB-finite-sample-validity}

The finite-sample validity of the SB aggregation test follows directly from exchangeability. Under the group-invariance hypothesis, the row vectors
\begin{align*}
(T_0^1,\ldots,T_0^K),\ldots,(T_B^1,\ldots,T_B^K)
\end{align*}
are exchangeable, since each row is obtained by applying the same transformation to the data. The column-wise p-value transformation preserves this exchangeability in the sense that permuting the rows of the statistics matrix induces the same permutation of the rows of the p-value matrix. Applying the merging function row-wise therefore yields aggregated values $(f_0,\ldots,f_B)$ that are also exchangeable. Consequently, the rank-based p-value $p_{\mathrm{SB}}$ is super-uniform under the null, and the SB aggregation test controls the type~I error rate at level~$\alpha$ in finite samples. We formalize this statement in the following theorem.
\begin{theorem}
\label{prop: SB level}
Under the group-invariance hypothesis, the SB aggregation test satisfies
\begin{align*}
\mP\bigl(p_{\mathrm{SB}} \le \alpha\bigr) \le \alpha
\quad \text{for all } \alpha \in (0,1) \text{ and } B \ge 1.
\end{align*}
Moreover, if $f_0,f_1,\ldots,f_B$ are distinct with probability one, then
\begin{align*}
\mP\bigl(p_{\mathrm{SB}} \le \alpha\bigr)
=
\frac{\lfloor (B+1)\alpha \rfloor}{B+1}.
\end{align*}
\end{theorem}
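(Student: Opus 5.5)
The plan is to reduce the statement to the classical permutation p-value result (\Cref{Lemma: permutation p-value}) applied to the aggregated values $f_0,\ldots,f_B$. The argument has two parts: first show that $(f_0,\ldots,f_B)$ is exchangeable, then apply a rank argument.

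\textbf{Exchangeability of the aggregated values.} Write $\mathbf{T}$ for the $(B+1)\times K$ matrix with rows $(T_b^1,\ldots,T_b^K)$. By the assumption in \Cref{subsec:problem setup}, the transformed datasets $g_0(\mathbf{X}),\ldots,g_B(\mathbf{X})$ are exchangeable under the group-invariance hypothesis, so the rows of $\mathbf{T}$ are exchangeable. Let $\Phi$ denote the map $\mathbf{T}\mapsto(f_0,\ldots,f_B)$, which first forms the column-wise permutation p-values \eqref{Eq: individual p-value} and then applies $f$ row-wise. I will verify that $\Phi$ is permutation-equivariant: for every permutation $\sigma$ of $[B]_0$,
\begin{align*}
\Phi(\mathbf{T}_{\sigma}) = (\Phi(\mathbf{T}))_{\sigma},
\end{align*}
where the subscript $\sigma$ means reordering rows by $\sigma$. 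This holds because $p(T_b^k)=\frac1{B+1}\sum_{i}\mathds 1(T_i^k\ge T_b^k)$ depends on the column $k$ only through the multiset $\{T_0^k,\ldots,T_B^k\}$, which is unchanged when rows are permuted, together with the entry $T_b^k$ itself. Hence $(f_{\sigma(0)},\ldots,f_{\sigma(B)})=\Phi(\mathbf T_\sigma)\overset{d}{=}\Phi(\mathbf T)=(f_0,\ldots,f_B)$, so the $f_b$ are exchangeable. No assumption on $f$ beyond measurability is needed.

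\textbf{Rank argument.} Set $S_b\coloneqq -f_b$. Then $p_{\mathrm{SB}}=\frac1{B+1}\sum_b\mathds 1(S_b\ge S_0)$, which is exactly the permutation p-value of \Cref{sec: permutation test} computed from the exchangeable sequence $S_0,\ldots,S_B$. \Cref{Lemma: permutation p-value} then gives super-uniformity, and so $\mP(p_{\mathrm{SB}}\le\alpha)\le\alpha$. To keep the argument self-contained, I would also sketch the direct route. Let $R_b=\sum_i\mathds 1(f_i\le f_b)$. By exchangeability, $\mP(p_{\mathrm{SB}}\le\alpha)=\frac1{B+1}\mE\sum_b\mathds 1(R_b\le (B+1)\alpha)$. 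Deterministically, at most $\lfloor (B+1)\alpha\rfloor$ indices $b$ satisfy $R_b\le\lfloor(B+1)\alpha\rfloor$; with ties this still holds, since if $m$ indices qualify then the largest of their values has $R\ge m$.

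\textbf{Exact level under distinctness.} When the $f_b$ are almost surely distinct, $(R_0,\ldots,R_B)$ is a permutation of $\{1,\ldots,B+1\}$. Hence exactly $\lfloor(B+1)\alpha\rfloor$ indices satisfy $R_b\le(B+1)\alpha$, and the same averaging identity gives equality, $\mP(p_{\mathrm{SB}}\le\alpha)=\lfloor(B+1)\alpha\rfloor/(B+1)$.

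The only step needing real care is the equivariance check. One must make sure that the standardization step uses the full column including row $0$, so that it is a symmetric function of the rows. Once that is verified, the rest is the standard rank counting.
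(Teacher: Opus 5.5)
Your proposal is correct and follows the paper's proof. The paper likewise obtains exchangeability of $(f_0,\ldots,f_B)$ from the row-wise, permutation-equivariant construction, invokes \Cref{Lemma: permutation p-value}, and uses distinctness to make $p_{\mathrm{SB}}$ uniform on the grid $\{1/(B+1),\ldots,1\}$; your direct rank-counting sketch, including the tie-handling step, is a valid self-contained substitute for citing the lemma.
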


We stress again that the above validity result holds for any choice of the merging function $f$. In particular, the p-value transformation in \Cref{alg:SB} is not essential for finite-sample validity, and $f$ may be applied directly to the collection of statistics $\{T_b^k\}_{k\in[K]}$ computed from each transformed dataset \citep{biggs2023mmdfuse,ribero2026regularized,zhou2026dual}. Nevertheless, working with p-values provides a natural standardization across statistics, which reduces sensitivity to differences in scale. As alternative forms of standardization, one may also studentize each statistic prior to aggregation \citep[e.g.,][]{shah2018goodness}, or transform each statistic using its null distribution function when the (asymptotic) null distribution is known \citep[e.g.,][]{guo2024rank}.

More generally, the validity result extends to procedures that break ties in the marginal statistics $T_b^k$ and/or in the merged statistics $f_b$ using auxiliary variables, provided that augmenting each row with these variables preserves exchangeability. This includes standard randomized constructions, such as lexicographical tie-breaking with i.i.d.~uniform auxiliary variables. The same principle also accommodates deterministic, data-dependent choices of the auxiliary variables, enabling tie-breaking rules that exploit the structure of the aggregated statistics while preserving finite-sample validity. Concrete constructions and their power implications are deferred to \Cref{sec:tie-breaking}.

For later use, we record the following equivalent threshold representation of the SB decision rule, which rejects when
\begin{align}
\label{eq:SB_statistic_view}
	f_0
	<
	-\mathrm{Quantile}_{1-\alpha}\bigl\{-f_0,-f_1,\ldots,-f_B\bigr\}
	\coloneqq
	\uSB.
\end{align}
To facilitate several subsequent arguments, we provide an equivalent characterization of $\uSB$ in terms of a \emph{supremum-based} empirical quantile.
\begin{proposition}
\label{prop: single-batch equivalence}
The SB threshold in \eqref{eq:SB_statistic_view} admits the equivalent representation
\begin{align}
\label{eq: u_alpha general expression any function}
\uSB = 
\sup \biggl\{u \in \mathbb{R} : \frac{1}{B+1} \sum_{b=0}^B \mathds{1}\bigl(
f_b \leq u \bigr) \leq \alpha  \biggr\}.
\end{align}
\end{proposition}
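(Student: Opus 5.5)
Set $n \coloneqq B+1$, $F(u) \coloneqq \frac{1}{n}\sum_{b=0}^B \mathds{1}(f_b \le u)$ for the empirical CDF of the merged values, and $G(t) \coloneqq \frac{1}{n}\sum_{b=0}^B \mathds{1}(-f_b \le t)$ for that of the negated values. The plan is to compute both sides of \eqref{eq: u_alpha general expression any function} as order statistics of $f_0,\ldots,f_B$ and check that they agree. Write $f_{(1)}\le\cdots\le f_{(n)}$ for the order statistics, with the conventions $f_{(0)}=-\infty$ and $f_{(n+1)}=+\infty$. Throughout we use $\alpha\in(0,1)$, so $q=1-\alpha\in(0,1)$ and the boundary conventions for $\mathrm{Quantile}_q$ are not needed.

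\textbf{Left-hand side.} Since $G$ is a right-continuous nondecreasing step function with jumps at $-f_{(j)}$, the infimum defining $\mathrm{Quantile}_{1-\alpha}\{-f_0,\ldots,-f_B\}$ is attained. It equals the $m$-th smallest of the $-f_b$, where $m=\lceil (1-\alpha)n\rceil$. The $m$-th smallest of the $-f_b$ is $-f_{(n-m+1)}$, so
\[
\uSB = f_{(n-m+1)}.
\]
This is the same order-statistic identity recorded in \Cref{Lemma: permutation p-value}. Ties are harmless, because the $m$-th order statistic of a multiset is well defined and negation reverses the order.

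\textbf{Right-hand side.} Let $S\coloneqq\{u : F(u)\le\alpha\}$. Since $F$ is nondecreasing, $S$ is a down-set (an interval $(-\infty,\cdot)$), and $F(u)\le\alpha$ holds exactly when fewer than $\lfloor \alpha n\rfloor +1$ of the $f_b$ are $\le u$. With $r\coloneqq\lfloor\alpha n\rfloor+1$, this means $S=(-\infty, f_{(r)})$, and hence $\sup S = f_{(r)}$. If $r=n+1$ then $S=\mathbb{R}$ and the supremum is $+\infty$. That case cannot occur, since $\alpha<1$ gives $\lfloor \alpha n\rfloor\le n-1$.

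\textbf{Matching the indices.} It remains to show $n-m+1=r$, that is, $n-\lceil(1-\alpha)n\rceil = \lfloor \alpha n\rfloor$. This follows from $\lceil n - x\rceil = n-\lfloor x\rfloor$ for integer $n$, applied with $x=\alpha n$.

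The main obstacle is bookkeeping at ties and at the jump points. One must verify that the infimum in the quantile is attained (right-continuity of $G$) and that the supremum in \eqref{eq: u_alpha general expression any function} is the not-attained left endpoint $f_{(r)}$. The second point is the key one: at $u=f_{(r)}$ we have $F(u)\ge r/n>\alpha$, while for $u<f_{(r)}$ we have $F(u)\le (r-1)/n\le\alpha$. Both hold even with ties, because tied values are counted together and the bounds concern the count at the $r$-th order statistic.
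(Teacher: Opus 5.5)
Your proof is correct, but it takes a different route from the paper's. The paper never evaluates either side as an order statistic. It expands $\mathrm{Quantile}_{1-\alpha}\{-f_0,\ldots,-f_B\}$ as an infimum and complements the indicator via $\mathds{1}(-f_b\le t)=1-\mathds{1}(-f_b>t)$. It then reflects $u=-t$ to turn $-\inf$ into $\sup$, which gives $\sup\{u:\frac{1}{B+1}\sum_b\mathds{1}(f_b<u)\le\alpha\}$. Finally it invokes \Cref{Lemma: sup equivalence} to replace the strict inequality $f_b<u$ by $f_b\le u$.

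You instead compute both sides explicitly. The quantile becomes the order statistic $Z_{(\lceil qn\rceil)}$; that identity is \Cref{Lemma: quantile 4} rather than \Cref{Lemma: permutation p-value}, but you verify it directly, so the citation is not load-bearing. The supremum becomes the unattained endpoint $f_{(\lfloor\alpha n\rfloor+1)}$. You then match the indices through $\lceil n-x\rceil=n-\lfloor x\rfloor$.

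The paper's chain is purely set-level and avoids index and tie bookkeeping. Yours is more computational, but it yields the explicit formula $\uSB=f_{(\lfloor(B+1)\alpha\rfloor+1)}$ as a byproduct. The paper relies on that formula later, for example in the ``order-statistic representation'' step of the proof of \Cref{prop: minimum single-batch calibration statistic view} and in the $\mathrm{os}_m$ step of the proof of \Cref{prop: SB-TB equivalence rate}. There it obtains the formula only through the second half of \Cref{Lemma: sup equivalence}.
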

This alternative characterization will be repeatedly invoked throughout the paper. In particular, it plays a central role in the power and adaptivity analysis.

\subsection{Power dominance over deterministic calibration}
\label{sec:SB-power-analysis}

We study the power properties of the SB aggregation test. Our main result (\Cref{thm:single-batch power}) shows that, for any merging function $f$, the SB procedure is at least as powerful as any test based on a deterministic threshold calibrated under arbitrary dependence among p-values. Moreover, the SB threshold adapts to the dependence structure of the p-values and can yield strict power improvements over worst-case dependence calibrations. 

We begin with a super-uniformity lemma that underpins the power analysis. A more general weighted version is given in \Cref{Lemma: generalized superuniform lemma}, which extends
\citep[][Lemma~A1]{harrison2012conservative} to arbitrary measures.

\begin{lemma}
\label{lemma:super-uniform-property}
Let $T_0,T_1,\ldots,T_B \in \mathbb{R}$ and define
\begin{align*}
p_b \coloneqq \frac{1}{B+1} \sum_{i=0}^B \mathds{1}(T_i \geq T_b),
\qquad b\in[B]_0.
\end{align*}
Then, for any $\alpha \in [0,1]$ and any integer $B \geq 1$,
\begin{align*}
\frac{1}{B+1} \sum_{b=0}^B \mathds{1}(p_b \leq \alpha)
\leq
\frac{\lfloor (B+1)\alpha \rfloor}{B+1}
\leq
\alpha.
\end{align*}
Moreover, if $T_0,\ldots,T_B$ are distinct, the first inequality holds with equality.
\end{lemma}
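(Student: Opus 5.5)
Write $N \coloneqq B+1$ and $m \coloneqq \lfloor N\alpha\rfloor$. Because $p_b$ is a multiple of $1/N$, the event $p_b\le\alpha$ is equivalent to $N p_b \le m$. So it suffices to prove the purely combinatorial bound
\begin{align*}
\bigl|\{b : N p_b \le m\}\bigr| \le m .
\end{align*}
Dividing by $N$ then gives the first inequality. The second inequality, $m/N\le\alpha$, is immediate from the definition of the floor.

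To prove the combinatorial bound, set $S\coloneqq\{b\in[B]_0 : Np_b\le m\}$. If $S$ is empty there is nothing to show, so assume it is nonempty. Choose $b^\ast\in S$ with $T_{b^\ast}=\min_{b\in S}T_b$. Every $i\in S$ satisfies $T_i\ge T_{b^\ast}$, so
\begin{align*}
Np_{b^\ast}=\sum_{i=0}^B\mathds{1}(T_i\ge T_{b^\ast})\ge|S|.
\end{align*}
Since $b^\ast\in S$, we also have $Np_{b^\ast}\le m$, and therefore $|S|\le m$. Ties cause no difficulty: any index tied with $T_{b^\ast}$ is counted on the left whether or not it lies in $S$, which only strengthens the inequality. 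Picking the minimizer within $S$ is the one point that needs care.

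For the equality claim, assume $T_0,\dots,T_B$ are distinct. Then $Np_b$ is the rank of $T_b$ counted from the top. As $b$ ranges over $[B]_0$, these ranks are a permutation of $\{1,\dots,N\}$. Hence the number of $b$ with $Np_b\le m$ is exactly $|\{1,\dots,m\}|=m$, using $0\le m\le N$ because $\alpha\in[0,1]$. This gives equality in the first inequality.

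I expect no real obstacle here. The only steps to get right are the reduction of $p_b\le\alpha$ to the integer threshold $m$, which uses that $p_b$ lies on the grid $\{1/N,\dots,1\}$, and the treatment of ties, which the minimizer argument above handles.
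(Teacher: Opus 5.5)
Your proof is correct and differs only mildly from the paper's. The paper sorts the values and notes that $\sum_{i}\mathds{1}(T_i\ge T_{(j)})\ge B-j+1$ for every order position $j$. It concludes that only the top $k$ positions can have rank at most $k$. You apply the same observation, that a rank is at least the number of values weakly above, only once: to the smallest element of $S$. This gives $|S|\le Np_{b^\ast}\le m$ directly, without indexing ties through order statistics.

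Your extremal-element step is the discrete counterpart of the $\inf S$ argument the paper uses for its weighted generalization (\Cref{Lemma: generalized superuniform lemma}). It therefore extends verbatim to nonnegative weights, though there only the bound $\le\alpha$ survives, since the integrality that produces the floor is lost. The paper's sorted-position count, by contrast, is tailored to equal weights.

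Your reduction of $p_b\le\alpha$ to the integer threshold $m=\lfloor N\alpha\rfloor$ and your equality argument under distinctness match the paper's.
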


By \Cref{lemma:super-uniform-property}, conditional on the observed data $\mathbf{X}$, the permutation p-value $p(T_b^k)$ with $b \sim \mathrm{Unif}([B]_0)$ is super-uniform. In particular, for each $k \in [K]$,
\begin{align*}
\frac{1}{B+1} \sum_{b=0}^B \mathds{1}\bigl(p(T_b^k) \leq \alpha\bigr)
\leq
\frac{\lfloor (B+1)\alpha \rfloor}{B+1}
\leq
\alpha.
\end{align*}
This conditional super-uniformity underlies the finite-sample power comparison below.

\begin{theorem}\label{thm:single-batch power}
Fix $\alpha\in(0,1)$. Let $U=(U_1,\ldots,U_K)$ be a random vector with an arbitrary joint distribution such that each coordinate $U_k$ is super-uniform. Let $\mathcal{U}_{\mathrm{sup}}^K$ denote the class of all such joint laws. Suppose that $c_{\alpha,K}$ is a deterministic constant satisfying
\begin{align*}
\sup_{\nu\in\mathcal{U}_{\mathrm{sup}}^K}
\mP_{\nu}\bigl(f(U_1,\ldots,U_K)\le c_{\alpha,K}\bigr)
\le \alpha.
\end{align*}
Then it holds almost surely that
\(
c_{\alpha,K} < \uSB.
\)
Consequently, the SB aggregation test defined in
\Cref{alg:SB} satisfies
\begin{align*}
\mP\!\left( f_0 \ge \uSB \right)
\le
\mP\!\left( f_0 > c_{\alpha,K} \right).
\end{align*}
\end{theorem}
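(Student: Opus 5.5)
The approach is to build, from the realized array, an explicit joint law $\nu\in\mathcal{U}_{\mathrm{sup}}^K$ and to use the hypothesis on $c_{\alpha,K}$ against it. Condition on the realized statistics $\{T_b^k\}$. Let $\beta$ be uniform on $[B]_0$, independent of everything else, and set
\begin{align*}
U \coloneqq \bigl(p(T_\beta^1),\ldots,p(T_\beta^K)\bigr).
\end{align*}
By \Cref{lemma:super-uniform-property}, applied column by column, each coordinate satisfies
\begin{align*}
\mP(U_k\le t)=\frac{1}{B+1}\sum_{b=0}^B\mathds{1}\bigl(p(T_b^k)\le t\bigr)\le t
\end{align*}
for every $t\in[0,1]$. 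For $t<0$ the probability is $0$, and for $t>1$ it is trivially at most $1$. Hence the law $\nu$ of $U$ belongs to $\mathcal{U}_{\mathrm{sup}}^K$, whatever the realized data are.

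Applying the assumption on $c_{\alpha,K}$ to this $\nu$ gives
\begin{align*}
\frac{1}{B+1}\sum_{b=0}^B\mathds{1}(f_b\le c_{\alpha,K}) = \mP_\nu\bigl(f(U)\le c_{\alpha,K}\bigr)\le\alpha .
\end{align*}
So $c_{\alpha,K}$ lies in the set over which the supremum in \Cref{prop: single-batch equivalence} is taken, and therefore $c_{\alpha,K}\le\uSB$ almost surely.

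The main obstacle is upgrading this to the strict inequality. The plan is to show that the set $S=\{u:\frac1{B+1}\sum_b\mathds{1}(f_b\le u)\le\alpha\}$ is a half-open interval $(-\infty,\uSB)$. Its indicator sum is a right-continuous step function with jumps only at the values $f_b$. Hence $\uSB$ equals some $f_{b^*}$, and at $u=\uSB$ the count jumps strictly above $(B+1)\alpha$. If $S$ were empty, or all $f_b=+\infty$, the supremum would be handled by the paper's conventions, and I would check that edge case separately.

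It then remains to rule out $c_{\alpha,K}=\uSB$. At $u=\uSB$ the count strictly exceeds $(B+1)\alpha$, so with the measure $\nu$ above we would have $\mP_\nu(f(U)\le c_{\alpha,K})>\alpha$, contradicting the hypothesis. Thus $c_{\alpha,K}\in S$ forces $c_{\alpha,K}<\uSB$.

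For the power consequence, $f_0\ge\uSB$ together with $\uSB>c_{\alpha,K}$ implies $f_0>c_{\alpha,K}$. The event inclusion then gives the stated probability inequality.
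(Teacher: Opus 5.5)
Your proposal is correct and matches the paper's proof essentially step for step. The paper also takes the uniformly drawn row index $b\sim\mathrm{Unif}([B]_0)$ as the source of randomness and applies \Cref{lemma:super-uniform-property} column-wise to place this law in $\mathcal{U}_{\mathrm{sup}}^K$; it deduces $F(c_{\alpha,K})\le\alpha$ for the empirical CDF $F$ of $f_0,\ldots,f_B$, then uses right-continuity with \Cref{prop: single-batch equivalence} to get $F(\uSB)>\alpha$, which forces $c_{\alpha,K}<\uSB$. The edge case you flag does not arise: $f$ is real-valued and $\alpha<1$, so by \Cref{Lemma: sup equivalence} the threshold $\uSB$ is the finite order statistic $f_{(\lfloor(B+1)\alpha\rfloor+1)}$.
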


\Cref{thm:single-batch power} shows that the SB threshold strictly dominates any deterministic threshold ensuring type~I error control under arbitrary dependence. Equivalently, the type~II error of the SB test is no larger than that of any level-$\alpha$ test based on such a deterministic threshold.
Despite the strength of the result, the argument is elementary. It uses only the fact that, conditional on the observed data, the permutation p-values $p(T_b^1),\ldots,p(T_b^K)$ are super-uniform when $b$ is drawn uniformly from $[B]_0$. A deterministic threshold must be valid for every joint law with super-uniform margins, whereas the SB threshold is calibrated directly on the realized conditional empirical distribution. This is why SB obtains a larger critical value and hence higher power. It is worth emphasizing that \Cref{thm:single-batch power} is entirely deterministic and imposes no assumptions on the transformations $g_1,\ldots,g_B$. Consequently, the result continues to hold for general weighted permutation schemes, including those considered in \citep[][Theorem~2]{ramdas2023permutation}.

As an immediate consequence, we obtain the following comparison with existing p-merging procedures calibrated under arbitrary dependence. 

\begin{corollary}
\label{corollary: single-batch power O and M}
Let $f$ be a classical p-merging rule calibrated under arbitrary dependence, such as an O-family or M-family rule; see \Cref{sec:p-merging-families} for the definitions. Since such rules output valid p-values under arbitrary dependence, the corresponding deterministic calibration is $c_{\alpha,K}=\alpha$.
Consequently, the SB aggregation test is always at least as powerful as the corresponding worst-case calibrated test.
\end{corollary}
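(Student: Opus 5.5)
The statement immediately above is Corollary~\ref{corollary: single-batch power O and M}. It follows from \Cref{thm:single-batch power} once we show that an O- or M-family rule $f$, calibrated under arbitrary dependence, admits the deterministic threshold $c_{\alpha,K}=\alpha$ required there. The plan therefore has two steps. First, verify the hypothesis of \Cref{thm:single-batch power} with $c_{\alpha,K}=\alpha$. Second, read off the power comparison.

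For the first step, recall the definition of a p-merging function from \Cref{sec:p-value-merging}. The aggregate $f(p_1,\ldots,p_K)$ is super-uniform whenever the inputs $p_1,\ldots,p_K$ are super-uniform, whatever their joint dependence. For the O-family (order-statistic merging with the R\"uger constant) and the M-family (generalized means with the sharp constants of \citep{vovk2020combining}), this is exactly the validity property recalled in \Cref{sec:p-merging-families}, and I will cite it there. Now take any joint law $\nu\in\mathcal{U}_{\mathrm{sup}}^K$ and let $U\sim\nu$. Applying super-uniformity of $f(U)$ at $t=\alpha$ gives $\mP_\nu(f(U)\le\alpha)\le\alpha$. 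Taking the supremum over $\nu$ yields
\[
\sup_{\nu\in\mathcal{U}_{\mathrm{sup}}^K}\mP_\nu\bigl(f(U_1,\ldots,U_K)\le\alpha\bigr)\le\alpha,
\]
which is the hypothesis of \Cref{thm:single-batch power} with $c_{\alpha,K}=\alpha$.

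For the second step, \Cref{thm:single-batch power} gives $\alpha<\uSB$ almost surely, and hence
\[
\mP(f_0\ge\uSB)\le\mP(f_0>\alpha).
\]
The SB test rejects when $f_0<\uSB$, by \eqref{eq:SB_statistic_view}. The worst-case calibrated test rejects when $f_0\le\alpha$, that is, when the merged p-value of the permutation p-values $p(T_0^1),\ldots,p(T_0^K)$ is at most $\alpha$. The displayed inequality therefore says that the type~II error of SB is at most that of the classical test, so SB is at least as powerful under any data distribution.

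The only point requiring care is that these rules are valid under arbitrary dependence with the stated constants. In particular, the M-family constants must be the ones that make the merged value super-uniform for every $K$ and every admissible parameter. Since this is the defining property of the families in \Cref{sec:p-merging-families}, I expect no real obstacle. I would also note that the super-uniformity of each input, which the classical test needs, is supplied conditionally by \Cref{lemma:super-uniform-property}. The comparison itself, however, is deterministic and does not depend on this.
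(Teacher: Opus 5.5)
Your proposal is correct and follows the paper's own argument. You verify that super-uniformity of the O- and M-family merged p-values under arbitrary dependence makes $c_{\alpha,K}=\alpha$ admissible in \Cref{thm:single-batch power}, then read off the power comparison from that theorem, additionally spelling out the pointwise implication $f_0\le\alpha\Rightarrow f_0<\uSB$ that the paper leaves implicit.
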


\subsection{Adaptivity to dependence} \label{sec: SB adaptivity}

We now illustrate how the SB procedure adapts to the unknown dependence structure among p-values. A key limitation of p-merging methods calibrated under arbitrary dependence is that their critical values are typically driven by worst-case dependence and can therefore be overly conservative outside special cases. In contrast, the SB procedure is data-dependent and can adapt to the dependence structure induced by the underlying data. 
We illustrate this adaptivity in two settings: the extreme case of perfect rank alignment and a more general asymptotic regime.

\parheading{Extreme case: perfect rank alignment.}
We begin by considering an extreme case in which no multiplicity adjustment is needed. The test statistics need not be identical across coordinates; it suffices that they induce the same ordering across transformations.
Formally, suppose that for all $k,k' \in [K]$ and all $b,b' \in [B]_0$,
\begin{equation}
\label{eq:rank_alignment}
T_b^k \le T_{b'}^k
\quad \Longleftrightarrow \quad T_b^{k'} \le T_{b'}^{k'} .
\end{equation}
That is, the vectors $(T_0^k,\ldots,T_B^k)$ are perfectly rank-aligned across coordinates, although their numerical values may differ (e.g., different scalings of the same statistic). Under
\eqref{eq:rank_alignment}, the permutation p-values satisfy
\begin{align*}
p(T_b^1)=\cdots=p(T_b^K)
\qquad \text{for all } b\in[B]_0,
\end{align*}
since each column induces the same ranking of the transformed statistics.

Consider a monotone merging function $f$ satisfying
\begin{equation}
\label{eq:diagonal_monotone}
f(x,\ldots,x) \le f(y,\ldots,y)
\quad \Longleftrightarrow \quad x \le y,
\qquad x,y \in \mathbb{R}.
\end{equation}
This condition is satisfied by many commonly used aggregation rules, including the minimum, the median, and the arithmetic mean. The next proposition shows that, under perfect rank alignment, the resulting SB aggregation test coincides with the usual permutation test based on a single statistic.

\begin{proposition}
\label{prop: single-batch perfect rank alignment} 
Consider the SB aggregation test in \Cref{alg:SB}. Assume that the merging function $f$ satisfies the diagonal monotonicity condition \eqref{eq:diagonal_monotone}. If the statistics $\{T_b^k\}$ satisfy the rank alignment condition \eqref{eq:rank_alignment}, then
\begin{align*}
p_{\mathrm{SB}}
=
\frac{1}{B+1}\sum_{b=0}^B \mathds{1}(T_b \ge T_0),
\end{align*}
where $T_b$ denotes any representative statistic (e.g., $T_b^1$).
\end{proposition}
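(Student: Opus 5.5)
The plan is to reduce the SB p-value to a single-statistic permutation p-value through a chain of equivalences. The first step shows that under \eqref{eq:rank_alignment} the p-values coincide across coordinates. Fix $b\in[B]_0$. For every $i$, the condition gives $T_i^k\ge T_b^k \iff T_i^{k'}\ge T_b^{k'}$: apply \eqref{eq:rank_alignment} with the roles $(b,b')=(b,i)$ to get $T_b^k\le T_i^k \iff T_b^{k'}\le T_i^{k'}$. Summing indicators, $p(T_b^k)=p(T_b^{k'})$ for all $k,k'$. Write $q_b\coloneqq p(T_b^1)$ for this common value, so that $f_b=f(q_b,\ldots,q_b)$.

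The second step uses diagonal monotonicity \eqref{eq:diagonal_monotone}. It gives $f_b\le f_0 \iff q_b\le q_0$, hence
\begin{align*}
p_{\mathrm{SB}}=\frac{1}{B+1}\sum_{b=0}^B\mathds{1}(q_b\le q_0).
\end{align*}

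The third, and only mildly delicate, step converts $\mathds{1}(q_b\le q_0)$ into $\mathds{1}(T_b\ge T_0)$, with $T_b=T_b^1$. The map $t\mapsto \frac{1}{B+1}\sum_i\mathds{1}(T_i\ge t)$ is nonincreasing. Therefore $T_b\ge T_0$ implies $q_b\le q_0$.

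For the converse, suppose $T_b<T_0$. Every $i$ with $T_i\ge T_0$ also satisfies $T_i\ge T_b$. In addition, $i=b$ satisfies $T_b\ge T_b$ but not $T_b\ge T_0$. Hence $(B+1)q_b\ge (B+1)q_0+1$, so $q_b>q_0$.

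The map $t\mapsto q(t)$ is thus strictly monotone on the observed values, ties included. This gives $\mathds{1}(q_b\le q_0)=\mathds{1}(T_b\ge T_0)$ for every $b$, and the stated formula follows.

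Finally, the choice of representative is immaterial. By the first step, replacing $T^1$ with any $T^k$ gives the same indicators, because rank alignment preserves the relation $T_b\ge T_0$. The main point to get right is the strictness argument in the third step, which handles ties without any distinctness assumption.
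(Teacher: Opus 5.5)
Your proposal is correct and follows the same route as the paper: rank alignment makes the permutation p-values equal across coordinates, diagonal monotonicity reduces $f_b\le f_0$ to $q_b\le q_0$, and this is then converted to $T_b\ge T_0$. Your strictness argument for that last step ($T_b<T_0$ forces $(B+1)q_b\ge (B+1)q_0+1$) handles ties without any distinctness assumption, and it fills in an equivalence the paper only asserts.
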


Thus, whenever multiple test statistics convey the same ordering information across permutation samples, the SB procedure automatically detects this redundancy and reduces to a single-statistic permutation test. In this case, the effective number of tests is one, and no multiplicity correction is applied.

\parheading{Asymptotic adaptivity.}
We next develop more general adaptivity results using asymptotic arguments. Suppose that the limiting null distribution of the merging function $f$ applied to permutation p-values were known. This distribution implicitly encodes the dependence structure among the p-values and thus defines an oracle benchmark. In that case, one could reject the null hypothesis whenever $f_0$ is less than its oracle $\alpha$-quantile. We show below that, under suitable conditions, the SB threshold converges to this oracle $\alpha$-quantile.

\begin{proposition}
\label{prop:asymptotic-adaptation}
For each $n$, let $\mathbf X^{(n)}$ denote the observed data set, and let
$f_{b,n}$, $b\in[B]_0$, be the SB aggregation scores computed from the
corresponding permutation p-values. Suppose that, conditional on
$\mathbf X^{(n)}$, the transformations $g_1,\ldots,g_B$ are i.i.d.\ uniform on
$\mathcal G$. Fix $\alpha\in(0,1)$, and consider any joint asymptotic regime in
which $B,n\to\infty$. Assume that
\begin{align*}
(f_{1,n},f_{2,n})
\;\overset{d}{\longrightarrow}\;
(V_\infty,V_\infty'),
\end{align*}
where $V_\infty$ and $V_\infty'$ are i.i.d.\ with distribution function $F$.
Let
\(
Q^\star_\alpha
\coloneqq
\inf\{u\in\mathbb R:F(u)\ge\alpha\}
\)
denote the oracle lower $\alpha$-quantile, and assume that it is well separated:
for every $\varepsilon>0$,
\begin{align*}
F(Q^\star_{\alpha}-\varepsilon)<\alpha<F(Q^\star_{\alpha}+\varepsilon).
\end{align*}
Then the SB critical value satisfies \(\uSB \;\overset{p}{\longrightarrow}\; Q^\star_\alpha\).
\end{proposition}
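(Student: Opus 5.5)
The plan is to show that the empirical distribution function of the SB scores,
\begin{align*}
\hat F_n(u) \coloneqq \frac{1}{B+1}\sum_{b=0}^B \mathds{1}(f_{b,n}\le u),
\end{align*}
converges in probability to $F(u)$ at every continuity point $u$ of $F$. Once this holds, the supremum representation of \Cref{prop: single-batch equivalence}, $\uSB=\sup\{u:\hat F_n(u)\le\alpha\}$, together with the well-separation condition, gives consistency by a standard quantile-inversion argument.

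\textbf{Step 1 (moments of $\hat F_n(u)$).} Given $\mathbf X^{(n)}$, the transformations $g_1,\dots,g_B$ are i.i.d.\ uniform on $\mathcal G$. Since every $f_{b,n}$ depends on the whole batch through the permutation p-values, the scores $f_{1,n},\dots,f_{B,n}$ are not i.i.d. They are, however, exchangeable in $b\in[B]$. The index $b=0$ contributes at most $1/(B+1)\to0$ to $\hat F_n(u)$ and can be ignored. By exchangeability, the mean and variance of $\tilde F_n(u)\coloneqq B^{-1}\sum_{b=1}^B\mathds{1}(f_{b,n}\le u)$ satisfy
\begin{align*}
\mE\,\tilde F_n(u)&=\mP(f_{1,n}\le u),\\
\Var\,\tilde F_n(u)&=\tfrac1B\Var\bigl(\mathds{1}(f_{1,n}\le u)\bigr)+\bigl(1-\tfrac1B\bigr)\mathrm{Cov}\bigl(\mathds{1}(f_{1,n}\le u),\mathds{1}(f_{2,n}\le u)\bigr).
\end{align*}

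\textbf{Step 2 (limits).} Take $u$ to be a continuity point of $F$. Then $(u,u)$ is a continuity point of the product law $F\otimes F$, so the assumed convergence $(f_{1,n},f_{2,n})\overset{d}{\to}(V_\infty,V'_\infty)$ gives
\begin{align*}
\mP(f_{1,n}\le u,\,f_{2,n}\le u)\to F(u)^2 \quad\text{and}\quad \mP(f_{1,n}\le u)\to F(u).
\end{align*}
Hence the covariance tends to $0$ and, since $B\to\infty$, the variance tends to $0$. By Chebyshev, $\hat F_n(u)\overset{p}{\to}F(u)$. This is the core of the argument: the pairwise asymptotic independence stands in for a law of large numbers for these dependent but exchangeable indicators.

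\textbf{Step 3 (inversion).} Fix $\varepsilon>0$. Because $F$ has at most countably many discontinuities, we can choose continuity points $u_-\in(Q^\star_\alpha-\varepsilon,Q^\star_\alpha)$ and $u_+\in(Q^\star_\alpha,Q^\star_\alpha+\varepsilon)$. Monotonicity and well-separation then give
\begin{align*}
F(u_-)\le F(Q^\star_\alpha-\varepsilon')<\alpha \quad\text{and}\quad F(u_+)>\alpha
\end{align*}
for a suitable $\varepsilon'>0$. More precisely, $F(u_+)\ge F(Q^\star_\alpha+\varepsilon'')>\alpha$ with $\varepsilon''=u_+-Q^\star_\alpha>0$, and similarly $F(u_-)\le F(Q^\star_\alpha-(Q^\star_\alpha-u_-))<\alpha$.

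\begin{itemize}
\item With probability tending to one, $\hat F_n(u_+)>\alpha$. Monotonicity of $\hat F_n$ then gives $\hat F_n(v)>\alpha$ for all $v\ge u_+$, so $\uSB\le u_+<Q^\star_\alpha+\varepsilon$.
\item With probability tending to one, $\hat F_n(u_-)<\alpha$. Then $u_-$ belongs to the set in \eqref{eq: u_alpha general expression any function}, so $\uSB\ge u_->Q^\star_\alpha-\varepsilon$.
\end{itemize}
Together these give $\mP(|\uSB-Q^\star_\alpha|<\varepsilon)\to1$.

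The main obstacle I expect is Step 1–2. The $f_{b,n}$ are coupled through the shared standardization, which depends on all $B+1$ rows, so an off-the-shelf i.i.d.\ Glivenko--Cantelli argument does not apply. The proof has to rest on exchangeability plus the bivariate limit. A related subtlety is that the assumed joint convergence must be read unconditionally, or along the joint regime, so that the covariance limit is legitimate. Handling the distinguished row $b=0$, which is not i.i.d.\ with the others, is dealt with by the $1/(B+1)$ bound above.
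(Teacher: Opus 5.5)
Your proposal is correct and follows essentially the same route as the paper's proof. Both use an exchangeability-based mean and variance computation for $\tilde F_{B,n}(t)$, the Portmanteau theorem applied to the bivariate limit to make the covariance vanish, Chebyshev's inequality, the $1/(B+1)$ bound to absorb the row $b=0$, and then inversion of the supremum representation using the well-separation condition. Your explicit choice of continuity points $u_\pm$ strictly within $\varepsilon$ of $Q^\star_\alpha$ is a slightly more careful version of the paper's ``without loss of generality'' step.
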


As a consequence, when both $B$ and $n$ are large, the SB threshold behaves as if the limiting null distribution of the merging function were known. This provides an asymptotic explanation for the adaptivity of the SB aggregation test to the unknown dependence structure among the test statistics or p-values. The joint convergence of $(f_{1,n},f_{2,n})$ to an i.i.d.\ limit is standard in the asymptotic theory of permutation distributions and is indeed a necessary condition for the permutation distribution to converge; see \citep[][Theorem~5.1]{chung2013}. A uniform version over classes of null distributions is deferred to \Cref{prop:uniform-asymptotic-adaptation}.

\subsection{Minimum merging and Westfall--Young calibration}
\label{sec: SB minimum merging function}

We now specialize the SB aggregation procedure to the minimum merging function, which recovers the Westfall--Young single-step method \citep{westfall1993resampling}; see also \citep{meinshausen2011asymptotic} for its asymptotic properties. Specifically, we consider
\begin{align}
\label{eq: minimum merging function}
f\bigl(p(T_b^1),\ldots,p(T_b^K)\bigr)
=
\min_{k\in[K]} p(T_b^k)
\eqqcolon f_b^{\mathrm{min}},
\qquad b\in[B]_0.
\end{align}
Using the statistic formulation in \eqref{eq:SB_statistic_view}, the SB minimum test rejects the null hypothesis whenever $f_0^{\mathrm{min}} < \hat u_{\alpha,\min}^{\mathrm{SB}}$, where
\begin{align*}
\hat u_{\alpha,\min}^{\mathrm{SB}}
\coloneqq
-
\mathrm{Quantile}_{1-\alpha}
\bigl\{-f_0^{\mathrm{min}},-f_1^{\mathrm{min}},\ldots,-f_B^{\mathrm{min}}\bigr\}.
\end{align*}
The next proposition shows that the SB minimum test is bounded between the Bonferroni procedure and the unadjusted minimum test.

\begin{proposition}
\label{prop: minimum single-batch calibration statistic view}
The SB minimum test satisfies, almost surely,
\begin{align*}
\mathds{1}\!\left(
\min_{k \in [K]} p(T_{0}^k)
\leq \frac{\alpha}{K}\right)
\leq
\mathds{1}\!\left(
\min_{k \in [K]} p(T_{0}^k)
< \uSBmin
\right)
\leq
\mathds{1}\!\left(
\min_{k \in [K]} p(T_{0}^k)
\leq \alpha\right).
\end{align*}
Moreover, $\uSBmin \geq (\lfloor(B+1)\alpha/K\rfloor + 1)/(B+1)$.
\end{proposition}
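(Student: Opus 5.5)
The plan is to work throughout with the supremum representation of \Cref{prop: single-batch equivalence}, specialized to $f=f^{\min}$:
\begin{align*}
\uSBmin=\sup\Bigl\{u\in\mathbb R:\tfrac{1}{B+1}\textstyle\sum_{b=0}^B\mathds 1(f_b^{\min}\le u)\le\alpha\Bigr\}.
\end{align*}
The key preliminary observation is that the map $u\mapsto \frac1{B+1}\sum_b \mathds 1(f_b^{\min}\le u)$ is nondecreasing. Hence the set in the supremum is an interval unbounded below. A consequence is that $f_0^{\min}<\uSBmin$ forces $f_0^{\min}$ itself to lie in that set. I will also use repeatedly that every permutation p-value $p(T_b^k)$ lies on the grid $\{1/(B+1),\ldots,(B+1)/(B+1)\}$.

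\emph{Lower bound on $\uSBmin$.} Set $m\coloneqq\lfloor (B+1)\alpha/K\rfloor$ and take any $u<(m+1)/(B+1)$. By the grid property, $f_b^{\min}\le u$ implies $p(T_b^k)\le m/(B+1)$ for some $k$. A union bound over $k$ together with \Cref{lemma:super-uniform-property}, applied at level $m/(B+1)$ to each column, gives
\begin{align*}
\sum_{b=0}^B \mathds 1(f_b^{\min}\le u)\le\sum_{k=1}^K\sum_{b=0}^B\mathds 1\bigl(p(T_b^k)\le m/(B+1)\bigr)\le Km\le (B+1)\alpha .
\end{align*}
So every $u<(m+1)/(B+1)$ belongs to the set, and therefore $\uSBmin\ge (m+1)/(B+1)$.

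\emph{Left inequality.} Suppose $\min_k p(T_0^k)\le\alpha/K$. Writing $\min_k p(T_0^k)=j/(B+1)$ with $j$ an integer, we get $j\le (B+1)\alpha/K$, so $j\le m$. Then $f_0^{\min}\le m/(B+1)<(m+1)/(B+1)\le\uSBmin$, and the SB minimum test rejects.

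\emph{Right inequality.} Suppose $f_0^{\min}<\uSBmin$. By monotonicity, $\#\{b:f_b^{\min}\le f_0^{\min}\}\le (B+1)\alpha$. Let $k^\ast$ attain the minimum at row $0$. Every $i$ with $T_i^{k^\ast}\ge T_0^{k^\ast}$ satisfies $\{l:T_l^{k^\ast}\ge T_i^{k^\ast}\}\subseteq\{l:T_l^{k^\ast}\ge T_0^{k^\ast}\}$. Hence $p(T_i^{k^\ast})\le p(T_0^{k^\ast})$, and so $f_i^{\min}\le f_0^{\min}$. There are exactly $(B+1)p(T_0^{k^\ast})=(B+1)f_0^{\min}$ such indices $i$. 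Therefore $(B+1)f_0^{\min}\le(B+1)\alpha$, that is, $\min_k p(T_0^k)\le\alpha$.

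The main obstacle I expect is the careful handling of ties and the discreteness of the grid. This matters in two places: passing from the strict inequality $u<(m+1)/(B+1)$ to the non-strict level $m/(B+1)$, and making sure the tie-inclusive definition of $p$ gives the right counting inclusion in the last step. Both reduce to the grid observation and the set inclusion noted above.
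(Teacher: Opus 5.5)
Your proof is correct. The right-hand inequality is argued exactly as in the paper. The paper phrases it as a contrapositive: if $f_0^{\min}>\alpha$, then $u=f_0^{\min}$ is infeasible. It uses the same set $\{b: T_b^{k_\star}\ge T_0^{k_\star}\}$ of size $(B+1)f_0^{\min}$.

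The difference lies in the other two claims. The paper first invokes \Cref{thm:single-batch power} with the Bonferroni constant $c_{\alpha,K}=\alpha/K$ to get $\uSBmin>\alpha/K$, which immediately yields the left inequality. It then obtains the lower bound by noting that $\uSBmin$ is an order statistic of the $f_b^{\min}$ (via \Cref{Lemma: sup equivalence}). Hence $\uSBmin$ lies on the grid $\{1/(B+1),\ldots,1\}$ and must be at least the smallest grid point exceeding $\alpha/K$.

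You reverse the order. You prove the lower bound directly by showing that every $u<(m+1)/(B+1)$ is feasible, using a union bound over columns and \Cref{lemma:super-uniform-property} at level $m/(B+1)$. You then read off the left inequality. In effect you inline the proof of \Cref{thm:single-batch power} for the minimum merge, but you evaluate the Bonferroni count at the grid point $m/(B+1)$ rather than at $\alpha/K$. This makes the discretization gain of $1/(B+1)$ appear without needing the order-statistic representation of the threshold.

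The paper's route buys conceptual economy: the result becomes one instance of the general dominance over deterministic calibrations. Yours is fully self-contained and shows transparently where the extra grid step comes from.
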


The additional term $1/(B+1)$ in the second statement reflects the strict inequality in the SB decision rule: the SB minimum test rejects when $\min_{k \in [K]} p(T_{0}^k) < u$ rather than $\le u$ as in the other procedures, which shifts the rejection threshold by one discretization step. The inequalities in Proposition~\ref{prop: minimum single-batch calibration statistic view} are tight with each bound attainable. In particular, when the coordinate-wise rejection events are disjoint, the Bonferroni correction can become tight (and is tight when the p-values are exactly uniform); in such cases the SB minimum test may coincide with the Bonferroni test (see \Cref{ex:SBmin_equals_Bonf} for an explicit example). At the opposite extreme, when the rejection events coincide across coordinates, the SB minimum test reduces to the unadjusted single test (see \Cref{sec: SB adaptivity}). Together, these cases show that the SB minimum test interpolates between the Bonferroni and unadjusted procedures according to the dependence structure among the permutation p-values, while preserving level~$\alpha$.

Additional material on data-driven aggregation over merging functions is deferred to \Cref{sec: data-driven aggregation of merging functions}.

\section{Sequential minimum aggregation}
\label{sec:sequential-alpha-spending-aggregation}

We next consider an alpha-spending version of SB minimum aggregation for ordered coordinates. In many applications, the coordinates are not merely an unordered collection of statistics, but have a natural order, such as increasing resolution, decreasing bandwidth, growing model complexity, or sequentially available data sources. In such settings, the alternative may already be visible in an early prefix of the sequence. Calibrating only the final minimum over all $K$ coordinates can then be inefficient because later coordinates with little signal may still enter the calibration and make rejection harder.

The general sequential aggregation principle is recorded in \Cref{sec: general sequential aggregation}. Here we specialize it to cumulative minima. The goal is to spend the global level $\alpha$ across a pre-specified sequence of prefix tests: at stage $j$, we inspect the cumulative minimum over the first $j$ coordinates and spend only a portion $\alpha_j$ of the total level. For $b\in[B]_0$ and $j\in[K]$, set $m_{b,j}\coloneqq \min_{1\le k\le j} p(T_b^k)$. Let $\alpha_1,\ldots,\alpha_K\ge 0$ satisfy $\sum_{j=1}^K \alpha_j\le \alpha$, and write $q_j\coloneqq \lfloor (B+1)\alpha_j\rfloor$.

The construction can be understood as a row-removal procedure. The set $S_{j-1}$ contains the permutation rows that have not yet been used for rejection at earlier stages. At stage $j$, among these surviving rows, we look at the prefix statistic $m_{b,j}$ and remove the rows whose values are among the most extreme according to the stage-$j$ spending budget. If the observed row $b=0$ is removed at any stage, the global null is rejected. Removing rows, rather than testing each prefix separately, keeps the eliminated sets disjoint and makes the finite-sample level calculation explicit.

\begin{algorithm}[t]
\caption{Sequential SB Minimum Aggregation (SeqSB)}
\label{alg:SeqSB-min}
\begin{algorithmic}[1]
\Require data $\mathbf X$; ordered statistics $T^1,\ldots,T^K$; transformations $g_0,\ldots,g_B$ with $g_0=\mathrm{id}$; spending sequence $\alpha_1,\ldots,\alpha_K$.
\State Set $S_0\gets [B]_0$ and $m_{b,0}\gets 1$ for all $b\in[B]_0$.
\For{$j=1,\ldots,K$}
\State Compute $T_b^j \gets T^j(g_b(\mathbf X))$ for all $b\in [B]_0$.
\State Compute permutation p-values $p(T_b^j)$ as in \eqref{Eq: individual p-value} for all $b\in S_{j-1}$.
    \State Update $m_{b,j}\gets \min\{m_{b,j-1},p(T_b^j)\}$ for all $b\in S_{j-1}$.
    \State Set $q_j\gets \lfloor(B+1)\alpha_j\rfloor$.
    \State Let $c_j$ be the $(q_j+1)$-st order statistic of $\{m_{b,j}:b\in S_{j-1}\}$.
    \State Set $A_j\gets \{b\in S_{j-1}:m_{b,j}<c_j\}$.
    \If{$0\in A_j$}
        \State Reject $H_0$ and stop.
    \EndIf
    \State Set $S_j\gets S_{j-1}\setminus A_j$.
\EndFor
\State Do not reject $H_0$.
\end{algorithmic}
\end{algorithm}

To define the procedure, initialize the survivor set as $S_0=[B]_0$. At stage $j$, given the current survivor set $S_{j-1}$, let
\begin{align*}
    m_{(1),j}^{S_{j-1}}
    \le
    \cdots
    \le
    m_{(|S_{j-1}|),j}^{S_{j-1}}
\end{align*}
denote the order statistics of $\{m_{b,j}:b\in S_{j-1}\}$. Define the stage-$j$ threshold $c_j \coloneqq m_{(q_j+1),j}^{S_{j-1}}$, eliminate
$A_j\coloneqq\{b\in S_{j-1}:m_{b,j}<c_j\}$, and update $S_j=S_{j-1}\setminus A_j$. The sequential minimum aggregation test rejects $H_0$ if $0\in \bigcup_{j=1}^K A_j$. The procedure is summarized in \Cref{alg:SeqSB-min}, and the next proposition establishes its finite-sample validity.

\begin{proposition}
\label{prop: sequential minimum aggregation}
Fix $\alpha\in(0,1)$, $B\ge 1$, and
$\alpha_1,\ldots,\alpha_K\ge 0$ such that
$\sum_{j=1}^K\alpha_j\le \alpha$. Under the group-invariance null hypothesis,
the sequential minimum aggregation test satisfies
\begin{align*}
    \mP\biggl(0\in \bigcup_{j=1}^K A_j\biggr)
    \le
    \frac{1}{B+1}\sum_{j=1}^K
    \bigl\lfloor (B+1)\alpha_j\bigr\rfloor
    \le
    \alpha .
\end{align*}
Moreover, if for every $j\in[K]$ the survivor values
$\{m_{b,j}:b\in S_{j-1}\}$ are almost surely distinct, then the first
inequality is an equality.
\end{proposition}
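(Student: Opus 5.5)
The plan is to prove the statement by an exchangeability and averaging argument, as in \Cref{prop: SB level}, combined with a deterministic counting bound on the eliminated sets.

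\textbf{Step 1 (equivariance).} The map from the array $\{T_b^k\}$ to the eliminated sets $(A_1,\ldots,A_K)$ is permutation-equivariant in the row index. Suppose we permute the rows by $\pi$. The column-wise p-values $p(T_b^k)$ are permuted accordingly, and hence so are the prefix minima $m_{b,j}$. By induction on $j$, the survivor sets $S_{j-1}$, the thresholds $c_j$, and the sets $A_j$ are mapped by $\pi$: $c_j$ is an order statistic of a set-indexed multiset, so it is invariant, and membership in $A_j$ depends only on the row's own value $m_{b,j}$. Consequently, the indicator $\mathds{1}(b\in\bigcup_j A_j)$, viewed as a function of the array with row $b$ singled out, has the same law for every $b$ under the group-invariance null. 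Here we use the joint exchangeability of the rows $(T_b^1,\ldots,T_b^K)$ recorded in \Cref{subsec:problem setup}. Therefore
\begin{align*}
\mP\Bigl(0\in \textstyle\bigcup_j A_j\Bigr)=\frac{1}{B+1}\,\mE\Bigl|\textstyle\bigcup_{j}A_j\Bigr|.
\end{align*}

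\textbf{Step 2 (deterministic counting).} The sets $A_j$ are disjoint, since $A_j\subseteq S_{j-1}$ and $S_j=S_{j-1}\setminus A_j$. It therefore suffices to show $|A_j|\le q_j$ pointwise. Since $c_j$ is the $(q_j+1)$-st smallest value among $\{m_{b,j}:b\in S_{j-1}\}$, at most $q_j$ survivors have values strictly below $c_j$. If $q_j+1>|S_{j-1}|$, the order statistic is undefined; we adopt the convention $c_j=+\infty$, so that $A_j=S_{j-1}$ and $|A_j|=|S_{j-1}|\le q_j$ still holds. Summing over $j$ gives
\begin{align*}
\Bigl|\textstyle\bigcup_j A_j\Bigr|\le\sum_j q_j,
\end{align*}
which yields the first inequality. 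The second inequality follows from $\lfloor (B+1)\alpha_j\rfloor\le (B+1)\alpha_j$ together with $\sum_j\alpha_j\le\alpha$.

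\textbf{Step 3 (equality case).} Under distinctness of the survivor values, exactly $\min(q_j,|S_{j-1}|)$ elements lie strictly below the $(q_j+1)$-st order statistic. Thus $|A_j|=q_j$ whenever $|S_{j-1}|\ge q_j+1$. Because $\sum_j q_j\le \lfloor (B+1)\alpha\rfloor<B+1$, the survivor set always retains at least $q_j+1$ elements at stage $j$. Indeed, $|S_{j-1}|=B+1-\sum_{i<j}q_i\ge q_j+1$. Hence $|\bigcup_j A_j|=\sum_j q_j$ almost surely, and Step 1 turns the inequality into an equality.

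\textbf{Main obstacle.} The step requiring the most care is Step 1. The procedure stops early when $0\in A_j$, so the rejection event must be formalized through the full (non-stopped) sequence of sets $A_j$ computed for all rows. I will note that stopping does not affect the event $0\in\bigcup_j A_j$. The p-values are computed over all $B+1$ rows rather than over survivors, which keeps the map equivariant. I would formalize the equivariance by writing $A_j=\mathcal{A}_j(\mathbf{T})$ with $\mathcal{A}_j(\pi\mathbf{T})=\pi(\mathcal{A}_j(\mathbf{T}))$, proved by induction on $j$. Then
\begin{align*}
\mP(0\in\mathcal{A}(\mathbf{T}))=\mP(0\in\mathcal{A}(\pi_b\mathbf{T}))=\mP(b\in\mathcal{A}(\mathbf{T})),
\end{align*}
where $\pi_b$ is the transposition swapping rows $0$ and $b$.
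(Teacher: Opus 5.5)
Your proof is correct and follows essentially the same route as the paper. The paper proves the statement as a special case of its general sequential aggregation result, using four ingredients: row-permutation equivariance of the survivor and eliminated sets (by induction), the identity $\mP(0\in A_j)=\mE|A_j|/(B+1)$ from exchangeability, disjointness together with $|A_j|\le q_j$, and an inductive check that $|S_{j-1}|\ge q_j+1$ so that $|A_j|=q_j$ under distinctness. Your differences are cosmetic: you average over the union rather than stage by stage, and you read $|A_j|\le q_j$ directly off the order-statistic threshold instead of using the paper's supremum characterization of $c_j$.
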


The proof is based on a simple exchangeability argument. The row-removal construction is equivariant with respect to permutations of the indices $[B]_0$: if the rows of the transformed array are permuted, then the survivor sets and eliminated sets are permuted in the same way. Under the null hypothesis, the row array is exchangeable, and therefore for each stage $j$, 
\begin{align*}
    \mP(0\in A_j)
    =
    \frac{\mE |A_j|}{B+1}.
\end{align*}
The eliminated sets $A_1,\ldots,A_K$ are disjoint by construction, and
$|A_j|\le q_j$. Hence
\begin{align*}
    \mP\biggl(0\in\bigcup_{j=1}^K A_j\biggr)
    =
    \sum_{j=1}^K \mP(0\in A_j)
    \le
    \frac{1}{B+1}\sum_{j=1}^K q_j .
\end{align*}
If the survivor values are distinct at every stage, then exactly $q_j$ rows are
removed at stage $j$, giving equality in the first bound.

The SB minimum rule is recovered by spending all level at the final coordinate,
$\alpha_K=\alpha$ and $\alpha_j=0$ for $j<K$. Thus, the sequential procedure can
be viewed as a generalization of the SB minimum rule to ordered prefixes. It is
not uniformly more powerful than the final SB minimum rule, since spending level
early necessarily leaves less level for later coordinates. However, it can be
advantageous when the signal is expected to appear in an early prefix. In that
case, the stage-$j$ threshold is calibrated only against the first $j$
coordinates, rather than against the full minimum over all $K$ coordinates, and
hence may be less conservative. The procedure also has an operational
advantage, as once the observed row is eliminated, the test can stop without
computing later coordinates.

\section{Two-batch aggregation and data-dependent rules}
\label{sec:two-batch-aggregation}

The SB procedure in \Cref{sec:single-batch-aggregation} uses a single collection of transformed datasets both to (i) standardize each coordinate statistic via a permutation p-value map and (ii) calibrate the merged evidence through a permutation test. This section studies a TB aggregation procedure that leverages an extra batch for calibration while preserving finite-sample type~I error control. Conceptually, TB aggregation proceeds as follows: a \emph{reference batch} of transformations is used to construct a standardized p-value map for each coordinate statistic, and an independent \emph{testing batch} is then used to run a permutation test on the merged evidence computed from these standardized p-values.

\subsection{Holdout standardization}
\label{sec:TB-procedure-and-finite-sample-validity}

We begin with a formal description of the TB aggregation procedure.
Let $g_0=\mathrm{id}$ and let $g_1,\ldots,g_{2B}$ be additional transformations. As in the SB aggregation procedure, assume that $g_1,\ldots,g_{2B}$ are chosen such that $g_0(\mathbf{X}), g_1(\mathbf{X}),\ldots, g_{2B}(\mathbf{X})$ are exchangeable under the group-invariance hypothesis. We view $\{g_0,g_1,\ldots,g_B\}$ as a \emph{testing batch}, and $\{g_{B+1},\ldots,g_{2B}\}$ as a \emph{reference batch}.\footnote{For notational convenience, we take the testing batch to have size $B+1$ and the reference batch to have size $B$; however, all results extend directly to arbitrary batch sizes $B_1$ and $B_2$.} Define the statistics
\begin{align*}
T_b^k \coloneqq T^k\bigl(g_b(\mathbf{X})\bigr),
\qquad b\in[B]_0,\ k\in[K].
\end{align*}
For notational convenience, we re-index the \emph{reference batch} by setting
\begin{align*}
\tilde g_i \coloneqq g_{B+i},\qquad i\in[B],
\qquad
\tilde T_i^k \coloneqq T^k\bigl(\tilde g_i(\mathbf{X})\bigr),
\qquad i\in[B],\ k\in[K].
\end{align*}
For each coordinate $k\in[K]$ and each testing index $b\in[B]_0$, define the holdout permutation p-value by ranking the testing statistic $T_b^k$ against the \emph{reference} batch $\{\tilde T_1^k,\ldots,\tilde T_B^k\}$:
\begin{align*}
p_{\mathrm{HO}}\bigl(T_b^k\bigr)
\coloneqq
\frac{1}{B+1}\biggl\{
1+\sum_{i=1}^{B} \mathds{1}\!\left(\tilde T_i^k \ge T_b^k\right)
\biggr\},
\qquad b\in[B]_0,\ k\in[K].
\end{align*}

\begin{algorithm}[t]
\caption{Two-Batch Aggregation (TB)}
\label{alg:TB}
\begin{algorithmic}[1]
\Require data $\mathbf{X}$; statistics $T^1,\ldots,T^K$; transformations $g_0,\ldots,g_{2B}$ ($g_0=\mathrm{id}$); merging function $f$; level $\alpha$.
\State \textbf{Two batches:} Define testing batch $\{g_0,\ldots,g_B\}$ and reference batch $\{g_{B+1},\ldots,g_{2B}\}$.
\State \textbf{Transformed statistics:} Compute $T_b^k \gets T^k(g_b(\mathbf{X}))$ for $b\in[B]_0$, $k\in[K]$.
\State \textbf{Transformed holdout statistics:} Compute $\tilde T_i^k \gets T^k(g_{B+i}(\mathbf{X}))$ for $i\in[B]$, $k\in[K]$.
\State \textbf{Standardization via holdout p-values:} For each $b\in[B]_0$, $k\in[K]$, set
\begin{align*}
p_{\mathrm{HO}}(T_b^k)\gets \frac{1}{B+1}\biggl\{1+\sum_{i=1}^{B} \mathds{1}\big(\tilde T_i^k\ge T_b^k\big)\biggr\}.
\end{align*}
\State \textbf{Aggregation row-wise:} For each $b\in[B]_0$, set $$\tilde f_b \gets f\bigl(p_{\mathrm{HO}}\big(T_b^1\big),\ldots,p_{\mathrm{HO}}\big(T_b^K\big)\bigr).$$
\State \textbf{Calibration:} Compute the p-value $$p_{\mathrm{TB}} \gets \frac{1}{B+1}\sum_{b=0}^B \mathds{1}\big(\tilde f_b \le \tilde f_0\big).$$
\State \textbf{Decision rule:} Reject if $p_{\mathrm{TB}}\le\alpha$.
\end{algorithmic}
\end{algorithm}

Let $f\colon[0,1]^K\to\mathbb{R}$ be a merging function that maps the $K$ holdout p-values to a single real-valued score. As before, smaller values are interpreted as stronger evidence against the null hypothesis.
For each $b\in[B]_0$, define
\begin{align*}
\tilde f_b
\coloneqq f\bigl(
p_{\mathrm{HO}}(T_b^1),\ldots,p_{\mathrm{HO}}(T_b^K)
\bigr),
\qquad b\in[B]_0.
\end{align*}
Finally, define the TB permutation p-value by comparing $\tilde f_0$ to its testing-batch counterparts:
\begin{align}
\label{eq: TB p-value}
p_{\mathrm{TB}}
\coloneqq
\frac{1}{B+1}\sum_{b=0}^B \mathds{1} \bigl(\tilde f_b \le \tilde f_0\bigr),
\qquad
\text{and reject if } p_{\mathrm{TB}}\le\alpha.
\end{align}
In contrast to SB aggregation, where the same transformed statistics are used for both p-value construction and calibration, the TB procedure splits these roles across two independent batches conditional on $\mathbf X$: a reference batch for standardization and a testing batch for calibration. 
The full procedure is summarized in \Cref{alg:TB}, and a schematic illustration is given in \Cref{fig:diagram-TB} in \Cref{sec: diagrams}.

As in \Cref{sec:SB-procedure}, the decision rule
\eqref{eq: TB p-value} admits an equivalent quantile-based form:
\begin{align*}
\tilde f_0
\;<\;
-\mathrm{Quantile}_{1-\alpha}\bigl\{-\tilde f_0,-\tilde f_1,\ldots,-\tilde f_B\bigr\}
\;\eqqcolon\;
\uTB.
\end{align*}
Equivalently, $\uTB$ can be written in the supremum-quantile form
\begin{align}
\label{eq: uTB_sup_form}
\uTB
=
\sup\biggl\{
u\in\mathbb{R}:\ 
\frac{1}{B+1}\sum_{b=0}^B \mathds{1}\!\left(\tilde f_b \le u\right)\le \alpha
\biggr\}.
\end{align}

\subsection{Finite-sample validity}
\label{sec:TB-finite-sample-validity}

Under the group-invariance hypothesis, the testing-batch row vectors
\begin{align*}
(T_0^1,\ldots,T_0^K),\ldots,(T_B^1,\ldots,T_B^K)
\end{align*}
are conditionally exchangeable given
the reference batch $\{(\tilde T_i^1,\ldots,\tilde T_i^K)\}_{i=1}^B$. Since the holdout p-value map and the merging function are applied row-wise, the merged values $(\tilde f_0,\ldots,\tilde f_B)$ remain conditionally exchangeable. Therefore, the rank-based p-value $p_{\mathrm{TB}}$ in
\eqref{eq: TB p-value} is super-uniform under the null and controls the type~I error rate at level~$\alpha$ in finite samples as stated below.
\begin{theorem}
\label{prop: TB level}
Under the group-invariance hypothesis, the TB aggregation test controls the type~I error rate at level~$\alpha$, that is,
\begin{align*}
\mP\bigl(p_{\mathrm{TB}}\le \alpha\bigr)\le \alpha
\quad\text{for all }\alpha\in(0,1)\text{ and }B\ge 1.
\end{align*}
Moreover, if $\tilde f_0,\tilde f_1,\ldots,\tilde f_B$ are distinct with probability one, then
\begin{align*}
\mP\bigl(p_{\mathrm{TB}}\le \alpha\bigr)
=
\frac{\lfloor (B+1)\alpha\rfloor}{B+1}.
\end{align*}
\end{theorem}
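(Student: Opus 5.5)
The plan is to reduce the statement to the single-statistic exchangeability bound in \Cref{Lemma: permutation p-value}, or equivalently to \Cref{lemma:super-uniform-property}, applied to the merged values $\tilde f_0,\ldots,\tilde f_B$. The whole proof then rests on one claim: conditional on the reference batch, these merged values are exchangeable under the null.

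\textbf{Step 1 (conditional exchangeability of the testing rows).} By assumption, $g_0(\mathbf X),\ldots,g_{2B}(\mathbf X)$ are exchangeable under the group-invariance hypothesis. Fix any permutation $\pi$ of $[B]_0$ and extend it to $[2B]_0$ by fixing the indices $B+1,\ldots,2B$. Exchangeability gives
\begin{align*}
\bigl(g_{\pi(0)}(\mathbf X),\ldots,g_{\pi(B)}(\mathbf X),g_{B+1}(\mathbf X),\ldots,g_{2B}(\mathbf X)\bigr)\overset{d}{=}\bigl(g_0(\mathbf X),\ldots,g_{2B}(\mathbf X)\bigr).
\end{align*}
Applying the measurable map that computes all statistics $T^k$, the joint law of the testing rows $R_b=(T_b^1,\ldots,T_b^K)$, $b\in[B]_0$, together with the reference array $\tilde{\mathbf T}=\{\tilde T_i^k\}$, is invariant under permuting the $R_b$ while holding $\tilde{\mathbf T}$ fixed. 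Hence $(R_0,\ldots,R_B)$ is exchangeable conditionally on $\tilde{\mathbf T}$. I would phrase this as a statement about joint laws rather than conditional laws, to avoid regular-conditional-probability technicalities.

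\textbf{Step 2 (row-wise map).} Write $\tilde f_b=h(R_b,\tilde{\mathbf T})$, where
\begin{align*}
h(r,\tilde{\mathbf T})=f\Bigl(\tfrac{1}{B+1}\bigl\{1+\textstyle\sum_i\mathds 1(\tilde T_i^1\ge r_1)\bigr\},\ldots,\tfrac{1}{B+1}\bigl\{1+\textstyle\sum_i\mathds 1(\tilde T_i^K\ge r_K)\bigr\}\Bigr).
\end{align*}
The point here, and the contrast with SB, is that each $\tilde f_b$ depends only on its own row and on the fixed reference array. It does not depend on the other testing rows. Therefore the joint law of $(\tilde f_{\pi(0)},\ldots,\tilde f_{\pi(B)})$ equals that of $(\tilde f_0,\ldots,\tilde f_B)$ for every $\pi$, so the merged values are exchangeable.

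\textbf{Step 3 (rank bound).} Now apply \Cref{lemma:super-uniform-property} with $T_b:=-\tilde f_b$; note that $\mathds 1(\tilde f_b\le\tilde f_0)=\mathds 1(T_b\ge T_0)$, so $p_{\mathrm{TB}}$ equals $p_0$ in that lemma. Exchangeability gives $\mP(p_0\le\alpha)=\mE\bigl[\tfrac{1}{B+1}\sum_b\mathds 1(p_b\le\alpha)\bigr]$, where $p_b$ is the analogous rank p-value for row $b$. This works because the vector $(p_0,\ldots,p_B)$ is a permutation-equivariant function of the exchangeable $(T_b)$, so every $p_b$ has the same law. The lemma bounds the average deterministically by $\lfloor(B+1)\alpha\rfloor/(B+1)\le\alpha$. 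If the $\tilde f_b$ are almost surely distinct, the lemma's equality case makes the average equal $\lfloor(B+1)\alpha\rfloor/(B+1)$ almost surely, which gives the exact level.

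I expect no real obstacle in this argument. The only point needing care is Step 1: confirming that exchangeability of the full $(2B+1)$-tuple implies invariance under permutations that fix the reference indices. This holds because such permutations are a subset of all permutations of $[2B]_0$. The rest is the same argument as for \Cref{prop: SB level}.
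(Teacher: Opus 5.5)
Your proposal is correct and follows essentially the same route as the paper: conditional (on the reference batch) exchangeability of the testing rows, preservation of that exchangeability under the row-wise holdout-p-value-and-merge map, and then a rank super-uniformity bound for $p_{\mathrm{TB}}$. The differences are cosmetic. You phrase the conditioning as invariance of joint laws and use the deterministic counting bound of \Cref{lemma:super-uniform-property}, whose equality case gives the exact level; the paper instead cites \Cref{Lemma: permutation p-value} and the uniformity of $p_{\mathrm{TB}}$ on the grid given the reference batch.
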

More generally, finite-sample validity still holds if the merging function is chosen based on the reference batch, so it may be data-dependent as long as it depends only on the reference information. However, this batch separation can lead to lower finite-sample power when $B$ is small; a more detailed comparison with SB aggregation is deferred to \Cref{sec: power properties of TB}. The next subsection highlights settings where the separation is nevertheless useful.

\subsection{Learning aggregation rules from the reference batch}
\label{sec: TB_favorable}

A key advantage of TB aggregation is that the reference batch can be used to
construct or choose the aggregation rule before the final calibration step. This
enables inference tasks and aggregation strategies that are incompatible with
the SB construction. We highlight two representative settings where the
separation between reference and testing batches is essential.

\parheading{Merging conformal prediction sets.}
In conformal prediction, the score of the test point at its unknown true label is unavailable at calibration time. Because TB holdout p-values are computed from the reference batch alone, the calibration step can be precomputed and reused across candidate labels. This avoids the set-inversion bottleneck that would generally arise from an SB-style calibration incorporating each candidate test-point score into its own ranking set. We return to this connection and its algorithmic implications in \Cref{sec:conformal-prediction} and \Cref{sec:alg-conformal}.

\parheading{Learning the aggregation map.}
Because the aggregation rule may be chosen using only the reference batch, TB aggregation allows data-dependent choices such as learned weights, projections, or dimension-reduced summaries of the coordinate-wise p-value vector. Conditional on the reference batch, the testing-batch rows remain exchangeable, so the final permutation calibration remains finite-sample valid. Such data-adaptive choices are not available under SB aggregation without additional correction, since learning the rule and calibrating the test on the same transformed rows would generally destroy the exchangeability underpinning the SB procedure. A PCA-based example is given in \Cref{sec:pca-reference-batch}.

\section{Applications} \label{sec: applications}
This section presents notable theoretical and methodological applications of the preceding results. \Cref{sec: adaptive_minimax} and \Cref{sec: adaptive MMD and HSIC tests} consider general and kernel-based adaptive nonparametric testing with a focus on SB aggregation, while \Cref{sec:conformal-prediction} examines both SB and TB aggregation in the context of conformal prediction, with particular emphasis on the computational advantages of TB aggregation. Additional material on SB aggregation of data-asymmetric tests is deferred to \Cref{sec: aggregation of data-asymmetric tests}.

\subsection{Adaptive nonparametric testing}
\label{sec: adaptive_minimax}

Many nonparametric testing problems require adaptation over unknown tuning parameters, such as smoothness or intrinsic dimension. We show that SB aggregation can replace Bonferroni aggregation in such adaptive constructions, preserving existing separation-rate guarantees while improving finite-sample power.

\parheading{Setup and minimax separation radius.}
Let $\mathcal P$ be a model class and consider testing
\begin{align*}
H_0: P\in\mathcal P_0
\qquad \text{versus} \qquad H_1: P\in\mathcal P_1(\rho),
\end{align*}
where $\mathcal P_0\subset\mathcal P$ is a class of null distributions under which the group-invariance hypothesis holds for a given transformation scheme. The alternative class is indexed by a separation parameter $\rho>0$ and takes the generic form
\begin{align*}
\mathcal P_1(\rho)
\coloneqq
\bigl\{P\in\mathcal P:\ d(P^{(0)},P^{(1)})\ge \rho\bigr\},
\end{align*}
where $(P^{(0)},P^{(1)})$ denotes the pair of distributions whose equality characterizes the null, and $d$ is a problem-specific distance or semi-metric.

For a test $\Delta$ based on a sample of size $n$ with error levels $\alpha \in (0,1)$, $\beta \in (0,1-\alpha)$, define the uniform separation radius as
\begin{align*}
\rho(\Delta;\alpha,\beta,n)
\coloneqq
\inf\Bigl\{\rho>0:\ 
\sup_{P\in\mathcal P_0}\mP_P(\Delta=1)\le\alpha,\
\sup_{P\in\mathcal P_1(\rho)}\mP_P(\Delta=0)\le\beta
\Bigr\}.
\end{align*}
Let $\Omega_\alpha$ denote the class of all level-$\alpha$ tests, 
and define the minimax benchmark
\begin{align*}
\rho^\star(\alpha,\beta,n)
\coloneqq
\inf_{\Delta\in\Omega_\alpha}\rho(\Delta;\alpha,\beta,n).
\end{align*}
We say that a test $\Delta$ is minimax optimal if
\(\rho(\Delta;\alpha,\beta,n) \lesssim \rho^\star(\alpha,\beta,n)\), where the inequality holds up to constants independent of $(\alpha,\beta,n)$ but may depend on other problem parameters. This formulation is standard in nonparametric testing; see, e.g., \cite{ingster2012nonparametric,schrab2025unified}.

\parheading{Adaptive aggregation and transfer to SB calibration.}
A standard route to adaptivity aggregates a collection of tests $\{\Delta_k\}_{k\in[K]}$ via Bonferroni correction:
\begin{align*}
\Delta_{\mathrm{Bonf}} \coloneqq \mathds{1}\left\{\min_{k\in[K]} p^{(k)} \le \alpha/K \right\},
\end{align*}
where $p^{(k)}$ is the $p$-value associated with $\Delta_k$. This procedure is valid under arbitrary dependence and satisfies
\begin{align*}
\rho(\Delta_{\mathrm{Bonf}};\alpha,\beta,n)
\;\le\;
\min_{k\in[K]} \rho(\Delta_k;\alpha/K,\beta,n).
\end{align*}
In many minimax problems, the separation radius depends on $\alpha$ only through a slowly varying term, so Bonferroni aggregation over a tuning-parameter grid incurs only a logarithmic adaptivity cost. This strategy underlies adaptive permutation tests such as those in \cite{kim2022minimax,berrett2021optimal,hagrass2024spectral,mun2024minimax,kent2025locally}. Under the same group-invariance scheme, Bonferroni aggregation can be replaced by SB calibration. By \Cref{prop: SB level} and \Cref{prop: minimum single-batch calibration statistic view}, the SB minimum test controls the type~I error non-asymptotically and uniformly dominates the Bonferroni test. Consequently, existing Bonferroni-based adaptive separation bounds transfer directly to the SB-calibrated counterpart. More precisely, denoting by $\Delta_{\mathrm{SB}}$ the SB minimum test, we have
\begin{align*}
\rho(\Delta_{\mathrm{SB}};\alpha,\beta,n)
\;\le\;
\rho(\Delta_{\mathrm{Bonf}};\alpha,\beta,n).
\end{align*}
This inequality is immediate from the finite-sample dominance of SB over Bonferroni calibration and requires no additional analysis beyond that of the existing adaptive constructions.

In \Cref{sec: adaptive MMD and HSIC tests}, we instantiate this principle for adaptive two-sample testing and independence testing using kernel-based statistics, and show that the SB minimum test achieves the same adaptive separation rates as MaxT-based procedures~\citep{albert2022adaptive,schrab2023mmd} while retaining finite-sample validity.

\subsection{Conformal prediction}
\label{sec:conformal-prediction}
Conformal prediction~\citep{vovk2005algorithmic} often yields multiple valid prediction sets arising from different nonconformity scores, feature representations, or training procedures. A natural question is how to merge such sets into a single prediction set that preserves finite-sample coverage while improving efficiency. In this section, we show how the SB and TB aggregation frameworks developed earlier can be used to combine $K$ conformal predictors. The key distinction between the two is computational: SB aggregation yields a fully self-calibrated conformal p-value but requires recalibration for each candidate label, whereas TB aggregation decouples calibration from label evaluation and enables efficient set inversion. Related approaches include validity-preserving selection of a single conformal predictor \citep{yang2024selection} and aggregation or selection rules valid under arbitrary dependence \citep{gasparin2024merging,hegazy2025valid}. Our approach instead uses permutation-based aggregation to exploit the underlying dependence among conformal p-values.

\parheading{Setup.}
To make the discussion concrete, we begin by recalling the standard split conformal setup~\citep{papadopoulos2008inductive,lei2018distribution} with multiple scores. Let $\{(X_i,Y_i)\}_{i=1}^{n}$ be a calibration sample and let $(X_{n+1},Y_{n+1})$ be a new exchangeable observation. For each $k\in[K]$, let $s_k:\mathcal X\times\mathcal Y\to\mathbb R$ be a nonconformity score, and define
\begin{align*}
S_i^k \coloneqq s_k(X_i,Y_i),\qquad i\in[n],
\qquad\text{and}\qquad
S_{n+1}^k(y)\coloneqq s_k(X_{n+1},y),\quad y\in\mathcal Y.
\end{align*}
The associated split conformal p-value is
\begin{align}
\label{eq:conf_pval_pky}
p_k(y) \coloneqq  \frac{1 + \sum_{i=1}^n \mathds{1}\{S_i^k \ge S_{n+1}^k(y)\}}{n+1},
\end{align}
which induces the prediction set
\begin{align*}
C_k(X_{n+1}) \coloneqq \{y\in\mathcal Y:\ p_k(y)>\alpha\}.
\end{align*}
By exchangeability, $p_k(Y_{n+1})$ is super-uniform, and hence $\mP\bigl(Y_{n+1}\in C_k(X_{n+1})\bigr)\ge 1-\alpha$.

Our objective is to merge the collection $\{C_k\}_{k=1}^K$ into a single prediction set that retains finite-sample coverage while exploiting dependence across scores to improve efficiency. To this end, we consider a merging function $f:[0,1]^K\to\mathbb R$, such as the minimum, median, or mean of the individual p-values.

\parheading{SB aggregation.}
For a candidate label $y$, SB aggregation constructs an aggregated conformal p-value as follows. For each score $k$, form an augmented sample by appending the test-point score to the calibration scores and define
\begin{align*}
S_i^k(y)\coloneqq
\begin{cases}
S_i^k, & i\in[n],\\
S_{n+1}^k(y), & i=n+1.
\end{cases}
\end{align*}
Based on this augmented sample, define the column-wise permutation p-values
\begin{equation}\label{eq:cp_SB_column_pvalues}
P_i^k(y)
\coloneqq
\frac{1}{n+1}\sum_{j=1}^{n+1}
\mathds{1}\!\left\{S_j^k(y)\ge S_i^k(y)\right\},
\qquad i\in[n+1],\ k\in[K],
\end{equation}
so that $P_{n+1}^k(y)=p_k(y)$. The standardized p-values are then aggregated across scores in a row-wise manner using the merging function $f$,
\begin{align*}
M_i(y) \coloneqq f\bigl(P_i^1(y),\ldots,P_i^K(y)\bigr),
\qquad i\in[n+1].
\end{align*}
The SB aggregated conformal p-value is obtained by calibrating the merged test-point value against its calibration counterparts,
\begin{align*}
p_{\mathrm{SB}}(y)
\coloneqq
\frac{1}{n+1}\sum_{i=1}^{n+1}
\mathds{1}\!\left\{M_i(y)\le M_{n+1}(y)\right\},
\end{align*}
and the resulting merged conformal prediction set is
\begin{align*}
C_{\mathrm{SB}}(X_{n+1})
\coloneqq
\{y\in\mathcal Y:\ p_{\mathrm{SB}}(y)>\alpha\}.
\end{align*}
When $y=Y_{n+1}$, the augmented score matrix is row-exchangeable. Since the permutation-equivariant 
column-wise p-value map in \eqref{eq:cp_SB_column_pvalues} and the merging
function $f$ are applied row-wise, the merged values are exchangeable, so
$p_{\mathrm{SB}}(Y_{n+1})$ is super-uniform and
\begin{align*}
\mP\bigl(Y_{n+1}\in C_{\mathrm{SB}}(X_{n+1})\bigr)\ge 1-\alpha.
\end{align*}
The SB power comparison in \Cref{thm:single-batch power} further implies that the SB merged conformal set is uniformly no larger than the conformal set obtained by calibrating $f\bigl(p_1(y),\ldots,p_K(y)\bigr)$ using any deterministic worst-case correction.

The computational drawback is that $p_{\mathrm{SB}}(y)$ must typically be
evaluated over many candidate labels $y$. The test-point score
$S_{n+1}^k(y)$ enters the ranking set in \eqref{eq:cp_SB_column_pvalues}, so the
column-wise p-values, the merged values, and the final calibration all change
with $y$. Thus, SB aggregation requires full recalibration for each candidate
label.

\parheading{TB aggregation.}
We now introduce a TB construction whose key feature is that the calibration threshold can be computed once, independently of the candidate label $y$. Partition the calibration indices as $[n]=I_{\mathrm{ref}}\cup I_{\mathrm{agg}}$ with $|I_{\mathrm{ref}}|=n_1$ and $|I_{\mathrm{agg}}|=n_2$ (e.g., via a random split), and retain $X_{n+1}$ as the test feature. Using the reference batch, standardize each score via reference-based p-values. Specifically, for each $k\in[K]$ and $i\in I_{\mathrm{agg}}$, define
\begin{equation}\label{eq:cp_TB_column_pvalues_cal}
P_i^{k,\mathrm{ref}}
\coloneqq
\frac{1 + \sum_{j\in I_{\mathrm{ref}}}\mathds{1}\{S_j^k \ge S_i^k\}}{n_1+1}.
\end{equation}
For a candidate label $y$, define analogously the test-point p-values
\begin{equation}
\label{eq:cp_TB_column_pvalues_test}
P_{n+1}^{k,\mathrm{ref}}(y)
\coloneqq
\frac{1 + \sum_{j\in I_{\mathrm{ref}}}\mathds{1}\{S_j^k \ge S_{n+1}^k(y)\}}{n_1+1}.
\end{equation}
Aggregate across $k$ using the merging function $f$ to obtain
\begin{align*}
M_i \coloneqq f\bigl(P_i^{1,\mathrm{ref}},\ldots,P_i^{K,\mathrm{ref}}\bigr),
\quad i\in I_{\mathrm{agg}},
\qquad
M_{n+1}(y) \coloneqq f\bigl(P_{n+1}^{1,\mathrm{ref}}(y),\ldots,P_{n+1}^{K,\mathrm{ref}}(y)\bigr).
\end{align*}
Calibration compares the merged test-point value to the empirical distribution of the merged calibration values. Define
\begin{equation}\label{eq:cp_TB_threshold}
\begin{aligned}
u_{\alpha}^{\mathrm{TB}}
&\coloneqq
-\,\mathrm{Quantile}_{(1-\alpha)(1+1/n_2)}
\bigl\{-M_i:\ i\in I_{\mathrm{agg}}\bigr\},\\
C_{\mathrm{TB}}(X_{n+1})
&\coloneqq
\{y\in\mathcal Y:\ M_{n+1}(y)\ge u_{\alpha}^{\mathrm{TB}}\}.
\end{aligned}
\end{equation}
The factor $(1+1/n_2)$ accounts for calibrating with the $n_2$ aggregation-batch
values alone, rather than with the augmented set that also includes the test
point.

A key feature of the TB construction is that the resulting threshold $u_{\alpha}^{\mathrm{TB}}$ depends only on $\{M_i\}_{i\in I_{\mathrm{agg}}}$ and is thus independent of $y$. As a result, when many labels are evaluated, the calibration threshold is computed once and reused, and only the test-point quantity $M_{n+1}(y)$ needs to be updated for each candidate label $y$. This decoupling of calibration from label evaluation is the central computational advantage of TB aggregation over the SB construction, which requires full recalibration for each candidate label. 

The finite-sample coverage guarantee follows from the same conditional
exchangeability argument. Given the split, the reference-based p-value maps are
fixed, so the aggregation-batch values and the test-point value at the true
label are treated symmetrically. Thus,
\begin{align*}
\mP\bigl(Y_{n+1}\in C_{\mathrm{TB}}(X_{n+1})\bigr)\ge 1-\alpha.
\end{align*}
\parheading{Example: TB intersection shortcut for residual scores.}
A particularly simple and useful specialization arises in regression with absolute residual scores. Suppose that
\begin{align*}
s_k(x,y)=|y-\widehat{\mu}_k(x)|,
\qquad\text{so that}\qquad S_i^k = |Y_i-\widehat{\mu}_k(X_i)|,
\end{align*}
where $\widehat{\mu}_k:\mathcal X\to\mathbb R$ is an arbitrary prediction function (e.g., a regression estimator) associated with score $k$ and trained on separate data. Consider the minimum merging rule
\begin{align*}
f(p_1,\ldots,p_K)=\min_{k\in[K]} p_k .
\end{align*}
Let $\{R_j^k\}_{j\in I_{\mathrm{ref}}}$ denote the reference residuals $R_j^k\coloneqq |Y_j-\widehat{\mu}_k(X_j)|$, and write $R_{(1)}^k\le \cdots \le R_{(n_1)}^k$ for their order statistics, with the convention $R_{(n_1+1)}^k\coloneqq \infty$. Then the TB set in \eqref{eq:cp_TB_threshold} admits the following closed-form intersection representation.

\begin{proposition}
\label{prop:cp_intersection}
Suppose that $f(p_1,\ldots,p_K)=\min_{k\in[K]}p_k$.
If $u_{\alpha}^{\mathrm{TB}}=-\infty$ then $C_{\mathrm{TB}}(X_{n+1})=\mathbb{R}$, otherwise, letting
$\ell\coloneqq n_1 + 2 - \left\lceil u_{\alpha}^{\mathrm{TB}}(n_1+1)\right\rceil$, it holds that
\begin{align*}
C_{\mathrm{TB}}(X_{n+1})
=
\bigcap_{k=1}^K
\Bigl[
\widehat{\mu}_k(X_{n+1})-R_{(\ell)}^k,\,
\widehat{\mu}_k(X_{n+1})+R_{(\ell)}^k
\Bigr].
\end{align*}
\end{proposition}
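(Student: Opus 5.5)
The proof is a direct unpacking of the set definition
\[
C_{\mathrm{TB}}(X_{n+1})=\{y: \min_k P_{n+1}^{k,\mathrm{ref}}(y)\ge u_\alpha^{\mathrm{TB}}\}.
\]
The minimum is at least $u$ if and only if every coordinate is at least $u$. Hence
\[
C_{\mathrm{TB}}=\bigcap_k\{y: P_{n+1}^{k,\mathrm{ref}}(y)\ge u_\alpha^{\mathrm{TB}}\},
\]
and it suffices to describe each single-score set. The case $u_\alpha^{\mathrm{TB}}=-\infty$ is immediate: every $y$ satisfies the inequality, so $C_{\mathrm{TB}}=\mathbb{R}$.

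For finite $u\coloneqq u_\alpha^{\mathrm{TB}}$, note that each $P_{n+1}^{k,\mathrm{ref}}(y)$ lies in the grid $\{m/(n_1+1): m=1,\ldots,n_1+1\}$. The inequality $P\ge u$ is therefore equivalent to $(n_1+1)P\ge \lceil u(n_1+1)\rceil$. Writing $N_k(y)\coloneqq\sum_{j\in I_{\mathrm{ref}}}\mathds{1}\{R_j^k\ge |y-\widehat\mu_k(X_{n+1})|\}$, the condition becomes $1+N_k(y)\ge \lceil u(n_1+1)\rceil$, i.e.
\[
N_k(y)\ge n_1+1-\ell .
\]
So we need at least $n_1+1-\ell$ of the reference residuals to be $\ge r\coloneqq|y-\widehat\mu_k(X_{n+1})|$.

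Next we convert this to an order-statistic condition. Having at least $m$ residuals $\ge r$ is equivalent to $R^k_{(n_1+1-m)}\ge r$. With $m=n_1+1-\ell$ this gives $R^k_{(\ell)}\ge r$. Two edge cases need checking.
\begin{itemize}
\item If $m\le 0$, i.e.\ $\ell\ge n_1+1$, the condition is vacuous. This matches the convention $R^k_{(n_1+1)}=\infty$ (or any larger index, interpreted as $\infty$), giving the whole line.
\item If $m>n_1$, i.e.\ $\ell\le 0$, the condition is impossible. I also need to confirm that $\ell\ge 1$ whenever $u$ is finite. Since $u$ is a negated quantile of merged values $M_i\in(0,1]$, it satisfies $u\le 1$, so $\lceil u(n_1+1)\rceil\le n_1+1$ and hence $\ell\ge 1$.
\end{itemize}
I expect this off-by-one bookkeeping, with ties handled by the non-strict inequality in $\mathds{1}\{S_j\ge S_{n+1}\}$, to be the only real obstacle.

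Finally, $R^k_{(\ell)}\ge |y-\widehat\mu_k(X_{n+1})|$ is equivalent to
\[
y\in\bigl[\widehat\mu_k(X_{n+1})-R^k_{(\ell)},\,\widehat\mu_k(X_{n+1})+R^k_{(\ell)}\bigr],
\]
which is closed because the inequality is non-strict. Intersecting over $k$ yields the stated representation.
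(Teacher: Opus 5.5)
Your proposal is correct and follows the paper's proof essentially step for step. You reduce the minimum to a per-coordinate condition, use integrality of $(n_1+1)P_{n+1}^{k,\mathrm{ref}}(y)$ to pass to $\lceil u(n_1+1)\rceil$, and convert ``at least $n_1+1-\ell$ reference residuals are $\ge r$'' into $r\le R^k_{(\ell)}$ with the convention $R^k_{(n_1+1)}=\infty$. Your explicit check that $\ell\ge 1$ whenever $u_\alpha^{\mathrm{TB}}$ is finite is bookkeeping the paper leaves implicit; the same observation that each $M_i\ge 1/(n_1+1)$ also gives $\ell\le n_1+1$, so the ``larger index'' convention is never needed.
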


Thus, under minimum merging, TB aggregation reduces conformal set inversion to
intersecting $K$ residual-based intervals. Consequently, unlike the SB
construction, \Cref{prop:cp_intersection} yields an explicit intersection form,
so that conformal set inversion reduces to computing fixed residual quantiles
rather than repeatedly recalibrating over candidate labels $y$.

\section{Numerical experiments}
\label{sec: simulations}

In the main text, we evaluate the proposed methods in three representative settings:
two-sample mean shift testing (\Cref{subsec:two-sample-mean-shift}), sequential two-sample nonparametric testing
(\Cref{subsec:sequential-two-sample-nonparametric}), and conformal prediction
(\Cref{subsec:conformal-prediction-experiments}). Additional simulation results,
including empirical level assessment, one-sample zero mean testing, and
independence testing, are deferred to \Cref{sec: additional simulation results}. All
experiments were conducted using a single NVIDIA A100 GPU (40GB VRAM). The code
for reproducing the experiments is available at
\url{https://github.com/antoninschrab/sb-paper}.

\subsection{Two-sample mean shift testing}
\label{subsec:two-sample-mean-shift}

\begin{figure}[!t]
	\centering
	\includegraphics[width=\textwidth]{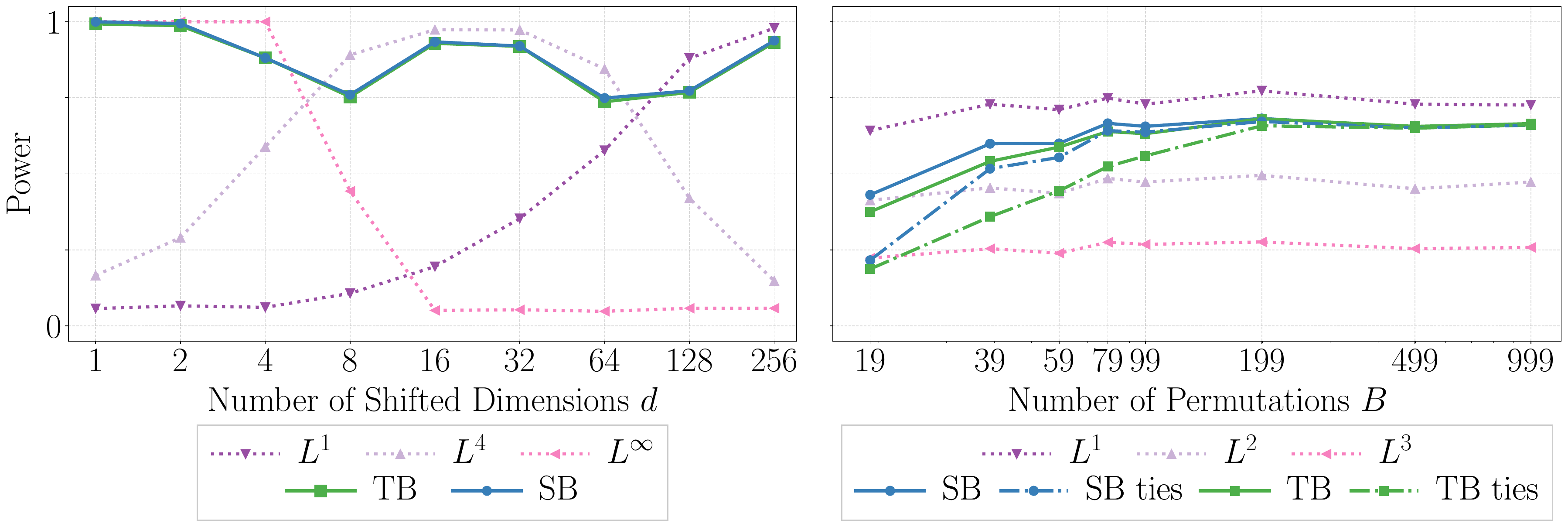}
	\caption{Power estimation for two-sample $d$-dimensional mean shift detection.} 
	\label{fig: figure_2}
\end{figure}

We evaluate the empirical power of the proposed tests using data drawn from a high-dimensional Laplace distribution with a targeted mean shift. 
Specifically, we generate two independent samples of sizes $m$ and $n$ in $\mathbb{R}^D$, where the features are initially drawn from a standard Laplace distribution. 
A signal is then introduced by replacing the values in the first $d$ dimensions of the first sample by $\Delta/2$ and of the second sample by $-\Delta/2$, yielding a true mean difference of $\Delta$ across the $d$ shifted dimensions. 
The empirical power is averaged over 1000 independent repetitions at a nominal significance level of $\alpha = 0.05$. 
For the first experiment varying signal sparsity in \Cref{fig: figure_2}, we fix the sample sizes to $m = n = 100$, the ambient dimension to $D = 20000$, and the number of permutations to $B = 199$. 
We vary the number of shifted dimensions $d \in \{1, 2, 4, 8, 16, 32, 64, 128, 256\}$, decaying the shift magnitude $\Delta$ according to $\Delta = 1.1 / (1 + 0.0318(d - 1)^{0.826})$ to maintain comparable non-trivial power across regimes. 
For the second experiment, we fix $m = n = 1000$, $D = 20000$, $d = 128$, and $\Delta = 0.1$, while varying the number of permutations $B$ from 19 to 999.
For the group transformations, we use permutations of the pooled data to construct two permuted samples.
We consider the classical permutation tests using the $L^p$ norm of the difference in sample means as test statistics, as well as the SB and TB tests aggregating over the same three $L^p$ norm statistics being considered. 
We implement the standard versions of SB and TB using tie-breaking completely at random, and their variants preserving ties (i.e., SB/TB ties).

The left panel of \Cref{fig: figure_2} demonstrates that the individual $L^1$, $L^4$, and $L^\infty$ tests achieve high power only against specific, distinct alternatives (i.e., $L^p$ detects only dense signal for low $p$ and only sparse signals for large $p$). 
In contrast, the combined SB and TB tests are highly adaptive, achieving high empirical power against all considered alternatives. 
This adaptivity only comes at a small cost in test power compared to the single best-performing test for any given regime. 
Furthermore, with $B = 199$ permutations, the SB and TB tests achieve exactly the same power. 
The right panel illustrates the effect of the permutation count $B$ and tie-breaking strategies. 
Employing tie-breaking greatly increases the statistical power, especially for small values of $B$. 
For these small $B$ values, the SB test outperforms the TB test (confirming the result of \Cref{prop:SBvsTB:finiteB_gap_main}); without tie-breaking, this difference in power is quite pronounced, whereas with tie-breaking, SB still outperforms TB, but the gap in power is much less significant. 
For large values of $B$, all adaptive variants (SB and TB, with and without tie-breaking) converge to the same power (confirming the result of \Cref{prop: SB-TB equivalence rate}), again achieving an adaptive performance just below the best individual test. 
As such, we choose $B = 199$ for all our subsequent experiments since we observe that the power remains the same for larger values of $B$ while the computational cost increases linearly with $B$.
We refer to \cite{domingo2025cheap} for a method to significantly reduce the number of permutations required.

\begin{figure}[!t]
	\centering
	\includegraphics[width=\textwidth]{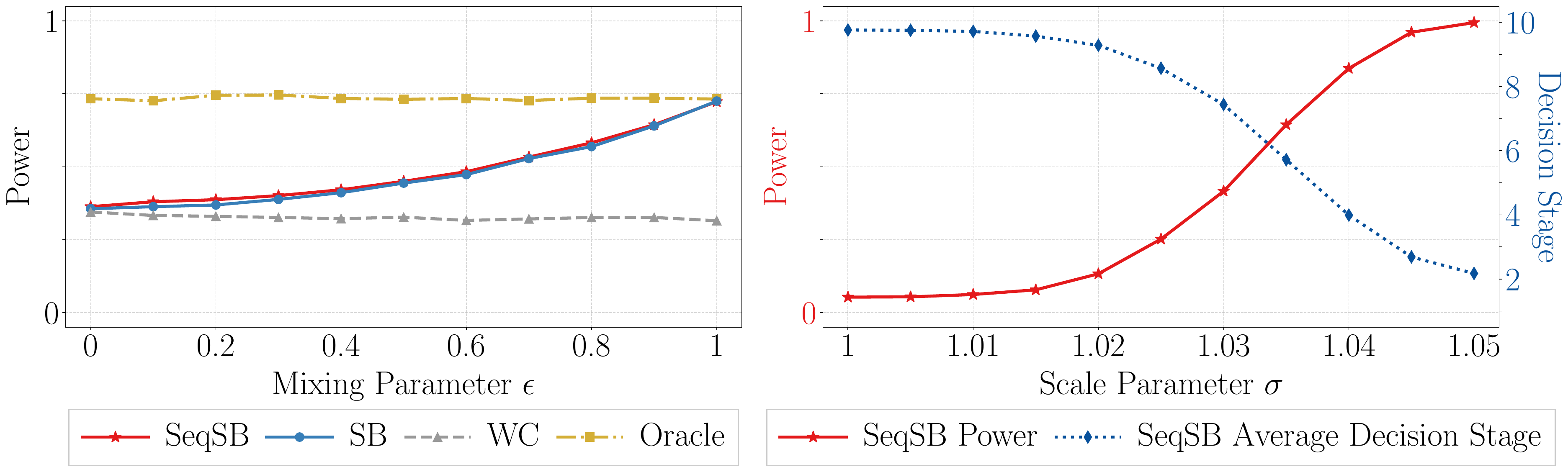}
	\caption{Power estimation for kernel-based MMD two-sample nonparametric testing in a sequential setting.} 
	\label{fig: figure_5}
\end{figure}

\subsection{Sequential two-sample nonparametric testing}
\label{subsec:sequential-two-sample-nonparametric}

We evaluate the empirical performance of the Sequential SB minimum test (SeqSB, see \Cref{alg:SeqSB-min}) using simulated data evaluated across $K=10$ sequential stages with equal spending $\alpha_j=\alpha/K$ for $j\in[K]$, designed to introduce a controlled correlation structure along with a localized departure from the null.
This experimental setting is closely related to the problem of change point detection.
The sequential hypothesis testing framework proceeds as follows: we first receive data $\mathbf{X}$, then at stage $k=1,\dots,K$, we receive data $\mathbf{Y}_k$, and we test the global null that $\mathbf{X}$ and $\mathbf{Y}_k$ are identically distributed for all $k=1,\dots,K$.
More specifically, we set $m=n$ and first draw $n=1000$ samples from $\mathcal{N}(0, I_{50})$ for $\mathbf{X}$, and for each stage $k=1,\dots,K$, we independently draw $n=1000$ samples from $\mathcal{N}(0, I_{50})$ for $\mathbf{Z}_k$.
For some scale parameter $\sigma>0$ ($\sigma=1$ corresponds to the null), we construct $\mathbf{Y}_1 = \mathbf{Z}_1$, then $\mathbf{Y}_2 = \sigma f(\mathbf{Y}_1,\mathbf{Z}_2)$, then $\mathbf{Y}_3 = f(\mathbf{Y}_2/\sigma,\mathbf{Z}_3)$, and $\mathbf{Y}_k = f(\mathbf{Y}_{k-1},\mathbf{Z}_k)$ for stages $k=4,\dots,K$. 
The transition function is defined as $f(\mathbf{Y}, \mathbf{Z}) = \sqrt{1 - (1-\epsilon)^2} \mathbf{Y} + (1-\epsilon) \mathbf{Z}$, ensuring that for any mixing parameter $\epsilon\in[0,1]$, the distribution of $\mathbf{Y}_k$ remains $\mathcal{N}(0, I_{50})$ for all $k\neq 2$ and $\mathbf{Y}_2$ is i.i.d. $\mathcal{N}(0, \sigma^2I_{50})$.
If $\epsilon = 0$, the samples $\mathbf{Y}_1,\dots,\mathbf{Y}_K$ are mutually independent, and as $\epsilon$ increases, the correlation between the samples increases, with $\epsilon=1$ corresponding to perfect correlation across all stages.
We set $\epsilon=0.9$ to introduce a strong correlation structure across the stages in the right panel of \Cref{fig: figure_5}.
At each stage $k=1,\dots,K$, we compute the Maximum Mean Discrepancy U-statistic (MMD, \cite{gretton2012kernel}) between $\mathbf{X}$ and $\mathbf{Y}_k$ using a Laplace kernel with fixed bandwidth.
Using the minimum merging function, we compare the proposed SeqSB test against the non-sequential SB test and to the Worst-Case (WC) baseline which corresponds to Bonferroni correction across the $K$ stages.
As a reference point, we also include the power of the Oracle MMD permutation test between $\mathbf{X}$ and $\mathbf{Y}_2$, which leverages prior knowledge that the departure from the null occurs exactly (and only) at stage $k=2$.
The empirical power and the average decision stage (defined as the mean stage at which the sequential procedure terminates) are averaged over 1000 independent repetitions. 
All tests use $B=199$ wild bootstrap MMD U-statistics, using the same Rademacher variables across all $K$ stages to maintain the dependence structure across the test statistics.
Note that we assume that the total number of stages $K$ is known and fixed (see \citep{shekhar2023nonparametric} for an anytime valid MMD-based test).

The empirical results are summarized in \Cref{fig: figure_5}. The left panel illustrates the effect of the mixing parameter $\epsilon$ under a fixed departure scale of $\sigma = 1.035$. 
Notably, the SeqSB and SB tests achieve identical empirical power, though the SeqSB test provides the critical operational advantage of early stopping. 
When $\epsilon=0$, the SB procedure coincides with the conservative WC Bonferroni baseline. 
However, as the dependence $\epsilon$ increases, the power of the SB tests rises, eventually achieving the exact same power as the Oracle single test as $\epsilon$ tends to $1$. 
This behavior demonstrates that the SB methodology adaptively calibrates to the dependence structure across statistics, maintaining nominal level control without over-correcting for redundant information (\Cref{prop: minimum single-batch calibration statistic view}).
The right panel demonstrates the effect of varying the scale parameter $\sigma$ from $1$ to $1.05$ under strong stage correlation ($\epsilon=0.9$). 
As the signal strength increases, the empirical power of the SeqSB test monotonically approaches 1. 
Concurrently, the average decision stage drops dramatically; for sufficiently strong departures from the null hypothesis, the SeqSB test is capable of rejecting the null as soon as signal is injected (e.g., at stage $k=2$ out of 10 for $\sigma=1.05$), underscoring its efficiency and responsiveness in sequential monitoring environments.

\subsection{Conformal prediction}
\label{subsec:conformal-prediction-experiments}

\begin{figure}[!t]
	\centering
	\includegraphics[width=\textwidth]{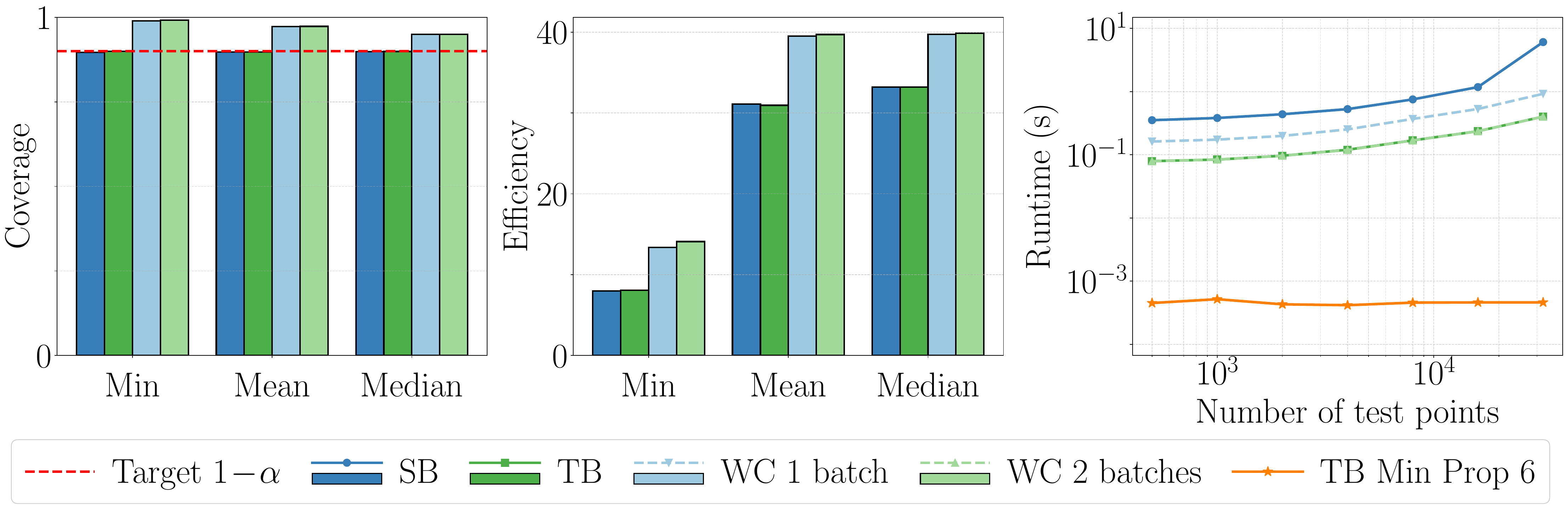}
	\caption{Marginal coverage, efficiency (average prediction set size) and runtimes (in seconds) for merging multiple conformal prediction sets.} 
	\label{fig: figure_6}
\end{figure}

In \Cref{fig: figure_6}, we evaluate the proposed methods within a conformal prediction framework, aiming to construct valid prediction sets for a regression target. 
The objective is to ensure precise marginal coverage (defined as the probability that the true test response falls within the constructed set) while simultaneously minimizing the prediction set size (average interval width). 
Data is generated from a one-dimensional heteroscedastic regression model with covariates $X \sim \text{Uniform}(-2, 2)$ and responses $Y = \sin(3X) + \varepsilon$, with noise $\varepsilon \sim \mathcal{N}(0, (1+|X|))$. 
We evaluate $K=20$ distinct predictors, each approximating the true function but injected with independent additive noise, as well as with noise shared across all $K$ predictors. 
The first predictor has noise of much smaller scale.
Formally, we define $Y^k = \sin(3X) + s_k (0.99\varepsilon_\mathrm{shared} + 0.01\varepsilon_k)$ for independent noise variables $\varepsilon_\mathrm{shared}, \varepsilon_1, \dots, \varepsilon_K \sim \mathcal{N}(0, 1)$ and scaling parameter $s_1 = 0.01$ and $s_k = 10$ for $k=2,\dots,K$.
We use the absolute residuals as non-conformity scores and a target marginal coverage of $1-\alpha=0.90$.
For SB we use $n=999$, while for TB we split these samples into $n_1=500$ for the reference batch and $n_2=499$ for the aggregation batch, ensuring that $\lfloor(n+1)\alpha\rfloor / (n+1) = \lfloor(n_2+1)\alpha\rfloor / (n_2+1) = \alpha$.
We emphasize that our implementation using absolute residual scores is exact in the sense that it does not require a discretization of the response space, avoiding both the computational cost and the approximation errors inherent to grid-based evaluations.
Validity metrics are averaged over 1000 independent repetitions, each considering 100 new test points (overall averaged over 100,000 test points).
We systematically benchmark the Single-Batch (SB) and Two-Batch (TB) methods against Worst-Case (WC 1 batch and WC 2 batches) baselines using minimum, mean, and median merging functions, as detailed below. 
Finally, to assess computational scalability, end-to-end execution times are recorded strictly after Just-In-Time (JIT) compilation, ensuring a fair evaluation of runtime complexity as the number of test points varies (e.g., 500, 1000, 2000, 4000, 8000, 16000, 32000) with fixed $n=500$ and $K=20$.

In this setting, we have $K$ conformal prediction sets $C^k_\alpha(X_{n+1})\coloneqq\{y\in\mathcal Y:\ p_k(y)>\alpha\}$, each constructed such that $\mathbb P(Y_{n+1}\in C^k_\alpha(X_{n+1}))\geq 1-\alpha$, for $k\in[K]$.
The p-values $p_k$ are constructed either using the full data (i.e., single batch, \eqref{eq:conf_pval_pky}) or using a subsample of the data (i.e., two batches, \eqref{eq:cp_TB_column_pvalues_test}).
The aim is then to combine $C^1_\cdot(X_{n+1}),\dots, C^K_\cdot(X_{n+1})$ into a single set $C_\alpha(X_{n+1})$ which still satisfies the marginal coverage guarantee $\mathbb P(Y_{n+1}\in C_\alpha(X_{n+1}))\geq 1-\alpha$.
The SB and TB methods presented in \Cref{sec:conformal-prediction} provide valid constructions for this problem.
Alternatively, one can also consider the Worst-Case (WC) baselines which construct $C^\mathrm{WC}_\alpha(X_{n+1})\coloneqq\{y\in\mathcal Y:\ f(p_1(y), \dots,p_K(y))>\alpha\}$ for specifically designed p-merging functions (\Cref{sec:p-value-merging} and \Cref{sec:p-merging-families}) such as the minimum p-merging function $f(p_1, \dots, p_K) = K\min(p_1, \dots, p_K)$, the mean p-merging function $f(p_1, \dots, p_K) = 2\,\mathrm{mean}(p_1, \dots, p_K)$ and the median p-merging function $f(p_1, \dots, p_K) = 2\,\mathrm{median}(p_1, \dots, p_K)$, with scaling specifically chosen to ensure $1-\alpha$ marginal coverage under arbitrary dependence \citep{vovk2020combining}.

The empirical results of \Cref{fig: figure_6} confirm the theoretical guarantees and highlight the computational superiority of the TB precomputation approach. 
As observed in the first analysis, both the SB and TB procedures tightly control the marginal coverage near the target level $1-\alpha$ for all merging functions. 
In contrast, the WC baselines are markedly conservative. 
For this experimental setting in which $Y^1$ is injected with less noise than the other predictors, the minimum merging methods yield significantly smaller prediction sets than the mean and median methods which are dominated by the noisy predictors.
Furthermore, SB and TB achieve much smaller average prediction set sizes compared to WC, as the aggregation methods leverage the dependence structure across the $K$ predictors.
Crucially, while SB and TB exhibit identical statistical validity and efficiency, their execution times diverge significantly. 
The runtime plot demonstrates that the TB procedure generally executes considerably faster than SB, and that with the expression of \Cref{prop:cp_intersection} designed specially for the minimum merging function, TB is orders of magnitude faster than SB.
This performance gap is directly explained by their time complexities: $\mathcal{O}(K n \log n + M K n \log(Kn))$ for the general SB and TB variants\footnote{This assumes that $n_1$ and $n_2$ are of the same order as $n$, and that the merging function admits an incremental update rule (e.g., the minimum, average, and median functions).}, and $\mathcal{O}(K n \log n + M K)$ for the TB formulation in \Cref{prop:cp_intersection}, where $M$ represents the number of test points.
This highlights the advantages of the TB approach for conformal prediction applications, while the SB procedure is superior for hypothesis testing applications.
We refer the reader to \Cref{sec:alg-conformal} for a detailed implementation of the TB and SB conformal prediction procedures, bypassing the need for discretization of the response space, and of their computational complexities.

\section{Discussion} \label{sec: discussion}
This paper studies row-wise permutation aggregation for statistical evidence under exchangeability. Building on permutation-combination ideas, we characterize the finite-sample power and dependence adaptivity of SB aggregation and extend the framework through sequential spending and two-batch data-dependent aggregation. By operating at the level of transformed data rather than relying solely on worst-case super-uniformity, the proposed methods can exploit the underlying dependence across statistics while retaining finite-sample validity. The applications to adaptive testing and conformal prediction illustrate how this power theory and the TB extension can be used in practice.

Several directions remain open. First, permutation p-values provide a canonical standardization, but for small $B$ they coarsen evidence onto a sparse grid. Continuous standardizations, such as null-CDF or studentized standardizations, may improve power, and their gains remain to be characterized. Second, a general optimality theory for aggregation is missing. Aggregation adapts across heterogeneous alternatives but incurs a calibration cost; identifying when this cost is unavoidable, and proving minimax, oracle-adaptive, or lower-bound guarantees, are central questions. Third, the sequential theory could be extended beyond fixed alpha-spending. SeqSB gives finite-sample valid early stopping for ordered statistics, but leaves open how to choose or learn the spending sequence, handle streaming or adaptively ordered statistics, and characterize the tradeoff between early stopping, power, and calibration cost. Finally, it would be useful to extend the theory beyond exact exchangeability. The current guarantees assume that the transformed rows are exactly exchangeable, but in large or constrained transformation spaces one may only have approximate randomization. Characterizing how aggregation validity and power degrade under approximate exchangeability, and how to correct for this degradation, would substantially broaden the scope of the framework.

\bibliographystyle{apalike}
\bibliography{reference.bib}

\begin{thebibliography}{}

\bibitem[Albert, 2015]{albert2015tests}
Albert, M. (2015).
\newblock {\em Tests of independence by bootstrap and permutation: an
  asymptotic and non-asymptotic study. Application to neurosciences.}
\newblock PhD thesis, Universit{\'e} Nice Sophia Antipolis.

\bibitem[Albert et~al., 2022]{albert2022adaptive}
Albert, M., Laurent, B., Marrel, A., and Meynaoui, A. (2022).
\newblock {Adaptive test of independence based on HSIC measures}.
\newblock {\em The Annals of Statistics}, 50(2):858--879.

\bibitem[Angelopoulos et~al., 2024]{angelopoulos2024theoretical}
Angelopoulos, A.~N., Barber, R.~F., and Bates, S. (2024).
\newblock {Theoretical Foundations of Conformal Prediction}.
\newblock {\em arXiv preprint arXiv:2411.11824}.

\bibitem[Baraud et~al., 2003]{baraud2003adaptive}
Baraud, Y., Huet, S., and Laurent, B. (2003).
\newblock Adaptive tests of linear hypotheses by model selection.
\newblock {\em The Annals of Statistics}, 31(1):225--251.

\bibitem[Berrett et~al., 2021]{berrett2021optimal}
Berrett, T.~B., Kontoyiannis, I., and Samworth, R.~J. (2021).
\newblock {Optimal rates for independence testing via U-statistic permutation
  tests}.
\newblock {\em The Annals of Statistics}, 49(5):2457--2490.

\bibitem[Berrett et~al., 2020]{berrett2020conditional}
Berrett, T.~B., Wang, Y., Barber, R.~F., and Samworth, R.~J. (2020).
\newblock {The conditional permutation test for independence while controlling
  for confounders}.
\newblock {\em Journal of the Royal Statistical Society Series B: Statistical
  Methodology}, 82(1):175--197.

\bibitem[Biggs et~al., 2023]{biggs2023mmdfuse}
Biggs, F., Schrab, A., and Gretton, A. (2023).
\newblock {MMD-FUSE}: {L}earning and combining kernels for two-sample testing
  without data splitting.
\newblock {\em Advances in Neural Information Processing Systems}, 36.

\bibitem[Candes et~al., 2018]{candes2018panning}
Candes, E., Fan, Y., Janson, L., and Lv, J. (2018).
\newblock Panning for gold: {‘Model-X’} knockoffs for high dimensional
  controlled variable selection.
\newblock {\em Journal of the Royal Statistical Society Series B: Statistical
  Methodology}, 80(3):551--577.

\bibitem[Caughey et~al., 2017]{caughey2017nonparametric}
Caughey, D., Dafoe, A., and Seawright, J. (2017).
\newblock {Nonparametric combination (NPC): A framework for testing elaborate
  theories}.
\newblock {\em The Journal of Politics}, 79(2):688--701.

\bibitem[Cha et~al., 2026]{cha2026more}
Cha, S., Lee, S., Schrab, A., and Kim, I. (2026).
\newblock {More Permutations Do Not Always Increase Power: Non-monotonicity in
  Monte Carlo Permutation Tests}.
\newblock {\em arXiv preprint arXiv:2605.03886}.

\bibitem[Chau et~al., 2025]{chau2024credal}
Chau, S.~L., Schrab, A., Gretton, A., Sejdinovic, D., and Muandet, K. (2025).
\newblock Credal two-sample tests of epistemic uncertainty.
\newblock In {\em Proceedings of The 28th International Conference on
  Artificial Intelligence and Statistics}, volume 258 of {\em Proceedings of
  Machine Learning Research}, pages 127--135. PMLR.

\bibitem[Choi and Kim, 2023]{choi2023averaging}
Choi, W. and Kim, I. (2023).
\newblock Averaging p-values under exchangeability.
\newblock {\em Statistics \& Probability Letters}, 194:109748.

\bibitem[Chung and Romano, 2013]{chung2013}
Chung, E. and Romano, J.~P. (2013).
\newblock Exact and asymptotically robust permutation tests.
\newblock {\em The Annals of Statistics}, 41(2):484--507.

\bibitem[Cox, 1975]{cox1975note}
Cox, D.~R. (1975).
\newblock A note on data-splitting for the evaluation of significance levels.
\newblock {\em Biometrika}, 62(2):441--444.

\bibitem[Domingo-Enrich et~al., 2025]{domingo2025cheap}
Domingo-Enrich, C., Dwivedi, R., and Mackey, L. (2025).
\newblock Cheap permutation testing.
\newblock {\em arXiv preprint arXiv:2502.07672}.

\bibitem[Fisher, 1925]{fisher1925statistical}
Fisher, R.~A. (1925).
\newblock {\em Statistical Methods for Research Workers}.
\newblock Oliver and Boyd, Edinburgh.

\bibitem[Fisher, 1935]{fisher1935design}
Fisher, R.~A. (1935).
\newblock {\em {The Design of Experiments}}.
\newblock Oliver and Boyd, Edinburgh.

\bibitem[Fromont and Laurent, 2006]{fromont2006}
Fromont, M. and Laurent, B. (2006).
\newblock Adaptive goodness-of-fit tests in a density model.
\newblock {\em The Annals of Statistics}, 34(2):680--720.

\bibitem[Fromont et~al., 2013]{fromont2013}
Fromont, M., Laurent, B., and Reynaud-Bouret, P. (2013).
\newblock The two-sample problem for poisson processes: Adaptive tests with a
  nonasymptotic wild bootstrap approach.
\newblock {\em The Annals of Statistics}, 41(3):1431--1461.

\bibitem[Gasparin and Ramdas, 2024]{gasparin2024merging}
Gasparin, M. and Ramdas, A. (2024).
\newblock Merging uncertainty sets via majority vote.
\newblock {\em arXiv preprint arXiv:2401.09379}.

\bibitem[Gasparin et~al., 2025]{gasparin2025combining}
Gasparin, M., Wang, R., and Ramdas, A. (2025).
\newblock Combining exchangeable p-values.
\newblock {\em Proceedings of the National Academy of Sciences},
  122(11):e2410849122.

\bibitem[Good, 2005]{good2005permutation}
Good, P. (2005).
\newblock {\em Permutation, parametric and bootstrap tests of hypotheses}.
\newblock Springer.

\bibitem[Gretton, 2015]{gretton2015simpler}
Gretton, A. (2015).
\newblock {A simpler condition for consistency of a kernel independence test}.
\newblock {\em arXiv preprint arXiv:1501.06103}.

\bibitem[Gretton et~al., 2012]{gretton2012kernel}
Gretton, A., Borgwardt, K.~M., Rasch, M.~J., Sch{{\"o}}lkopf, B., and Smola, A.
  (2012).
\newblock A kernel two-sample test.
\newblock {\em Journal of Machine Learning Research}, 13(25):723--773.

\bibitem[Gretton et~al., 2005]{gretton2005kernel}
Gretton, A., Herbrich, R., Smola, A., Bousquet, O., and Sch{\"o}lkopf, B.
  (2005).
\newblock {Kernel methods for measuring independence}.
\newblock {\em Journal of Machine Learning Research}, 6:2075--2129.

\bibitem[Guo and Shah, 2025]{guo2024rank}
Guo, F.~R. and Shah, R.~D. (2025).
\newblock Rank-transformed subsampling: inference for multiple data splitting
  and exchangeable p-values.
\newblock {\em Journal of the Royal Statistical Society Series B: Statistical
  Methodology}, 87(1):256--286.

\bibitem[Hagrass et~al., 2024]{hagrass2024spectral}
Hagrass, O., Sriperumbudur, B., and Li, B. (2024).
\newblock Spectral regularized kernel two-sample tests.
\newblock {\em The Annals of Statistics}, 52(3):1076--1101.

\bibitem[Harrison, 2012]{harrison2012conservative}
Harrison, M.~T. (2012).
\newblock Conservative hypothesis tests and confidence intervals using
  importance sampling.
\newblock {\em Biometrika}, 99(1):57--69.

\bibitem[Hegazy et~al., 2025]{hegazy2025valid}
Hegazy, M., Aolaritei, L., Jordan, M.~I., and Dieuleveut, A. (2025).
\newblock Valid selection among conformal sets.
\newblock In {\em Advances in Neural Information Processing Systems},
  volume~38.

\bibitem[Hemerik and Goeman, 2018]{hemerik2018exact}
Hemerik, J. and Goeman, J. (2018).
\newblock Exact testing with random permutations.
\newblock {\em Test}, 27(4):811--825.

\bibitem[Ingster and Suslina, 2012]{ingster2012nonparametric}
Ingster, Y. and Suslina, I.~A. (2012).
\newblock {\em Nonparametric goodness-of-fit testing under Gaussian models},
  volume 169.
\newblock Springer Science \& Business Media.

\bibitem[Jankov{\'a} et~al., 2020]{jankova2020goodness}
Jankov{\'a}, J., Shah, R.~D., B{\"u}hlmann, P., and Samworth, R.~J. (2020).
\newblock Goodness-of-fit testing in high dimensional generalized linear
  models.
\newblock {\em Journal of the Royal Statistical Society Series B: Statistical
  Methodology}, 82(3):773--795.

\bibitem[Kent et~al., 2026]{kent2025locally}
Kent, A., Berrett, T.~B., and Yu, Y. (2026).
\newblock {Locally Differentially Private Two-Sample Testing}.
\newblock {\em Biometrika}, page asag034.

\bibitem[Kim et~al., 2022]{kim2022minimax}
Kim, I., Balakrishnan, S., and Wasserman, L. (2022).
\newblock Minimax optimality of permutation tests.
\newblock {\em The Annals of Statistics}, 50(1):225--251.

\bibitem[Kim et~al., 2024]{kim2024conditional}
Kim, I., Neykov, M., Balakrishnan, S., and Wasserman, L. (2024).
\newblock {Conditional independence testing for discrete distributions: Beyond
  $\chi^2$- and G-tests}.
\newblock {\em Electronic Journal of Statistics}, 18(2):4767--4794.

\bibitem[Kim and Ramdas, 2024]{kim2024dimension}
Kim, I. and Ramdas, A. (2024).
\newblock {Dimension-agnostic inference using cross U-statistics}.
\newblock {\em Bernoulli}, 30(1):683--711.

\bibitem[Kim et~al., 2021]{kim2021classification}
Kim, I., Ramdas, A., Singh, A., and Wasserman, L. (2021).
\newblock Classification accuracy as a proxy for two-sample testing.
\newblock {\em The Annals of Statistics}, 49(1):411--434.

\bibitem[Kim and Schrab, 2026]{kim2025differentially}
Kim, I. and Schrab, A. (2026).
\newblock {Differentially Private Permutation Tests}.
\newblock {\em Journal of the American Statistical Association}, pages 1--13.

\bibitem[Lehmann and Romano, 2022]{lehmann2022}
Lehmann, E. and Romano, J.~P. (2022).
\newblock {\em {Testing Statistical Hypotheses}}.
\newblock Springer Texts in Statistics. Springer, 4th edition.

\bibitem[Lei et~al., 2018]{lei2018distribution}
Lei, J., G’Sell, M., Rinaldo, A., Tibshirani, R.~J., and Wasserman, L.
  (2018).
\newblock Distribution-free predictive inference for regression.
\newblock {\em Journal of the American Statistical Association},
  113(523):1094--1111.

\bibitem[Liu et~al., 2020]{liu2020learning}
Liu, F., Xu, W., Lu, J., Zhang, G., Gretton, A., and Sutherland, D.~J. (2020).
\newblock Learning deep kernels for non-parametric two-sample tests.
\newblock In {\em International Conference on Machine Learning}, pages
  6316--6326.

\bibitem[Lundborg et~al., 2024]{lundborg2024projected}
Lundborg, A.~R., Kim, I., Shah, R.~D., and Samworth, R.~J. (2024).
\newblock The projected covariance measure for assumption-lean variable
  significance testing.
\newblock {\em The Annals of Statistics}, 52(6):2851--2878.

\bibitem[Meinshausen et~al., 2011]{meinshausen2011asymptotic}
Meinshausen, N., Maathuis, M.~H., and B{\"u}hlmann, P. (2011).
\newblock Asymptotic optimality of the westfall–young permutation procedure
  for multiple testing under dependence.
\newblock {\em The Annals of Statistics}, 39(6):3369--3391.

\bibitem[Meng, 1994]{meng1994posterior}
Meng, X.-L. (1994).
\newblock {Posterior predictive $p$-values}.
\newblock {\em The Annals of Statistics}, 22(3):1142--1160.

\bibitem[Moran, 1973]{moran1973dividing}
Moran, P.~A. (1973).
\newblock Dividing a sample into two parts a statistical dilemma.
\newblock {\em Sankhy{\=a}: The Indian Journal of Statistics, Series A}, pages
  329--333.

\bibitem[Mun et~al., 2025]{mun2024minimax}
Mun, J., Kwak, S., and Kim, I. (2025).
\newblock Minimax optimal two-sample testing under local differential privacy.
\newblock {\em Journal of Machine Learning Research}, 26(252):1--79.

\bibitem[Paik et~al., 2025]{paik2025integral}
Paik, S., Celentano, M., Green, A., and Tibshirani, R.~J. (2025).
\newblock {Integral Probability Metrics Meet Neural Networks: The
  Radon-Kolmogorov-Smirnov Test}.
\newblock {\em Journal of Machine Learning Research}, 26(86):1--57.

\bibitem[Papadopoulos, 2008]{papadopoulos2008inductive}
Papadopoulos, H. (2008).
\newblock {\em {Inductive conformal prediction: Theory and application to
  neural networks}}.
\newblock INTECH Open Access Publisher Rijeka.

\bibitem[Pesarin and Salmaso, 2010]{pesarinsalmaso2010}
Pesarin, F. and Salmaso, L. (2010).
\newblock {\em Permutation Tests for Complex Data: Theory, Applications and
  Software}.
\newblock Wiley Series in Probability and Statistics. John Wiley \& Sons.

\bibitem[Pitman, 1937]{pitman1937significance}
Pitman, E.~J. (1937).
\newblock Significance tests which may be applied to samples from any
  populations.
\newblock {\em Supplement to the Journal of the Royal Statistical Society},
  4(1):119--130.

\bibitem[Pogodin et~al., 2024]{pogodin2024practical}
Pogodin, R., Schrab, A., Li, Y., Sutherland, D.~J., and Gretton, A. (2024).
\newblock {Practical Kernel Tests of Conditional Independence}.
\newblock {\em arXiv preprint arXiv:2402.13196}.

\bibitem[Ramdas et~al., 2023]{ramdas2023permutation}
Ramdas, A., Barber, R.~F., Cand{\`e}s, E.~J., and Tibshirani, R.~J. (2023).
\newblock Permutation tests using arbitrary permutation distributions.
\newblock {\em Sankhya A}, 85(2):1156--1177.

\bibitem[Ramdas and Wang, 2025]{ramdas2025hypothesis}
Ramdas, A. and Wang, R. (2025).
\newblock {Hypothesis testing with E-values}.
\newblock {\em Foundations and Trends{\textregistered} in Statistics},
  1(1-2):1--390.

\bibitem[Ribero et~al., 2026]{ribero2026regularized}
Ribero, M., Schrab, A., and Gretton, A. (2026).
\newblock Regularized $f$-divergence kernel tests.
\newblock In {\em The 29th International Conference on Artificial Intelligence
  and Statistics}.

\bibitem[Romano and Wolf, 2005]{romano2005exact}
Romano, J.~P. and Wolf, M. (2005).
\newblock Exact and approximate stepdown methods for multiple hypothesis
  testing.
\newblock {\em Journal of the American Statistical Association},
  100(469):94--108.

\bibitem[R{\"u}schendorf, 1982]{rueschendorf1982}
R{\"u}schendorf, L. (1982).
\newblock {Random Variables with Maximum Sums}.
\newblock {\em Advances in Applied Probability}, 14(3):623--632.

\bibitem[Rüger, 1978]{ruger1978}
Rüger, B. (1978).
\newblock {Das maximale Signifikanzniveau des Tests: ,,Lehne $H_0$ ab, wenn $k$
  unter $n$ gegebenen Tests zur Ablehnung führen."}.
\newblock {\em Metrika}, 25:171--178.

\bibitem[Schrab, 2025a]{schrab2025practical}
Schrab, A. (2025a).
\newblock {A practical introduction to kernel discrepancies: MMD, HSIC \& KSD}.
\newblock {\em arXiv preprint arXiv:2503.04820}.

\bibitem[Schrab, 2025b]{schrab2025optimal}
Schrab, A. (2025b).
\newblock {\em {Optimal Kernel Hypothesis Testing}}.
\newblock PhD thesis, UCL (University College London).

\bibitem[Schrab, 2025c]{schrab2025unified}
Schrab, A. (2025c).
\newblock A unified view of optimal kernel hypothesis testing.
\newblock {\em arXiv preprint arXiv:2503.07084}.

\bibitem[Schrab et~al., 2022a]{schrab2022ksd}
Schrab, A., Guedj, B., and Gretton, A. (2022a).
\newblock {KSD Aggregated Goodness-of-fit Test}.
\newblock In {\em Advances in Neural Information Processing Systems 35: Annual
  Conference on Neural Information Processing Systems 2022, NeurIPS 2022}.

\bibitem[Schrab et~al., 2023]{schrab2023mmd}
Schrab, A., Kim, I., Albert, M., Laurent, B., Guedj, B., and Gretton, A.
  (2023).
\newblock {MMD aggregated two-sample test}.
\newblock {\em Journal of Machine Learning Research}, 24(194):1--81.

\bibitem[Schrab et~al., 2022b]{schrab2022efficient}
Schrab, A., Kim, I., Guedj, B., and Gretton, A. (2022b).
\newblock {Efficient aggregated kernel tests using incomplete $ U
  $-statistics}.
\newblock {\em Advances in Neural Information Processing Systems},
  35:18793--18807.

\bibitem[Shah and B{\"u}hlmann, 2018]{shah2018goodness}
Shah, R.~D. and B{\"u}hlmann, P. (2018).
\newblock Goodness-of-fit tests for high dimensional linear models.
\newblock {\em Journal of the Royal Statistical Society Series B: Statistical
  Methodology}, 80(1):113--135.

\bibitem[Shekhar et~al., 2022]{shekhar2022permutation}
Shekhar, S., Kim, I., and Ramdas, A. (2022).
\newblock A permutation-free kernel two-sample test.
\newblock {\em Advances in Neural Information Processing Systems},
  35:18168--18180.

\bibitem[Shekhar et~al., 2023]{shekhar2023permutation}
Shekhar, S., Kim, I., and Ramdas, A. (2023).
\newblock A permutation-free kernel independence test.
\newblock {\em Journal of Machine Learning Research}, 24(369):1--68.

\bibitem[Shekhar and Ramdas, 2024]{shekhar2023nonparametric}
Shekhar, S. and Ramdas, A. (2024).
\newblock Nonparametric two-sample testing by betting.
\newblock {\em IEEE Transactions on Information Theory}, 70(2):1178--1203.

\bibitem[Solmi and Onghena, 2014]{solmi2014combining}
Solmi, F. and Onghena, P. (2014).
\newblock Combining p-values in replicated single-case experiments with
  multivariate outcome.
\newblock {\em Neuropsychological Rehabilitation}, 24(3-4):607--633.

\bibitem[Stouffer et~al., 1949]{stouffer1949american}
Stouffer, S.~A., Suchman, E.~A., DeVinney, L.~C., Star, S.~A., and Williams~Jr,
  R.~M. (1949).
\newblock {\em {The American Soldier: Adjustment during Army Life (Vol. 1)}}.
\newblock Princeton University Press, Princeton, NJ.

\bibitem[Tansey et~al., 2022]{tansey2022holdout}
Tansey, W., Veitch, V., Zhang, H., Rabadan, R., and Blei, D.~M. (2022).
\newblock The holdout randomization test for feature selection in black box
  models.
\newblock {\em Journal of Computational and Graphical Statistics},
  31(1):151--162.

\bibitem[Vovk et~al., 2005]{vovk2005algorithmic}
Vovk, V., Gammerman, A., and Shafer, G. (2005).
\newblock {\em {Algorithmic Learning in a Random World}}.
\newblock Springer.

\bibitem[Vovk et~al., 2022]{vovk2022admissible}
Vovk, V., Wang, B., and Wang, R. (2022).
\newblock Admissible ways of merging p-values under arbitrary dependence.
\newblock {\em The Annals of Statistics}, 50(1):351--375.

\bibitem[Vovk and Wang, 2020]{vovk2020combining}
Vovk, V. and Wang, R. (2020).
\newblock Combining p-values via averaging.
\newblock {\em Biometrika}, 107(4):791--808.

\bibitem[Vovk and Wang, 2021]{vovk2021values}
Vovk, V. and Wang, R. (2021).
\newblock {E-values: Calibration, combination, and applications}.
\newblock {\em The Annals of Statistics}, 49(3):1736--1754.

\bibitem[Westfall and Young, 1993]{westfall1993resampling}
Westfall, P.~H. and Young, S.~S. (1993).
\newblock {\em {Resampling-based multiple testing: Examples and methods for
  p-value adjustment}}.
\newblock John Wiley \& Sons.

\bibitem[Yang and Kuchibhotla, 2025]{yang2024selection}
Yang, Y. and Kuchibhotla, A.~K. (2025).
\newblock {Selection and Aggregation of Conformal Prediction Sets}.
\newblock {\em Journal of the American Statistical Association},
  120(549):435--447.

\bibitem[Zhou et~al., 2025]{zhou2026dual}
Zhou, Z., Tian, X., Peng, L., Lei, C., Schrab, A., Sutherland, D.~J., and Liu,
  F. (2025).
\newblock Dual: Learning diverse kernels for aggregated two-sample and
  independence testing.
\newblock In {\em Advances in Neural Information Processing Systems},
  volume~38.

\end{thebibliography}

\clearpage

\makeatletter
\renewcommand{\hyper@natlinkstart}[1]{%
  \Hy@backout{#1}%
  \hyper@linkstart{cite}{supp.cite.#1}%
  \def\hyper@nat@current{#1}%
}
\renewcommand{\hyper@natanchorstart}[1]{%
  \Hy@raisedlink{\hyper@anchorstart{supp.cite.#1}}%
}
\makeatother

\begin{appendix}

\crefalias{section}{appendix}
\crefalias{subsection}{appendix}

\setcounter{lemma}{0}
\setcounter{proposition}{0}
\setcounter{theorem}{0}
\setcounter{corollary}{0}
\setcounter{definition}{0}
\setcounter{example}{0}
\setcounter{assumption}{0}
\setcounter{remark}{0}

\renewcommand{\thelemma}{S.\arabic{lemma}}
\renewcommand{\theHlemma}{appendix.\arabic{lemma}}

\renewcommand{\theproposition}{S.\arabic{proposition}}
\renewcommand{\theHproposition}{appendix.\arabic{proposition}}

\renewcommand{\thetheorem}{S.\arabic{theorem}}
\renewcommand{\theHtheorem}{appendix.\arabic{theorem}}

\renewcommand{\thecorollary}{S.\arabic{corollary}}
\renewcommand{\theHcorollary}{appendix.\arabic{corollary}}

\renewcommand{\thedefinition}{S.\arabic{definition}}
\renewcommand{\theexample}{S.\arabic{example}}
\renewcommand{\theassumption}{S.\arabic{assumption}}
\renewcommand{\theremark}{S.\arabic{remark}}

\begin{center}
\emph{\LARGE Supplementary material for \\[.2em] Aggregation of Statistical Evidence under Exchangeability} 
\end{center}

\begin{figure}[H]
	\centering
	\textbf{(a) SB aggregation}
	\par\smallskip
	\includegraphics[width=0.86\textwidth,height=0.34\textheight,keepaspectratio]{figures/diagram_SB.pdf}
	\vspace{1.25em}
	
	\textbf{(b) TB aggregation}
	\par\smallskip
	\noindent
	\includegraphics[width=0.86\textwidth,keepaspectratio]{figures/diagram_TB.pdf}
	\caption{Schematic illustrations of (a) the SB aggregation procedure of \Cref{alg:SB} and (b) the TB aggregation procedure of \Cref{alg:TB}, using statistics $T^1,\dots,T^K$ and merging function $f$.}
	\label{fig:diagram-SB}
	\label{fig:diagram-TB}
\end{figure}

The supplementary material contains schematic illustrations (\Cref{sec: diagrams}),
background and methodological details (\Cref{sec: deferred main material}),
additional simulation results (\Cref{sec: additional simulation results}),
extensions of the aggregation framework (\Cref{sec: additional theoretical extensions}),
power refinements (\Cref{sec: additional results}), adaptive testing details
(\Cref{sec: adaptive testing details}), technical lemmas and auxiliary tools
(\Cref{sec: useful facts on quantiles}), and proofs
(\Cref{sec: proofs of main results,sec: proofs of appendices}).

\section{Schematic illustrations}
\label{sec: diagrams}

This section presents schematic illustrations of the SB and TB aggregation procedures, which can be found in \Cref{fig:diagram-SB}.

\section{Background and methodological details}
\label{sec: deferred main material}

This section collects background material, methodological reformulations, and auxiliary examples used to situate the proposed procedures.

\subsection{Classical p-value merging families}
\label{sec:p-merging-families}

This subsection recalls the two classical p-merging families referenced in \Cref{sec:merging}. Let $p_1,\ldots,p_K$ denote $K$ super-uniform p-values.

\parheading{O-family.}
R\"{u}ger's O-family \citep{ruger1978} is based on scaled order statistics and is defined as
\begin{align*}
f(p_1,\ldots,p_K)
= \min\!\left\{\frac{K}{k}\, p_{(k)},\, 1\right\},
\end{align*}
where $p_{(k)}$ denotes the $k$-th smallest p-value among $(p_1,\ldots,p_K)$. This family includes several classical procedures as special cases, such as the Bonferroni method ($k=1$), the median rule ($k=\lceil K/2\rceil$), and the maximum p-value rule ($k=K$). All members of the O-family are precise p-merging functions, in the sense that their super-uniformity property is tight.

\parheading{M-family.}
The M-family \citep{vovk2020combining} is parameterized by $r \in [-\infty,\infty]$ and is defined as
\begin{align*}
f(p_1,\ldots,p_K)
= \min\!\left\{ a_{r,K}\, M_{r,K}(p_1,\ldots,p_K),\, 1 \right\},
\end{align*}
where $M_{r,K}(p_1,\ldots,p_K)$ denotes the generalized mean
\begin{align*}
M_{r,K}(p_1,\ldots,p_K)
= \left( \frac{1}{K} \sum_{k=1}^K p_k^r \right)^{1/r}.
\end{align*}
This family encompasses a wide range of commonly used aggregation rules, including the minimum $(r=-\infty)$, maximum $(r=\infty)$, harmonic $(r=-1)$, geometric $(r=0)$, and arithmetic $(r=1)$ means. The scaling constant $a_{r,K}$ is chosen to ensure that $f$ is a precise p-merging function; see, for example, \citep[][Table~1]{vovk2020combining}. A notable special case is the arithmetic mean ($r=1$), for which the optimal scaling constant is $a_{1,K}=2$, a result that dates back to classical work~\citep{rueschendorf1982,meng1994posterior}. More generally, the asymptotically sharp scaling constant for the M-family is $(r+1)^{1/r}$ for $r>0$, $e$ for $r=0$, and $\frac{r}{r+1} K^{1+1/r}$ for $r < -1$; see, for example, \citep[][Table~1]{vovk2020combining}.

Recent work by \citep{vovk2022admissible} further investigates the structure of \emph{admissible} p-merging functions, namely those that cannot be uniformly improved while preserving universal validity under arbitrary dependence. Their results provide a complete characterization of admissibility within the O- and M-families. These procedures require deterministic worst-case calibration under arbitrary dependence. 

Under the group-invariance hypothesis, our main result (\Cref{corollary: single-batch power O and M}) shows that SB aggregation uniformly dominates deterministically calibrated p-merging rules, including the O- and M-families.

\subsection{MaxT p-value formulation and comparison with SB and TB}
\label{sec: deferred maxT comparison}

This subsection provides additional details on the MaxT aggregation procedure reviewed in \Cref{sec: maxT aggregation}, including a closed-form calibration threshold, a p-value formulation, and a comparison with the SB and TB procedures.

\parheading{Closed-form MaxT threshold.}
The estimated critical value $\tilde{u}_\alpha$ in \eqref{eq: estimated u alpha} can be written in closed form. This avoids the bisection procedure employed in prior work~\citep{schrab2023mmd,albert2022adaptive} and reveals the p-value structure of the MaxT calibration.

\begin{proposition}
\label{prop: u_alpha expression}
For each $b \in [B]$, define
\begin{align*}
	u_b \coloneqq \min_{k \in [K]} \frac{1}{B+1} \sum_{j=0}^B \mathds{1}(T_{B+b}^k \leq T_j^k).
\end{align*}	
Then, for $\alpha \in (0,1)$, the estimated critical value $\tilde{u}_\alpha$ defined in \eqref{eq: estimated u alpha} admits the representation
\begin{align*}
	\tilde{u}_\alpha = u_{(\lfloor{B\alpha\rfloor} + 1)},
\end{align*}
where $u_{(1)} \leq u_{(2)} \leq \cdots \leq u_{(B)}$ are the order statistics of $\{u_1,u_2,\ldots,u_B\}$.
\begin{proof}
	See \Cref{sec: proof of u_alpha expression}.
\end{proof}
\end{proposition}

The quantity $u_b$ is the minimum, over coordinates $k$, of the permutation p-value that the calibration statistic $T_{B+b}^k$ would receive when ranked against the testing batch $T_0^k,\ldots,T_B^k$. Thus, $\tilde{u}_\alpha$ is the empirical $\alpha$-quantile of these calibration-batch minimum p-values. This gives the following equivalent p-value view of the MaxT decision rule.

\parheading{P-value formulation of MaxT.}
Let
\begin{align*}
	p(T_{0}^k)
	\coloneqq
	\frac{1}{B+1}
	\sum_{b=0}^B
	\mathds{1}(T_{b}^k \geq T_{0}^k)
\end{align*}
denote the permutation p-value associated with $T_0^k$. Using the equivalent formulations explained in \Cref{sec: permutation test}, the decision rule in \eqref{Eq. max-type test} can be written as
\begin{align*}
	\min_{k \in [K]} p(T_0^k) \leq \tilde{u}_\alpha.
\end{align*}
Hence the MaxT procedure can be interpreted as a minimum p-value test with a Monte Carlo-calibrated correction factor. This viewpoint makes the comparison with SB and TB minimum aggregation transparent: all three procedures aggregate through the minimum p-value, but they differ in how the calibration set is constructed.

\parheading{Comparison with SB minimum aggregation.}
By \Cref{prop: single-batch equivalence} and \Cref{Lemma: permutation p-value}, the SB minimum threshold admits the following equivalent representation. Write
\begin{align*}
Q_k^{\mathrm{SB}}(u)
\coloneqq
\mathrm{Quantile}_{1-u}\bigl\{T_0^k,\ldots,T_B^k\bigr\}.
\end{align*}
Then
\begin{align*}
\hat u_{\alpha,\min}^{\mathrm{SB}}
=
\sup\biggl\{
u\in(0,1):
\frac{1}{B+1}\sum_{b=0}^B
\mathds{1}\biggl(
\max_{k\in[K]}
\Bigl\{
T_b^k-Q_k^{\mathrm{SB}}(u)
\Bigr\}
>0
\biggr)
\le \alpha
\biggr\}.
\end{align*}
This representation makes explicit the key distinction between SB minimum aggregation and the existing MaxT aggregation reviewed in \Cref{sec: maxT aggregation}: the SB threshold $\hat u_{\alpha,\min}^{\mathrm{SB}}$ relies on a single batch of transformed statistics $T_0^k,\ldots,T_B^k$ both to approximate the conditional rejection probability and to compute the relevant quantiles. In contrast, MaxT procedures employ an additional batch $T_{B+1}^k,\ldots,T_{2B}^k$ solely for calibration. This modification resolves the type~I error control issue identified in \Cref{sec: maxT aggregation}, while remaining data-dependent and adaptive to the dependence structure among the p-values.

\parheading{Comparison with TB minimum aggregation.}
For the minimum merge $f(p_1,\ldots,p_K)=\min_{k\in[K]}p_k$, define
\begin{align*}
Q_{b,k}^{\mathrm{TB}}(u)
\coloneqq
\mathrm{Quantile}_{1-u}
\bigl\{T_{b}^k,\tilde{T}_{1}^k,\ldots,\tilde{T}_{B}^k\bigr\}.
\end{align*}
The TB critical value can be written explicitly as
\begin{align*}
\uTB = \sup\biggl\{
u \in (0,1)
:
\frac{1}{B+1}
\sum_{b=0}^B
\mathds{1}
\biggl(
\max_{k \in [K]}
\Bigl\{
T_{b}^k-Q_{b,k}^{\mathrm{TB}}(u)
\Bigr\}
> 0
\biggr)
\leq \alpha
\biggr\}.
\end{align*}
The difference from the MaxT critical value in \eqref{eq: estimated u alpha} lies in how the calibration is constructed. In the TB procedure, each testing statistic $T_b^k$ is ranked against a reference batch that explicitly includes $T_b^k$ itself, and the randomization average in the definition of $\uTB$ is taken over $b=0,\ldots,B$. In contrast, the MaxT calibration does not include $T_b^k$ in the ranking set and averages only over the calibration indices $b=1,\ldots,B$. Although seemingly minor, these differences are structural: by including $T_b^k$ in the ranking set and averaging over $b=0,\ldots,B$, the TB construction is always well-defined, restores exact conditional exchangeability and achieves finite-sample type~I error control.

\subsection{General sequential aggregation principle}
\label{sec: general sequential aggregation}

This subsection records the general sequential aggregation principle underlying the SeqSB construction in \Cref{sec:sequential-alpha-spending-aggregation}. Let $I_1,\ldots,I_J \subseteq [K]$ be prespecified index sets, not necessarily nested, and for each $j\in[J]$ let $h_j\colon[0,1]^{|I_j|}\to\mathbb{R}$ be a measurable function, where smaller values indicate stronger evidence against the null. For $b\in[B]_0$ and $j\in[J]$, define the stage-$j$ score
\begin{align}
\label{eq: general seq score}
z_{b,j}\coloneqq h_j\bigl((p(T_b^k))_{k\in I_j}\bigr).
\end{align}
Let $\alpha_1,\ldots,\alpha_J\in[0,1]$ be stage-wise significance budgets with $\sum_{j=1}^J\alpha_j\le\alpha$. Set $S_0\coloneqq[B]_0$, and define recursively, for $j\in[J]$,
\begin{equation}
\label{eq: general seq threshold}
\begin{aligned}
c_j
&\coloneqq
\sup\biggl\{
u\in\mathbb{R}:
\frac{1}{B+1}\sum_{b\in S_{j-1}}
\mathds{1}(z_{b,j}\le u)
\le \alpha_j
\biggr\},\\
A_j&\coloneqq \{\,b\in S_{j-1}: z_{b,j}<c_j\,\},
\qquad
S_j\coloneqq S_{j-1}\setminus A_j.
\end{aligned}
\end{equation}
The sequential aggregation test rejects whenever $0\in\bigcup_{j=1}^J A_j$.

\begin{proposition}
\label{prop: general sequential aggregation}
Fix $\alpha\in(0,1)$, $B\ge 1$, and stage-wise budgets $\alpha_1,\ldots,\alpha_J\in[0,1]$ satisfying $\sum_{j=1}^J\alpha_j\le\alpha$. Let $(A_j)_{j=1}^J$ denote the sequence of eliminated sets produced by the recursive procedure~\eqref{eq: general seq threshold}. Under the group-invariance hypothesis,
\begin{align*}
\mP\biggl(0\in\bigcup_{j=1}^J A_j\biggr)
\le \frac{1}{B+1} \sum_{j=1}^J \lfloor (B+1)\alpha_j \rfloor
\le \alpha.
\end{align*}
Moreover, when $\{z_{b,j}: b\in S_{j-1}\}$ are almost surely distinct for every $j\in[J]$, the first inequality is tight.
\end{proposition}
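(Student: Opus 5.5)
The plan is to use permutation equivariance together with row exchangeability, in the same way as the sketch after \Cref{prop: sequential minimum aggregation}.

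\textbf{Step 1 (equivariance).} Let $\mathbf{R}=(R_0,\ldots,R_B)$ with $R_b=(T_b^1,\ldots,T_b^K)$ be the row array. The whole recursion \eqref{eq: general seq threshold} is a deterministic map $\mathbf{R}\mapsto(A_1,\ldots,A_J)$, and I would check that it is equivariant under permutations $\pi$ of $[B]_0$. The column-wise permutation p-values $p(T_b^k)$ are equivariant, so the scores $z_{b,j}$ are too. Next argue by induction on $j$. If $S_{j-1}$ is mapped to $\pi(S_{j-1})$, then the empirical count $\sum_{b\in S_{j-1}}\mathds{1}(z_{b,j}\le u)$ is invariant, so $c_j$ is invariant. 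Hence $A_j$ and $S_j$ are mapped by $\pi$. Combined with exchangeability of $\mathbf{R}$ under the group-invariance hypothesis, this gives, for each $b$, $\mP(b\in A_j)=\mP(0\in A_j)$. Averaging over $b$ yields $\mP(0\in A_j)=\mE|A_j|/(B+1)$.

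\textbf{Step 2 (disjointness and size).} The sets $A_j$ are disjoint, since $A_j\subseteq S_{j-1}$ and $S_j=S_{j-1}\setminus A_j$. Therefore $\mP(0\in\bigcup_j A_j)=\sum_j\mE|A_j|/(B+1)$. It remains to show $|A_j|\le q_j:=\lfloor(B+1)\alpha_j\rfloor$ deterministically. Take $u<c_j$ with $u\ge\max\{z_{b,j}:b\in A_j\}$; such a $u$ exists because $A_j$ is finite with all elements strictly below $c_j$. By the definition of the supremum there is some $u'\in(u,c_j]$ in the defining set, and monotonicity then gives
\[
|A_j|\le\sum_{b\in S_{j-1}}\mathds{1}(z_{b,j}\le u')\le(B+1)\alpha_j .
\]
Since $|A_j|$ is an integer, $|A_j|\le q_j$. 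Edge cases need care: $A_j=\emptyset$ is trivial, and $c_j=+\infty$ cannot occur when $\alpha_j<|S_{j-1}|/(B+1)$, while otherwise the bound $|S_{j-1}|\le(B+1)\alpha_j$ is immediate. The final step, $\sum_j q_j/(B+1)\le\sum_j\alpha_j\le\alpha$, is trivial.

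\textbf{Step 3 (tightness).} Assume the values $\{z_{b,j}:b\in S_{j-1}\}$ are distinct. Write them in order as $z_{(1)}<z_{(2)}<\cdots$, and split into two cases. If $|S_{j-1}|\ge q_j+1$, then $c_j=z_{(q_j+1)}$: the count at $u=z_{(q_j+1)}$ equals $q_j+1>(B+1)\alpha_j$, while the count at every $u<z_{(q_j+1)}$ is at most $q_j$. Hence $|A_j|=q_j$ exactly. If $|S_{j-1}|\le q_j$, then every survivor is eliminated and $|A_j|=|S_{j-1}|$.

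The main obstacle is this second case, where the bound would fail to be an equality. I would handle it by showing that it forces $0$ to be eliminated with certainty by stage $j$. When $\sum_j q_j\le B+1$ holds (guaranteed since $\alpha<1$), a counting argument over the disjoint sets, $\sum_{i<j}|A_i|+|S_{j-1}|=B+1$, shows that the total elimination count still equals $\min(B+1,\sum_j q_j)=\sum_j q_j$. This gives equality in the first inequality.
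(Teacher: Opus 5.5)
Your proposal is correct and follows the paper's proof essentially step for step: induction-based row-permutation equivariance of the survivor and eliminated sets, exchangeability giving $\mP(0\in A_j)=\mE|A_j|/(B+1)$, disjointness of the $A_j$, the deterministic bound $|A_j|\le q_j$, and the exact count $|A_j|=q_j$ under distinctness. The only difference is cosmetic and concerns the tightness step. The paper notes that $\sum_\ell q_\ell\le\lfloor(B+1)\alpha\rfloor\le B$, which by induction gives $|S_{j-1}|\ge q_j+1$ at every stage. Hence your second case, where all survivors are eliminated, never actually occurs, and the $\min(B+1,\sum_j q_j)$ bookkeeping is correct but unnecessary.
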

\begin{proof}
See \Cref{sec: proof of general sequential aggregation}.
\end{proof}

\subsection{PCA-based aggregation using the reference batch}
\label{sec:pca-reference-batch}

This subsection gives a concrete example of a data-dependent aggregation rule that is enabled by the TB split. The idea is to learn a low-dimensional representation of the coordinate-wise p-value vector using only the reference batch, and then use this learned representation when aggregating the testing batch.

For each testing index $b\in[B]_0$, let
\begin{align*}
P_b \coloneqq
\bigl(
p_{\mathrm{HO}}(T_b^1),\ldots,p_{\mathrm{HO}}(T_b^K)
\bigr),
\end{align*}
and let $\tilde P_1,\ldots,\tilde P_B$ denote the corresponding reference-batch vectors, defined by
\begin{align*}
\tilde P_b \coloneqq
\biggl(
\frac{1}{B} \sum_{j=1}^B \mathds{1}\big(\tilde T_j^1\ge \tilde T_b^1\big),\ldots,
\frac{1}{B} \sum_{j=1}^B \mathds{1}\big(\tilde T_j^K\ge \tilde T_b^K\big)
\biggr),
\qquad b\in[B].
\end{align*}
Using only the reference batch, let $\hat\Sigma$ denote the empirical covariance matrix of $\{\tilde P_i\}_{i=1}^B$, and let $\hat v\in\mathbb R^K$ be a unit leading eigenvector of $\hat\Sigma$, selected according to a fixed deterministic tie-breaking and sign convention. Define the aggregation map
\begin{align*}
f_{\mathrm{PC}}(p_1,\ldots,p_K)
\coloneqq
\langle \hat v,\,(p_1,\ldots,p_K)\rangle .
\end{align*}
The merged testing-batch statistics are then given by $\tilde f_b = f_{\mathrm{PC}}(P_b)$ for $b\in[B]_0$.

Intuitively, this construction retains only the dominant direction of variation of the p-value vector under the null, thereby removing redundancy across coordinates. This is useful when the number of coordinates $K$ is large and the coordinate-wise p-values exhibit redundancy or an approximately low-dimensional dependence structure. In such regimes, fixed aggregation rules such as uniform averaging can overweight redundant directions. Because the projection vector $\hat v$ is measurable with respect to the reference batch, it is fixed after conditioning on the reference information, and the testing-batch rows remain exchangeable. As a result, the corresponding TB permutation p-value remains finite-sample valid. Unlike standard sample-splitting schemes that partition the data sample $\mathbf{X}$, the split here is performed only over the transformations, so all observations are used for both learning the aggregation map and performing the test.

\subsection{SB and TB conformal prediction algorithms}
\label{sec:alg-conformal}

Recall from \Cref{sec:conformal-prediction}
that the SB conformal prediction set is defined as
$C_{\mathrm{SB}}(X_{n+1})\coloneqq\{y\in\mathcal Y:\ p_{\mathrm{SB}}(y)>\alpha\}$
for the SB p-value
$
p_{\mathrm{SB}}(y)
\coloneqq
\frac{1}{n+1}\sum_{i=1}^{n+1}
\mathds{1}\!\left\{M_i(y)\le M_{n+1}(y)\right\},
$
with merged values
$
M_i(y) \coloneqq f\bigl(P_i^1(y),\ldots,P_i^K(y)\bigr)
$
of rank-transformed values
$
P_i^k(y)
\coloneqq
\frac{1}{n+1}\sum_{j=1}^{n+1}
\mathds{1}\{S_j^k(y)\ge S_i^k(y)\}
$
with residual scores
$S_i^k \coloneqq s_k(X_i,Y_i)$ for $i\in[n]$ and $S_{n+1}^k(y)\coloneqq s_k(X_{n+1},y)$.
When using the absolute residual scores $S_i^k \coloneqq |Y_i-\widehat{\mu}_k(X_i)|$, $i\in[n]$ and $S_{n+1}^k(y)\coloneqq |y-\widehat{\mu}_k(X_{n+1})|$, we note that the indicator $\mathds{1}(S_i^k \ge |y - \widehat{\mu}_k(X_{n+1})|)$ changes its truth value only when the test-point score equals a calibration residual, i.e., $|y - \widehat{\mu}_k(X_{n+1})| = S_i^k$. 
Between any two such points, none of the indicators change, meaning all p-values, merged scores $M_i$, and the final aggregated p-value are completely constant. 
As such, $p_\mathrm{SB}$ is a step function of $y$ with breakpoints at $\{ \widehat{\mu}_k(X_{n+1}) \pm S_i^k : k\in[K], i\in[n] \}$, and 
the exact continuous prediction set can be perfectly recovered by evaluating the functions at exactly one representative point in each open cell and at the boundary breakpoints themselves.
This allows for exact efficient computation of the SB conformal prediction set, entirely bypassing the need for discretization of $\mathcal Y=\mathbb R$, as detailed below.

First, sort the set $\{ \widehat{\mu}_k(X_{n+1}) \pm S_i^k : k\in[K], i\in[n] \}$, removing duplicates, to obtain ordered breakpoints $\tilde y_1<\ldots<\tilde y_L$. We need to evaluate $p_\mathrm{SB}$ at these breakpoints to determine whether each of the singleton sets $\{\tilde y_\ell\}, \ell\in[L]$ should be included in the SB conformal prediction set $C_{\mathrm{SB}}(X_{n+1})$. 
We also need to evaluate $p_\mathrm{SB}$ at reference points within the intervals $(\tilde y_\ell,\tilde y_{\ell+1}), \ell\in[L]_0$, with $\tilde y_0\coloneqq -\infty$ and $\tilde y_{L+1}\coloneqq \infty$ to determine whether each of these should be included in $C_{\mathrm{SB}}(X_{n+1})$.
Since the conformal p-value function is piecewise constant, evaluating the merged scores at exactly one representative point perfectly determines the acceptance or rejection of the entirety of the interval. Furthermore, stepping onto a breakpoint, and subsequently stepping off of it into the next open cell, triggers incremental updates for the exact same subset of coordinate-index pairs.

To this end, we construct the evaluation points as 
\begin{equation}
\label{eq:conf_y}
y_\ell =
\begin{cases}
\tilde y_1 - 1 & \text{if } \ell = 0, \\
\tilde y_j & \text{if } \ell = 2j - 1 \text{ for } j \in \{1, \dots, L\}, \\
(\tilde y_j + \tilde y_{j+1})/2 & \text{if } \ell = 2j \text{ for } j \in \{1, \dots, L-1\}, \\
\tilde y_L +1 & \text{if } \ell = 2L,
\end{cases}
\end{equation}
with corresponding intervals
\begin{equation}
\label{eq:conf_intervals}
\mathcal I_\ell =
\begin{cases}
(-\infty, \tilde y_1) & \text{if } \ell = 0, \\
\{\tilde y_j\} & \text{if } \ell = 2j - 1 \text{ for } j \in \{1, \dots, L\}, \\
(\tilde y_j, \tilde y_{j+1}) & \text{if } \ell = 2j \text{ for } j \in \{1, \dots, L-1\}, \\
(\tilde y_L, \infty) & \text{if } \ell = 2L,
\end{cases}
\end{equation}
and corresponding $k$-indices
\begin{equation}
\label{eq:conf_k}
\mathcal K_\ell =
\begin{cases}
[K] & \text{if } \ell = 0, \\
\mathcal K(\tilde y_j) & \text{if } \ell = 2j - 1 \text{ for } j \in \{1, \dots, L\}, \\
\mathcal K(\tilde y_j) & \text{if } \ell = 2j \text{ for } j \in \{1, \dots, L-1\}, \\
\mathcal K(\tilde y_L) & \text{if } \ell = 2L,
\end{cases}
\end{equation}
where
$
\mathcal K(y)\coloneqq \{k\in[K]:\exists i\in[n]\text{ s.t. } y = \widehat{\mu}_k(X_{n+1}) \pm S_i^k\}
$, and letting 
$
\mathcal J^k(y)\coloneqq \{i\in[n]:y = \widehat{\mu}_k(X_{n+1}) \pm S_i^k\} 
$,
for $k\in\mathcal K_\ell$
we define 
\begin{equation}
\label{eq:conf_i}
\mathcal J^k_\ell =
\begin{cases}
[n] & \text{if } \ell = 0, \\
\mathcal J^k(\tilde y_j) & \text{if } \ell = 2j - 1 \text{ for } j \in \{1, \dots, L\}, \\
\mathcal J^k(\tilde y_j) & \text{if } \ell = 2j \text{ for } j \in \{1, \dots, L-1\}, \\
\mathcal J^k(\tilde y_L) & \text{if } \ell = 2L.
\end{cases}
\end{equation}
Here, $y_\ell$ is the representative evaluation point, $\mathcal I_{\ell}$ is its associated interval or singleton, and $\mathcal{K}_{\ell}, \mathcal{J}_{\ell}^k$ are the subsets of coordinates and indices that require incremental updates at step $\ell$. 

For each evaluation point $y_\ell$ for $\ell\in[2L]_0$, we need to evaluate $p_\mathrm{SB}(y_\ell)$, which requires computing the p-values $P_i^k$ and the merged values $M_i$ for all $i\in[n+1]$ and $k\in[K]$.
To evaluate this sequence efficiently without incurring the full $\mathcal{O}(n K)$ cost at every step, we update the p-values and merged scores on the fly. 
At the leftmost evaluation point $y_0$, we initialize the p-values $P_i^k(y_0)$ and merged scores $M_i(y_0)$ from scratch. 
As we sweep left to right for subsequent steps $\ell > 0$, only the specific coordinates $\mathcal{K}_{\ell}$ and indices $\mathcal{J}_{\ell}^k$ for $k\in\mathcal{K}_{\ell}$ triggered by the breakpoint undergo a change, computable in at most $\mathcal{O}(\log n)$ time. If the test score crosses a calibration residual, the respective indicator sum either increases or decreases by $1$. We then re-evaluate only the affected merged scores $M_i$ and dynamically adjust their position in the sorted dynamic data structure $\mathbb{M}$ (e.g., an order-statistic tree, balanced binary search tree, or Fenwick tree). Merging functions such as the minimum and median can be updated incrementally in time $\mathcal{O}(\log K)$ (time $\mathcal{O}(1)$ for the mean), without requiring the full $\mathcal{O}(K)$ cost of applying the merging function. However, for a generic merging function, this cost is inevitable.

The exact SB conformal prediction procedure with absolute residuals is presented in \Cref{alg:SB-inversion-unified}, along with a detailed analysis of the computational complexity for each step.
The TB conformal prediction procedure can be exactly and efficiently evaluated in a similar manner, with a much simpler implementation of the update for each evaluation point, as presented in \Cref{alg:TB-inversion-unified}.
Finally, we detail in \Cref{alg:TB-intersection-shortcut} the exact implementation of the expression in \Cref{prop:cp_intersection} for the minimum merging function, which drastically reduces the computational complexity of the algorithm.
While the methods are presented in \Cref{sec:conformal-prediction} for the case of a single test point $X_{n+1}$, in practice we consider a batch of $M$ test points $\{X_{n+1}^m\}_{m=1}^M$ simultaneously, as conformal prediction is typically deployed to generate valid prediction sets for an entire batch of unlabelled observations, with $M$ often taking very large values in real-world applications.

This practical reality of evaluating large test batches highlights a critical computational dichotomy between the approaches. For the SB procedure (\Cref{alg:SB-inversion-unified}), the necessity of dynamically updating the calibration merged scores $M_i$ and $M_{n+1}$ across all evaluation points imposes a total computational runtime that scales as $\mathcal{O}(MKn \log(Kn))$ for minimum, mean and median merging functions. The generic TB inversion (\Cref{alg:TB-inversion-unified}) relaxes this burden by relying on a static reference threshold, which avoids the need to update $M_i$ but still yields a comparable test-time cost of $\mathcal{O}(MKn_1 \log(Kn_1))$ assuming $n_1$ and $n_2$ are of the same order. However, the true computational advantage emerges under the minimum merging rule with the implementation of \Cref{prop:cp_intersection} (\Cref{alg:TB-intersection-shortcut}). By entirely bypassing the grid-free sweep in favor of directly intersecting fixed residual quantiles, the per-test-point complexity collapses from $\mathcal{O}(Kn_1\log(Kn_1))$ to merely $\mathcal{O}(K)$. In modern deployment regimes where $M \gg n$, this reduction to a strictly $\mathcal{O}(Kn_1 \log(Kn_1) + MK)$ bottleneck isolates the test-time cost from the calibration size, transforming an otherwise prohibitive search into a highly scalable procedure.

\begin{algorithm}[htbp]
\caption{Exact SB Conformal Prediction for General Merging Function}
\label{alg:SB-inversion-unified}
\begin{algorithmic}[1]
	\Require Calibration data $\{(X_i, Y_i)\}_{i=1}^n$; test points $\{X_{n+1}^m\}_{m=1}^M$; prediction functions $\{\widehat{\mu}_k\}_{k=1}^K$; merging function $f$; level $\alpha$.
\State Compute $S_i^k \gets |Y_i - \widehat{\mu}_k(X_i)|$ for $k\in[K]$ and $i\in[n]$.  \Comment{$\mathcal{O}(Kn)$}
\State Sort $\{S_i^k : i \in [n]\}$ for each $k \in [K]$. \Comment{$\mathcal{O}(Kn \log n)$}
\For{$m=1,\ldots,M$}
    \State Initialize $C_{\mathrm{SB}}(X_{n+1}^m) \gets \emptyset$.
    \State Sort $\{ \widehat{\mu}_k(X_{n+1}^m) \pm S_i^k \!:\! k\!\in\![K], i\!\in\![n] \}$, remove duplicates, get $\tilde y_1 < \dots < \tilde y_L$. $\!\triangleright\ \mathcal{O}(Kn \log(Kn))$
\State Construct the evaluation tuples $(y_\ell, \mathcal{I}_{\ell}, \mathcal{K}_{\ell}, \mathcal{J}^k_{\ell})_{\ell=0}^{2L}$ as in \eqref{eq:conf_y}, \eqref{eq:conf_intervals}, \eqref{eq:conf_k} and \eqref{eq:conf_i}. \Comment{$\mathcal{O}(Kn)$}
    \State Initialize $S_{n+1}^k \gets |y_0 - \widehat{\mu}_k(X_{n+1}^m)|$ for $k\in[K]$. \Comment{$\mathcal{O}(K)$}
    \State Initialize $p_i^k \gets P_i^k(y_0)$ for $k\in[K]$ and $i\in[n+1]$ using sorted $\{S_j^k:j\in[n]\}$. \Comment{$\mathcal{O}(Kn\log n)$}
    \State Initialize $M_i \gets f\bigl(p_i^1, \dots, p_i^K\bigr)$ for $i\in[n+1]$. \Comment{$\mathcal{O}(Kn)$}
    \State Initialize $\mathbb{M} \gets \mathtt{sort}\bigl(M_1, \dots, M_n\bigr)$ using a dynamic data structure. \Comment{$\mathcal{O}(n \log n)$}
    \State Compute $p_{\mathrm{SB}}(y_0) \gets \frac{1}{n+1}\left(1+\sum_{i=1}^n \mathds{1}\bigl(M_i \le M_{n+1}\bigr)\right)$ via binary search on $\mathbb{M}$. \Comment{$\mathcal{O}(\log n)$}
	\State \textbf{if } {$p_{\mathrm{SB}}(y_0) > \alpha$} \textbf{ then } $C_{\mathrm{SB}}(X_{n+1}^m) \gets C_{\mathrm{SB}}(X_{n+1}^m) \cup \mathcal I_{0}$.
    \For{$\ell=1, \ldots, 2L$}
        \For{$k \in \mathcal{K}_{\ell}$}
            \State Update $S_{n+1}^k \gets |y_\ell - \widehat{\mu}_k(X_{n+1}^m)|$.
	        \State Update $p_{n+1}^k \gets P_{n+1}^k(y_\ell)$ using sorted $\{S_j^k:j\in[n]\}$. \Comment{$\mathcal{O}(\log n)$}
        \State Update $M_{n+1} \!\gets\! f\bigl(p_{n+1}^1, \dots, p_{n+1}^K\bigr)$. $\triangleright\, \mathcal{O}(\log K)$ for min/mean/med, \!$\mathcal{O}(K)$ for generic
	\For{$i \in \mathcal{J}_{\ell}^k$}
	\State Update $p_{i}^k \gets P_{i}^k(y_\ell)$ by updating only $\mathds{1}(S_{n+1}^k \geq S_i^k)$. \Comment{$\mathcal{O}(1)$}
                \State Remove $M_i$ from sorted $\mathbb{M}$. \Comment{$\mathcal{O}(\log n)$}
            \State Update $M_i \gets f\bigl(p_i^1, \dots, p_i^K\bigr)$. \Comment{$\mathcal{O}(\log K)$ for min/mean/med, $\mathcal{O}(K)$ for generic}
            \State Insert $M_i$ into $\mathbb{M}$ preserving the sorted order. \Comment{$\mathcal{O}(\log n)$}
        \EndFor
        \EndFor
        \State Compute $p_{\mathrm{SB}}(y_\ell) \gets \frac{1}{n+1}\left(1+\sum_{i=1}^n \mathds{1}\bigl(M_i \le M_{n+1}\bigr)\right)$ via binary search on $\mathbb{M}$. \Comment{$\mathcal{O}(\log n)$}
	\State \textbf{if } {$p_{\mathrm{SB}}(y_\ell) > \alpha$} \textbf{ then } $C_{\mathrm{SB}}(X_{n+1}^m) \gets C_{\mathrm{SB}}(X_{n+1}^m) \cup \mathcal I_{\ell}$.
    \EndFor
\EndFor
\State \textbf{Return:} $\{C_{\mathrm{SB}}(X_{n+1}^m):m\in[M]\}$.
\State \textbf{Complexity:} $\mathcal{O}(Kn \log n + MKn \log(Kn))$ for min, mean, and median $f$,
\State \phantom{\textbf{Complexity:} }$\mathcal{O}(Kn \log n + MKn(\log(Kn) + K))$ for generic $f$.
\end{algorithmic}
\end{algorithm}

\begin{algorithm}[htbp]
\caption{Exact TB Conformal Prediction for General Merging Function}
\label{alg:TB-inversion-unified}
\begin{algorithmic}[1]
	\Require Calibration data $\{(X_i, Y_i)\}_{i=1}^n$; disjoint index batches $I_{\mathrm{ref}}\cup I_{\mathrm{agg}}=[n]$ with $|I_{\mathrm{ref}}|=n_1$ and $|I_{\mathrm{agg}}|=n_2$; test points $\{X_{n+1}^m\}_{m=1}^M$; prediction functions $\{\widehat{\mu}_k\}_{k=1}^K$; merging function $f$; level $\alpha$.
\State Compute $S_j^k \gets |Y_j - \widehat{\mu}_k(X_j)|$ for $k\in[K]$ and $j\in I_{\mathrm{ref}}$.  \Comment{$\mathcal{O}(Kn_1)$}
\State Sort $\{S_j^k : j \in I_{\mathrm{ref}}\}$ for each $k \in [K]$. \Comment{$\mathcal{O}(Kn_1 \log n_1)$}
\State Compute $p_i^{k,\mathrm{ref}} \gets P_i^{k,\mathrm{ref}}$ for $k\in[K]$ and $i\in I_{\mathrm{agg}}$ using sorted $\{S_j^k:j\in I_{\mathrm{ref}}\}$. \Comment{$\mathcal{O}(Kn_2 \log n_1)$}
\State Compute $M_i \gets f\bigl(p_i^{1,\mathrm{ref}},\ldots,p_i^{K,\mathrm{ref}}\bigr)$ for $i\in I_{\mathrm{agg}}$. \Comment{$\mathcal{O}(Kn_2)$}
\State Compute $u_{\alpha}^{\mathrm{TB}} \gets -\mathrm{Quantile}_{(1-\alpha)(1+1/n_2)}\bigl\{-M_i : i\in I_{\mathrm{agg}}\bigr\}$. \Comment{$\mathcal{O}(n_2 \log n_2)$}
\For{$m=1,\ldots,M$}
    \State Initialize $C_{\mathrm{TB}}(X_{n+1}^m) \gets \emptyset$.
    \State Sort $\{ \widehat{\mu}_k(X_{n+1}^m) \!\pm\! S_j^k \!:\! k\!\in\![K], j\!\in\! I_{\mathrm{ref}} \}$, remove duplicates, get $\tilde y_1 \!< \!\dots \!<\! \tilde y_L$ {$\triangleright\ \mathcal{O}(Kn_1\!\log(Kn_1))$}
    \State Construct $(y_\ell, \mathcal I_{\ell}, \mathcal{K}_{\ell})_{\ell=0}^{2L}$ as in \eqref{eq:conf_y}, \eqref{eq:conf_intervals} and \eqref{eq:conf_k} using $I_\mathrm{ref}$ instead of $[n]$ for $\mathcal K_\ell$. \Comment{$\mathcal{O}(Kn_1)$}
    \State Initialize $S_{n+1}^k \gets |y_0 - \widehat{\mu}_k(X_{n+1}^m)|$ for $k\in[K]$. \Comment{$\mathcal{O}(K)$}
    \State Initialize $p_{n+1}^{k,\mathrm{ref}} \gets P_{n+1}^{k,\mathrm{ref}}(y_0)$ for $k\in[K]$ using sorted $\{S_j^k:j\in I_{\mathrm{ref}}\}$. \Comment{$\mathcal{O}(K\log n_1)$}
    \State Initialize $M_{n+1} \gets f\bigl(p_{n+1}^{1,\mathrm{ref}}, \dots, p_{n+1}^{K,\mathrm{ref}}\bigr)$. \Comment{$\mathcal{O}(K)$}
    \State \textbf{if } {$M_{n+1} \ge u_{\alpha}^{\mathrm{TB}}$} \textbf{ then } $C_{\mathrm{TB}}(X_{n+1}^m) \gets C_{\mathrm{TB}}(X_{n+1}^m) \cup \mathcal I_{0}$.
    \For{$\ell=1, \ldots, 2L$}
        \For{$k \in \mathcal{K}_{\ell}$}
            \State Update $S_{n+1}^k \gets |y_\ell - \widehat{\mu}_k(X_{n+1}^m)|$.
            \State Update $p_{n+1}^{k,\mathrm{ref}} \gets P_{n+1}^{k,\mathrm{ref}}(y_\ell)$ using sorted $\{S_j^k:j\in I_{\mathrm{ref}}\}$. \Comment{$\mathcal{O}(\log n_1)$}
        \EndFor
        \State Update $M_{n+1} \gets f\bigl(p_{n+1}^{1,\mathrm{ref}}, \dots, p_{n+1}^{K,\mathrm{ref}}\bigr)$. $\triangleright\ \mathcal{O}(\log K)$ for min/mean/med, $\mathcal{O}(K)$ for generic
        \State \textbf{if } {$M_{n+1} \ge u_{\alpha}^{\mathrm{TB}}$} \textbf{ then } $C_{\mathrm{TB}}(X_{n+1}^m) \gets C_{\mathrm{TB}}(X_{n+1}^m) \cup \mathcal I_{\ell}$.
    \EndFor
\EndFor
\State \textbf{Return:} $\{C_{\mathrm{TB}}(X_{n+1}^m):m\in[M]\}$.
\State \textbf{Complexity:} $\!\mathcal{O}(K(n_1\!+\!n_2)\log n_1 \!+\! n_2 \log n_2 \!+\! MKn_1 \log(Kn_1))$ for min, mean, and median $f$,
\State \phantom{\textbf{Complexity:} }$\!\mathcal{O}(K(n_1\!+\!n_2)\log n_1 \!+\! n_2 \log n_2 \!+\! MKn_1(\log(Kn_1) + K))$ for generic $f$.
\end{algorithmic}
\end{algorithm}

\begin{algorithm}[htbp]
\caption{Efficient Exact TB Conformal Prediction for Minimum Merging (\Cref{prop:cp_intersection})}
\label{alg:TB-intersection-shortcut}
\begin{algorithmic}[1]
	\Require Calibration data $\{(X_i, Y_i)\}_{i=1}^n$; disjoint index batches $I_{\mathrm{ref}}\cup I_{\mathrm{agg}}=[n]$ with $|I_{\mathrm{ref}}|=n_1$ and $|I_{\mathrm{agg}}|=n_2$; test points $\{X_{n+1}^m\}_{m=1}^M$; prediction functions $\{\widehat{\mu}_k\}_{k=1}^K$; level $\alpha$.
\State Compute $S_j^k \gets |Y_j - \widehat{\mu}_k(X_j)|$ for $k\in[K]$ and $j\in I_{\mathrm{ref}}$.  \Comment{$\mathcal{O}(Kn_1)$}
\State Sort $\{S_j^k : j \in I_{\mathrm{ref}}\}$ to get $S_{(1)}^k \le \dots \le S_{(n_1)}^k$ for each $k \in [K]$. \Comment{$\mathcal{O}(Kn_1 \log n_1)$}
\State Set $S_{(n_1+1)}^k \gets \infty$ for each $k \in [K]$. \Comment{$\mathcal{O}(K)$}
\State Compute $p_i^{k,\mathrm{ref}} \gets P_i^{k,\mathrm{ref}}$ for $k\in[K]$ and $i\in I_{\mathrm{agg}}$ using sorted $\{S_j^k:j\in I_{\mathrm{ref}}\}$. \Comment{$\mathcal{O}(Kn_2 \log n_1)$}
\State Compute $M_i \gets \min_{k\in[K]} p_i^{k,\mathrm{ref}}$ for $i\in I_{\mathrm{agg}}$. \Comment{$\mathcal{O}(Kn_2)$}
\State Compute $u_{\alpha}^{\mathrm{TB}} \gets -\mathrm{Quantile}_{(1-\alpha)(1+1/n_2)}\bigl\{-M_i : i\in I_{\mathrm{agg}}\bigr\}$. \Comment{$\mathcal{O}(n_2 \log n_2)$}
\State \textbf{if } {$u_{\alpha}^{\mathrm{TB}}=-\infty$} \textbf{ then } $\ell\gets n_1+1$  \textbf{ else } $\ell \gets n_1 + 2 - \lceil u_{\alpha}^{\mathrm{TB}}(n_1+1) \rceil$. \Comment{$\mathcal{O}(1)$}
\For{$m=1,\ldots,M$}
    \State $C_{\mathrm{TB}}(X_{n+1}^m) \gets \bigcap_{k=1}^K \Bigl[ \widehat{\mu}_k(X_{n+1}^m) - S_{(\ell)}^k,\, \widehat{\mu}_k(X_{n+1}^m) + S_{(\ell)}^k \Bigr]$. \Comment{$\mathcal{O}(K)$}
\EndFor
\State \textbf{Return:} $\{C_{\mathrm{TB}}(X_{n+1}^m):m\in[M]\}$.
\State \textbf{Complexity:} $\mathcal{O}(K(n_1+n_2)\log n_1 + n_2 \log n_2 + MK)$.
\end{algorithmic}
\end{algorithm}

\subsection{Powerful tie-breaking strategies}
\label{sec:tie-breaking}

When the underlying data distributions are discrete, identical evaluations across permutations occur with positive probability. While such ties may arise naturally at the level of individual marginal statistics $T_b^k$, this degeneracy is severely compounded when these evaluations are aggregated via merging functions, particularly those with a low-cardinality image space such as the minimum merging function. Because such merging functions project multi-dimensional vectors onto highly constrained supports, structural ties on the final merged test statistics occur with substantially higher frequency. The standard p-value transformation, evaluated conservatively without tie-breaking, assigns the maximum of the tied ranks to all identical evaluations, inducing a conservative bias that artificially limits statistical power. Resolving these ties is therefore practically necessary. 

As briefly explained in \Cref{sec:SB-finite-sample-validity}, tie-breaking can be implemented by augmenting each transformed row with auxiliary variables $U_b^1,\dots,U_b^L$. For example, the marginal tie-broken p-values can be written as
\begin{align*}
    p_U(T_b^k)
    \coloneqq
    \frac{1}{B+1}\sum_{i=0}^B
    \mathds{1}\!\left\{
    (T_i^k,U_i^1,\dots,U_i^L)
    \geq_{\mathrm{lex}}
    (T_b^k,U_b^1,\dots,U_b^L)
    \right\},
\end{align*}
where $\ge_{\mathrm{lex}}$ denotes the usual lexicographical order. The final merged statistic can be ranked analogously by
\begin{align*}
    p_{\mathrm{SB},U}
    \coloneqq
    \frac{1}{B+1}\sum_{i=0}^B
    \mathds{1}\!\left\{
    (f_i,-U_i^1,\dots,-U_i^L)
    \leq_{\mathrm{lex}}
    (f_0,-U_0^1,\dots,-U_0^L)
    \right\}.
\end{align*}
Provided that the augmented sequence of tuples $(T_b^1,\dots,T_b^K,f_b,U_b^1,\dots,U_b^L)_{b=0}^B$ remains exchangeable under the null hypothesis, these lexicographical transformations preserve finite-sample exchangeability and hence level control. While ties among the statistics $T_0^k,\dots,T_B^k$ are typically rare due to their often continuous support, ties among the merged values $f_0,\dots,f_B$ occur far more frequently because the merging function often projects these values onto a highly-constrained low-cardinality support (e.g., minimum merging function). Importantly, when breaking $f_b$-ties, the lexicographically tie-broken p-value is never larger than its conservative counterpart obtained without tie-breaking. Consequently, any power guarantee established for the conservative procedure immediately carries over to the tie-broken procedure. We formalize specific strategies for choosing the auxiliary variables within the SB testing framework, which readily generalize to TB and data-driven SB.

\begin{itemize}
    \item \textbf{No tie-breaking (conservative baseline).} Setting all auxiliary variables equal, i.e.,
	\begin{align*}
		U_b^1=\cdots=U_b^L=c,	
	\end{align*}
	recovers the standard conservative ranking in which tied evaluations receive the maximum tied rank. This choice is computationally trivial and guarantees a valid test. However, failing to separate the transformed or merged statistics yields an upwardly biased p-value, resulting in an overly conservative procedure with suboptimal power. 
	\\[-0.8em]
    
    \item \textbf{Uniformly random tie-breaking.} Ties may be broken by appending an independent and identically distributed random sequence, e.g., $U_b^1 \sim \mathrm{Uniform}(0, 1)$, to each permutation and evaluating the lexicographical ordering of $(T_b^k, U_b^1)$ for each $k$, and of $(f_b, -U_b^1)$. 
	Although this approach typically improves empirical power relative to the conservative no-tie-breaking baseline, it does not exploit the information contained in the test statistics themselves. This motivates the following data-dependent strategies, which construct the auxiliary variables to leverage this structural information. \\[-0.8em]

	\item \textbf{Data-dependent tie-breaking.} Rather than breaking ties at random, the auxiliary variables can instead be constructed systematically to favor permutations that exhibit stronger evidence against the null. The exchangeability framework developed here permits such data-dependent tie-breaking rules while preserving finite-sample validity. We propose two concrete constructions below, although the underlying principle is not restricted to these examples. \\[-0.8em]

    \begin{enumerate}[label=(\roman*)]
	\item \textbf{Data-dependent tie-breaking via studentised statistics.} To implement this idea, we define the primary auxiliary sequence as the logsumexp, which is a smooth approximation to the maximum of the studentized statistics, i.e.,
\begin{equation}
    \label{eq:student}
    U_b^1 \coloneqq \log \sum_{k=1}^K \exp\left( (T^k_b - \bar{T}_b^k) / \sigma^k_b \right).
\end{equation}
Here, $\bar{T}_b^k$ and $\sigma^k_b$ denote centering and scaling parameters, which may, for example, be taken as the empirical mean and standard deviation of $\{T_0^k,\dots,T_B^k\}\setminus\{T_b^k\}$ in a leave-one-out fashion \citep[Eq. (4)]{shah2018goodness}. This preserves the exchangeability of the augmented sequence and hence finite-sample validity. This auxiliary variable systematically breaks ties in favor of permutations exhibiting stronger studentized evidence against the null. This can improve power relative to purely random tie-breaking.
	 \\[-0.8em]
	        
	\item \textbf{Data-dependent tie-breaking via p-value aggregation.} Alternatively, ties can first be structurally resolved by deriving auxiliary sequences from the conservative (no-tie-breaking) marginal p-values $\check{p}_b^1,\dots,\check{p}_b^K$. For instance, we may compute 
$$U_b^1\coloneqq -\sum_{k=1}^K \log(\check{p}_b^k), \qquad \textrm{and} \qquad U_b^2\coloneqq -\sum_{k=1}^K \check{p}_b^k$$ 
relying on Fisher's and Edgington's methods. The primary variable $U_b^1$ breaks all ties for which the product of the marginal p-values differs, while the secondary variable $U_b^2$ subsequently breaks all ties for which their sum differs. Consequently, any potential ties surviving both aggregations must necessarily have equal sums and products of marginal p-values (e.g., exact same multiset occurring in a different dimensional ordering). These remaining symmetric ties can then potentially be broken by evaluating $U_b^3$ as a softmax merging of studentized statistics as in \eqref{eq:student}. \\[-0.8em]
    \end{enumerate}
Nevertheless, inherent ties may still manifest if, for instance, the resampling procedure samples the exact same group transformation multiple times. Since identical transformations yield indistinguishable evaluations across all deterministic constructions, these remaining degeneracies can only be resolved by appending a terminal, independent uniform random variable, e.g., $U_b^L \sim \mathrm{Uniform}(0, 1)$, thereby preserving finite-sample validity while breaking the remaining ties.
\end{itemize}

\section{Additional simulation results}
\label{sec: additional simulation results}

This section reports simulation results that complement the experiments in \Cref{sec: simulations}.

\subsection{Type I error control}
\label{subsec: level control}

\begin{figure}[!t]
	\centering
	\includegraphics[width=\textwidth]{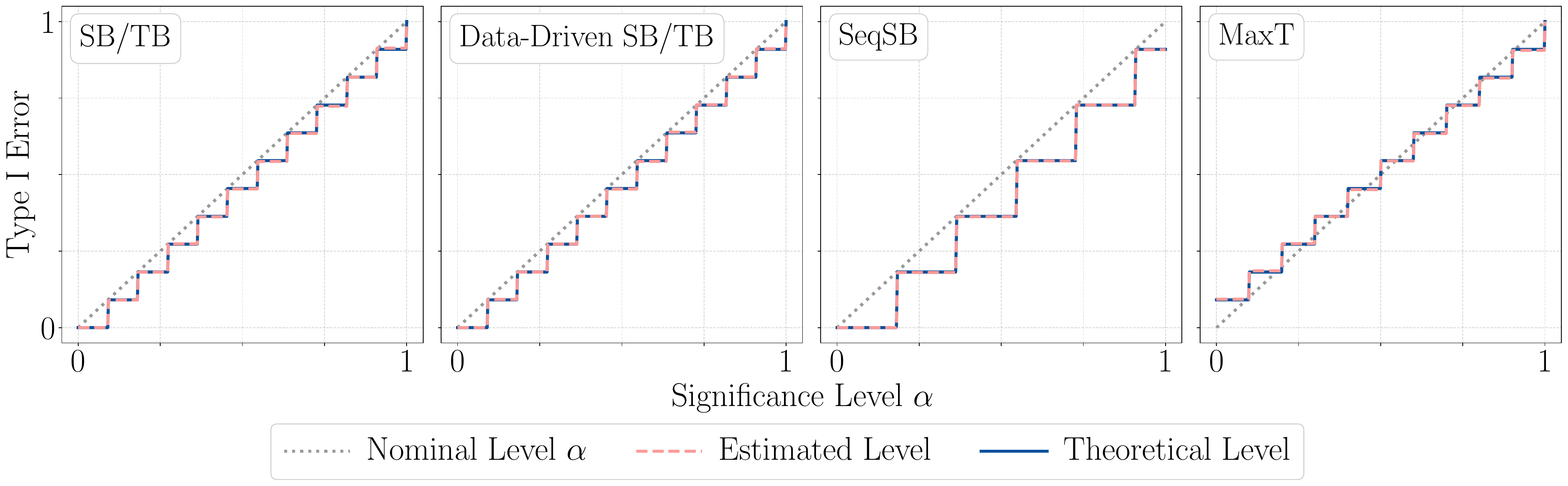}
	\caption{Type I error estimation for the SB/TB, Data-Driven SB/TB, SeqSB and MaxT tests.}
	\label{fig: figure_1}
\end{figure}

To estimate the type I error of the various tests in \Cref{fig: figure_1} we use the two-sample testing framework with $m=n=50$ i.i.d.~samples drawn from the uniform distribution on $(0,1)$, aggregating over statistics formed by the $L^p$ norms of the difference in means, and using permutations of the pooled samples.
Nonetheless, we stress that these level results hold more generally for any distribution, sample size and statistic in any framework that tests the group-invariance null hypothesis (\Cref{def:group-invariance hypothesis}), since they rely only on the exchangeability of the last $B+1$ values. 
For illustration purposes, only $B=10$ permutations are used in this setting, and uniformly random tie-breaking is implemented.
While the level achieved by (Data-Driven) SB/TB holds independently of the value of $K$, the results for SeqSB depend on $K$, and the level presented for MaxT only holds for $K=1$.
The estimated levels are averaged over 20,000 repetitions, and the theoretical levels are: 
${\lfloor (B+1)\alpha \rfloor}/{(B+1)}$ for SB, TB and their data-driven variants (\Cref{prop: SB level,prop: TB level}), 
${K\lfloor(B+1)\alpha/K\rfloor}/{(B+1)}$ for SeqSB with $\alpha_j=\alpha/K$, $j\in[K]$ (\Cref{prop: sequential minimum aggregation}), and ${(\lfloor B\alpha \rfloor + 1)}/{(B + 1)}$ for MaxT (\Cref{prop:failure of size control without bisection}).

As seen in \Cref{fig: figure_1}, the estimated levels of all implemented tests match their theoretical levels, confirming the validity of the theoretical analysis, as well as the correctness of our implementation. 
\Cref{fig: figure_1} highlights the fact that MaxT does not control the type I error at the desired level $\alpha$ for most values of $\alpha$ for a fixed number $B$ of transformations, even in the simplest case with $K=1$. 
Therefore, the MaxT procedure is not a valid test, and hence is not considered in our power experiments in the rest of this section to ensure fair comparisons across tests. 
We note that for $K>1$ the MaxT test exhibits a similar behavior (failing to control the type I error at level $\alpha$) but deviates from the theoretical level derived for $K=1$.
All other tests have type I error bounded above by $\alpha$ as desired. 
Due to the discrete nature of the permutation test, the test level is not always exactly $\alpha$, illustrating the importance of choosing $B$ appropriately; see \citep{cha2026more} for a formal study of the choice of $B$ in Monte Carlo permutation tests. 
In particular, the (Data-Driven) SB/TB tests of \Cref{sec: data-driven aggregation of merging functions} achieve exact level $\alpha$ whenever $(B+1)\alpha$ is an integer; in the following experiments we use $B=199$ transformations and level $\alpha=0.05$. 
The same parameter choice ensures that SeqSB with $\alpha_j=\alpha/K$ and $K=10$ also achieves exact level $\alpha$.
The classical worst-case tests of \Cref{sec:merging}, which are constructed to be valid under arbitrary dependence, are in general much more conservative, with much lower type I error, than the permutation-calibrated tests presented here.

\subsection{One-sample zero mean testing}
\label{subsec:one-sample-zero-mean}

\begin{figure}[!t]
	\centering
	\includegraphics[width=\textwidth]{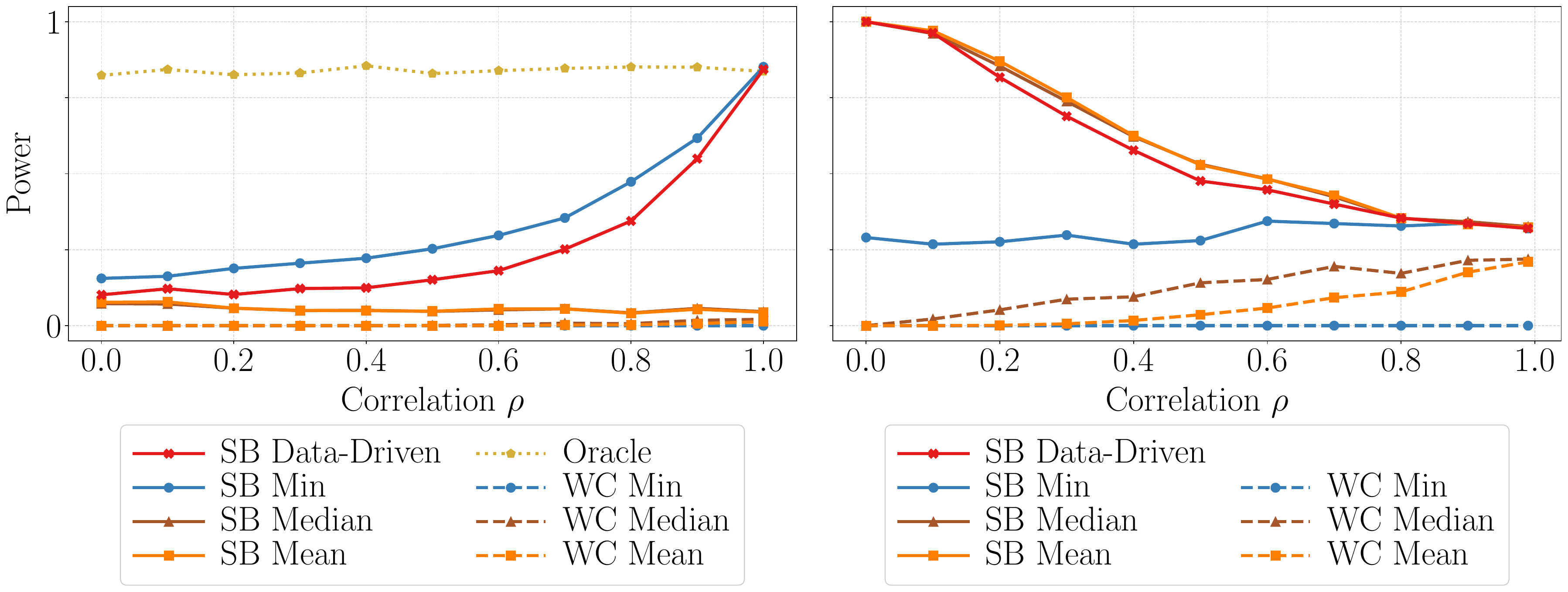}
	\caption{Power estimation for one-sample zero mean testing of an equicorrelated multivariate Gaussian.} 
	\label{fig: figure_3}
\end{figure}

To evaluate the proposed methods under varying degrees of dependence across statistics, we generate $n = 10,000$ samples from a $d$-dimensional multivariate normal distribution with ambient dimension $d = 40$. 
The covariance matrix is constructed such that all dimensions share a pairwise feature correlation parameter $\rho$, representing the off-diagonal entries, while the diagonal variances are set to $1$. 
We consider two distinct alternative hypotheses: a sparse setting in the left panel of \Cref{fig: figure_3} where only the first dimension contains a signal (a mean shift of $\mu_1 = 0.03$, with $\mu_{j} = 0$ for $j \neq 1$), and a dense setting for the right panel where the signal is distributed across all dimensions ($\mu_j = 0.015$ for all $j$). 
For each regime, we assess the performance of the worst-case (WC) and single-step (SB) procedures using the minimum, median, and mean aggregation functions over the absolute one-sample $t$-statistics computed for each dimension. 
Additionally, we evaluate the Data-Driven SB test, defined in \Cref{sec: data-driven aggregation of merging functions}, which adapts across the three merging functions. 
As a benchmark in the sparse regime, we include an Oracle test, which computes the classical permutation one-sample $t$-test strictly on the first dimension, assuming prior knowledge of the true signal location. 
For the group transformations, we multiply each data point by a random sign (i.e., Rademacher random variable) to simulate the case of zero mean, leveraging the symmetric property of the multivariate normal distribution.
All empirical power calculations are averaged over 1000 independent repetitions at a nominal significance level of $\alpha = 0.05$.

The empirical results, illustrated in \Cref{fig: figure_3}, reveal several key dynamics of the testing procedures. 
First, the worst-case (WC) tests are highly conservative, achieving near-zero empirical power across almost all settings. 
In the sparse signal regime (left panel), among the three single-step methods, only the SB Min test demonstrates strong performance. 
Because the mean shift is isolated to a single dimension, the minimum $p$-value effectively targets the only active statistic; taking the mean or median heavily dilutes this localized signal with the remaining $d-1$ noise dimensions. 
Conversely, in the dense signal regime (right panel), where the mean shift is applied across all dimensions, the SB Mean and SB Median tests successfully aggregate the distributed signal and consequently outperform the SB Min test.
Crucially, across both regimes, the SB Data-Driven test proves to be highly adaptive, consistently achieving statistical power that approximates the single best-performing aggregation method. 
Finally, the correlation $\rho$ significantly, yet oppositely, impacts test power depending on the signal structure: increasing $\rho$ in the sparse setting leads to higher power, whereas increasing $\rho$ in the dense setting leads to lower power. 
In the dense case, strong correlation reduces the effective number of independent signals available for aggregation. 
However, in the sparse case, highly correlated features effectively constrain the background noise, allowing the isolated signal to stand out. 
Notably, at $\rho = 1$, the SB Min and SB Data-Driven tests achieve the exact same empirical power as the single Oracle test, demonstrating that, unlike their WC counterparts, these adaptive procedures do not need to over-correct for perfectly redundant data (as explained in the discussion following \Cref{prop: minimum single-batch calibration statistic view}).

\subsection{Independence nonparametric testing}
\label{subsec: independence}

\begin{figure}[!t]
	\centering
	\includegraphics[width=\textwidth]{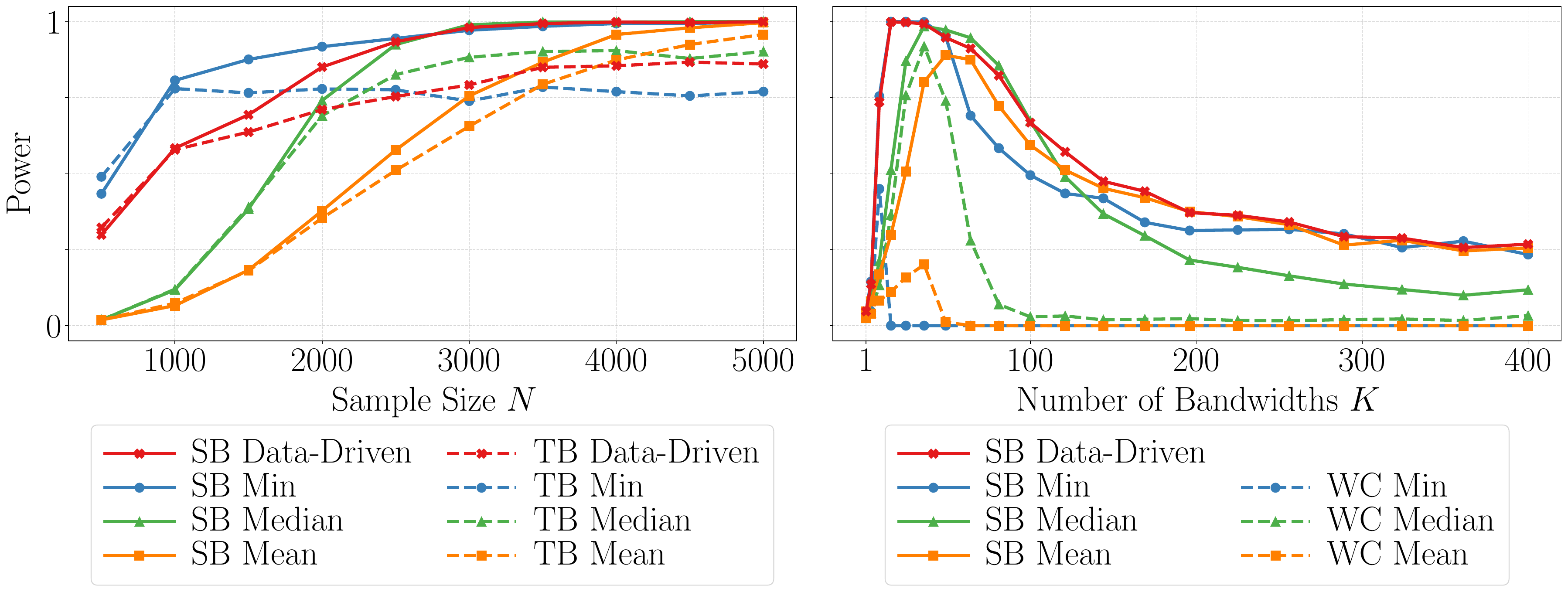}
	\caption{Power estimation for kernel-based HSIC independence nonparametric testing.}
	\label{fig: figure_4}
\end{figure}

We evaluate the proposed methods for independence testing using the Hilbert-Schmidt Independence Criterion (HSIC, \cite{gretton2005kernel}) in \Cref{fig: figure_4}. 
The data is generated from a Dirichlet distribution with a uniform concentration parameter vector $\mathbf{1}_d$. 
Specifically, we generate a sample of size $N$ in ambient dimension $d$, and partition each observation into two views, $X \in \mathbb{R}^{d-c}$ and $Y \in \mathbb{R}^c$, to evaluate the dependence between them. 
In our first experiment, we vary the sample size $N$ from 500 to 5000 while fixing the dimensions to $d=500$ and $c=200$. 
In the second experiment, we fix $N=500$, $d=200$, and $c=80$, and examine the impact of the bandwidth grid size. 
The HSIC V-statistic, with Gaussian kernels, is computed over a multi-scale grid where the total number of bandwidths evaluated, $K$, corresponds to the product of the number of candidate bandwidths for $X$ and $Y$ (i.e., $K = K_X \times K_Y$). 
Empirical power is averaged over 1000 independent repetitions, all tests are calibrated using $B=199$ permutations, and the nominal significance level is set to $\alpha = 0.05$. 
The paired data is transformed by permuting the data within one sample to break the dependence structure.
We compare the Single-Step (SB) methods against the Two-Batch (TB) and Worst-Case (WC) methods using three standard merging functions: the minimum, mean, and median of the individual test statistics. 
Furthermore, we assess the performance of Data-Driven SB/TB tests, defined analogously to \Cref{sec: data-driven aggregation of merging functions}, which are adaptive to the choice of these three merging functions.

The experimental results of \Cref{fig: figure_4} highlight the significant advantages of the proposed SB testing procedure. 
As shown in the left panel, the SB method consistently outperforms the TB method, with the power differential becoming particularly pronounced at larger sample sizes. 
The Data-Driven SB and TB procedures are observed to be adaptive, especially for larger sample sizes.
Furthermore, the right panel demonstrates that the SB procedure vastly outperforms the classical WC approach; this holds strictly true across all considered merging functions (minimum, mean, and median). 
When comparing the merging strategies themselves, the minimum function strictly achieves the highest power in the experiment varying the sample size (left panel). 
In the experiment varying the number of paired kernel bandwidths (right panel), each of the three merging functions dominates in a distinct regime depending on the grid density. 
In this setting, SB Data-Driven is seen to match the power of the best-performing merging function across all regimes.
The right panel also reveals critical dynamics regarding the bandwidth selection process: the initial, narrow bandwidth grid is poorly calibrated, yielding near-zero power. 
As we increase the total number of bandwidths $K$, the grid broadens to encompass better-calibrated bandwidths, resulting in a sharp, immediate increase in empirical power. 
However, continuing to increase $K$ inevitably introduces poorly calibrated, noisy bandwidths into the collection, which begins to penalize the overall power. 
Crucially, as $K$ becomes increasingly large, the power of the SB procedure stabilizes and reaches a robust plateau, whereas the power of the conservative WC procedure deteriorates completely to zero.

\begin{figure}[!t]
	\centering
	\includegraphics[width=\textwidth]{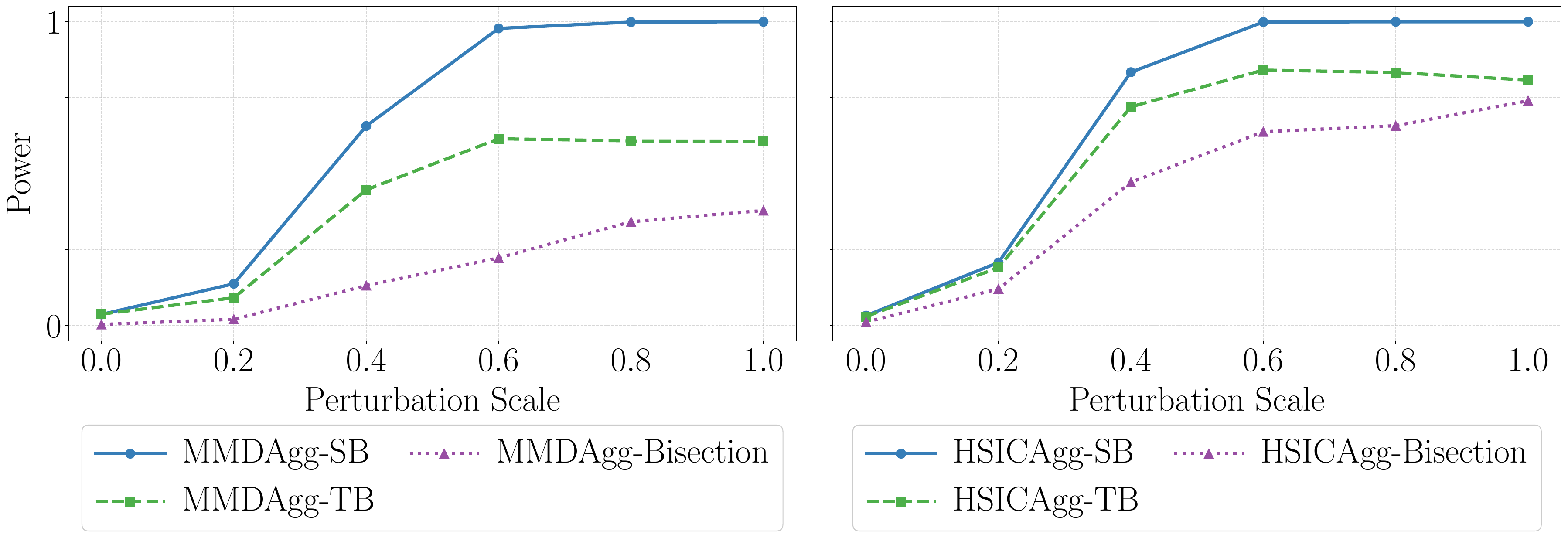}
	\caption{Power estimation for aggregated kernel-based MMDAgg and HSICAgg nonparametric testing.}
	\label{fig: figure_7}
\end{figure}

\subsection{Improved MMDAgg and HSICAgg optimal tests}
\label{subsec: mmd_hsic_power}

In the experimental setting of \Cref{fig: figure_7}, we implement the SB versions of the MMDAgg and HSICAgg tests \citep{schrab2023mmd,albert2022adaptive} based on unbiased U-statistics.
These implementations are exact: they correspond strictly to the SB tests analyzed in \Cref{sec: adaptive MMD and HSIC tests}, requiring no further approximations or empirical parameter tuning.
In particular, the collections of bandwidths for Gaussian kernels are implemented exactly as prescribed by theory (\ref{eq:optimal_bandwidths_mmd}, \ref{eq:optimal_bandwidths_hsic}).\footnote{The isotropic Sobolev assumption implies uniform smoothness across all dimensions; therefore, we use the same bandwidth for both kernels in the HSIC computation. We evaluate the more general setting of kernel-specific bandwidths using a two-dimensional grid in \Cref{subsec: independence}.}
These bandwidth collections are designed for the Sobolev smoothness assumption, which is satisfied by the data-generating process in this experiment as we use specially-constructed perturbed uniform distributions.
For two-sample testing, the aim is to detect the difference between the perturbed and unperturbed one-dimensional uniform distributions, while for independence testing the goal is to detect the dependence between the two dimensions of a joint perturbed uniform distribution (with uniform marginals).
We refer the reader to \cite[Eq. 17]{schrab2023mmd} and \cite[Eq. 4.3]{albert2022adaptive} for specific expressions of the perturbed uniform distributions.
We control the signal strength by varying the perturbation scale from 0 (the null hypothesis) to 1 (the theoretical maximum for non-negative densities), while maintaining a sample size of $500$, a test level of $\alpha=0.05$ and a permutation number of $B=199$ across both experiments.
We refer to \cite{domingo2025cheap} for a method to significantly reduce the computational cost of permutations while maintaining high power.

In \Cref{fig: figure_7}, we compare the empirical power (averaged over 1000 independent repetitions) of the SB tests and of their TB variants against the original MMDAgg and HSICAgg tests, which are defined via their statistic view with rejection event (\ref{Eq. max-type test}) and Monte Carlo calibration threshold (\ref{eq: estimated u alpha}) where the supremum is being further approximated via a bisection method.
As we show in \Cref{prop: u_alpha expression}, the threshold (\ref{eq: estimated u alpha}) can actually be computed exactly, bypassing the estimator error induced by the bisection approximation.
This corresponds to the MaxT variants of the tests (\Cref{sec: maxT aggregation}). However, as shown in \Cref{prop:failure of size control without bisection}, these do not guarantee finite-sample level control across all settings, and are hence excluded from our experiments, while the more conservative bisection-based MaxT variants are included for comparison with \cite{schrab2023mmd,albert2022adaptive}.
The proposed TB variants (with minimum merging function) of MMDAgg and HSICAgg resolve this lack of finite-sample validity while remaining closely related to their MaxT counterparts (see \Cref{sec: deferred maxT comparison}), benefit from a p-value formulation and achieve higher empirical power.
The computationally more efficient SB variants are observed to be statistically more powerful, validating our theory (see \Cref{sec: power properties of TB}).

Indeed, for both the MMDAgg and HSICAgg comparisons of \Cref{fig: figure_7}, the SB method strictly outperforms the TB approach, which in turn outperforms the standard MaxT-Bisection method.
While this power advantage is significant when using $B=199$ permutations, it is expected that all three tests perform comparably when $B$ is sufficiently large.
In practice, the implementations of \cite{schrab2023mmd,albert2022adaptive} use many more permutations (e.g., $B=2000$) to stabilize the power.
Consequently, the superiority of the SB procedure can be interpreted dually: it achieves strictly higher power for a fixed small number of permutations, or equivalently, it drastically reduces the computational cost required to match the power of the Bisection method.
We encourage practitioners to use the newly proposed MMDAgg-SB and HSICAgg-SB versions of the tests, whose minimax optimality and validity are established in \Cref{sec: adaptive MMD and HSIC tests}.

\section{Extensions of the aggregation framework}
\label{sec: additional theoretical extensions}

This section collects extensions of the basic SB and TB aggregation framework beyond the core p-value aggregation procedures.

\subsection{Exchangeable e-values}
\label{sec: deferred exchangeable e-values}

Although our primary focus is on p-value aggregation, the SB aggregation framework extends naturally to e-values \citep{vovk2021values}, which quantify evidence via non-negative random variables $E$ satisfying $\mathbb{E}[E]\le 1$ under the null. While simple averaging of e-values preserves validity, it can be conservative when the expectation of the average is strictly smaller than one. Under the group-invariance hypothesis, however, additional structure is available: the row-wise merged quantities $(f_0,\ldots,f_B)$ are exchangeable in the index $b$. Exploiting this exchangeability, for any measurable score map $\psi\colon\mathbb{R}\to[0,\infty)$, the self-normalized statistic
\begin{align*}
E_{\mathrm{SB}}
\coloneqq
\frac{(B+1)\psi(f_0)}{\sum_{b=0}^B \psi(f_b)},
\qquad
E_{\mathrm{SB}}=1 \text{ if } \sum_{b=0}^B \psi(f_b)=0,
\end{align*}
is an e-value satisfying $\mathbb{E}[E_{\mathrm{SB}}]=1$ under the null. Thus, exchangeability yields an exactly calibrated e-value that avoids the potential conservativeness of generic averaging schemes. This construction corresponds to the \emph{soft-rank e-value} \citep[][Chapter~1.7]{ramdas2025hypothesis}.

\subsection{Data-driven aggregation of merging functions} \label{sec: data-driven aggregation of merging functions}

This subsection extends the aggregation framework from combining statistics to combining the merging rules themselves. Building on this perspective, we develop a data-driven aggregation scheme that exploits redundancy among merging rules to retain the power of the best-performing rule while preserving finite-sample validity. A related strategy has been explored by \citep{guo2024rank} in the context of subsampling-based aggregation; however, a formal theoretical power analysis has been lacking.
While we formulate this data-driven scheme within the single-batch (SB) framework for clarity, the underlying methodology readily extends to the two-batch (TB) procedure.

\begin{figure}[htbp]
	\centering
	\includegraphics[width=\textwidth]{figures/diagram_SBF.pdf}
	\caption{Schematic illustration of the Data-driven SB aggregation procedure of \Cref{sec: data-driven aggregation of merging functions} (\Cref{alg:DDSB}), using statistics $T^1,\dots,T^K$ and merging functions $f^1,\dots,f^M$.}
	\label{fig: diagram_SBF}
\end{figure}

\begin{algorithm}[t]
\caption{Data-Driven Single-Batch Aggregation}
\label{alg:DDSB}
\begin{algorithmic}[1]
\Require data $\mathbf{X}$; statistics $T^1,\ldots,T^K$; transformations $g_0,\ldots,g_B$ ($g_0=\mathrm{id}$); merging functions $f^1,\dots,f^M$; level $\alpha$.
\State \textbf{Transformed statistics:} For each $b\in[B]_0$ and $k\in[K]$, compute $T_b^k \gets T^k(g_b(\mathbf{X}))$.
\State \textbf{Standardization via p-values:} For each $b\in[B]_0$ and $k\in[K]$, compute the permutation p-value
\begin{align*}
p\big(T_b^k\big) \gets \frac{1}{B+1}\sum_{i=0}^B \mathds{1}\big(T_i^k \ge T_b^k\big).
\end{align*}
\State \textbf{Multiple aggregation row-wise:} For each $b\in[B]_0$ and $m\in[M]$, set
\begin{align*}
f_b^m \gets f^m\bigl(p\big(T_b^1\big),\ldots,p\big(T_b^K\big)\bigr).
\end{align*}
\State \textbf{Standardization via p-values:} For each $b\in[B]_0$ and $m\in[M]$, compute the permutation p-value
\begin{align*}
\ple\big(f_b^m\big) \gets \frac{1}{B+1}\sum_{i=0}^B \mathds{1}\big(f_i^m \le f_b^m\big).
\end{align*}
\State \textbf{Aggregation row-wise:} For each $b\in[B]_0$, set
\begin{align*}
p_b^\mathrm{min} \gets \min_{m\in[M]}\ple\big(f_b^m\big).
\end{align*}
\State \textbf{Calibration:} Compute the p-value
\begin{align*}
p_{\mathrm{SB},\mathcal{F}} \gets \frac{1}{B+1}\sum_{b=0}^B \mathds{1}\big(p_b^\mathrm{min} \le p_0^\mathrm{min}\big).
\end{align*}
\State \textbf{Decision rule:} Reject if $p_{\mathrm{SB},\mathcal{F}}\le\alpha$.
\end{algorithmic}
\end{algorithm}

\parheading{Procedure.}
Let $\mathcal{F} = \{f^1,\ldots,f^M\}$ denote a collection of merging functions, such as the minimum, median, mean, or other order-statistic--based rules. For each $m\in[M]$, apply the SB aggregation procedure of \Cref{sec:SB-procedure} to obtain merged statistics
\begin{align*}
f_0^m, f_1^m, \ldots, f_B^m.
\end{align*}
Importantly, for each $b\in[B]_0$, the values $f_b^1,\ldots,f_b^M$ are obtained by applying the merging functions in $\mathcal{F}$ to the same collection of permutation p-values computed from the $b$-th transformed dataset. With a slight abuse of notation, define the associated lower-tail permutation p-values
\begin{align*}
\ple(f_b^m)
\coloneqq
\frac{1}{B+1}\sum_{i=0}^B \mathds{1}\bigl(f_i^m \le f_b^m\bigr),
\qquad b\in[B]_0.
\end{align*}
For each transformation index $b$, aggregate across merging functions by taking the minimum p-value
\begin{align*}
p_b^{\min}
\coloneqq
\min_{m\in[M]} \ple(f_b^m),
\qquad b\in[B]_0.
\end{align*}
The data-driven aggregation test is then defined by
\begin{align*}
p_{\mathrm{SB},\mathcal{F}}
\coloneqq
\frac{1}{B+1}\sum_{b=0}^B
\mathds{1}\bigl(p_b^{\min} \le p_0^{\min}\bigr),
\end{align*}
and rejects the null hypothesis whenever $p_{\mathrm{SB},\mathcal{F}}\le\alpha$. Finite-sample validity follows directly from the exchangeability of $(p_b^{\min})_{b\in[B]_0}$ under the group-invariance hypothesis. This procedure is fully presented in \Cref{alg:DDSB} and is illustrated in \Cref{fig: diagram_SBF}.

To study the power of the proposed data-driven aggregation procedure, we first formalize the notion that, in many applications, only a small number of merging functions are effectively distinct. This captures redundancy across merging rules through the ranks they induce on the same transformed rows.

\parheading{Effective multiplicity.}
For $\varepsilon \ge 0$, define the (random) effective multiplicity
\begin{align*}
N_{\mathrm{eff}}(\varepsilon)
=
\min\Bigl\{|S|:\ S\subseteq[M],\
\min_{m\in S} \ple(f_b^m)\le (1+\varepsilon)\min_{m\in[M]} \ple(f_b^m)\ \ \forall b\in[B]_0
\Bigr\}.
\end{align*}
Thus, $N_{\mathrm{eff}}(\varepsilon)$ is the smallest number of representative merging functions whose minimum p-value approximates, up to a multiplicative factor $1+\varepsilon$, the minimum over all $M$ merging functions.

We assume that there exist deterministic constants $N \le M$ and $\eta \in [0,1]$ such that
\begin{equation}
\label{eq: Neff_highprob}
\mP( N_{\mathrm{eff}}(\varepsilon) \le N ) \ge 1-\eta.
\end{equation}
Condition~\eqref{eq: Neff_highprob} is satisfied whenever the collection $\{\ple(f_b^m)\}_{m\in[M]}$ exhibits redundancy, for instance when many merging functions induce nearly identical ranks on the same transformation. In the extreme case where all merging functions induce identical ranks, one has $N_{\mathrm{eff}}(0)=1$ almost surely. 

The following proposition quantifies the power of the data-driven aggregation procedure under this effective multiplicity condition.

\begin{proposition}
\label{prop: adaptive aggregation effective multiplicity}
Fix $\alpha\in(0,1)$ and $\varepsilon\ge0$. Suppose that the effective multiplicity condition \eqref{eq: Neff_highprob} holds for some $N\le M$ and $\eta\in[0,1]$. Then the type~II error of the data-driven aggregation procedure satisfies
\begin{align*}
\mP\bigl(p_{\mathrm{SB},\mathcal{F}}>\alpha\bigr)
\le
\eta
+ 
\mP\!\left( \min_{m\in[M]}
 \ple(f_0^{m})>a_{\varepsilon,N}
\right)
\le
\eta
+ 
\min_{m\in[M]}
\mP\!\left(
 \ple(f_0^{m})>a_{\varepsilon,N}
\right)
\end{align*}
for $a_{\varepsilon,N}\coloneqq {\alpha}/{(N(1+\varepsilon))}$.
\end{proposition}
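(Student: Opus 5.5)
The plan is to show a deterministic implication: on the event $\{N_{\mathrm{eff}}(\varepsilon)\le N\}\cap\{\min_{m\in[M]}\ple(f_0^m)\le a_{\varepsilon,N}\}$, the data-driven test rejects. Taking complements and a union bound then gives
\begin{align*}
\mP\bigl(p_{\mathrm{SB},\mathcal{F}}>\alpha\bigr)\le \mP\bigl(N_{\mathrm{eff}}(\varepsilon)>N\bigr)+\mP\Bigl(\min_{m\in[M]}\ple(f_0^m)>a_{\varepsilon,N}\Bigr).
\end{align*}
By \eqref{eq: Neff_highprob}, the first term is at most $\eta$. For every fixed $m'$ we have $\{\min_m \ple(f_0^m)>a_{\varepsilon,N}\}\subseteq\{\ple(f_0^{m'})>a_{\varepsilon,N}\}$. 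Minimizing over $m'$ therefore yields the second inequality of \Cref{prop: adaptive aggregation effective multiplicity}.

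For the deterministic implication, the first ingredient is a per-merging-function counting bound. Apply \Cref{lemma:super-uniform-property} to the values $T_i\coloneqq -f_i^m$, $i\in[B]_0$. The resulting $p_b$ is exactly $\ple(f_b^m)$, so for every $m$ and every $t\in[0,1]$,
\begin{align*}
\bigl|\{b\in[B]_0:\ \ple(f_b^m)\le t\}\bigr|\le (B+1)t .
\end{align*}
Next, fix a realization on the good event and let $S\subseteq[M]$ be a minimizing set in the definition of $N_{\mathrm{eff}}(\varepsilon)$, so that $|S|\le N$. Take any $b$ counted in $p_{\mathrm{SB},\mathcal{F}}$, that is, with $p_b^{\min}\le p_0^{\min}$. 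By the defining property of $S$,
\begin{align*}
\min_{m\in S}\ple(f_b^m)\le(1+\varepsilon)\,p_b^{\min}\le(1+\varepsilon)\,p_0^{\min}\le(1+\varepsilon)\,a_{\varepsilon,N}=\alpha/N .
\end{align*}
Hence every such $b$ lies in $\bigcup_{m\in S}\{b:\ \ple(f_b^m)\le \alpha/N\}$. By a union bound and the counting bound above, the number of such $b$ is at most $|S|\,(B+1)\alpha/N\le (B+1)\alpha$. This gives $p_{\mathrm{SB},\mathcal{F}}\le\alpha$, i.e.\ rejection.

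The main point to get right is the direction of the tail in the super-uniformity lemma. \Cref{lemma:super-uniform-property} is stated for upper-tail ranks with $\ge$, while $\ple$ uses $\le$; negating the merged values handles this exactly, including ties. The only other subtlety is that $S$ is random and depends on the realization. This causes no difficulty because the counting bound holds deterministically for every $m$, so it can be applied after $S$ is chosen. No exchangeability is needed anywhere in the argument.
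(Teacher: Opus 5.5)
Your proof is correct and follows the paper's argument. On $\{N_{\mathrm{eff}}(\varepsilon)\le N\}$ the paper shows $\uSB\ge \hat u_\alpha^{\mathrm{SB},(g)}/(1+\varepsilon)>\alpha/(N(1+\varepsilon))$ by combining the same $(1+\varepsilon)$-domination with the Bonferroni case of \Cref{thm:single-batch power}; your union-bound count over the $|S|\le N$ sets $\{b:\ple(f_b^m)\le\alpha/N\}$ is that threshold comparison unpacked, written directly in p-value form without going through \Cref{prop: single-batch equivalence}.
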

\begin{proof} See \Cref{sec: proof of adaptive aggregation}. \end{proof}

\Cref{prop: adaptive aggregation effective multiplicity} shows that the power of the data-driven aggregation procedure is comparable to that of the best merging function in the collection, up to a multiplicative factor $(1+\varepsilon)N$ in the significance level and an additive error $\eta$. In particular, the procedure behaves as if only $N$ effectively distinct merging functions were considered. As a special case, taking $\varepsilon=0$, $\eta=0$, and $N=M$ yields
\begin{align*}
\mP\bigl(p_{\mathrm{SB},\mathcal{F}}>\alpha\bigr)
\le
\mP\!\left(\min_{m\in[M]} \ple(f_0^m)>\frac{\alpha}{M}\right)
\le
\min_{m\in[M]}
\mP\!\left(\ple(f_0^m)>\frac{\alpha}{M}\right),
\end{align*}
which coincides with the classical Bonferroni-type bound obtained by calibrating each merging function at level $\alpha/M$. Existing sufficient conditions under which a single permutation p-value is powerful, such as those studied in \citep{kim2022minimax,kim2025differentially}, can be invoked to further instantiate and interpret the power bound above.

\parheading{Uniform power improvement over single merging rules.}
\Cref{prop: adaptive aggregation effective multiplicity} quantifies the power of the data-driven aggregation test in terms of redundancy among the candidate merging functions, as captured by the effective multiplicity $N_{\mathrm{eff}}(\varepsilon)$. This perspective is particularly informative when many rules are nearly equivalent (so that $N_{\mathrm{eff}}(\varepsilon)\ll M$). At the same time, even when the candidate rules are all effectively distinct (so that $N_{\mathrm{eff}}(0)=M$), the permutation-calibrated aggregation can strictly outperform \emph{every} fixed choice of a single merging function under heterogeneous (e.g., mixture) alternatives, where different rules are optimal in different sub-regimes. We delve into this phenomenon in more detail in \Cref{sec: uniform power improvement over single merging rules}.

\subsection{SB aggregation of data-asymmetric tests} \label{sec: aggregation of data-asymmetric tests}

We next specialize the preceding SB framework to settings in which the test statistic is
\emph{asymmetric} in the data. Such asymmetry arises when a portion of the sample is used to construct a data-dependent object that is subsequently evaluated on held-out data, so that the resulting test depends on a particular ordering or partition of the sample. Examples include both early data-splitting procedures \citep{cox1975note,moran1973dividing} and more recent nonparametric tests involving learned or data-adaptive components
\citep[e.g.,][]{jankova2020goodness,liu2020learning,kim2021classification,tansey2022holdout,shekhar2022permutation,shekhar2023permutation,chau2024credal,pogodin2024practical,lundborg2024projected,kim2024dimension,kim2024conditional}. In such settings, it is natural to repeat the test over multiple random (or predetermined) splits and to aggregate the resulting p-values to improve power and stability.

Let $\Delta_1,\ldots,\Delta_K$ denote tests obtained by applying a fixed asymmetric testing procedure to $K$ independent random orderings or partitions of the data, and let $p^{(k)}$ be the p-value associated with $\Delta_k$. Even when the vector $(p^{(1)},\ldots,p^{(K)})$ is exchangeable under the null hypothesis, the arithmetic mean $\bar p \coloneqq K^{-1}\sum_{k=1}^K p^{(k)}$ is not, in general, super-uniform \citep{choi2023averaging,gasparin2025combining}, and therefore cannot be used directly as a valid p-value. While a classical result of
\citep{rueschendorf1982} shows that the rescaled statistic $2\bar p$ is super-uniform, this worst-case correction is often conservative and may substantially reduce power in practice.

Assume now that the group-invariance hypothesis holds. For each transformation index $b\in[B]_0$ and repetition $k\in[K]$, let $p_b^{(k)}$ denote the permutation p-value obtained by applying the same transformation to all $K$ repetitions. Define the row-wise averages $\bar p_b \coloneqq K^{-1}\sum_{k=1}^K p_b^{(k)}$ for $b\in[B]_0$, and define the SB average p-value by $p_{\mathrm{SB},\mathrm{avg}}
\coloneqq
(B+1)^{-1}\sum_{b=0}^B \mathds{1}\!\left(\bar p_b \le \bar p_0\right)$. By \Cref{prop: SB level}, this p-value is valid under the group-invariance hypothesis. Moreover, \Cref{thm:single-batch power} and \Cref{prop:asymptotic-adaptation} imply that it is structure-adaptive and uniformly dominates the worst-case corrected average $2\bar p_0$ in terms of power. It is also worth noting that exploiting the additional finite-population structure inherent in the SB construction, the worst-case correction factor can be sharpened from $2$ to $2(B+1)/(B+2)$; see \Cref{cor: SB avg finite population}. Taken together, these results establish SB aggregation as a principled approach for aggregating asymmetric tests under the group-invariance hypothesis.

\parheading{Consistency transfer from individual tests.}
Crucially, the uniform power dominance established in \Cref{thm:single-batch power} has an immediate implication for power consistency. Since \Cref{thm:single-batch power} shows that the type~II error of the SB procedure is uniformly upper bounded by that of any deterministic worst-case calibrated test, any consistency guarantee established for the latter automatically transfers to the SB procedure as shown below.

\begin{corollary}
\label{cor: SB-consistency} Let $K=K_n$ be an arbitrary (possibly diverging) sequence, and let $p_{0,n}^1,\ldots,p_{0,n}^{K_n}$ be exchangeable permutation p-values indexed by $n$. Assume that, for some (and hence also for all) 
$k\in[K_n]$,
\begin{align*}
\sup_{P \in \mathcal{P}}\mP_P\bigl(p_{0,n}^k > \alpha\bigr) \;\rightarrow\; 0
\quad \text{for every } \alpha \in (0,1).
\end{align*}
Then the SB average aggregation test satisfies
\begin{align*}
\sup_{P \in \mathcal{P}}\mP_P\bigl(p_{\mathrm{SB},\mathrm{avg}} > \alpha\bigr)
\;\rightarrow\; 0
\quad \text{for every } \alpha \in (0,1),
\end{align*}
without any restriction on the growth rate of $K_n$.
\end{corollary}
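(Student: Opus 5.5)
The plan is to compare the SB average test with the worst-case calibrated average $2\bar p_0$ via \Cref{thm:single-batch power}, and then to control the latter by Markov's inequality together with exchangeability.

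\textbf{Step 1 (reduction to $\bar p_0$).} The arithmetic mean $f(p_1,\ldots,p_K)=K^{-1}\sum_k p_k$ with deterministic threshold $c_{\alpha,K}=\alpha/2$ satisfies the hypothesis of \Cref{thm:single-batch power}: by the classical R\"uschendorf--Meng bound, $2\bar U$ is super-uniform for any joint law with super-uniform margins. The SB average test rejects when $p_{\mathrm{SB},\mathrm{avg}}\le\alpha$, which by \eqref{eq:SB_statistic_view} is equivalent to $\bar p_0<\uSB$. Hence \Cref{thm:single-batch power} gives, for every $P$,
\begin{align*}
\mP_P\bigl(p_{\mathrm{SB},\mathrm{avg}}>\alpha\bigr)=\mP_P\bigl(\bar p_0\ge \uSB\bigr)\le \mP_P\bigl(\bar p_0>\alpha/2\bigr).
\end{align*}
Importantly, this inequality holds pointwise in $P$, since the comparison $c_{\alpha,K}<\uSB$ is almost sure and deterministic. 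It therefore survives taking $\sup_{P\in\mathcal P}$, and it does not depend on $K_n$.

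\textbf{Step 2 (controlling $\bar p_0$ uniformly in $K_n$).} Fix $\delta\in(0,1)$ and write $\bar p_0=K^{-1}\sum_k p_{0,n}^k$. Split each term on the event $\{p_{0,n}^k\le\delta\}$: we have $p_{0,n}^k\le \delta+\mathds{1}(p_{0,n}^k>\delta)$. Averaging over $k$, taking expectations, and using exchangeability (so that all marginals coincide) yields
\begin{align*}
\mE_P\bar p_0\le \delta+\mP_P\bigl(p_{0,n}^1>\delta\bigr).
\end{align*}
The same exchangeability also justifies the remark ``for some, hence all, $k$'' in the statement. Markov's inequality then gives
\begin{align*}
\mP_P(\bar p_0>\alpha/2)\le \frac{2}{\alpha}\Bigl(\delta+\mP_P(p_{0,n}^1>\delta)\Bigr).
\end{align*}

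\textbf{Step 3 (conclusion).} Take the supremum over $P$, then let $n\to\infty$ using the assumption at level $\delta$, which gives $\limsup\le 2\delta/\alpha$. Letting $\delta\downarrow0$ gives convergence to $0$.

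The only delicate point is that $K_n$ is arbitrary. This is handled because the average is bounded in expectation by a single marginal tail, so no union bound over the $K_n$ coordinates is needed. For the same reason, the argument requires only identical marginals and no independence across repetitions.
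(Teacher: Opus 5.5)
Your proposal is correct and follows the paper's proof. It uses the same reduction via \Cref{thm:single-batch power} with the R\"uschendorf--Meng constant $c_{\alpha,K}=\alpha/2$, then Markov's inequality on $\bar p_0$, with exchangeability reducing $\mE_P\bar p_0$ to a single marginal. The only difference is cosmetic: you bound $\mE_P p_{0,n}^1\le\delta+\mP_P(p_{0,n}^1>\delta)$ and let $\delta\downarrow 0$, whereas the paper writes $\mE_P p_{0,n}^1=\int_0^1\mP_P(p_{0,n}^1>t)\,dt$ and applies dominated convergence to $\sup_P\mP_P(p_{0,n}^1>t)$. Your truncation is slightly more elementary but amounts to the same argument.
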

\begin{proof} See \Cref{sec: proof of SB-consistency}. \end{proof}

The converse does not hold in general: when the individual tests are (nearly) independent, the SB average test can be consistent even if none of the individual tests are.

\section{Power refinements}
\label{sec: additional results}
This section collects refinements of the power and adaptivity analysis for the proposed aggregation procedures.

\subsection{Uniform asymptotic adaptation}\label{sec:uniform-asymptotic-adaptation}
Here, we present a version of the asymptotic adaptivity result in \Cref{prop:asymptotic-adaptation} which holds uniformly over a class of null distributions.

\begin{proposition}
\label{prop:uniform-asymptotic-adaptation}
Let $\mathcal P$ be a class of null distributions. Under the setup of \Cref{prop:asymptotic-adaptation}, assume that for each $P\in\mathcal P$, the pointwise convergence conditions hold with $F$ replaced by $F_P$, and that
\begin{align*}
\sup_{P\in\mathcal P}\sup_{t,s\in\mathbb R}
\bigl|
\mP_P(f_{1,n}\le t, \, f_{2,n}\le s)
-
F_P(t)F_P(s)
\bigr|
\;\rightarrow \; 0.
\end{align*}
Write $Q^\star_{\alpha,P}\coloneqq\inf\{u \in \mathbb{R}:\,F_P(u)\ge\alpha\}$, and assume the uniform margin condition: for every $\varepsilon>0$,
\begin{align*}
\inf_{P\in\mathcal P}
\min\!\left\{
\alpha-F_P(Q^\star_{\alpha,P}-\varepsilon),\, F_P(Q^\star_{\alpha,P}+\varepsilon)-\alpha
\right\}
>0.
\end{align*}
Then, for every $\varepsilon>0$,
\begin{align*}
\sup_{P\in\mathcal P}
\mP_P\bigl(|\uSB-Q^\star_{\alpha,P}|>\varepsilon\bigr)
\;\rightarrow\; 0.
\end{align*}
\end{proposition}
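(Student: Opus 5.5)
The approach is to reduce the uniform statement to a uniform law of large numbers for the empirical CDF of the SB scores at two fixed points, $Q^\star_{\alpha,P}\pm\varepsilon$. Then we use the supremum representation of $\uSB$ from \Cref{prop: single-batch equivalence}. Write
\[
\hat F(u)\coloneqq \frac{1}{B+1}\sum_{b=0}^B \mathds{1}(f_{b,n}\le u).
\]
By \Cref{prop: single-batch equivalence}, $\uSB=\sup\{u:\hat F(u)\le\alpha\}$. Fix $P\in\mathcal P$ and $\varepsilon>0$, and set $u_\pm\coloneqq Q^\star_{\alpha,P}\pm\varepsilon$.

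Because $\hat F$ is nondecreasing, two implications hold. If $\hat F(u_+)>\alpha$, then $\uSB\le u_+$. If $\hat F(u_-)\le\alpha$, then $\uSB\ge u_-$. Hence
\[
\mP_P(|\uSB-Q^\star_{\alpha,P}|>\varepsilon)\le \mP_P(\hat F(u_+)\le\alpha)+\mP_P(\hat F(u_-)>\alpha).
\]
The event $\{\uSB<u_-\}$ is contained in $\{\hat F(u_-)>\alpha\}$, and $\{\uSB>u_+\}$ is contained in $\{\hat F(u_+)\le \alpha\}$. The boundary cases with equality are absorbed by using $\varepsilon/2$ if needed.

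Let $\delta>0$ denote the uniform margin. Then $F_P(u_-)\le\alpha-\delta$ and $F_P(u_+)\ge\alpha+\delta$ for all $P$. It therefore suffices to show that, for each fixed $u$,
\[
\sup_P\mP_P\bigl(|\hat F(u)-F_P(u)|>\delta/2\bigr)\to0,
\]
applied at the $P$-dependent point $u=u_\pm$. Since the control below is uniform over all $t$, this dependence on $P$ causes no difficulty.

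For the second-moment step, conditional on $\mathbf X^{(n)}$ the transformations $g_1,\dots,g_B$ are i.i.d. The scores $f_{b,n}$ are not i.i.d., however, because every p-value depends on the whole batch. We therefore use exchangeability rather than independence. Let $\hat F_B(t)$ be the average of $\mathds{1}(f_{b,n}\le t)$ over $b\in[B]$. The term with $b=0$ changes $\hat F$ by at most $1/(B+1)\to0$, so we may work with $\hat F_B$. By exchangeability of the rows $b\ge1$,
\[
\mE_P(\hat F_B(t)-F_P(t))^2 = \tfrac1B\Var(\mathds 1(f_{1,n}\le t)) + \tfrac{B-1}{B}\bigl[\mP_P(f_{1,n}\le t,f_{2,n}\le t)-\mP_P(f_{1,n}\le t)^2\bigr] + (\mP_P(f_{1,n}\le t)-F_P(t))^2 .
\]
The first term is at most $1/(4B)\to0$. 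The bracket converges to $0$ uniformly in $P$ and $t$ by the assumed uniform joint-CDF convergence. To see this, apply the hypothesis with $s=t$, and also with $s=+\infty$ (formally as $s\to\infty$, with the supremum taken over all $s$) to get uniform marginal convergence $\mP_P(f_{1,n}\le t)\to F_P(t)$; the last term then vanishes for the same reason. Chebyshev's inequality then gives the required uniform convergence in probability, and combining this with the first display completes the argument.

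The main obstacle is justifying the uniform marginal convergence from the joint hypothesis, which needs $F_P(s)\to1$ as $s\to\infty$ at a rate uniform in $P$. The simplest way around this is to note that the scores take values in a bounded set when $f$ is applied to p-values in $[0,1]$ (for instance min, mean, or median). If this is not so, I would instead expand $(\hat F_B(t)-F_P(t))^2$ directly. Its expectation is $\mE[\mathds1_1\mathds1_2]\cdot\frac{B-1}{B}+\frac1B\mE\mathds1_1-2F_P\mE\mathds1_1+F_P^2$. Choosing $s=t$ in the hypothesis controls the cross term, and the only remaining requirement is uniform marginal convergence, which I would state as implicit in ``the pointwise convergence conditions hold with $F$ replaced by $F_P$'' uniformly.
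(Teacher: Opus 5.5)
Your proposal is correct and follows essentially the same route as the paper's proof. Exchangeability of the rows $b\ge 1$ plus Chebyshev gives $\sup_{P}\sup_{t}\mP_P(|\hat F(t)-F_P(t)|>\eta)\to 0$, and then monotonicity of $\hat F$ together with the uniform margin condition traps $\uSB$ in $[Q^\star_{\alpha,P}-\varepsilon,\,Q^\star_{\alpha,P}+\varepsilon]$.

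The ``main obstacle'' you flag is not real. Fix $P$, $n$ and $t$, and let $s\to\infty$; this is legitimate because each $F_P$ is a proper distribution function and $f_{2,n}$ is real-valued. It gives $|\mP_P(f_{1,n}\le t)-F_P(t)|\le\sup_{t',s}|\mP_P(f_{1,n}\le t',f_{2,n}\le s)-F_P(t')F_P(s)|$. Uniform marginal convergence therefore follows directly from the stated hypothesis, with no uniform rate for $F_P(s)\to 1$ and no boundedness of $f$ required.
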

\begin{proof}
See \Cref{sec: proof of uniform asymptotic adaptation}.
\end{proof}

\subsection{Uniform power improvement over single merging rules} \label{sec: uniform power improvement over single merging rules}

In this subsection, we show that SB aggregation of multiple merging rules (\Cref{sec: data-driven aggregation of merging functions}) can achieve uniform power improvement over each individual rule. The key mechanism is that aggregation can succeed whenever at least one candidate rule provides sufficiently strong evidence, while incurring only a limited calibration penalty. We start with a simple lemma. 

\begin{lemma}
\label{lem:rank1_hit_implies_reject}
Suppose that the permutation p-values are defined by
\begin{align*}
\ple(f_b^m)=\frac{1}{B+1}\sum_{i=0}^B \mathds{1}\{f_i^m \le f_b^m\},
\qquad b\in[B]_0,\ m\in[M].
\end{align*}
Suppose that the minimum permutation p-value across merging rules
attains the smallest possible grid value $1/(B+1)$, that is,
\begin{equation}
\label{eq:rank1_hit_event}
p_0^{\min}=\min_{m\in[M]}\ple(f_0^m)=\frac{1}{B+1}.
\end{equation}
In this case, the SB aggregation p-value admits the representation
\begin{align*}
p_{\mathrm{SB},\mathcal{F}}=\frac{W}{B+1},
\qquad
W \le M,
\end{align*}
where
\begin{align*}
W \coloneqq \bigl|\{b\in[B]_0:\ p_b^{\min}=1/(B+1)\}\bigr|
\end{align*}
denotes the number of permutation indices attaining the minimal grid value. As a result, whenever $M \le \lfloor (B+1)\alpha \rfloor$, the aggregated test necessarily rejects.
\end{lemma}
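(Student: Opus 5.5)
The plan is to compute $p_{\mathrm{SB},\mathcal{F}}$ directly from the definitions, using only the fact that $1/(B+1)$ is the smallest value any lower-tail permutation p-value can take. Since each $\ple(f_b^m)$ counts at least the index $i=b$, we have $\ple(f_b^m)\ge 1/(B+1)$ for every $b$ and $m$, and hence $p_b^{\min}\ge 1/(B+1)$ for every $b\in[B]_0$. Under \eqref{eq:rank1_hit_event}, $p_0^{\min}=1/(B+1)$, so the event $p_b^{\min}\le p_0^{\min}$ is the same as $p_b^{\min}=1/(B+1)$. Summing the indicators then gives $p_{\mathrm{SB},\mathcal{F}}=W/(B+1)$ directly.

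The substantive step is the bound $W\le M$. Any index $b$ with $p_b^{\min}=1/(B+1)$ must have some $m\in[M]$ with $\ple(f_b^m)=1/(B+1)$. This means $\sum_i \mathds{1}\{f_i^m\le f_b^m\}=1$, so only $i=b$ itself satisfies the inequality. In other words, $f_b^m$ is the \emph{strict} unique minimum of $f_0^m,\ldots,f_B^m$. For a fixed $m$ at most one index can be the strict unique minimizer, because if $b\ne b'$ both were, then $f_b^m<f_{b'}^m<f_b^m$, a contradiction.

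Define a map from $\{b: p_b^{\min}=1/(B+1)\}$ to $[M]$ that sends $b$ to any $m$ attaining the minimum. By the observation above this map is injective, and therefore $W\le M$. Ties do not cause trouble here: with ties no index attains $1/(B+1)$ for that $m$, which only helps the bound.

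For the final claim, suppose $M\le\lfloor (B+1)\alpha\rfloor$. Then $p_{\mathrm{SB},\mathcal{F}}=W/(B+1)\le M/(B+1)\le \lfloor (B+1)\alpha\rfloor/(B+1)\le\alpha$, so the test rejects. None of these steps looks like a real obstacle. The only point needing care is to state explicitly that rank-one in column $m$ forces strict uniqueness, which is what makes the map injective.
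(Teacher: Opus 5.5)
Your proposal is correct and follows essentially the same route as the paper. You reduce $\{p_b^{\min}\le p_0^{\min}\}$ to $\{p_b^{\min}=1/(B+1)\}$ using the grid lower bound, and you bound $W\le M$ because a rank-one lower-tail p-value in column $m$ forces a strict unique minimizer. Your injection $b\mapsto m$ is just a rephrasing of the paper's bound $W\le\bigl|\bigcup_m S_m\bigr|\le\sum_m |S_m|\le M$, where each $S_m$ is the set of strict minimizers of column $m$ and has at most one element.
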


\begin{proof}
 See \Cref{sec: proof of rank1_hit_implies_reject}.
\end{proof}

\Cref{lem:rank1_hit_implies_reject} shows that if \emph{any} candidate merging rule attains the smallest possible permutation p-value, namely $1/(B+1)$, on the observed data, then the second-stage permutation calibration incurs a penalty of at most $M/(B+1)$. In particular, if $B$ is large enough so that $M/(B+1)\le \alpha$, then this event alone guarantees rejection by the aggregated test.

The next proposition demonstrates that, under a mixture alternative in which different components favor different merging rules, the aggregated test can achieve power one, even though no single merging rule is uniformly powerful across all components. 

\begin{proposition}
\label{prop:aggregation_strictly_better_than_each_rule}
Fix $\alpha\in(0,1)$ and suppose that $M \le \lfloor (B+1)\alpha\rfloor$. Consider an alternative distribution of the data that is a finite mixture
\begin{align*}
P = \sum_{j=1}^J \pi_j P_j, \quad \text{where }\quad
\pi_j > 0, \; \sum_{j=1}^J \pi_j = 1.
\end{align*}
such that for each component $P_j$ there exists an index $m(j)\in[M]$ satisfying
\begin{equation}
\label{eq:component_rank1_hit}
\mP_{P_j}\!\left(\ple\bigl(f_0^{m(j)}\bigr)=\frac{1}{B+1}\right)=1,
\end{equation}
and moreover no single merging function succeeds on all components, in the sense that for every fixed $m\in[M]$
there exists some $j\in[J]$ with $\pi_j>0$ such that
\begin{equation}
\label{eq:component_failure_for_fixed_m}
\mP_{P_j}\!\left(\ple\bigl(f_0^{m}\bigr)\le \alpha\right) < 1.
\end{equation}
Then the aggregated test has power one under the mixture:
\begin{align*}
\mP_{P}\bigl(p_{\mathrm{SB},\mathcal{F}}\le \alpha\bigr)=1,
\end{align*}
whereas every fixed-rule SB test has strictly smaller mixture power:
\begin{align*}
\mP_{P}\!\left(\ple\bigl(f_0^{m}\bigr)\le \alpha\right) < 1
\qquad \text{for all } m\in[M].
\end{align*}
\end{proposition}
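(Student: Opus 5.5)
The argument has two parts. The first gives power one for the aggregated test, using \Cref{lem:rank1_hit_implies_reject}. The second gives strict power loss for each fixed rule, by the law of total probability over the mixture components. Throughout, view the data as drawn from $P$ via a latent component label $J^\star$ with $\mP(J^\star=j)=\pi_j$, so that conditionally on $J^\star=j$ the data (and the transformations, whose conditional law given the data is unchanged) follow $P_j$. Every event below is a function of the data and the transformations, so for each such event $E$ we can write $\mP_P(E)=\sum_j \pi_j\mP_{P_j}(E)$.

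\textbf{Power one.} Fix a component $j$. By \eqref{eq:component_rank1_hit}, $\mP_{P_j}$-almost surely $\ple(f_0^{m(j)})=1/(B+1)$. Every lower-tail permutation p-value is at least $1/(B+1)$, since the term $i=b$ always contributes. Hence $p_0^{\min}=\min_m\ple(f_0^m)=1/(B+1)$ almost surely, which is exactly event \eqref{eq:rank1_hit_event}. \Cref{lem:rank1_hit_implies_reject} then gives $p_{\mathrm{SB},\mathcal F}=W/(B+1)\le M/(B+1)$. Because $M$ is an integer with $M\le\lfloor(B+1)\alpha\rfloor$, we get $M/(B+1)\le\alpha$, so the test rejects $\mP_{P_j}$-almost surely. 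Averaging over $j$ with weights $\pi_j$ gives $\mP_P(p_{\mathrm{SB},\mathcal F}\le\alpha)=1$.

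\textbf{Strict loss for each fixed rule.} Fix $m\in[M]$. By \eqref{eq:component_failure_for_fixed_m} there is a $j_m$ with $\pi_{j_m}>0$ and $\mP_{P_{j_m}}(\ple(f_0^m)\le\alpha)<1$. Then
\begin{align*}
\mP_P\bigl(\ple(f_0^m)\le\alpha\bigr)=\sum_{j}\pi_j\mP_{P_j}\bigl(\ple(f_0^m)\le\alpha\bigr)\le 1-\pi_{j_m}\Bigl(1-\mP_{P_{j_m}}\bigl(\ple(f_0^m)\le\alpha\bigr)\Bigr)<1.
\end{align*}
The fixed-rule SB test with rule $f^m$ rejects when $\ple(f_0^m)\le\alpha$, since $p_{\mathrm{SB}}$ for $f^m$ is exactly $\ple(f_0^m)$. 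This yields the second claim.

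\textbf{Main obstacle.} The arguments are elementary, so the only delicate point is making sure the mixture decomposition is legitimate. The transformations $g_1,\ldots,g_B$ must be drawn from a conditional law given $\mathbf X$ that does not depend on the component. With that in place, $\mP_P$ of any event decomposes linearly over the components, and the lemma's deterministic conclusion applies pathwise on each component.
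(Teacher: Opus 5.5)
Your proof is correct and follows essentially the same route as the paper: power one comes from applying \Cref{lem:rank1_hit_implies_reject} component by component, and the strict power loss for each fixed rule comes from the law of total probability over the mixture. Your explicit remarks that $\ple(\cdot)\ge 1/(B+1)$ always holds, that $\lfloor (B+1)\alpha\rfloor/(B+1)\le\alpha$, and that the conditional law of the transformations must not depend on the mixture component fill in steps the paper leaves implicit.
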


\begin{proof} 
	See \Cref{sec: proof of aggregation strictly better than each rule}.
\end{proof}

\subsection{Power properties of TB aggregation} 
\label{sec: power properties of TB}

The analysis of the TB aggregation procedure differs from that of SB aggregation since the testing and reference batches are distinct, and hence the deterministic super-uniformity argument of \Cref{lemma:super-uniform-property}, which underpins the power analysis of SB aggregation, no longer applies. Nevertheless, as the number of transformations $B$ increases, we show that the SB and TB thresholds become asymptotically close. This follows from the fact that, under i.i.d.~uniform sampling of transformations, both procedures converge to the idealized procedure that enumerates all transformations in $\mathcal{G}$. Although threshold convergence alone does not imply identical power, it plays a crucial asymptotic role. Under mild regularity conditions, such as continuity of the limiting distribution at the threshold, it ensures that the power guarantees established for SB aggregation carry over to TB aggregation as $B \to \infty$.

The following proposition formalizes the asymptotic threshold equivalence and provides an exponential-in-$B$ control.

\begin{proposition}
\label{prop: SB-TB equivalence rate}
Assume that the SB and TB procedures use the same continuous merging function $f:[0,1]^K\to\mathbb R$, and that the number of coordinates $K$ is fixed. Let $\uSB$ and $\uTB$ denote the SB and TB thresholds defined in
\eqref{eq: u_alpha general expression any function} and \eqref{eq: uTB_sup_form}, respectively. Assume that, conditional on the observed data $\mathbf X$, the transformations $g_1,\ldots,g_{2B}$ are i.i.d.\ uniform draws from $\mathcal{G}$. Then for every $\varepsilon>0$, there exist constants $C(\varepsilon,f,K)>0$ and $c(\varepsilon,f)>0$ such that
\begin{align*}
\mP\bigl(|\uTB-\uSB|>\varepsilon\bigr)
\le C(\varepsilon,f,K)\,e^{-c(\varepsilon,f) B},
\qquad \text{for all sufficiently large } B.
\end{align*}
In particular, $\uTB-\uSB \overset{p}{\longrightarrow} 0$ at an exponential rate in $B$ uniformly over all data-generating distributions for $\mathbf X$.
\end{proposition}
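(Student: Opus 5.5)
The plan is to condition on $\mathbf X$ throughout and show that, with overwhelming probability, the SB scores $f_b$ and the TB scores $\tilde f_b$ are uniformly close over the same testing indices $b\in[B]_0$. Both thresholds are empirical quantiles computed over these same rows, so closeness of the scores transfers directly to closeness of the thresholds, and no separate quantile-concentration argument is needed.

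For each $k\in[K]$, define the conditional population tail map
\[
G_k(t)\coloneqq \mP_{g\sim \mathrm{Unif}(\mathcal G)}\bigl(T^k(g(\mathbf X))\ge t \,\big|\, \mathbf X\bigr).
\]
Both p-value maps are empirical estimates of $G_k$, as follows.
\begin{enumerate}
\item \emph{Holdout p-values.} The reference statistics $\tilde T_1^k,\ldots,\tilde T_B^k$ are conditionally i.i.d.\ with tail $G_k$. The Dvoretzky--Kiefer--Wolfowitz inequality with Massart's constant, in its version for the left-continuous tail $\mathds 1(\cdot\ge t)$, is distribution-free and tolerates atoms. It gives
\[
\sup_t\bigl|p_{\mathrm{HO}}^k(t)-G_k(t)\bigr|\le \delta+\tfrac{2}{B+1}
\]
with probability at least $1-2e^{-2B\delta^2}$. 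The extra $2/(B+1)$ absorbs the $+1$ and the $B$ versus $B+1$ normalization.
\item \emph{SB p-values.} The ranking set $\{T_0^k,\ldots,T_B^k\}$ consists of the non-random identity row together with $B$ i.i.d.\ draws $T_1^k,\ldots,T_B^k$. Removing the identity row changes the empirical tail by at most $1/(B+1)$ uniformly in $t$, and the normalization adjustment costs another $O(1/B)$. Hence $\sup_t|p^k_{\mathrm{SB}}(t)-G_k(t)|\le \delta+\tfrac{2}{B+1}$, again with probability at least $1-2e^{-2B\delta^2}$.
\end{enumerate}
Here $p^k_{\mathrm{SB}}(t)$ denotes $t\mapsto (B+1)^{-1}\sum_{i=0}^B\mathds 1(T_i^k\ge t)$, evaluated at $t=T_b^k$. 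A union bound over $k$ and the two batches yields a single event $E_\delta$ with $\mP(E_\delta^c)\le 4Ke^{-2B\delta^2}$. On $E_\delta$, for every $b\in[B]_0$ and every $k\in[K]$,
\[
\bigl|p(T_b^k)-p_{\mathrm{HO}}(T_b^k)\bigr|\le 2\delta+\tfrac{4}{B+1}.
\]
This bound holds uniformly in $b$, because both p-value maps are evaluated at the same point $T_b^k$ and are uniformly close to $G_k$.

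Since $f$ is continuous on the compact set $[0,1]^K$, it is uniformly continuous. Let $\omega_f$ be its modulus of continuity in the sup-norm. Given $\varepsilon>0$, choose $\delta=\delta(\varepsilon,f)$ so that $\omega_f(3\delta)\le\varepsilon/2$, and take $B$ large enough that $4/(B+1)\le\delta$. Then on $E_\delta$ we have $\max_{b\in[B]_0}|f_b-\tilde f_b|\le\varepsilon/2$.

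Next I will prove a deterministic perturbation lemma for the sup-form quantile in \Cref{prop: single-batch equivalence} and \eqref{eq: uTB_sup_form}: if $|a_b-c_b|\le\omega$ for all $b$, then the corresponding thresholds differ by at most $\omega$. The argument is that $a_b\le u-\omega$ implies $c_b\le u$, so $\sum_b\mathds 1(c_b\le u)\le (B+1)\alpha$ implies $\sum_b\mathds 1(a_b\le u-\omega)\le(B+1)\alpha$. Hence $\hat u_a\ge \hat u_c-\omega$, and the reverse inequality follows by symmetry. Applying the lemma with $a_b=f_b$ and $c_b=\tilde f_b$ gives $|\uTB-\uSB|\le\varepsilon/2<\varepsilon$ on $E_\delta$. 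Therefore
\[
\mP\bigl(|\uTB-\uSB|>\varepsilon\bigr)\le 4K e^{-2B\delta(\varepsilon,f)^2}.
\]
This is the claimed bound with $C=4K$ and $c=2\delta^2$. Since the DKW bound is distribution-free, integrating over $\mathbf X$ gives uniformity over all data-generating distributions.

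The main obstacle I expect is the bookkeeping of the DKW application in the presence of ties and atoms, and of the deterministic identity row in the SB ranking set. I will handle this by stating DKW for the tail function $\mathds 1(\cdot\ge t)$, which follows from the standard version applied to $-T$, and by bounding every row-inclusion and normalization discrepancy crudely by $O(1/B)$.
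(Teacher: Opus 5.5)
Your proposal is correct and follows essentially the same route as the paper's proof: DKW (with Massart's constant) on the reference batch and on the $B$ Monte Carlo rows of the testing batch, with the identity row and normalizations costing $O(1/B)$; a union bound over $k\in[K]$; uniform continuity of $f$ on $[0,1]^K$; and finally $1$-Lipschitz stability of the empirical $\alpha$-quantile of the merged values. The only cosmetic differences are that you compare each p-value map to the population tail $G_k$ rather than comparing the two empirical CDFs to each other, and that you prove quantile stability directly from the sup-form rather than citing Lipschitz continuity of order statistics.
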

\begin{proof} See \Cref{sec: proof of SB-TB equivalence rate}. 
\end{proof}

Although the SB and TB thresholds become asymptotically equivalent as $B\to\infty$, their finite-$B$ power may differ. In particular, TB can be less powerful than SB even though it uses an additional batch of transformations. The loss arises from the fact that TB \emph{holdout} p-values are constructed without including the observed statistic $T_0^k$ in the ranking set, whereas the SB permutation p-values \emph{do} include $T_0^k$.  When $B$ is small, this distinction can be decisive: under a strong signal, the inclusion of $T_0^k$ systematically inflates the SB calibration p-values away from the smallest grid point $1/(B+1)$, thereby increasing the SB critical value and facilitating rejection. By contrast, TB calibration p-values can still hit the smallest grid point with non-vanishing probability, creating ties that obstruct rejection under a strict comparison.

We now formalize this mechanism via an explicit construction, showing that, for fixed and small $B$, SB aggregation can achieve asymptotic power one whereas TB aggregation cannot.

\begin{proposition}
\label{prop:SBvsTB:finiteB_gap_main}
Fix an integer $B \ge 1$ and let $\alpha \in (1/(B+1),1)$. There exist a sequence of alternatives and a merging function for which, with $B$ fixed, the SB aggregation test achieves asymptotic power one, whereas the TB aggregation test has asymptotic power strictly bounded away from one. More precisely, under a strong-signal regime and some regularity conditions specified in \Cref{app:SBvsTB_finiteB_gap},
\begin{align*}
\mP(p_{\mathrm{TB}} \le \alpha)
\rightarrow
\mP\!\left(
W \le \lfloor (B+1)\alpha \rfloor - 1
\right)
\qquad \text{as } n \to \infty,
\end{align*}
where $W$ follows a negative hypergeometric distribution with probability mass function
\begin{align*}
\mP(W = k)
=
\frac{\binom{2B-k-1}{B-1}}{\binom{2B}{B}},
\qquad k = 0,1,\ldots,B.
\end{align*}
In particular, this limit is strictly smaller than one for all $B \ge 1$ and $\alpha \in (1/(B+1),1)$. By contrast, under the same conditions,
\begin{align*}
\mP(p_{\mathrm{SB}} \le \alpha)
\rightarrow
1
\qquad \text{as } n \to \infty.
\end{align*}
\end{proposition}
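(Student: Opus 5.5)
The plan is to build the simplest possible example: a single coordinate, $K=1$, with the identity merging function $f(p)=p$. I expect the hard part to be the limiting argument for TB in the third step, not the construction itself. Let $g_1,\ldots,g_{2B}$ be i.i.d.\ uniform on $\mathcal G$ given $\mathbf X$.

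\textbf{Regularity conditions (to be stated in \Cref{app:SBvsTB_finiteB_gap}).} Under the alternative sequence:
\begin{enumerate}[label=(\roman*)]
\item the observed statistic diverges relative to its transformed copies: $T_0 - \max_{1\le b\le 2B} T_b \overset{p}{\longrightarrow} \infty$;
\item the transformed statistics $(T_1,\ldots,T_{2B})$, suitably centred and scaled, converge in distribution to an exchangeable vector whose coordinates are almost surely distinct. An i.i.d.\ continuous limit is the natural choice, as is standard for permutation distributions.
\end{enumerate}
With $K=1$ we have $f_b=p(T_b)$ and $\tilde f_b=p_{\mathrm{HO}}(T_b)$.

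\textbf{SB step.} On the event $E_n=\{T_0>\max_{b\ge1}T_b\}$, whose probability tends to one by (i), the permutation p-value satisfies $p(T_0)=1/(B+1)$. Every $b\ge1$ has $p(T_b)\ge 2/(B+1)$, because $T_0$ lies in its ranking set. Hence $f_0$ is the unique minimum and $p_{\mathrm{SB}}=1/(B+1)\le\alpha$. This gives $\mP(p_{\mathrm{SB}}\le\alpha)\ge\mP(E_n)\to1$.

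\textbf{TB step: exact representation of $p_{\mathrm{TB}}$.} On $E_n$ we have $p_{\mathrm{HO}}(T_0)=1/(B+1)$, the smallest grid value. For $b\in[B]$, $p_{\mathrm{HO}}(T_b)=1/(B+1)$ holds exactly when $T_b$ exceeds every reference statistic $\tilde T_1,\ldots,\tilde T_B$; otherwise $p_{\mathrm{HO}}(T_b)$ is strictly larger. Define
\begin{align*}
W_n\coloneqq\bigl|\{b\in[B]: T_b>\max_{i\in[B]}\tilde T_i\}\bigr|.
\end{align*}
Then $p_{\mathrm{TB}}=(1+W_n)/(B+1)$ on $E_n$, so rejection occurs if and only if $W_n\le\lfloor(B+1)\alpha\rfloor-1$.

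\textbf{TB step: the limit of $W_n$.} $W_n$ is a function of the relative order of the $2B$ transformed statistics, and that function is continuous off the tie set. By (ii) and the continuous mapping theorem, $W_n\overset{d}{\longrightarrow}W$. Here $W$ counts the testing values above the largest reference value when $2B$ values with distinct entries are ordered uniformly at random, a consequence of exchangeability of the limit. In descending order, $W$ is the number of testing items before the first reference item. Counting the placements of the $B$ reference positions, the event $\{W=k\}$ requires the first $k$ positions to be testing, position $k+1$ to be reference, and the remaining $B-1$ reference positions to lie among the last $2B-k-1$. This gives $\binom{2B-k-1}{B-1}/\binom{2B}{B}$, the stated negative hypergeometric law. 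Combining with $\mP(E_n)\to1$ yields the displayed limit for $\mP(p_{\mathrm{TB}}\le\alpha)$.

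\textbf{Main obstacle.} The delicate point is justifying this limit rigorously. The transformed statistics depend on $n$ while $B$ stays fixed, and ties must vanish in the limit so that the rank configuration converges. Assumption (ii) is there precisely to secure this. I would verify it via the Portmanteau theorem, since the boundaries of the relevant rank events lie in the tie set, which has limit probability zero.

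\textbf{Strict gap.} Finally, the limit is strictly below one. Since $\alpha<1$, we have $\lfloor(B+1)\alpha\rfloor-1\le B-1<B$, and $\mP(W=B)=1/\binom{2B}{B}>0$. So the TB power limit is at most $1-1/\binom{2B}{B}<1$, while the SB power tends to one.
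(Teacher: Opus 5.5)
Your proposal is correct and follows essentially the same route as the paper: the same strong-signal event $E_n$, the same identity $p_{\mathrm{TB}}=(1+W_n)/(B+1)$ on $E_n$, the same count of testing draws above the largest reference draw giving the negative hypergeometric law, and the same $\mP(W=B)=1/\binom{2B}{B}>0$ argument for the strict gap. The paper uses the minimum merge under perfect rank alignment, which is equivalent to your $K=1$ construction, and relabels the batches so that $E_n$ involves only one batch.

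The one real difference is how the law of $W_n$ is obtained. The paper gets it \emph{exactly for every} $n$ by conditioning on the order statistics of the $2B$ draws, which are conditionally i.i.d.\ (hence exchangeable at every $n$) and assumed a.s.\ distinct. So the obstacle you flag is not really one: your centring/scaling condition (ii) and the continuous-mapping step can be replaced by this exact exchangeability plus vanishing tie probability.
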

\begin{proof} See \Cref{app:SBvsTB_finiteB_gap}. \end{proof}
For illustration, when $1/(B+1) < \alpha < 2/(B+1)$, we have $\lfloor (B+1)\alpha \rfloor = 1$ and hence the limiting TB power reduces to $\mP(W=0) = 1/2$, which holds for all $B \ge 1$. This shows that, for fixed $B$, the TB construction can suffer a non-vanishing power loss relative to SB aggregation, despite using an additional batch of transformations.

\subsection{Worst-case bound for quasi-arithmetic mean aggregation} \label{sec: Worst-case bound for quasi-arithmetic mean aggregation}

In this subsection, we derive finite-sample worst-case bounds for quasi-arithmetic mean aggregation under the SB construction, and quantify how the resulting calibration constants improve upon existing super-uniform bounds by exploiting the permutation structure.

\begin{lemma} 
\label{lem: SB quasi arithmetic finite population}
Let $\{p_b^{(k)} : b\in[B]_0,\, k\in[K]\}$ be permutation p-values constructed as in \eqref{Eq: individual p-value}. Let $\phi:(0,1]\to\mathbb{R}$ be a strictly increasing and continuous function. Define the row-wise quasi-arithmetic mean
\begin{align*}
\bar p_{\phi,b}
\coloneqq
\phi^{-1}\left(\frac{1}{K}\sum_{k=1}^K \phi\left(p_b^{(k)}\right)\right),
\qquad b\in[B]_0.
\end{align*}
For each $\ell\in\{1,\ldots,B+1\}$, define the deterministic thresholds
\begin{align*}
t_{\phi,B}(\ell)
\coloneqq
\phi^{-1}\!\left(\frac{1}{\ell}\sum_{j=1}^{\ell}\phi\left(\frac{j}{B+1}\right)\right).
\end{align*}
Define the calibration function $G_{\phi,B}:(0,1]\to[0,1]$ by
\begin{align*}
G_{\phi,B}(t)
\coloneqq
\frac{1}{B+1}\max\Bigl\{\ell\in [B+1]_0: t_{\phi,B}(\ell)\le t\Bigr\},
\end{align*}
with the convention $t_{\phi,B}(0)\coloneqq 0$. Then, for any $t\in(0,1]$, it holds deterministically that
\begin{align*}
\frac{1}{B+1}\sum_{b=0}^{B}\mathds{1}\!\left(\bar p_{\phi,b}\le t\right)
\le G_{\phi,B}(t).
\end{align*}
Consequently, under the group-invariance hypothesis,
\begin{align*}
\mP\!\left(\bar p_{\phi,0}\le t\right) \le G_{\phi,B}(t),\qquad t\in(0,1].
\end{align*}
In particular, $G_{\phi,B}(\bar p_{\phi,0})$ is a valid p-value.
\begin{proof}
	See \Cref{sec: proof of SB quasi arithmetic finite population}. 
\end{proof}
\end{lemma}

\Cref{lem: SB quasi arithmetic finite population} recovers the power-mean aggregation scheme by choosing $\phi(x)=x^r$ ($r>0$). In this case, it is useful to obtain an explicit \emph{linear} upper bound on the stepwise calibration function $G_{\phi,B}$, as such bounds directly yield simple and interpretable worst-case corrections for the aggregated p-values.

Accordingly, for $r>0$ we define
\begin{align*}
c_{r,B}
\coloneqq
\max_{1\le \ell\le B+1}
\frac{\ell}{\left(\frac{1}{\ell}\sum_{j=1}^{\ell} j^r\right)^{1/r}}.
\end{align*}
This constant is chosen so that
\begin{align*}
G_{\phi,B}(t) \le c_{r,B}\,t
\qquad\text{for all } t\in(0,1].
\end{align*}
Indeed, recall that the calibration function $G_{\phi,B}$ is defined by
\begin{align*}
G_{\phi,B}(t)
=
\frac{1}{B+1}\max\Bigl\{\ell\in [B+1]_0: t_{\phi,B}(\ell)\le t\Bigr\}.
\end{align*}
Thus, if $t_{\phi,B}(\ell)\le t$ for some $\ell$, then $\ell/(B+1)\le G_{\phi,B}(t)$, and for the maximal admissible index $\ell^\star(t)$ we have $G_{\phi,B}(t)=\ell^\star(t)/(B+1)$. Consequently, any inequality of the form
\begin{align*}
\frac{\ell}{B+1}\le C\,t
\qquad\text{whenever } t_{\phi,B}(\ell)\le t
\end{align*}
immediately implies the linear bound $G_{\phi,B}(t)\le C\,t$.

For $\phi(x)=x^r$ with $r>0$, we have
\begin{align*}
t_{\phi,B}(\ell)
=
\frac{1}{B+1}\left(\frac{1}{\ell}\sum_{j=1}^{\ell} j^r\right)^{1/r},
\end{align*}
and the smallest constant $C$ for which the above inequality holds uniformly over $\ell\in\{1,\ldots,B+1\}$ is precisely $c_{r,B}$.

It is instructive to contrast the constant $c_{r,B}$ with its worst-case counterpart in \cite{vovk2020combining}. When only super-uniformity of the aggregated p-value is assumed, without any additional structural information, the optimal worst-case linear bound for the power mean is given by the universal constant $c_r=(r+1)^{1/r}$, which is sharp in the class of all super-uniform random variables. In the present setting, however, the aggregated quantities arise from \emph{permutation p-values} and therefore inherit a finite-population structure that is not captured by super-uniformity alone. This additional structure allows the worst-case correction factor to be strictly improved at finite $B$. The constant $c_{r,B}$ quantifies the optimal linear bound that exploits this permutation structure. The following lemma shows that $c_{r,B}$ is strictly smaller than $c_r$ for any finite $B$, while converging to $c_r$ as $B\to\infty$.

\begin{lemma}
\label{lemma: compare cr and crB}
Fix $r>0$ and recall the constant 
$c_{r,B}$ defined above. Let $c_r\coloneqq (r+1)^{1/r}$. Then:
\begin{enumerate}
\item[(i)] For every $B\ge 1$, $c_{r,B}<c_r$.
\item[(ii)] The sequence $B\mapsto c_{r,B}$ is nondecreasing and
\begin{align*}
c_{r,B}\ \uparrow\ c_r
\qquad\text{as } B\to\infty.
\end{align*}
\end{enumerate}
\begin{proof}
	See \Cref{sec: proof of compare cr and crB}.
\end{proof}
\end{lemma}

As a direct consequence of \Cref{lem: SB quasi arithmetic finite population} and
\Cref{lemma: compare cr and crB}, we obtain the following worst-case linear bound for arithmetic-mean aggregation of permutation p-values.

\begin{corollary}
\label{cor: SB avg finite population}
Let $\{p_b^{(k)} : b\in[B]_0,\, k\in[K]\}$ be permutation p-values constructed as in \eqref{Eq: individual p-value}. For each $b\in[B]_0$, define the row-wise average
\begin{align*}
\bar p_b \coloneqq \frac{1}{K}\sum_{k=1}^K p_b^{(k)} .
\end{align*}
Then, for any $t\in[0,1]$, it holds deterministically that
\begin{align*}
\frac{1}{B+1}\sum_{b=0}^{B} \mathds{1}\!\left(\bar p_b \le t\right)
\;\le\;
\min\!\left\{1,\; \frac{2(B+1)}{B+2}\, t \right\}.
\end{align*}
\begin{proof}
Apply \Cref{lem: SB quasi arithmetic finite population} with $\phi(x)=x$, so that $\bar p_{\phi,b}=\bar p_b$ and $t_{\phi,B}(\ell)=\frac1\ell\sum_{j=1}^\ell \frac{j}{B+1}=\frac{\ell+1}{2(B+1)}$. Hence the associated linear bound constant is $c_{1,B}=\max_{1\le \ell\le B+1}\frac{\ell}{\frac1\ell\sum_{j=1}^\ell j} =\max_{1\le \ell\le B+1}\frac{2\ell}{\ell+1}=\frac{2(B+1)}{B+2}$, which yields the claim.
\end{proof}
\end{corollary}

This improves upon the classical result of \citep{rueschendorf1982} stating that twice the arithmetic mean of p-values is itself a valid p-value. Here, we show that, in the permutation setting, the scaled quantity $\bar p_b\,2\,(B+1)/(B+2)$ is valid. See \Cref{sec: aggregation of data-asymmetric tests} for details.

\section{Adaptive testing details}
\label{sec: adaptive testing details}
\label{sec: adaptive MMD and HSIC tests}
This section details how SB aggregation yields finite-sample valid adaptive kernel tests. We instantiate SB aggregation to adaptive kernel two-sample MMD and independence HSIC testing\footnote{We simply refer the reader to \cite{schrab2025practical} for introductory details on MMD \citep{gretton2012kernel} and HSIC \citep{gretton2005kernel}.} by aggregating permutation p-values over a dyadic bandwidth grid. As discussed in \Cref{sec: maxT aggregation}, the existing MaxT-based adaptive MMD and HSIC procedures of \citep{albert2022adaptive} and \citep{schrab2023mmd} either rely on oracle critical values that are not directly implementable in practice or use Monte Carlo calibration that does not guarantee finite-sample control of the type~I error. Consequently, the corresponding adaptive separation rates are not established under rigorous finite-sample validity. In contrast, the SB-based adaptive MMD and HSIC tests proposed here are provably finite-sample valid and straightforward to implement. Moreover, since the SB minimum test uniformly dominates Bonferroni calibration in type II error, existing adaptive separation bounds for Bonferroni/MaxT procedures~\citep{albert2022adaptive,schrab2023mmd,schrab2022efficient} transfer directly, yielding the same adaptive rates.

Throughout this section, we assume that $g_1,\ldots,g_B$ are drawn independently and uniformly from the collection of all permutations when implementing SB aggregation. We can also consider other groups of transformations, such as pairwise permutations, which correspond to a wild bootstrap~\citep[e.g.,][]{schrab2023mmd,schrab2022efficient} and can also be used to establish adaptivity. For simplicity, however, we focus on uniform permutations.

\subsection{Adaptive MMD test}
\label{sec: adaptive MMD test}
We start with the adaptive two-sample testing problem considered by \citep{schrab2023mmd}. Let $(X_1,\ldots,X_m)\iid P$ and $(Y_1,\ldots,Y_n)\iid Q$ be independent samples taking values in $\mathbb{R}^d$, and consider testing
\begin{align*}
H_0 : P = Q
\qquad\text{versus}\qquad H_1 : P \neq Q.
\end{align*}
Let $f_P$ and $f_Q$ denote the densities of $P$ and $Q$ with respect to the Lebesgue measure.

\parheading{Separation radius.}
To quantify testing difficulty, we adopt the notion of a uniform separation radius. Let $\Delta$ be a level-$\alpha$ test based on $(X_1,\ldots,X_m,Y_1,\ldots,Y_n)$. For a function class $\mathcal C$ and constants $\rho,M>0$, define the alternative class
\begin{align*}
\mathcal{F}_\rho^M(\mathcal C)
\coloneqq
\Bigl\{
(f_P,f_Q):
\max(\|f_P\|_\infty,\|f_Q\|_\infty)\le M,\ 
f_P-f_Q\in\mathcal C,\ 
\|f_P-f_Q\|_2\ge\rho
\Bigr\},
\end{align*}
where $\|\cdot\|_2$ denotes the $L^2$ norm. For $\beta\in(0,1)$, the (uniform) separation radius of $\Delta$ over $\mathcal C$ is defined as
\begin{align*}
\rho(\Delta;\mathcal C,M,\alpha,\beta)
\coloneqq
\inf\Bigl\{
\rho>0:
\sup_{(f_P,f_Q)\in\mathcal{F}_\rho^M(\mathcal C)}
\mP_{(f_P,f_Q)}(\Delta=0)\le\beta
\Bigr\}.
\end{align*}
Throughout this subsection, we write $\rho(\Delta)$ for $\rho(\Delta;\mathcal C,M,\alpha,\beta)$. For simplicity, we regard $\alpha,\beta\in(0,1)$ as fixed constants but one can make the dependence on $\alpha$ and $\beta$ explicit in the separation radius, as done in \citep{schrab2025unified}.

\parheading{Sobolev smoothness class.}
Following \citep{schrab2023mmd}, we model smooth alternatives using a Sobolev ball. For $s>0$ and $R>0$, define
\begin{align*}
\mathcal S_d^{s}(R)
\coloneqq
\Bigl\{
f \in L^1(\mathbb R^d)\cap L^2(\mathbb R^d)
:
\int_{\mathbb R^d}
\|\xi\|_2^{2s}\,
|\hat f(\xi)|^2
\,\mathrm d\xi
\le
(2\pi)^d R^2
\Bigr\},
\end{align*}
where $\hat f$ denotes the Fourier transform of $f$. In what follows, we take $\mathcal C=\mathcal S_d^{s}(R)$.

\parheading{Quadratic-time MMD statistic.}
To test equality of distributions, we employ the quadratic-time MMD statistic~\citep{gretton2012kernel}. 
Let $G_1,\ldots,G_d$ be one-dimensional kernels satisfying $G_i\in L^1(\mathbb R)\cap L^2(\mathbb R)$ and $\int G_i=1$. For a bandwidth vector $\lambda=(\lambda_1,\ldots,\lambda_d)\in(0,\infty)^d$, define the product kernel
\begin{align*}
k_{\lambda}(x,y)
\coloneqq
\prod_{i=1}^d
\frac{1}{\lambda_i}
G_i\!\left(\frac{x_i-y_i}{\lambda_i}\right),
\end{align*}
which we assume to be characteristic on $\mathbb R^d$. The squared population MMD is
\begin{align*}
\mathrm{MMD}^2_{\lambda}(P,Q)
\coloneqq
\|\mu_P-\mu_Q\|_{\mathcal H_{k_{\lambda}}}^2,
\end{align*}
which admits an unbiased U-statistic estimator given by 
\begin{align*}
\widehat{\mathrm{MMD}}^{\,2}_{\lambda}(X_m,Y_n)
&\coloneqq
\frac{1}{m(m-1)}
\sum_{1\le i\ne i'\le m} k_{\lambda}(X_i,X_{i'}) \\
&\quad+
\frac{1}{n(n-1)}
\sum_{1\le j\ne j'\le n} k_{\lambda}(Y_j,Y_{j'})
-
\frac{2}{mn}
\sum_{i=1}^m\sum_{j=1}^n k_{\lambda}(X_i,Y_j).
\end{align*}

\parheading{Bandwidth collection.}
Adaptivity with respect to the unknown smoothness parameter $s$ is achieved by aggregating tests across a suitably chosen collection of bandwidths. Following \citep[][Corollary~10]{schrab2023mmd}, we consider the dyadic grid
\begin{align}
\label{eq:optimal_bandwidths_mmd}
\Lambda
\coloneqq
\Bigl\{
(2^{-k},\ldots,2^{-k})\in(0,\infty)^d
: k=1,\ldots,
\Bigl\lceil
\tfrac{2}{d}
\log_2\!\Bigl(\tfrac{m+n}{\log\log(m+n)}\Bigr)
\Bigr\rceil
\Bigr\},
\end{align}
and let $K\coloneqq|\Lambda|$. This dyadic grid balances the bias--variance tradeoff across smoothness levels and ensures that the optimal bandwidth is approximated up to logarithmic factors.

For each $k\in[K]$, with associated bandwidth $\lambda^{(k)}\in\Lambda$, define the statistic
\begin{align*}
T^{k}(X_m,Y_n)
\coloneqq
\widehat{\mathrm{MMD}}^{\,2}_{\lambda^{(k)}}(X_m,Y_n).
\end{align*}
\parheading{Permutation tests and SB minimum aggregation.}
To obtain finite-sample valid inference, we employ permutation tests. Let $(U_1,\ldots,U_{m+n})$ denote the pooled sample obtained by concatenating $(X_1,\ldots,X_m)$ and $(Y_1,\ldots,Y_n)$. For each permutation $g$ of $\{1,\ldots,m+n\}$, define the permuted samples $X_m^{g}=(U_{g(i)})_{i=1}^m$ and $Y_n^{g}=(U_{g(m+j)})_{j=1}^n$, and let $T^{k}(X_m^{g},Y_n^{g})$ be the corresponding statistic. Let $g_1,\ldots,g_B$ be i.i.d.\ uniform permutations and set $g_0$ to be the identity. For $b\in[B]_0$ and $k\in[K]$, write 
\begin{align*}
T_b^{k} \coloneqq T^{k}(X_m^{g_b},Y_n^{g_b}).
\end{align*}
We form permutation p-values by ranking each $T_b^{k}$ within its permutation distribution,
\begin{align*}
p\bigl(T_b^{k}\bigr)
\coloneqq
\frac{1}{B+1}\sum_{i=0}^B \mathds{1}\bigl(T_i^{k}\ge T_b^{k}\bigr),
\end{align*}
and aggregate them row-wise using the minimum merging function
\begin{align*}
f_b^{\min}
\coloneqq
\min_{k\in[K]} p\bigl(T_b^{k}\bigr),
\qquad b\in[B]_0.
\end{align*}
The SB minimum-aggregated p-value is then defined as
\begin{align*}
p_{\mathrm{SB},\mathrm{MMD}}
\coloneqq
\frac{1}{B+1}\sum_{b=0}^B
\mathds{1}\bigl(f_b^{\min}\le f_0^{\min}\bigr).
\end{align*}
Under $H_0$, the vectors $(T_b^{1},\ldots,T_b^{K})$, $b\in[B]_0$, are row-wise exchangeable. Since the permutation p-value map and the merging function are applied identically to each row, the aggregated values $(f_0^{\min},\ldots,f_B^{\min})$ are exchangeable, and hence $p_{\mathrm{SB},\mathrm{MMD}}$ is super-uniform (\Cref{prop: SB level}). The resulting SB minimum test rejects $H_0$ when $p_{\mathrm{SB},\mathrm{MMD}}\le\alpha$.

\parheading{Adaptive separation rate.}
We now analyze the testing power of the SB minimum test. By \Cref{thm:single-batch power}, the SB minimum test, denoted by $\Delta_{\mathrm{SB},\mathrm{MMD}}$, enjoys uniform dominance in type~II error over the Bonferroni-corrected MMD procedure constructed on the same bandwidth grid. For each $k\in[K]$ with $\lambda^{(k)}\in\Lambda$, let $\Delta_{\lambda^{(k)}}$ denote the level-($\alpha/K$) permutation test based on the statistic $\widehat{\mathrm{MMD}}^2_{\lambda^{(k)}}$, and define the Bonferroni-aggregated test as
\begin{align*}
\Delta_{\mathrm{Bonf},\mathrm{MMD}}
\coloneqq
\mathds{1}\biggl\{
\min_{k\in[K]} p_{\lambda^{(k)}} \le \alpha/K
\biggr\}.
\end{align*}
Under the conditions of \citep[][Theorem~6]{schrab2023mmd} and for a sufficiently large $B$, the permutation MMD test $\Delta_\lambda$ with a fixed bandwidth $\lambda$ satisfies the separation bound 
\begin{align*}
\rho(\Delta_\lambda)^2
\;\lesssim\;
\sum_{i=1}^d \lambda_i^{2s}
+
\frac{\log(1/\alpha)}{(m+n)\sqrt{\lambda_1\cdots\lambda_d}}.
\end{align*}
Applying this bound at level $\alpha/K$ and invoking the uniform power dominance of SB aggregation yields
\begin{align*}
\rho(\Delta_{\mathrm{SB},\mathrm{MMD}})^2
\;\le\;
\rho(\Delta_{\mathrm{Bonf},\mathrm{MMD}})^2
\;\lesssim\;
\min_{k\in[K]}
\Biggl\{
\sum_{i=1}^d (\lambda_i^{(k)})^{2s}
+
\frac{\log(K/\alpha)}
{(m+n)\sqrt{\lambda_1^{(k)}\cdots\lambda_d^{(k)}}}
\Biggr\}.
\end{align*}
Balancing the upper bounds in terms of $\lambda^{(k)}$, we choose $\lambda^{(k^\ast)}=(2^{-k^\ast},\ldots,2^{-k^\ast})$ with
\begin{align*}
k^\ast
=
\Bigl\lceil
\frac{2}{4s+d}
\log_2\biggl(\frac{m+n}{\log\log(m+n)}\biggr)
\Bigr\rceil.
\end{align*}
Substituting this choice into the above bound yields
\begin{align*}
\rho(\Delta_{\mathrm{SB},\mathrm{MMD}})^2
\;\lesssim\;
\biggl(\frac{\log\log(m+n)}{m+n}\biggr)^{\frac{4s}{4s+d}},
\end{align*}
which coincides with the adaptive separation rate established in \citep[][Corollary~10]{schrab2023mmd}.

\subsection{Adaptive HSIC test}
\label{sec: adaptive HSIC test}
We consider adaptive independence testing via the Hilbert–Schmidt Independence Criterion (HSIC), building on the framework of \citep{albert2022adaptive} and its permutation extension in \citep{schrab2022efficient}. Let $\{(X_i,Y_i)\}_{i=1}^n$ be i.i.d.\ observations in $\mathbb R^{d_x}\times\mathbb R^{d_y}$ with joint distribution $P_{XY}$ and marginals $P_X$ and $P_Y$. We test
\begin{align*}
H_0 : P_{XY} = P_X \otimes P_Y
\qquad\text{versus}\qquad H_1 : P_{XY} \neq P_X \otimes P_Y.
\end{align*}
\parheading{HSIC statistic.}
Let $k$ and $\ell$ be characteristic kernels on $\mathbb R^{d_x}$ and $\mathbb R^{d_y}$, respectively, and define the product kernel $\kappa\bigl((x,y),(x',y')\bigr) \coloneqq k(x,x')\,\ell(y,y')$. The population HSIC~\citep{gretton2005kernel}
is given by
\begin{align*}
\mathrm{HSIC}_{k,\ell}(P_{XY})
=
\mathrm{MMD}^2_{\kappa}\bigl(P_{XY},\,P_X\otimes P_Y\bigr),
\end{align*}
and vanishes if and only if $P_{XY}=P_X\otimes P_Y$. An unbiased estimator takes the form of a fourth-order U-statistic \citep{kim2022minimax,albert2022adaptive,schrab2022efficient}:
\begin{align*}
\widehat{\mathrm{HSIC}}_{k,\ell}(Z_n)
=
\frac{1}{n(n-1)(n-2)(n-3)}
\sum_{(i,j,r,s)\in i_n^4}
h^{\mathrm{HSIC}}_{k,\ell}(Z_i,Z_j,Z_r,Z_s),
\end{align*}
where $Z_i=(X_i,Y_i)$ and $i_n^4$ denotes the set of all 4-tuples of distinct indices from $\{1,\ldots,n\}$. For a kernel $s$ on a Euclidean space, define
\begin{align*}
h^{\mathrm{MMD}}_{s}(a_1,a_2;a_3,a_4)
\coloneqq
 s(a_1,a_2)
 - s(a_1,a_4)
 - s(a_2,a_3)
 + s(a_3,a_4),
\end{align*}
and set
\begin{align*}
h^{\mathrm{HSIC}}_{k,\ell}(z_1,z_2,z_3,z_4)
=
\frac{1}{4}\,
 h^{\mathrm{MMD}}_{k}(x_1,x_2;x_3,x_4)\,
 h^{\mathrm{MMD}}_{\ell}(y_1,y_2;y_3,y_4),
\end{align*}
with $z_i=(x_i,y_i)$.

\parheading{Bandwidth collection.}
We consider translation-invariant product kernels indexed by bandwidth parameters, analogously to the adaptive MMD construction. Let $G_1,\ldots,G_{d_x}$ and $L_1,\ldots,L_{d_y}$ be one-dimensional kernels satisfying $G_i,L_j\in L^1(\mathbb R)\cap L^2(\mathbb R)$ and $\int G_i=\int L_j=1$. For bandwidth vectors $\lambda\in(0,\infty)^{d_x}$ and $\mu\in(0,\infty)^{d_y}$, define
\begin{align*}
k_\lambda(x,x')
\coloneqq
\prod_{i=1}^{d_x}\frac{1}{\lambda_i}
G_i\!\left(\frac{x_i-x_i'}{\lambda_i}\right),
\qquad
\ell_\mu(y,y')
\coloneqq
\prod_{j=1}^{d_y}\frac{1}{\mu_j}
L_j\!\left(\frac{y_j-y_j'}{\mu_j}\right).
\end{align*}
We assume that $k_\lambda$ and $\ell_\mu$ are characteristic, ensuring that the associated HSIC detects independence.\footnote{Following \cite{gretton2015simpler}, it is sufficient to assume that $G_1,\dots,G_{d_x}$ and $L_1,\dots,L_{d_y}$ are characteristic.} 
The Gaussian kernels employed in \citep{albert2022adaptive} correspond to a special case of this family, and the analysis of \citep[][Theorem~3]{schrab2022efficient} applies to the general product-kernel setting.
Consider the dyadic bandwidth collection formed by jointly varying the $X$- and $Y$-bandwidths. Specifically, we define $\Lambda$ as the collection of concatenated bandwidth pairs $(\lambda,\mu) \in (0,\infty)^{d_x+d_y}$ of the form
\begin{align}
\label{eq:optimal_bandwidths_hsic}
\Lambda
\coloneqq
\Bigl\{
(2^{-k},\ldots,2^{-k})\in(0,\infty)^{d_x+d_y}
: k=1,\ldots,
\Bigl\lceil
\tfrac{2}{d_x+d_y}
\log_2\Bigl(\tfrac{n}{\log\log n}\Bigr)
\Bigr\rceil
\Bigr\},
\end{align}
which coincides with the collection used in \citep[][Theorem~3]{schrab2022efficient}. For each $k\in[K]$ with $K = |\Lambda|$, let $(\lambda^{(k)},\mu^{(k)})$ denote the $k$-th bandwidth pair in $\Lambda$, and define
\begin{align*}
T^{k}(Z_n)
\coloneqq
\widehat{\mathrm{HSIC}}_{k_{\lambda^{(k)}},\ell_{\mu^{(k)}}}(Z_n).
\end{align*}
\parheading{Permutation tests and SB aggregation.}
Under $H_0$, the joint distribution is invariant under permutations of the $Y$-coordinates relative to the $X$-coordinates. Let $g_1,\ldots,g_B$ be i.i.d.\ uniform permutations of $\{1,\ldots,n\}$, with $g_0$ the identity. Define
\begin{align*}
T_b^{k}
\coloneqq
\widehat{\mathrm{HSIC}}_{k_{\lambda^{(k)}},\ell_{\mu^{(k)}}}
\bigl(\bigl(X_i,Y_{g_b(i)}\bigr)_{i=1}^n\bigr),
\qquad b\in[B]_0.
\end{align*}
The associated permutation p-values are
\begin{align*}
p\bigl(T_b^{k}\bigr)
\coloneqq
\frac{1}{B+1}\sum_{i=0}^B
\mathds{1}\bigl(T_i^{k} \ge T_b^{k}\bigr).
\end{align*}
Aggregating row-wise using the minimum merging function yields
\begin{align*}
f_b^{\min}
\coloneqq
\min_{k\in[K]} p\bigl(T_b^{k}\bigr),
\qquad p_{\mathrm{SB},\mathrm{HSIC}}
\coloneqq
\frac{1}{B+1}\sum_{b=0}^B
\mathds{1}\bigl(f_b^{\min}\le f_0^{\min}\bigr).
\end{align*}
The resulting test $\Delta_{\mathrm{SB},\mathrm{HSIC}}$ rejects $H_0$ whenever $p_{\mathrm{SB},\mathrm{HSIC}}\le\alpha$. By row-wise exchangeability under $H_0$, $p_{\mathrm{SB},\mathrm{HSIC}}$ is super-uniform (\Cref{prop: SB level}), and hence $\Delta_{\mathrm{SB},\mathrm{HSIC}}$ is a valid level-$\alpha$ test.

\parheading{Adaptive separation rate.}
Recall that $\rho(\Delta_{\lambda,\mu})$ denotes the uniform separation radius of the level-$\alpha$ HSIC test with bandwidth $(\lambda,\mu)$, defined with respect to the alternative class $\mathcal F_\rho^M\!\bigl(\mathcal S^{s}_{d_x+d_y}(R)\bigr)$. Let $f_{XY}$ denote the joint density of $(X,Y)$ and let $f_X\otimes f_Y$ denote the product of the marginal densities. In this formulation, separation from the null corresponds to the $L^2$ distance between $f_{XY}$ and $f_X\otimes f_Y$, with Sobolev smoothness imposed on this density difference through $\mathcal S^{s}_{d_x+d_y}(R)$.

Under the conditions of \citep[][Theorem~3]{schrab2022efficient} and for a sufficiently large $B$, the HSIC permutation test with fixed bandwidth $(\lambda,\mu)$ satisfies
\begin{align*}
\rho(\Delta_{\lambda,\mu})^2
\;\lesssim\;
\sum_{i=1}^{d_x} \lambda_i^{2s} + \sum_{i=1}^{d_y} \mu_i^{2s} 
+
\frac{\log(1/\alpha)}
{n\sqrt{\lambda_1\cdots\lambda_{d_x}\mu_1\cdots\mu_{d_y}}}
+
\frac{\log(1/\alpha)^{3/2}}
{n^{3/2}\lambda_1\cdots\lambda_{d_x}\mu_1\cdots\mu_{d_y}}.
\end{align*}
By the uniform power dominance of SB aggregation (\Cref{thm:single-batch power}), the SB-aggregated HSIC test satisfies
\begin{align*}
\begin{aligned}
\rho(\Delta_{\mathrm{SB},\mathrm{HSIC}})^2
\;\lesssim\;
\min_{k\in[K]}
\Biggl\{
&\sum_{i=1}^{d_x} (\lambda_i^{(k)})^{2s}
+ \sum_{i=1}^{d_y} (\mu_i^{(k)})^{2s} \\
&\quad
+ \frac{\log(K/\alpha)}
{n\sqrt{\lambda_1^{(k)}\cdots\lambda_{d_x}^{(k)}
\,\mu_1^{(k)}\cdots\mu_{d_y}^{(k)}}} + \frac{\log(K/\alpha)^{3/2}}
{n^{3/2}\lambda_1^{(k)}\cdots\lambda_{d_x}^{(k)}
\,\mu_1^{(k)}\cdots\mu_{d_y}^{(k)}}
\Biggr\}.
\end{aligned}
\end{align*}
Since the dyadic grid satisfies $K\asymp\log n$, the additional Bonferroni factor contributes only a $\log\log n$ term. Balancing the bias and variance terms and assuming $4s\ge d_x+d_y$,\footnote{The condition $4s \geq d_x + d_y$ ensures that the term involving $n^{-3/2}$ does not dominate at the optimal bandwidth.} choosing $\lambda^{(k^\ast)}=\mu^{(k^\ast)}=(2^{-k^\ast},\ldots,2^{-k^\ast})$ with
\begin{align*}
k^\ast
=
\Bigl\lceil
\frac{2}{4s+d_x+d_y}
\log_2\Bigl(\tfrac{n}{\log\log n}\Bigr)
\Bigr\rceil
\end{align*}
yields
\begin{align*}
\rho(\Delta_{\mathrm{SB},\mathrm{HSIC}})^2
\;\lesssim\;
\biggl(\frac{\log\log n}{n}\biggr)^{\frac{4s}{4s+d_x+d_y}},
\end{align*}
which matches the adaptive minimax separation rate for independence testing up to the standard iterated logarithmic factor.

\section{Technical lemmas and auxiliary tools} \label{sec: useful facts on quantiles}
This section presents several useful facts on empirical quantiles and p-values that are used throughout the proofs. We begin by collecting standard facts relating empirical quantiles, rank-based p-values, and their equivalence under exchangeability. Here and throughout, $Z_{(1)} \le \cdots \le Z_{(n)}$ denote the order statistics, with the convention $Z_{(0)} = -\infty$ and $Z_{(n+1)} = +\infty$.

 \begin{lemma}[Lemmas S.14 and S.16, \cite{kim2025differentially}] \label{Lemma: permutation p-value}
Let $\alpha \in [0,1]$. Let $Z_1,\ldots,Z_n \in \mathbb{R}$. Then, for each $i \in [n]$,
\begin{align*}
	\frac{1}{n} \sum_{j=1}^n \mathds{1}(Z_j \geq Z_i) \leq \alpha \ \Longleftrightarrow \ Z_i > \mathrm{Quantile}_{1-\alpha}\bigl\{ Z_1,\ldots,Z_n\bigr\} .
\end{align*}
Furthermore when $Z_1,\ldots,Z_n$ are exchangeable, it holds that 
\begin{align*}
	\mP\biggl( \frac{1}{n} \sum_{j=1}^n \mathds{1}(Z_j \geq Z_i) \leq \alpha \biggr) = \mP\biggl( Z_i > \mathrm{Quantile}_{1-\alpha}\bigl\{ Z_1,\ldots,Z_n\bigr\} \biggr)\leq \alpha. 
\end{align*}
\end{lemma}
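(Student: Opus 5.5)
We prove the deterministic equivalence first and then derive the probabilistic statement from it. Write $q \coloneqq \mathrm{Quantile}_{1-\alpha}\{Z_1,\ldots,Z_n\}$ and let $F_n(t)\coloneqq n^{-1}\sum_{j}\mathds{1}(Z_j\le t)$. The empirical distribution function $F_n$ is right-continuous and nondecreasing. The infimum defining $q$ is therefore attained whenever $0<1-\alpha\le 1$, so that $F_n(q)\ge 1-\alpha$, and $F_n(t)<1-\alpha$ for every $t<q$. The boundary cases $\alpha=0$ (so $q=\max_j Z_j$) and $\alpha=1$ (so $q=-\infty$ by the convention for $q\le 0$) are checked directly. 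In the first case the left side requires $\sum_j\mathds{1}(Z_j\ge Z_i)=0$, which is impossible, and $Z_i>\max_j Z_j$ is likewise impossible. In the second case both sides always hold.

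For $\alpha\in(0,1)$ we use the identity $n^{-1}\sum_j\mathds{1}(Z_j\ge Z_i)=1-F_n(Z_i^-)$, where $F_n(t^-)$ denotes the left limit. The p-value condition is therefore $F_n(Z_i^-)\ge 1-\alpha$.

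\emph{Direction ($\Leftarrow$).} Suppose $Z_i>q$. Then $F_n(Z_i^-)\ge F_n(q)\ge 1-\alpha$, because $q<Z_i$ and the left limit at $Z_i$ dominates $F_n$ at any point strictly below $Z_i$.

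\emph{Direction ($\Rightarrow$).} Suppose $F_n(Z_i^-)\ge 1-\alpha$. Since $F_n$ is a step function with jumps only at data points, there is some $t<Z_i$ with $F_n(t)=F_n(Z_i^-)\ge 1-\alpha$. By the definition of the infimum, $q\le t<Z_i$.

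This one-sided limit bookkeeping is the only delicate point. In particular, one must check that the infimum is achieved at a data point, which holds because $F_n$ is piecewise constant.

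For the probabilistic part, equality of the two probabilities is immediate from the pathwise equivalence. For the bound, exchangeability implies that the events $E_i\coloneqq\{Z_i>q\}$ all have the same probability, because $q$ is a symmetric function of $(Z_1,\ldots,Z_n)$. Hence
\begin{align*}
\mP(E_i)=\frac1n\,\mE\Bigl[\sum_{j=1}^n\mathds{1}(Z_j>q)\Bigr]=\mE\bigl[1-F_n(q)\bigr]\le\alpha,
\end{align*}
using $F_n(q)\ge1-\alpha$. Alternatively, one can apply the deterministic counting bound of \Cref{lemma:super-uniform-property} to the p-values $p_j$ and average over $j$ using exchangeability.
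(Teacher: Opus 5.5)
Your proof is correct. The paper does not prove this lemma itself. It imports it from \citep{kim2025differentially} (Lemmas S.14 and S.16), so there is no in-paper argument to match, and your CDF/left-limit argument is a valid self-contained derivation. Your remark that the infimum sits at a data point is true but unused; only right-continuity, i.e.\ $F_n(q)\ge 1-\alpha$, enters.

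The route most natural with the paper's own toolkit goes through order statistics. By \Cref{Lemma: quantile 4}, $\mathrm{Quantile}_{1-\alpha}\{Z_1,\ldots,Z_n\}=Z_{(m)}$ with $m=\lceil n(1-\alpha)\rceil=n-\lfloor n\alpha\rfloor$. The p-value condition reads $\#\{j:Z_j\ge Z_i\}\le\lfloor n\alpha\rfloor$, which is $\#\{j:Z_j<Z_i\}\ge m$, which is $Z_i>Z_{(m)}$.

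That version is pure integer counting. It handles $\alpha\in\{0,1\}$ uniformly through the convention $Z_{(0)}=-\infty$. Since at most $n-m$ values can strictly exceed $Z_{(m)}$, it also gives the sharper bound $\lfloor n\alpha\rfloor/n$ at once. Your version trades the floor/ceiling identity for one-sided-limit bookkeeping and a separate boundary check. Its bound $\mE[1-F_n(q)]\le\alpha$ implicitly yields the same sharpening, because $1-F_n(q)$ lies on the grid $\{0,1/n,\ldots,1\}$.

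Your alternative via \Cref{lemma:super-uniform-property} is also legitimate and not circular. That lemma is proved in the paper by direct counting (plus Harrison's weighted lemma) without invoking the present one.
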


\begin{lemma}[Fact 2.10, \cite{angelopoulos2024theoretical}] \label{Lemma: quantile 4}
Let $\alpha \in [0,1]$. For any $Z_1,\ldots,Z_n \in \mathbb{R}$, 
\begin{align*}
	\mathrm{Quantile}_{\alpha}\bigl\{ Z_1,\ldots,Z_n \bigr\} = Z_{(\lceil \alpha n \rceil)}.
\end{align*}
\end{lemma}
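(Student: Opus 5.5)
The statement is \Cref{Lemma: quantile 4}: $\mathrm{Quantile}_{\alpha}\{Z_1,\ldots,Z_n\} = Z_{(\lceil \alpha n\rceil)}$. I would prove it directly from the infimum definition of the empirical quantile given in the Notation section, with $n$ in place of $B+1$. Write $F_n(t)\coloneqq n^{-1}\sum_{i=1}^n \mathds{1}(Z_i\le t)$. This is a right-continuous, nondecreasing step function of $t$. At each order statistic it satisfies $F_n(Z_{(j)})\ge j/n$, and for $t<Z_{(j)}$ it satisfies $F_n(t)\le (j-1)/n$, because at most $j-1$ of the points lie strictly below $Z_{(j)}$.

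I would first dispose of the boundary cases. If $\alpha=0$, the convention in the Notation section gives $\mathrm{Quantile}_0=-\infty$, and $\lceil 0\rceil=0$ gives $Z_{(0)}=-\infty$, so the two sides agree. For $\alpha\in(0,1]$ we have $m\coloneqq\lceil \alpha n\rceil\in\{1,\ldots,n\}$, and the condition $F_n(t)\ge\alpha$ is equivalent to $nF_n(t)\ge \alpha n$. Since $nF_n(t)$ is an integer, this is in turn equivalent to $nF_n(t)\ge m$.

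The main step is to show that $\inf\{t: nF_n(t)\ge m\}=Z_{(m)}$. For the upper bound, $t=Z_{(m)}$ belongs to the set, because at least $m$ points are $\le Z_{(m)}$. For the lower bound, any $t<Z_{(m)}$ has $nF_n(t)\le m-1<m$, so no such $t$ belongs to the set. Together these give the infimum as $Z_{(m)}$, and in fact it is attained as a minimum.

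The only delicate point is the handling of ties. Both counting claims, "at least $j$ points are $\le Z_{(j)}$" and "fewer than $j$ points are $<Z_{(j)}$", follow directly from the definition of the sorted order even when values repeat. The integer rounding is what turns $\alpha n$ into $\lceil\alpha n\rceil$, and I expect no other obstacle.
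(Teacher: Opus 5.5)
Your proof is correct. The paper does not prove this lemma itself; it only cites Fact~2.10 of Angelopoulos et al. Your argument supplies a self-contained proof instead. You reduce $F_n(t)\ge\alpha$ to the integer condition $nF_n(t)\ge\lceil\alpha n\rceil$, then show that $Z_{(m)}$ is feasible while no $t<Z_{(m)}$ is. This is the same order-statistic counting the paper uses in its proof of \Cref{Lemma: sup equivalence}, where the supremum is identified with $Z_{(\lfloor n\alpha\rfloor+1)}$, so it fits naturally with the rest of the appendix. Your handling of ties, and of the $\alpha=0$ boundary via the conventions $\mathrm{Quantile}_0=-\infty$ and $Z_{(0)}=-\infty$, is also correct.
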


\begin{lemma}[Lemma 3.4, \cite{angelopoulos2024theoretical}] \label{Lemma: replacement lemma}
Let $\alpha \in [0,1]$. For any $Z_1,\ldots,Z_n,Z_{n+1} \in \mathbb{R}$,
\begin{align*}
	Z_{n+1} \leq \mathrm{Quantile}_{\alpha}\bigl\{ Z_1,\ldots,Z_n,Z_{n+1} \bigr\}\, \Longleftrightarrow \, Z_{n+1} \leq \mathrm{Quantile}_{\alpha(1 + 1/n)} \bigl\{ Z_1,\ldots,Z_n \bigr\}.
\end{align*}
\end{lemma}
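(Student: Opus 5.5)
The plan is to reduce both sides of the equivalence to the same counting condition on $\#\{j\le n: Z_j<Z_{n+1}\}$, using only the definition of $\mathrm{Quantile}_q$ given in the Notation subsection. Let $F_N$ denote the empirical distribution function of a multiset of $N$ reals, so that $\mathrm{Quantile}_q=\inf\{t:F_N(t)\ge q\}$.

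\textbf{Step 1: a general characterization.} For $q\in(0,1]$ and any $z\in\mathbb R$, I will show
\begin{align*}
z\le \mathrm{Quantile}_q \iff \#\{j: Z_j<z\}<qN .
\end{align*}
Since $F_N$ is a right-continuous step function reaching $1\ge q$, the set $\{t:F_N(t)\ge q\}$ is a closed half-line $[t^\ast,\infty)$ with $t^\ast=\mathrm{Quantile}_q$ finite. Hence $z\le t^\ast$ holds iff $F_N(t)<q$ for every $t<z$. Because $F_N$ is a nondecreasing step function, this is equivalent to the left limit satisfying $F_N(z^-)<q$. Finally, $F_N(z^-)=N^{-1}\#\{j:Z_j<z\}$. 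This step is the only place where some care is needed (attainment of the infimum and the passage to the left limit), but it is routine.

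\textbf{Step 2: apply it to both sides.} On the left, take the augmented multiset $\{Z_1,\ldots,Z_{n+1}\}$ with $N=n+1$, $q=\alpha$ and $z=Z_{n+1}$. The point $Z_{n+1}$ is not strictly below itself, so the condition reads
\begin{align*}
\#\{j\le n: Z_j<Z_{n+1}\}<\alpha(n+1).
\end{align*}
On the right, take $\{Z_1,\ldots,Z_n\}$ with $N=n$ and $q=\alpha(1+1/n)$. The condition becomes $\#\{j\le n:Z_j<Z_{n+1}\}<\alpha(1+1/n)\,n=\alpha(n+1)$, which is identical. This gives the equivalence whenever both levels lie in $(0,1]$.

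\textbf{Step 3: boundary conventions.} Three edge cases remain.
\begin{itemize}
\item If $\alpha=0$, both quantiles equal $-\infty$, so both sides are false. This matches the counting condition $\#\{\cdot\}<0$, which is also false.
\item If $\alpha\le 1$ but $\alpha(1+1/n)>1$, the right quantile is $+\infty$, so the right side is true. The left side is also true, since the count is at most $n<\alpha(n+1)$.
\item If $\alpha=1$, the left quantile is $\max_j Z_j\ge Z_{n+1}$, so the left side is true. The right level exceeds $1$, so the right side is true as well.
\end{itemize}
Together these cases complete the proof.
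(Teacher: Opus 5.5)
Your proof is correct. Step~1 carries the argument, and you handle it properly. For $q\in(0,1]$ the set $\{t:F_N(t)\ge q\}$ is nonempty, bounded below and closed (by right-continuity), so the infimum is finite and attained. Because $F_N$ takes finitely many values, the condition ``$F_N(t)<q$ for all $t<z$'' is indeed equivalent to $F_N(z^-)<q$.

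The criterion $z\le\mathrm{Quantile}_q\iff\#\{j:Z_j<z\}<qN$ holds for arbitrary $z$, not only for points of the multiset. This is exactly what the right-hand side needs, since there $Z_{n+1}$ is not among $Z_1,\ldots,Z_n$. Your edge cases agree with the paper's conventions. Your third bullet is redundant, because $\alpha=1$ already falls under the second ($\alpha(1+1/n)>1$).

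The paper gives no proof of this statement; it imports it from \cite{angelopoulos2024theoretical}. The route the paper's own toolkit suggests goes through the order-statistic representation \Cref{Lemma: quantile 4}. Set $k=\lceil\alpha(n+1)\rceil$ and use the conventions $Z_{(0)}=-\infty$, $Z_{(n+1)}=+\infty$. Then the left quantile is the $k$-th smallest of $Z_1,\ldots,Z_{n+1}$, and the right quantile is the $k$-th smallest of $Z_1,\ldots,Z_n$.

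On that route, the left-to-right direction is immediate, since adding a point can only lower the $k$-th order statistic. The converse still requires counting the values strictly below $Z_{n+1}$, which is essentially your Step~1.

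Your argument is more self-contained: it works directly from the infimum definition and treats ties and the out-of-range level conventions uniformly. The order-statistic view instead makes visible the monotonicity in the added point. That monotonicity is what underlies the $(1+1/n_2)$ correction in the paper's TB conformal threshold.
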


\begin{lemma} \label{Lemma: sup equivalence}
	For any $Z_1,\ldots,Z_n \in \mathbb{R}$ and $\alpha \in [0,1]$, the following holds:
		\begin{align*}
		\sup\biggl\{t : \frac{1}{n} \sum_{i=1}^n \mathds{1}(Z_i \leq t) \leq \alpha  \biggr\} = \sup\biggl\{t : \frac{1}{n} \sum_{i=1}^n \mathds{1}(Z_i < t) \leq \alpha  \biggr\}.
	\end{align*}
	Moreover, both sides are equal to $Z_{(\lfloor n\alpha \rfloor + 1)}$ where $Z_{(1)} \leq Z_{(2)} \leq \cdots \leq Z_{(n)}$ are the order statistics of $Z_1,\ldots,Z_n$.
 	\begin{proof} See \Cref{sec: proof of sup equivalence}.
	\end{proof}
\end{lemma}

The following lemma establishes a new general super-uniformity property for weighted rank functionals on arbitrary measure spaces.

\begin{lemma} \label{Lemma: generalized superuniform lemma}
Let $(\Omega,\mathcal F,\mu)$ be a measure space. Let $t:\Omega\to \overline{\mathbb R}\coloneqq \mathbb R\cup\{-\infty,+\infty\}$ be measurable and $w:\Omega\to[0,\infty)$ be measurable with $\int_\Omega w\,d\mu<\infty$. Then for every $\alpha\ge 0$,
\begin{align*}
\int_\Omega w(\omega)\,
\mathds 1\!\left\{
\int_\Omega w(\omega')\,
\mathds 1\{t(\omega')\ge t(\omega)\}\,d\mu(\omega')
\le \alpha
\right\}
\,d\mu(\omega)
\le\alpha.
\end{align*}
\end{lemma}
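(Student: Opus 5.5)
The plan is to reduce the statement to a one-dimensional threshold argument via the \emph{right-continuous tail function}. For $\omega\in\Omega$ write $G(\omega)\coloneqq \int_\Omega w(\omega')\mathds 1\{t(\omega')\ge t(\omega)\}\,d\mu(\omega')$. This is the measure of an upper set, so it depends on $\omega$ only through $s=t(\omega)$. We therefore set $H(s)\coloneqq \nu([s,+\infty])$ for $s\in\overline{\mathbb R}$, where $\nu\coloneqq (w\mu)\circ t^{-1}$ is the finite pushforward measure on $\overline{\mathbb R}$ with total mass $W=\int w\,d\mu<\infty$. Then $G=H\circ t$, and the left-hand side of the claim equals $\nu(\{s: H(s)\le\alpha\})$. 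Measurability of $G$ follows because $H$ is monotone, hence Borel. It therefore suffices to prove $\nu(A)\le\alpha$, where $A\coloneqq\{s\in\overline{\mathbb R}:H(s)\le\alpha\}$.

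The key step uses that $H$ is nonincreasing. Hence $A$ is an up-set (an interval unbounded above in $\overline{\mathbb R}$), either $[s^*,+\infty]$ or $(s^*,+\infty]$, where $s^*\coloneqq\inf A$. If $A=\emptyset$ there is nothing to prove. In the closed case $s^*\in A$, so $\nu(A)=\nu([s^*,+\infty])=H(s^*)\le\alpha$ directly. In the open case, choose $s_n\downarrow s^*$ with $s_n\in A$. Then $(s^*,+\infty]=\bigcup_n[s_n,+\infty]$ is an increasing union, and by continuity from below of the finite measure $\nu$ we get $\nu(A)=\lim_n H(s_n)\le\alpha$. The degenerate case $s^*=-\infty$, where $A$ may equal $\overline{\mathbb R}$ or $(-\infty,+\infty]$, is handled the same way: either $-\infty\in A$ and $\nu(A)=H(-\infty)\le\alpha$, or we use the increasing union as above.

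The main obstacle is bookkeeping rather than ideas. First, one must carefully justify that the inner integral is measurable in $\omega$; the factorization through the monotone function $H$ disposes of this. Second, one must handle the endpoint and extended-real cases, including atoms of $\nu$ at $s^*$, which the closed/open dichotomy covers. No exchangeability or finiteness of $\Omega$ is needed. \Cref{lemma:super-uniform-property} is recovered as the special case of counting measure on $[B]_0$ with $w\equiv 1/(B+1)$, where the sharper bound $\lfloor (B+1)\alpha\rfloor/(B+1)$ follows because $\nu(A)$ is then a multiple of $1/(B+1)$.
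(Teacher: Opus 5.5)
Your proposal is correct and follows essentially the same route as the paper. The paper also pushes $w\,d\mu$ forward through $t$ to a finite measure $\nu$ on $\overline{\mathbb R}$, notes that $\{x:\nu([x,\infty])\le\alpha\}$ is an up-set, and splits on whether its infimum belongs to it, using continuity from below in the open case. Your explicit treatment of measurability and of the endpoint $-\infty$ only makes precise details the paper leaves implicit.
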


\begin{proof} 
	See \Cref{sec: proof of generalized superuniform lemma}.
\end{proof}

The above lemma immediately implies the following corollary, which corresponds to \citep[][Lemma A1]{harrison2012conservative}.

\begin{corollary} \label{cor: weighted superuniform lemma}
Let $t_0,\ldots,t_n\in[-\infty,\infty]$, $w_0,\ldots,w_n\in[0,\infty)$, and $\alpha\ge 0$. Then
\begin{align*}
\sum_{k=0}^n w_k\,
\mathds 1\!\left\{
\sum_{i=0}^n w_i\,\mathds 1\{t_i\ge t_k\}
\le \alpha
\right\}
\le \alpha.
\end{align*}
\end{corollary}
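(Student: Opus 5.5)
The statement to prove is the last one displayed, \Cref{cor: weighted superuniform lemma}. It is the discrete specialization of \Cref{Lemma: generalized superuniform lemma}, so I will derive it by choosing a counting measure.

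Take $\Omega=\{0,1,\ldots,n\}$ with $\mathcal F=2^\Omega$ and $\mu$ the counting measure. Define $t(k)\coloneqq t_k$ and $w(k)\coloneqq w_k$. Every function on a finite set with the power-set $\sigma$-algebra is measurable, so $t$ and $w$ qualify. Moreover $\int_\Omega w\,d\mu=\sum_k w_k<\infty$ because the sum is finite and each $w_k\ge 0$. Values $t_k=\pm\infty$ are allowed, since the lemma permits $t$ to take values in $\overline{\mathbb R}$.

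For each $k$, the inner integral in \Cref{Lemma: generalized superuniform lemma} becomes
\[
\int_\Omega w(\omega')\,\mathds 1\{t(\omega')\ge t(k)\}\,d\mu(\omega')=\sum_{i=0}^n w_i\,\mathds 1\{t_i\ge t_k\}.
\]
The outer integral against counting measure is then
\[
\sum_{k=0}^n w_k\,\mathds 1\Bigl\{\sum_{i=0}^n w_i\mathds 1\{t_i\ge t_k\}\le\alpha\Bigr\}.
\]
\Cref{Lemma: generalized superuniform lemma} bounds this quantity by $\alpha$ for every $\alpha\ge0$, which is exactly the claim.

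No step is a real obstacle; the only care needed is checking that the lemma's hypotheses hold in this discrete setting, which the preceding paragraphs do. If one preferred to avoid the lemma, a direct proof also works. Sort the $t_k$ in decreasing order. The set of indices $k$ satisfying the condition is an upper set in $t$-order (all indices with the largest $t$-values, up to some cutoff). Let $t^\ast$ be the smallest $t_k$ in this set and $k^\ast$ an index attaining it. Every index in the set has $t_i\ge t^\ast$, so the total weight of the set is at most $\sum_i w_i\mathds 1\{t_i\ge t^\ast\}$. Since $k^\ast$ itself satisfies the condition, that sum is $\le\alpha$.
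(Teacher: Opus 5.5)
Your proof is correct and is exactly the paper's argument: it specializes \Cref{Lemma: generalized superuniform lemma} to $\Omega=\{0,\ldots,n\}$ with counting measure, $t(k)=t_k$, and $w(k)=w_k$. Your optional direct argument is also valid, provided the case where no index satisfies the condition (where the left-hand side is $0$) is noted separately.
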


\begin{proof}
Apply \Cref{Lemma: generalized superuniform lemma} with $\Omega=\{0,1,\dots,n\}$, $\mu$ the counting measure, $t(k)=t_k$, and $w(k)=w_k$.
\end{proof}

\section{Proofs of the results in the main text}
\label{sec: proofs of main results}

This section collects the proofs of the results stated in the main text.

\subsection{Proof of \Cref{prop: SB level}}\label{sec: proof of SB level}
Under the group-invariance hypothesis, the row vectors
\(
(T_0^1,\ldots,T_0^K),\ldots,(T_B^1,\ldots,T_B^K)
\)
are exchangeable. Since the p-value map
\( (T_0^k,\ldots,T_B^k)\mapsto (p(T_0^k),\ldots,p(T_B^k)) \)
and the merging map
\(
(p(T_b^1),\ldots,p(T_b^K))\mapsto f_b
\)
are applied identically for each index \(b\in[B]_0\), the aggregated values
\((f_0,\ldots,f_B)\) are exchangeable as well. Hence the rank p-value
\begin{align*}
p_{\mathrm{SB}}=\frac{1}{B+1}\sum_{b=0}^B \mathds{1}(f_b\le f_0)
\end{align*}
is super-uniform by \Cref{Lemma: permutation p-value}, which yields
\(\mP(p_{\mathrm{SB}}\le\alpha)\le\alpha\). If, moreover, \(f_0,\ldots,f_B\) are distinct a.s., then \(p_{\mathrm{SB}}\) is uniform on the grid \(\{1/(B+1),\ldots,1\}\), and therefore
\(
\mP(p_{\mathrm{SB}}\le\alpha)=\lfloor (B+1)\alpha\rfloor/(B+1).
\)

\subsection{Proof of \Cref{prop: single-batch equivalence}}
\label{proof: Proposition: single-batch equivalence}
We start from the definition of the SB threshold,
\(\uSB=-\mathrm{Quantile}_{1-\alpha}\{-f_0,\ldots,-f_B\}\), and expand the empirical quantile using our convention
\(\mathrm{Quantile}_{q}\{x_0,\ldots,x_B\}=\inf\{t:\frac{1}{B+1}\sum_{b=0}^B\mathds{1}(x_b\le t)\ge q\}\). Then
\begin{align*}
\uSB
&= -\inf\Bigl\{t:\frac{1}{B+1}\sum_{b=0}^B\mathds{1}(-f_b\le t)\ge 1-\alpha\Bigr\}\\
&= -\inf\Bigl\{t:\frac{1}{B+1}\sum_{b=0}^B\mathds{1}(-f_b> t)\le \alpha\Bigr\}\\
&= \sup\Bigl\{u:\frac{1}{B+1}\sum_{b=0}^B\mathds{1}(-f_b>-u)\le \alpha\Bigr\}\\
&= \sup\Bigl\{u:\frac{1}{B+1}\sum_{b=0}^B\mathds{1}(f_b<u)\le \alpha\Bigr\}\\
&= \sup\Bigl\{u:\frac{1}{B+1}\sum_{b=0}^B\mathds{1}(f_b\le u)\le \alpha\Bigr\},
\end{align*}
where the second line uses the identity \(\mathds{1}(x\le t)=1-\mathds{1}(x>t)\), the third line is the change of variables \(u=-t\), and the last line follows from \Cref{Lemma: sup equivalence} (which shows that replacing $\le$ by $<$ inside the supremum does not change its value). This completes the proof.

\subsection{Proof of \Cref{lemma:super-uniform-property}}\label{sec: proof of super-uniform-property}
We begin by noting that the second inequality is an immediate consequence of a general weighted result due to \citep[][Lemma A1]{harrison2012conservative} recalled in \Cref{cor: weighted superuniform lemma}. Specifically, for arbitrary $T_0,\ldots,T_B\in[-\infty,\infty]$ and nonnegative weights $w_0,\ldots,w_B$, that result implies
\begin{align*}
\sum_{i=0}^B w_i \mathds{1}\biggl(\sum_{j=0}^B w_j \mathds{1}(T_j \ge T_i) \le \alpha\biggr)
\le \alpha.
\end{align*}
Specializing to uniform weights $w_0=\cdots=w_B=1/(B+1)$ yields the second claim of \Cref{lemma:super-uniform-property}. We therefore concentrate on establishing the first inequality, for which we provide a direct and sharper argument in the uniform-weights setting.

Let $R_j = \sum_{i=0}^B \mathds{1}(T_i \geq T_j)$ so that $p_j = R_j/(B+1)$. We claim that $\sum_{j=0}^B \mathds{1}(R_j \leq k) \leq k$ for all $k \in [B]_0$. To see this, write $T_{(0)} \leq T_{(1)} \leq T_{(2)} \leq \ldots \leq T_{(B)}$ for the order statistics and observe
\begin{align*}
	\sum_{j=0}^B \mathds{1}(R_j \leq k)  = \sum_{j=0}^B \mathds{1}\biggl( \sum_{i=0}^B \mathds{1}(T_i \geq T_j) \leq k  \biggr) = \sum_{j=0}^B \mathds{1}\biggl( \sum_{i=0}^B \mathds{1}(T_i \geq T_{(j)}) \leq k  \biggr).
\end{align*}
Now note that for each $j \in [B]_0$,
\begin{align*}
		\sum_{i=0}^B \mathds{1}(T_i \geq T_{(j)}) \geq B - j +1.
\end{align*}
Therefore
\begin{align*}
	\sum_{j=0}^B \mathds{1}(R_j \leq k) \leq \sum_{j=0}^B \mathds{1}\bigl(  B -j +1 \leq k \bigr) = \sum_{j=0}^B \mathds{1}\bigl(  B - k +1 \leq j \bigr) = \sum_{j = B-k+1}^B 1 = k,
\end{align*}
as desired. To complete the proof, observe that 
\begin{align*}
	\sum_{j=0}^B \mathds{1}(p_j \leq \alpha) = \sum_{j=0}^B \mathds{1}(R_j \leq (B+1)\alpha)  = \sum_{j=0}^B \mathds{1}(R_j \leq \lfloor (B+1)\alpha \rfloor) \leq  \lfloor (B+1)\alpha \rfloor.
\end{align*}
Consequently, we prove the first claim that
\begin{align*}
	\frac{1}{B+1} \sum_{j=0}^B \mathds{1}(p_j \leq \alpha) \leq \frac{\lfloor (B+1)\alpha \rfloor}{B+1} \leq \alpha. 
\end{align*}
When $T_0,\ldots,T_B$ are distinct, we have 
\begin{align*}
	\sum_{j=0}^B \mathds{1}(R_j \leq \lfloor (B+1)\alpha \rfloor) = \lfloor (B+1)\alpha \rfloor,
\end{align*}
which proves the second claim. This completes the proof of \Cref{lemma:super-uniform-property}.

\subsection{Proof of \Cref{thm:single-batch power}}\label{sec: proof of single-batch power}
Fix the observed data $\mathbf{X}$ and transformations $g_0,\dots,g_B$ with $g_0$ being the identity.
For each $b\in[B]_0$, define
\begin{align*}
f_b = f\bigl(p(T_b^1),\ldots,p(T_b^K)\bigr).
\end{align*}
By \Cref{lemma:super-uniform-property}, the p-values $p(T_b^1),\ldots,p(T_b^K)$ are super-uniform conditional on $\mathbf{X}$, with randomness arising only through $b \sim \mathrm{Unif}([B]_0)$. Hence, by the defining property of $c_{\alpha,K}$,
\begin{align}
\label{Eq: validity bound for c}
\frac{1}{B+1}\sum_{b=0}^B \mathds{1}(f_b \le c_{\alpha,K}) \le \alpha.
\end{align}
Let $F(u) \coloneqq \frac{1}{B+1}\sum_{b=0}^B \mathds{1}(f_b \le u)$ denote the empirical distribution function of $\{f_0,\ldots,f_B\}$. Since $F$ is right-continuous and non-decreasing from 0 to 1, the alternative characterization of $\uSB$ in \Cref{prop: single-batch equivalence} implies $F(\uSB)>\alpha$, while \eqref{Eq: validity bound for c} gives $F(c_{\alpha,K})\le\alpha$. Therefore $c_{\alpha,K} < \uSB$ almost surely, which implies the claimed bound on the type~II error.

\subsection{Proof of \Cref{corollary: single-batch power O and M}}\label{sec: proof of corollary single-batch power}

By \Cref{thm:single-batch power}, the SB aggregation test calibrated at level $\alpha$ uniformly dominates any deterministic test obtained by worst-case calibration of the same merging function. For the O- and M-families, the prior work \citep[e.g.,][]{vovk2020combining} shows that the corresponding merging functions produce super-uniform aggregated p-values under arbitrary dependence, so that the worst-case calibration constant satisfies $c_{\alpha,K}=\alpha$. Combining these facts yields the stated uniform power dominance.

\subsection{Proof of \Cref{prop: single-batch perfect rank alignment}}\label{sec: proof of single-batch perfect rank alignment}

Recall that the merging function $f$ satisfies the diagonal monotonicity condition
\eqref{eq:diagonal_monotone}, namely
\begin{align*}
f(x,\ldots,x) \le f(y,\ldots,y) \quad \Longleftrightarrow \quad x \le y .
\end{align*}
Under the rank alignment condition \eqref{eq:rank_alignment}, the vectors $(T_0^k,\ldots,T_B^k)$ induce the same ordering for all $k\in[K]$. Consequently, the permutation p-values satisfy
\begin{align*}
p(T_b^1)=\cdots=p(T_b^K) \eqqcolon p_b,
\qquad b\in[B]_0 .
\end{align*}
Therefore, for each $b\in[B]_0$,
\begin{align*}
f\bigl(p(T_b^1),\ldots,p(T_b^K)\bigr)
=
f(p_b,\ldots,p_b),
\end{align*}
and by diagonal monotonicity, for any $b,b'\in[B]_0$,
\begin{align*}
f(p_b,\ldots,p_b) \le f(p_{b'},\ldots,p_{b'})
\quad \Longleftrightarrow \quad p_b \le p_{b'}.
\end{align*}
Using the p-value representation of the SB test, we obtain 
\begin{align*}
p_{\mathrm{SB}}
&=
\frac{1}{B+1}\sum_{b=0}^B
\mathds{1}\Bigl(
f\bigl(p(T_b^1),\ldots,p(T_b^K)\bigr)
\le f\bigl(p(T_0^1),\ldots,p(T_0^K)\bigr)
\Bigr) \\
&=
\frac{1}{B+1}\sum_{b=0}^B \mathds{1}(p_b \le p_0).
\end{align*}
Finally, recall that by definition of the permutation p-values,
\begin{align*}
p_b = \frac{1}{B+1}\sum_{i=0}^B \mathds{1}(T_i^k \ge T_b^k)
\end{align*}
for any fixed $k\in[K]$. Since the ordering of $T_i^k$ does not depend on $k$ under \eqref{eq:rank_alignment}, the right-hand side is the same for all $k$. Hence $p_b \le p_0$ if and only if $T_b^k \ge T_0^k$ for any $k\in[K]$. In particular, fixing an arbitrary representative coordinate $k^\star\in[K]$ and writing $T_b \coloneqq T_b^{k^\star}$ for all $b\in[B]_0$, we obtain the equivalent condition $T_b \ge T_0$.

Combining the above identities yields
\begin{align*}
p_{\mathrm{SB}}
=
\frac{1}{B+1}\sum_{b=0}^B \mathds{1}(T_b \ge T_0),
\end{align*}
which coincides with the usual permutation test based on a single statistic. This completes the proof.

\subsection{Proof of \Cref{prop:asymptotic-adaptation}}
\label{proof:asymptotic-adaptation}

The argument follows the standard analysis of empirical quantiles for a permutation distribution (e.g.,
\citep[][Theorem~17.2.3]{lehmann2022}). The only difference is that, instead of enumerating all transformations in a finite group, we work with statistics computed from i.i.d.\ random transformations. We therefore verify directly that the Monte Carlo empirical distribution converges to the limiting null distribution, and then apply a standard quantile-continuity argument.

\begin{proof}
For each $(B,n)$, define the empirical distribution functions
\begin{align*}
\hat F_{B,n}(t)\coloneqq \frac{1}{B+1}\sum_{b=0}^B \mathds{1}(f_{b,n}\le t),
\qquad
\tilde F_{B,n}(t)\coloneqq \frac{1}{B}\sum_{b=1}^B \mathds{1}(f_{b,n}\le t),
\quad t\in\mathbb{R}.
\end{align*}
Since
\begin{align*}
\hat F_{B,n}(t)
=
\frac{B}{B+1}\tilde F_{B,n}(t)
+ 
\frac{1}{B+1}\mathds{1}(f_{0,n}\le t),
\end{align*}
we have the uniform bound
\begin{align*}
\sup_{t\in\mathbb{R}}
\bigl|
\hat F_{B,n}(t)-\tilde F_{B,n}(t)
\bigr|
\le \frac{1}{B+1}.
\end{align*}
Thus it suffices to establish convergence of $\tilde F_{B,n}$, and then transfer it to $\hat F_{B,n}$ and to the corresponding empirical quantiles.

\parheading{Step 1: Pointwise convergence of the empirical CDF.}
Fix a distribution $P$ and a continuity point $t$ of $F_{\infty,P}(t)\coloneqq \mP_P(V_{\infty,P}\le t)$. Consider
\begin{align*}
\tilde F_{B,n}(t)=\frac{1}{B}\sum_{b=1}^B \mathds{1}(f_{b,n}\le t).
\end{align*}
Since $(f_{1,n},\ldots,f_{B,n})$ are exchangeable (under both the null and the alternative),
\begin{align*}
\mE_P\bigl[\tilde F_{B,n}(t)\bigr]=\mP_P(f_{1,n}\le t)\eqqcolon p_{n,P}(t).
\end{align*}
Moreover,
\begin{align*}
\mE_P\bigl[\tilde F_{B,n}(t)^2\bigr]
&=
\frac{1}{B^2}\sum_{b=1}^B \mE_P\bigl[\mathds{1}(f_{b,n}\le t)\bigr]
+ 
\frac{1}{B^2}\sum_{\substack{b,b'=1\\ b\neq b'}}^B
\mE_P\bigl[\mathds{1}(f_{b,n}\le t)\mathds{1}(f_{b',n}\le t)\bigr] \\
&=
\frac{1}{B}p_{n,P}(t)+\frac{B-1}{B}q_{n,P}(t),
\end{align*}
where $q_{n,P}(t)\coloneqq \mP_P(f_{1,n}\le t,\ f_{2,n}\le t)$. Therefore,
\begin{align*}
\Var_P\bigl(\tilde F_{B,n}(t)\bigr)
=
\frac{1}{B}\bigl(p_{n,P}(t)-p_{n,P}(t)^2\bigr)
+\frac{B-1}{B}\bigl(q_{n,P}(t)-p_{n,P}(t)^2\bigr).
\end{align*}
Recall that $t$ is a continuity point of $F_{\infty,P}$. By the assumed joint convergence $(f_{1,n},f_{2,n})\overset{d}{\longrightarrow}(V_{\infty,P},V'_{\infty,P})$ with $V_{\infty,P},V'_{\infty,P}$ i.i.d., the Portmanteau theorem applied to the continuity sets $(-\infty,t]$ and $(-\infty,t]^2$ yields
\begin{align*}
 p_{n,P}(t)\to F_{\infty,P}(t),
\qquad
 q_{n,P}(t)\to F_{\infty,P}(t)^2,
\end{align*}
so $q_{n,P}(t)-p_{n,P}(t)^2\to 0$. Hence $\Var_P(\tilde F_{B,n}(t))\to 0$ as $B,n\to\infty$. Chebyshev's inequality gives
\begin{align*}
\tilde F_{B,n}(t)-p_{n,P}(t)\ \overset{p}{\longrightarrow}\ 0,
\qquad\text{and thus}\qquad
\tilde F_{B,n}(t)\ \overset{p}{\longrightarrow}\ F_{\infty,P}(t).
\end{align*}
\parheading{Step 2: Pointwise quantile convergence.}
Recall that the SB threshold is
\begin{align*}
\uSB
=
\sup\Bigl\{u\in\mathbb R:
\hat F_{B,n}(u)\le \alpha
\Bigr\},
\qquad Q_{\alpha,P}^\star
\coloneqq
\inf\{u\in\mathbb{R}:F_{\infty,P}(u)\ge \alpha\}.
\end{align*}
Fix $P$. By the assumption, for every $\varepsilon>0$,
\begin{align*}
F_{\infty,P}(Q_{\alpha,P}^\star-\varepsilon)<\alpha
\quad\text{and}\quad F_{\infty,P}(Q_{\alpha,P}^\star+\varepsilon)>\alpha.
\end{align*}
Using Step~1 (pointwise) and the $O(B^{-1})$ bound between $\hat F_{B,n}$ and $\tilde F_{B,n}$,
\begin{align*}
\hat F_{B,n}(Q_{\alpha,P}^\star-\varepsilon)\ \overset{p}{\longrightarrow}\ F_{\infty,P}(Q_{\alpha,P}^\star-\varepsilon),
\qquad
\hat F_{B,n}(Q_{\alpha,P}^\star+\varepsilon)\ \overset{p}{\longrightarrow}\ F_{\infty,P}(Q_{\alpha,P}^\star+\varepsilon),
\end{align*}
where $Q_{\alpha,P}^\star-\varepsilon$ and $Q_{\alpha,P}^\star+\varepsilon$ are assumed to be continuity points of $F_{\infty,P}$ without loss of generality. Consequently,
\begin{align*}
\mP_P\!\left(\hat F_{B,n}(Q_{\alpha,P}^\star-\varepsilon)\le \alpha\right)\to 1,
\qquad
\mP_P\!\left(\hat F_{B,n}(Q_{\alpha,P}^\star+\varepsilon)> \alpha\right)\to 1.
\end{align*}
On the intersection of these events we have $Q_{\alpha,P}^\star-\varepsilon\le \uSB\le Q_{\alpha,P}^\star+\varepsilon$, hence $\uSB\overset{p}{\longrightarrow}Q_{\alpha,P}^\star$.
\end{proof}

\subsection{Proof of \Cref{prop: minimum single-batch calibration statistic view}}\label{sec: proof of minimum single-batch calibration}
For each $k\in[K]$ and $b\in[B]_0$, define
\begin{align*}
p_b^k \coloneqq p(T_b^k)
=
\frac{1}{B+1}\sum_{i=0}^{B}\mathds{1}(T_i^k\ge T_b^k),
\qquad f_b \coloneqq f_b^{\min}=\min_{k\in[K]}p_b^k .
\end{align*}
Recall from Proposition~\ref{prop: single-batch equivalence} that the SB critical value for the minimum merge satisfies
\begin{equation}\label{eq:usbmin_sup_def}
\uSBmin
=
\sup\Bigl\{u\in\mathbb R:\ 
\frac{1}{B+1}\sum_{b=0}^{B}\mathds{1}(f_b\le u)\le \alpha
\Bigr\}.
\end{equation}
All statements below are deterministic conditional on the realized matrix $\{T_b^k\}_{b,k}$, hence they hold almost surely.

\parheading{Step 1: Bonferroni rejection implies SB rejection.}
For the minimum merge $f^{\min}(u_1,\ldots,u_K)=\min_{k\in[K]}u_k$, the Bonferroni bound implies that the deterministic worst-case calibration constant satisfies $c_{\alpha,K}=\alpha/K$. Hence, by \Cref{thm:single-batch power}, the SB critical value obeys $\uSBmin>c_{\alpha,K}=\alpha/K$ almost surely. Therefore, $\min_k p(T_0^k)\le \alpha/K$ implies $\min_k p(T_0^k)<\uSBmin$, proving the claim.

\parheading{Step 2: SB rejection implies unadjusted minimum rejection.}
We prove the contrapositive. Suppose that $f_0=\min_k p(T_0^k)>\alpha$, and let $k_\star\in[K]$ be an index attaining the minimum, i.e., $p(T_0^{k_\star})=f_0$. Consider the set of transformations whose $k_\star$-statistic is at least as large as the observed one,
\begin{align*}
A\coloneqq\{b\in[B]_0: T_b^{k_\star}\ge T_0^{k_\star}\}.
\end{align*}
By definition of the permutation p-value, $|A|=(B+1)\,p(T_0^{k_\star})=(B+1)f_0$. Moreover, for any $b\in A$, the statistic $T_b^{k_\star}$ is more extreme than $T_0^{k_\star}$, so its permutation p-value in coordinate $k_\star$ cannot be larger:
\begin{align*}
p(T_b^{k_\star})\le p(T_0^{k_\star})=f_0.
\end{align*}
Since $f_b=\min_k p(T_b^k)\le p(T_b^{k_\star})$, we obtain $f_b\le f_0$ for every $b\in A$. Consequently,
\begin{align*}
\frac{1}{B+1}\sum_{b=0}^B \mathds{1}(f_b\le f_0)
\ \ge\ \frac{|A|}{B+1}
\ =\ f_0
\ >\ \alpha.
\end{align*}
Thus $u=f_0$ is not feasible in \eqref{eq:usbmin_sup_def}, which implies $\uSBmin\le f_0$. In particular, under $f_0>\alpha$ the event $f_0<\uSBmin$ is impossible, proving
\begin{align*}
\mathds{1}\!\left(\min_{k}p(T_0^k)< \uSBmin\right)
\le
\mathds{1}\!\left(\min_{k}p(T_0^k)\le \alpha\right).
\end{align*}
\parheading{Step 3: Lower bound on $\uSBmin$.}
By \Cref{thm:single-batch power}, we have
\begin{align*}
\uSBmin > \frac{\alpha}{K}.
\end{align*}
Since each permutation p-value lies on the grid
\begin{align*}
\left\{\frac{1}{B+1}, \frac{2}{B+1}, \ldots, 1\right\},
\end{align*}
the merged values $f_b^{\mathrm{min}}$ also lie on this grid. By the order-statistic representation of the SB threshold, $\uSBmin$ belongs to the same grid. Therefore, the smallest grid value strictly larger than $\alpha/K$ is
\begin{align*}
\frac{\lfloor (B+1)\alpha/K \rfloor + 1}{B+1},
\end{align*}
which implies
\begin{align*}
\uSBmin 
\ge 
\frac{\lfloor (B+1)\alpha/K \rfloor + 1}{B+1}.
\end{align*}

\begin{example}
\label{ex:SBmin_equals_Bonf} \normalfont
We construct a finite-sample example in the standard randomization setup (statistics computed from transformed data) such that
\begin{equation}
\label{eq:SB_equals_Bonf_indicator}
\mathds{1}\!\left(
\min_{k\in[K]} p(T_0^k) \le \frac{\alpha}{K}
\right)
=
\mathds{1}\!\left(
\min_{k\in[K]} p(T_0^k) < \hat u^{\mathrm{SB}}_{\alpha,\min}
\right)
\quad\text{almost surely}.
\end{equation}

\parheading{Step 1: Parameters.}
Fix integers $B\ge 1$, $K\ge 2$, and $\alpha\in(0,1)$, and define
\begin{align*}
m \coloneqq \left\lfloor \frac{(B+1)\alpha}{K} \right\rfloor .
\end{align*}
Assume $(B+1)\alpha/K\notin\mathbb{Z}$ so that
\begin{align*}
\frac{m}{B+1}
\;<\;
\frac{\alpha}{K}
\;<\;
\frac{m+1}{B+1}.
\end{align*}
\parheading{Step 2: Randomization setup.}
Let the data vector $\mathbf X$ have size $B+1$ and define
\begin{align*}
\mathbf X = (U_0,\ldots,U_B),
\qquad U_0,\ldots,U_B \stackrel{\mathrm{i.i.d.}}{\sim} \mathrm{Unif}(0,1).
\end{align*}
Thus the sample size in this construction is $n=B+1$. Let $\pi:\{0,\ldots,B\}\to\{0,\ldots,B\}$ be the cyclic permutation $\pi(j)\coloneqq j+1 \pmod{B+1}$. For each $b\in\{0,\ldots,B\}$, define $g_b$ as the coordinate permutation induced by the $b$-fold composition $\pi^b$ (i.e., $\pi$ applied $b$ times), namely
\begin{align*}
(g_b(\mathbf X))_j \coloneqq \mathbf X_{\pi^b(j)} \qquad \text{for all } j\in\{0,\ldots,B\}.
\end{align*}
In particular, $(g_b(\mathbf X))_0 = U_b$, and therefore
\begin{align*}
\{(g_b(\mathbf X))_0 : b=0,\ldots,B\} = \{U_0,\ldots,U_B\}.
\end{align*}
\parheading{Step 3: Coordinate statistics.}
Let $R_1,\ldots,R_K \subset \{1,\ldots,B+1\}$ be disjoint sets of size $m$. Define the rank of the first coordinate by
\begin{align*}
r(x) \coloneqq 1 + \sum_{j=0}^B \mathds{1}\{x_j < x_0\}.
\end{align*}
For each $k\in[K]$, define
\begin{align*}
T^k(x) \coloneqq \mathds{1}\{ r(x) \in R_k \},
\qquad T_b^k \coloneqq T^k(g_b (\mathbf X)).
\end{align*}
Since the $U_b$ are continuous, all ranks are distinct almost surely.

\parheading{Step 4: Permutation p-values.}
For each $k$ and $b$, define the permutation p-value 
\begin{align*}
p(T_b^k)
=
\frac{1}{B+1}\sum_{i=0}^B \mathds{1}(T_i^k \ge T_b^k). 
\end{align*}
Because exactly $m$ ranks belong to $R_k$, we have almost surely
\begin{align*}
p(T_b^k)
=
\begin{cases}
\displaystyle \frac{m}{B+1}, & r(g_b(\mathbf X))\in R_k,\\[6pt]
1, & \text{otherwise}.
\end{cases}
\end{align*}
\parheading{Step 5: Minimum merged values.}
Define
\begin{align*}
f_b^{\min} \coloneqq \min_{k\in[K]} p(T_b^k).
\end{align*}
Then almost surely
\begin{align*}
f_b^{\min}
=
\begin{cases}
\displaystyle \frac{m}{B+1}, & r(g_b(\mathbf X))\in \bigcup_{k=1}^K R_k,\\[6pt]
1, & \text{otherwise}.
\end{cases}
\end{align*}
Since the sets $R_k$ are disjoint,
\begin{align*}
\frac{1}{B+1}\sum_{b=0}^B \mathds{1}\!\left(f_b^{\min} \le \frac{m}{B+1}\right)
=
\frac{Km}{B+1}
\le \alpha,
\end{align*}
whereas
\begin{align*}
\frac{1}{B+1}\sum_{b=0}^B \mathds{1}\!\left(f_b^{\min} \le 1\right)
= 1 > \alpha .
\end{align*}
\parheading{Step 6: SB threshold and equivalence.}
By definition,
\begin{align*}
\hat u^{\mathrm{SB}}_{\alpha,\min}
=
\sup\Bigl\{
u:\frac{1}{B+1}\sum_{b=0}^B \mathds{1}(f_b^{\min}\le u)\le\alpha
\Bigr\}
= 1 .
\end{align*}
Therefore,
\begin{align*}
\min_k p(T_0^k) < \hat u^{\mathrm{SB}}_{\alpha,\min}
\;\Longleftrightarrow\;
\min_k p(T_0^k) = \frac{m}{B+1}.
\end{align*}
Using $\frac{m}{B+1}<\alpha/K<1$, we conclude that
\begin{align*}
\min_{k} p(T_0^k) \le \frac{\alpha}{K}
\quad\Longleftrightarrow\quad
\min_{k} p(T_0^k) < \hat u^{\mathrm{SB}}_{\alpha,\min},
\end{align*}
which proves \eqref{eq:SB_equals_Bonf_indicator}.
\end{example}

\subsection{Proof of \Cref{prop: sequential minimum aggregation}}\label{sec: proof of sequential minimum aggregation}

This is the special case of \Cref{prop: general sequential aggregation} obtained by taking $J=K$, $I_j=[j]$, and $h_j(x_1,\ldots,x_j)=\min_{1\le k\le j}x_k$.

\subsection{Proof of \Cref{prop: TB level}}\label{sec: proof of TB level}

Condition on the reference batch $\{(\tilde T_i^1,\ldots,\tilde T_i^K)\}_{i=1}^B$. Under the group-invariance hypothesis, the testing-batch row vectors
\begin{align*}
(T_0^1,\ldots,T_0^K),\ldots,(T_B^1,\ldots,T_B^K)
\end{align*}
are exchangeable given the reference batch. Since the holdout p-value map $p_{\mathrm{HO}}(\cdot)$ and the merging function $f$ are applied row-wise and identically across $b\in[B]_0$, the merged values $(\tilde f_0,\ldots,\tilde f_B)$ are conditionally exchangeable as well. Therefore the rank p-value
\begin{align*}
p_{\mathrm{TB}}=\frac{1}{B+1}\sum_{b=0}^B \mathds{1}(\tilde f_b \le \tilde f_0)
\end{align*}
is conditionally super-uniform by \Cref{Lemma: permutation p-value}, implying $\mP(p_{\mathrm{TB}}\le\alpha)\le\alpha$ after removing the conditioning. If $\tilde f_0,\ldots,\tilde f_B$ are distinct a.s., then $p_{\mathrm{TB}}$ is uniform on the grid $\{1/(B+1),\ldots,1\}$ conditional on the reference batch, and hence
$\mP(p_{\mathrm{TB}}\le\alpha)=\lfloor(B+1)\alpha\rfloor/(B+1)$.

\subsection{Proof of \Cref{prop:cp_intersection}}\label{sec: proof of cp intersection}

Fix the split with $|I_{\mathrm{ref}}|=n_1$ and write $u\coloneqq u_{\alpha}^{\mathrm{TB}}$. Under the absolute residual score
\(
s_k(x,y)=|y-\widehat{\mu}_k(x)|,
\)
the reference residuals are
\(
R_j^k \coloneqq |Y_j-\widehat{\mu}_k(X_j)|
\)
for $j\in I_{\mathrm{ref}}$, and for any candidate label $y\in\mathbb R$ the reference-based p-value \eqref{eq:cp_TB_column_pvalues_test} becomes
\begin{align*}
P_{n+1}^{k,\mathrm{ref}}(y)
=
\frac{1+\sum_{j\in I_{\mathrm{ref}}}\mathds{1}\!\left\{R_j^k
\ge |y-\widehat{\mu}_k(X_{n+1})|\right\}}{n_1+1}.
\end{align*}
With the minimum merge $f(u_1,\ldots,u_K)=\min_{k\in[K]}u_k$, we have
\begin{align*}
M_{n+1}(y)=\min_{k\in[K]} P_{n+1}^{k,\mathrm{ref}}(y).
\end{align*}
By definition of the TB set in \eqref{eq:cp_TB_threshold},
\begin{align*}
y\in C_{\mathrm{TB}}(X_{n+1})
\quad\Longleftrightarrow\quad M_{n+1}(y)\ge u
\quad\Longleftrightarrow\quad P_{n+1}^{k,\mathrm{ref}}(y)\ge u\ \ \text{for all }k\in[K].
\end{align*}
If $u_{\alpha}^{\mathrm{TB}}=-\infty$, we trivially have $C_{\mathrm{TB}}(X_{n+1})=\mathcal{Y}=\mathbb{R}$, so assume $u_{\alpha}^{\mathrm{TB}}$ is finite for the remainder of the proof.
Now fix $k\in[K]$ and set
\(
r \coloneqq |y-\widehat{\mu}_k(X_{n+1})|.
\)
The condition $P_{n+1}^{k,\mathrm{ref}}(y)\ge u$ is equivalent to
\begin{align*}
1+\sum_{j\in I_{\mathrm{ref}}}\mathds{1}\{R_j^k\ge r\}
\;\ge\; u(n_1+1).
\end{align*}
Since the left-hand side is integer-valued, this is equivalent to
\begin{align*}
\sum_{j\in I_{\mathrm{ref}}}\mathds{1}\{R_j^k\ge r\}
\;\ge\; \left\lceil u(n_1+1)\right\rceil-1.
\end{align*}
Let
\begin{align*}
m \coloneqq \left\lceil u(n_1+1)\right\rceil-1,
\qquad\text{so that}\qquad
\ell \coloneqq n_1-m+1 = n_1+2-\left\lceil u(n_1+1)\right\rceil,
\end{align*}
which matches the definition in the proposition. The inequality above therefore says that at least $m$ reference residuals are
\(\ge r\), i.e., at most \(n_1-m=\ell-1\) reference residuals are \(<r\). Equivalently, \(r\) is no larger than the \(\ell\)-th smallest reference residual:
\begin{align*}
\sum_{j\in I_{\mathrm{ref}}}\mathds{1}\{R_j^k\ge r\}\ge m
\quad\Longleftrightarrow\quad r \le R_{(\ell)}^k,
\end{align*}
with the convention \(R_{(n_1+1)}^k=\infty\) covering the case \(m=0\). Thus, $y\in C_{\mathrm{TB}}(X_{n+1})$ is equivalent to the following condition holding for all $k\in[K]$:
\begin{align*}
P_{n+1}^{k,\mathrm{ref}}(y)\ge u
&\quad\Longleftrightarrow\quad
|y-\widehat{\mu}_k(X_{n+1})|\le R_{(\ell)}^k
\\
&\quad\Longleftrightarrow\quad
y\in
\Bigl[
\widehat{\mu}_k(X_{n+1})-R_{(\ell)}^k,\
\widehat{\mu}_k(X_{n+1})+R_{(\ell)}^k
\Bigr].
\end{align*}
Intersecting over $k=1,\ldots,K$ yields
\begin{align*}
C_{\mathrm{TB}}(X_{n+1})
=
\bigcap_{k=1}^K
\Bigl[
\widehat{\mu}_k(X_{n+1})-R_{(\ell)}^k,\ 
\widehat{\mu}_k(X_{n+1})+R_{(\ell)}^k
\Bigr],
\end{align*}
as claimed.

\section{Proofs of the results in the supplementary material}
\label{sec: proofs of appendices}

This section collects the proofs of the results stated in the supplementary material.

\subsection{Finite-sample limitation of Monte Carlo MaxT calibration}
\label{sec: proof of failure of size control}

Consider the simplified setting $K=1$. For notational convenience, write $T_b = T_b^k$ for $b=0,1,\ldots,2B$ and $k = 1$. In this case, the critical value $\tilde{u}_\alpha$ in \eqref{eq: estimated u alpha} reduces to
\begin{align*}
	\tilde{u}_\alpha
	=
	\sup \Biggl\{
	u \in (0,1)
	:
	\frac{1}{B}
	\sum_{b=1}^{B}
	\mathds{1}
	\Bigl(
	T_{B+b}
	>
	\mathrm{Quantile}_{1-u}
	\bigl\{T_{0}, T_{1}, \ldots, T_{B}\bigr\}
	\Bigr)
	\leq \alpha
	\Biggr\}.
\end{align*}
The resulting MaxT test~\eqref{Eq. max-type test} becomes equivalent to rejecting the null hypothesis when
\begin{align}
\label{Eq: single test}
	T_{0}
	>
	\mathrm{Quantile}_{1-\tilde{u}_\alpha}
	\bigl\{T_{0}, T_{1}, \ldots, T_{B}\bigr\}.
\end{align}
The same intuition applies when the coordinate-wise statistics are perfectly rank-aligned, so that the coordinate-wise rejection events coincide.

\begin{proposition}
\label{prop:failure of size control without bisection}
	Suppose that $T_{0}, T_{1}, \ldots, T_{2B}$ are exchangeable and take distinct
	values with probability one. Fix $\alpha \in (0,1)$. Then the type~I error
	probability of the test in \eqref{Eq: single test} satisfies
	\begin{align*}
	\mP\!\left(
	T_{0}
	>
	\mathrm{Quantile}_{1-\tilde{u}_\alpha}
	\{T_{0},\ldots,T_{B}\}
	\right)
	=
	\frac{\lfloor B\alpha \rfloor + 1}{B + 1}.
	\end{align*}
\end{proposition}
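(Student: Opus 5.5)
The plan is to reduce everything to ranks and then count with exchangeability. Write $R_b \coloneqq \sum_{j=0}^B \mathds{1}(T_j \ge T_b)$ for $b \in [B]_0$, and for each calibration index $b\in[B]$ let $S_b \coloneqq \sum_{j=0}^B \mathds{1}(T_j \ge T_{B+b})$. By \Cref{prop: u_alpha expression} with $K=1$, $\tilde u_\alpha = u_{(\lfloor B\alpha\rfloor+1)}$ with $u_b = S_b/(B+1)$. By \Cref{Lemma: permutation p-value}, the rejection event \eqref{Eq: single test} is $\{R_0/(B+1) \le \tilde u_\alpha\}$.

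Since all values are distinct, $R_0/(B+1)\le u_{(\lfloor B\alpha\rfloor+1)}$ holds exactly when fewer than $\lfloor B\alpha\rfloor+1$ of the $u_b$ are strictly below $R_0/(B+1)$. Equivalently, the number $N$ of calibration values $T_{B+b}$ that exceed $T_0$ satisfies $N\le\lfloor B\alpha\rfloor$. This uses that $S_b < R_0$ if and only if $T_{B+b} > T_0$. Indeed, the number of testing values $\ge T_{B+b}$ is strictly less than the number $\ge T_0$ if and only if $T_{B+b}>T_0$, and ties in the $S_b$ counts do not affect the equivalence. The rejection event is therefore
\[
\Bigl\{\#\{j\in\{B+1,\ldots,2B\}: T_j>T_0\} \le \lfloor B\alpha\rfloor\Bigr\}.
\]
The testing indices $1,\ldots,B$ no longer play any role.

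Now use exchangeability of $T_0,T_{B+1},\ldots,T_{2B}$ together with distinctness. The rank of $T_0$ among these $B+1$ values is uniform on $\{1,\ldots,B+1\}$, so $N$ is uniform on $\{0,\ldots,B\}$. Hence $\mP(N\le\lfloor B\alpha\rfloor)=(\lfloor B\alpha\rfloor+1)/(B+1)$, since $\lfloor B\alpha\rfloor\le B$.

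The main obstacle is the second step: checking that the order-statistic comparison translates exactly into the count $N$. Care is needed with the strict versus weak inequalities in the supremum definition of $\tilde u_\alpha$ and in the quantile convention. I would verify the boundary case $\lfloor B\alpha\rfloor = B$ separately, where every $N$ is accepted and the probability is $1$, consistent with the formula.
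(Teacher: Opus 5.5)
Your argument is correct, and it takes a genuinely different and shorter route than the paper. You establish the pathwise identity
\[
\bigl\{T_0 > \mathrm{Quantile}_{1-\tilde u_\alpha}\{T_0,\ldots,T_B\}\bigr\}
=
\bigl\{\#\{b\in[B]: T_{B+b} > T_0\} \le \lfloor B\alpha\rfloor\bigr\}.
\]
In the paper's notation this is $\{T_0 > H_{(B-k)}\}$ with $k=\lfloor B\alpha\rfloor$. You get it by combining the closed form $\tilde u_\alpha = u_{(\lfloor B\alpha\rfloor+1)}$ from \Cref{prop: u_alpha expression} with the rank--quantile equivalence of \Cref{Lemma: permutation p-value}. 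That equivalence holds pointwise for the random level $\tilde u_\alpha\in[0,1]$, including the endpoints, under the paper's quantile conventions. You then finish with a single uniform-rank argument on $\{T_0,T_{B+1},\ldots,T_{2B}\}$.

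The paper instead partitions the sample space into the $B+2$ events $E_r$, which locate $H_{(B-k)}$ among the order statistics of $\{T_0,\ldots,T_B\}$, and computes $\tilde u_\alpha$ on each cell. It then needs a separate conditional-independence result (\Cref{lem:Ar_Er_independence}, via block exchangeability) to factor $\mP(A_r\cap E_r)$, and recombines through $\mE[V]$. Only at the very end does it reach the same exchangeability computation over $\{T_0\}\cup H$.

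What each route buys:
\begin{itemize}
\item The paper's proof is self-contained, since it does not rely on \Cref{prop: u_alpha expression}, and it tracks $\tilde u_\alpha$ explicitly on each event.
\item Your proof avoids the independence lemma entirely and shows that $T_1,\ldots,T_B$ play no role in the decision.
\item Its probabilistic step uses only exchangeability and a.s.\ distinctness of $(T_0,T_{B+1},\ldots,T_{2B})$.
\item It makes the source of the size inflation transparent: for $K=1$ the MaxT test is exactly a rank test of $T_0$ against the holdout batch alone, and its level $(\lfloor B\alpha\rfloor+1)/(B+1)$ can exceed $\alpha$.
\end{itemize}

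One small remark: the boundary case $\lfloor B\alpha\rfloor = B$ that you planned to check separately cannot occur. Since $\alpha<1$, we have $\lfloor B\alpha\rfloor\le B-1$, which is also what makes the index in $u_{(\lfloor B\alpha\rfloor+1)}$ well defined.
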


\begin{proof}

Let $S \coloneqq \{T_0,\ldots,T_B\}$ denote the calibration set and $H \coloneqq \{T_{B+1},\ldots,T_{2B}\}$ the holdout set. Then
\begin{align*}
    \frac{1}{B} \sum_{b=1}^B \mathds{1}\{T_{B+b} > \mathrm{Quantile}_{1-u}(S)\} \leq \alpha \   \Longleftrightarrow \ \sum_{b=1}^B \mathds{1}\{T_{B+b} > \mathrm{Quantile}_{1-u}(S)\} \leq \lfloor B \alpha \rfloor. 
\end{align*}
Set $k = \lfloor B \alpha \rfloor$. The event above is equivalent to stating that at most $k$ holdouts exceed the $(1-u)$-quantile of $S$. This is the same as stating that at least $B-k$ holdouts are less than or equal to this quantile. This is equivalent to
\begin{align*}
    H_{(B-k)} \leq \mathrm{Quantile}_{1-u}(S),
\end{align*}
where $H_{(B-k)}$ denotes the $(B-k)$-th order statistic of the holdout set $H$. The definition of $\tilde{u}_\alpha$ in \eqref{eq: estimated u alpha} (with $K=1$ and $w_k=1$) is thus
\begin{align*}
    \tilde{u}_\alpha = \sup\bigl\{u \in (0,1) : H_{(B-k)} \leq \mathrm{Quantile}_{1-u}(S)\bigr\}.
\end{align*}
Let $S_{(1)} \leq S_{(2)} \leq \ldots \leq S_{(B+1)}$ denote the order statistics of $S$. By \Cref{Lemma: quantile 4}, we have $\mathrm{Quantile}_{1-u}(S) = S_{(\lceil (1-u)(B+1) \rceil)}$ and define $S_{(0)} = -\infty$ by convention.

We partition the sample space into $B+2$ disjoint events:
\begin{align*}
    & E_1  = \{H_{(B-k)} \leq S_{(1)}\}, \\
    & E_r = \{S_{(r-1)} < H_{(B-k)} \leq S_{(r)}\}, \quad \text{for } r \in \{2,\ldots,B+1\}, \\
    & E_{B+2} = \{S_{(B+1)} < H_{(B-k)}\}.
\end{align*}
We now compute $\tilde{u}_\alpha$ on each event:
\begin{itemize}
    \item On $E_1$, $H_{(B-k)} \leq S_{(1)}$. For any $u \in (0,1)$, we have $\lceil (1-u)(B+1) \rceil \geq 1$, so $\mathrm{Quantile}_{1-u}(S) = S_{(\lceil (1-u)(B+1) \rceil)} \geq S_{(1)}$. Thus, $H_{(B-k)} \leq \mathrm{Quantile}_{1-u}(S)$ holds for all $u \in (0,1)$. The supremum is $\tilde{u}_\alpha = 1$.
    This gives $\mathrm{Quantile}_{1-\tilde{u}_\alpha}(S) = \mathrm{Quantile}_{0}(S) = S_{(\lceil 0 \cdot (B+1) \rceil)} = S_{(0)} = -\infty$.
    
    \item On $E_r$ for $r \in \{2,\ldots,B+1\}$, we have $S_{(r-1)} < H_{(B-k)} \leq S_{(r)}$. The condition $H_{(B-k)} \leq \mathrm{Quantile}_{1-u}(S) = S_{(\lceil (1-u)(B+1) \rceil)}$ holds if and only if $r \leq \lceil (1-u)(B+1) \rceil$. This inequality is equivalent to $r-1 < (1-u)(B+1)$, or $u < 1 - \frac{r-1}{B+1}$.
    Therefore, the supremum is $\tilde{u}_\alpha = 1 - \frac{r-1}{B+1} = \frac{B+2 - r}{B+1}$.
    This gives $\mathrm{Quantile}_{1-\tilde{u}_\alpha}(S) = \mathrm{Quantile}_{\frac{r-1}{B+1}}(S) = S_{(\lceil \frac{r-1}{B+1}(B+1) \rceil)} = S_{(r-1)}$.
    
    \item On $E_{B+2}$, we have $S_{(B+1)} < H_{(B-k)}$. Since $\mathrm{Quantile}_{1-u}(S) \leq S_{(B+1)}$ for all $u \in (0,1)$, the condition $H_{(B-k)} \leq \mathrm{Quantile}_{1-u}(S)$ never holds. The set of valid $u$ is empty. By convention for this type of test, we define the supremum over an empty set (as a subset of $[0,1]$) to be $0$. Thus, $\tilde{u}_\alpha = 0$.
    This gives $\mathrm{Quantile}_{1-\tilde{u}_\alpha}(S) = \mathrm{Quantile}_{1}(S) = S_{(\lceil 1 \cdot (B+1) \rceil)} = S_{(B+1)}$.
\end{itemize}
Using these calculations, we can express the type I error probability as
\begin{align*}
    \mP\bigl(T_0 > \mathrm{Quantile}_{1-\tilde{u}_\alpha}(S)\bigr) & = \sum_{r=1}^{B+2} \mP\bigl(\{T_0 > \mathrm{Quantile}_{1-\tilde{u}_\alpha}(S)\} \cap E_r \bigr) \\ 
    & = \sum_{r=1}^{B+2} \mP\bigl(T_0 > S_{(r-1)} \cap E_r \bigr) \\
    & = \sum_{r=1}^{B+2} \mP\bigl(T_0 > S_{(r-1)} \mid E_r \bigr) \mP(E_r).
\end{align*}
Let $A_r = \{T_0 > S_{(r-1)}\}$. We claim that $A_r$ and $E_r$ are independent; the proof is deferred to \Cref{lem:Ar_Er_independence}.

Using this independence, we have
\begin{align*}
    \mP\bigl(T_0 > \mathrm{Quantile}_{1-\tilde{u}_\alpha}(S)\bigr) & = \sum_{r=1}^{B+2} \mP\bigl(T_0 > S_{(r-1)}\bigr) \mP(E_r).
\end{align*}
As shown above, we have the identity
\begin{align*}
    \mP\bigl(T_0 > S_{(r-1)}\bigr) = \frac{B+2-r}{B+1}.
\end{align*}
Substituting this in, the type I error is
\begin{align*}
    \frac{1}{B+1} \sum_{r=1}^{B+2} (B+2 - r) \mP(E_r).
\end{align*}
Note that $E_r$ occurs if and only if $V \coloneqq \sum_{b=0}^B \mathds{1}\{T_{b} < H_{(B-k)}\} = r-1$ under the distinctness assumption. By re-indexing the sum using $v = r-1$, we have
\begin{align*}
    \frac{1}{B+1} \sum_{r=1}^{B+2} (B+2 - r) \mP(E_r) & = \frac{1}{B+1} \sum_{v=0}^{B+1} (B+1 - v) \mP(V = v) \\
    & = \frac{1}{B+1} \left( \sum_{v=0}^{B+1} (B+1)\mP(V = v) - \sum_{v=0}^{B+1} v \mP(V = v) \right) \\
    & = \frac{1}{B+1} \left( (B+1) \cdot 1 - \mE[V] \right) = \frac{B+1 - \mE[V]}{B+1}.
\end{align*}
By linearity of expectation, $\mE[V] = \sum_{b=0}^B \mE[\mathds{1}\{T_{b} < H_{(B-k)}\}] = \sum_{b=0}^B \mP(T_b < H_{(B-k)})$. By exchangeability of $T_0, \ldots, T_B$, all these probabilities are equal:
\begin{align*}
    \mE[V] = (B+1) \mP(T_0 < H_{(B-k)}).
\end{align*}
To compute $\mP(T_0 < H_{(B-k)})$, consider the set $U = \{T_0\} \cup H = \{T_0, T_{B+1}, \ldots, T_{2B}\}$. This set has $B+1$ exchangeable elements. Let $R_0$ be the rank of $T_0$ within $U$. By exchangeability, $R_0 \sim \mathrm{Uniform}\{1, \ldots, B+1\}$. The event $T_0 < H_{(B-k)}$ means that $T_0$ is smaller than at least $k+1$ elements of $H$ (namely, $H_{(B-k)}, \ldots, H_{(B)}$). This implies that the rank $R_0$ of $T_0$ in $U$ can be at most $(B+1) - (k+1) = B-k$. Thus, $\mP(T_0 < H_{(B-k)}) = \mP(R_0 \leq B-k) = \frac{B-k}{B+1}$. Substituting this into the expression for $\mE[V]$:
\begin{align*}
    \mE[V] = (B+1) \cdot \frac{B-k}{B+1} = B-k.
\end{align*}
Finally, the type I error is
\begin{align*}
    \mP\bigl(T_0 > \mathrm{Quantile}_{1-\tilde{u}_\alpha}(S)\bigr) = \frac{B+1 - \mE[V]}{B+1} = \frac{B+1 - (B-k)}{B+1} = \frac{k+1}{B+1}.
\end{align*}
Substituting $k = \lfloor B\alpha \rfloor$, the type I error is $\frac{\lfloor B\alpha \rfloor+1}{B+1}$.
\end{proof}

\begin{lemma}\label{lem:Ar_Er_independence}
For each $r\in\{1,\ldots,B+2\}$, the events $A_r\coloneqq\{T_0>S_{(r-1)}\}$ and $E_r$ are independent.
\end{lemma}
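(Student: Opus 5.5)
The plan is to show that $A_r$ depends only on the \emph{relative position} of $T_0$ inside the calibration set $S$, while $E_r$ depends only on the \emph{unlabelled} content of $S$ together with the holdout set $H$. Exchangeability then makes the relative position uniform and independent of that content.

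Concretely, let $\mathcal G$ be the $\sigma$-field generated by the order statistics $(S_{(1)},\ldots,S_{(B+1)})$ and the vector $(T_{B+1},\ldots,T_{2B})$. Then $H_{(B-k)}$ is $\mathcal G$-measurable, and hence so is each $E_r$. Let $R_0 \coloneqq \sum_{b=0}^B \mathds{1}(T_b \le T_0)$ denote the rank of $T_0$ within $S$. On the almost sure event that all values are distinct, $T_0 = S_{(R_0)}$. Therefore $A_r = \{T_0 > S_{(r-1)}\} = \{R_0 \ge r\}$ for $r \ge 2$, while $A_1$ is the whole space because $S_{(0)} = -\infty$. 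So it suffices to show that $R_0$ is independent of $\mathcal G$ and uniform on $\{1,\ldots,B+1\}$.

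For this I would use exchangeability of $T_0,\ldots,T_{2B}$ restricted to permutations $\sigma$ of the first $B+1$ indices only. For each such $\sigma$, the vector $(T_{\sigma(0)},\ldots,T_{\sigma(B)},T_{B+1},\ldots,T_{2B})$ has the same law as the original vector. The map producing $\mathcal G$ (the order statistics of the first block, together with the second block) is invariant under $\sigma$. The rank of the first coordinate becomes the rank of $T_{\sigma(0)}$ within $S$, denoted $R_{\sigma(0)}$. Hence, for every $G \in \mathcal G$ and every $b \in [B]_0$,
\[
\mP(R_0 = j,\, G) = \mP(R_b = j,\, G).
\]
Averaging over $b$ gives
\[
\mP(R_0=j,\,G)=\frac{1}{B+1}\,\mE\Bigl[\mathds{1}_G\sum_{b=0}^B \mathds{1}(R_b=j)\Bigr]=\frac{\mP(G)}{B+1},
\]
because under distinctness exactly one index has rank $j$. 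This shows that $R_0$ is uniform and independent of $\mathcal G$.

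Consequently, $\mP(A_r \cap E_r) = \mP(R_0 \ge r)\,\mP(E_r)$, which is the claimed independence. As a by-product, $\mP(A_r) = (B+2-r)/(B+1)$, which is the identity used in the proof of \Cref{prop:failure of size control without bisection}.

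The only delicate point is the step identifying $A_r$ with $\{R_0 \ge r\}$, together with the claim that exactly one index attains each rank. Both rely on distinctness, so I would first restrict attention to the probability-one event on which all of $T_0,\ldots,T_{2B}$ are distinct, and carry out every argument on that event.
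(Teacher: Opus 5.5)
Your proof is correct and follows essentially the same route as the paper. Both identify $A_r$ with the rank event $\{R_0\ge r\}$ (the paper writes $\{J\ge r\}$), and both show that the rank of $T_0$ within $S$ is uniform and independent of a $\sigma$-field, generated by the order statistics of $S$ together with the holdout block, with respect to which $E_r$ is measurable. Your explicit averaging over within-block permutations supplies the justification for this conditional uniformity, which the paper only attributes to block-exchangeability. Using the full holdout vector rather than its order statistics makes no difference.
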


\begin{proof}
Let $\mathcal{G}_S=\sigma(S_{(1)},\ldots,S_{(B+1)})$ be the sigma-field generated by the order statistics of $S$, and let $\mathcal{G}_H=\sigma(H_{(1)},\ldots,H_{(B)})$ be the sigma-field generated by the order statistics of $H$. Let $J\in\{1,\ldots,B+1\}$ denote the rank of $T_0$ within $S$. By exchangeability and distinctness, conditional on $\mathcal{G}_S$, $J$ is uniformly distributed on $\{1,\ldots,B+1\}$. Moreover, block-exchangeability (i.e., invariance under permutations within each block $\{0,\ldots,B\}$ and $\{B+1,\ldots,2B\}$) implies that
\begin{align*}
\mP(J=j\mid \mathcal{G}_S,\mathcal{G}_H)=\mP(J=j\mid \mathcal{G}_S)=\frac{1}{B+1},
\qquad j=1,\ldots,B+1.
\end{align*}
Hence $J$ is conditionally independent of $\mathcal{G}_H$ given $\mathcal{G}_S$.

Recall that for $r\in\{1,\ldots,B+2\}$,
\begin{align*}
A_r=\{T_0>S_{(r-1)}\}=\{J\ge r\},
\qquad E_r=\{S_{(r-1)}<H_{(B-k)}\le S_{(r)}\}.
\end{align*}
Using iterated conditioning,
\begin{align*}
\mP(A_r\cap E_r\mid \mathcal{G}_S,\mathcal{G}_H)
&= \mP(A_r\mid \mathcal{G}_S,\mathcal{G}_H)\,\mathds{1}(E_r) \\
&= \mP(J\ge r\mid \mathcal{G}_S,\mathcal{G}_H)\,\mathds{1}(E_r) \\
&= \mP(J\ge r\mid \mathcal{G}_S)\,\mathds{1}(E_r).
\end{align*}
Taking expectations first with respect to $\mathcal{G}_H$ conditional on $\mathcal{G}_S$ yields
\begin{align*}
\mP(A_r\cap E_r\mid \mathcal{G}_S)=\mP(J\ge r\mid \mathcal{G}_S)\,\mP(E_r\mid \mathcal{G}_S).
\end{align*}
Finally, taking expectations over $\mathcal{G}_S$ and using $\mP(J\ge r\mid \mathcal{G}_S)=(B+2-r)/(B+1)$, we obtain
\begin{align*}
\mP(A_r\cap E_r)=\frac{B+2-r}{B+1}\,\mP(E_r)=\mP(A_r)\,\mP(E_r),
\end{align*}
which proves independence.
\end{proof}

\subsection{Proof of \Cref{prop:uniform-asymptotic-adaptation}}
\label{sec: proof of uniform asymptotic adaptation}

We use the notation from \Cref{proof:asymptotic-adaptation}. For each $(B,n)$, define
\begin{align*}
\hat F_{B,n}(t)\coloneqq \frac{1}{B+1}\sum_{b=0}^B \mathds{1}(f_{b,n}\le t),
\qquad
\tilde F_{B,n}(t)\coloneqq \frac{1}{B}\sum_{b=1}^B \mathds{1}(f_{b,n}\le t).
\end{align*}
As before, $\sup_t|\hat F_{B,n}(t)-\tilde F_{B,n}(t)|\le 1/(B+1)$.

\begin{proof}
\parheading{Step 1': Uniform convergence of the empirical CDF.}
Assume the uniform marginal and joint approximation conditions used in the uniform statement:
\begin{align*}
\sup_{P\in\mathcal P}\sup_{t\in\mathbb R}
\bigl|\mP_P(f_{1,n}\le t)-F_P(t)\bigr|\to 0,
\qquad
\sup_{P\in\mathcal P}\sup_{t\in\mathbb R}
\bigl|\mP_P(f_{1,n}\le t,f_{2,n}\le t)-F_P(t)^2\bigr|\to 0,
\end{align*}
where $F_P(t)=\mP_P(V_{\infty,P}\le t)$. These can be deduced from the more general pointwise convergence condition in the statement of \Cref{prop:uniform-asymptotic-adaptation}. Fix $\eta>0$ and write $X_{b,n}(t)=\mathds{1}\{f_{b,n}\le t\}$. For the Monte Carlo CDF,
$\tilde F_{B,n}(t)=\frac1B\sum_{b=1}^B X_{b,n}(t)$, exchangeability yields
\begin{align*}
\Var_P\bigl(\tilde F_{B,n}(t)\bigr)
&=
\frac{1}{B}\Var_P(X_{1,n}(t))
+\frac{B-1}{B}\mathrm{Cov}_P\!\bigl(X_{1,n}(t),X_{2,n}(t)\bigr)\\
&\le
\frac{1}{4B}
+
\bigl|\mathrm{Cov}_P(X_{1,n}(t),X_{2,n}(t))\bigr|,
\end{align*}
where the $1/4$ factor arises from the fact that $X_{1,n}(t)$ is a Bernoulli random variable, whose variance is at most $1/4$. 
Moreover,
\begin{align*}
\bigl|\mathrm{Cov}_P(X_{1,n}(t),X_{2,n}(t))\bigr|
&=
\bigl|\mP_P(f_{1,n}\le t,f_{2,n}\le t)-\mP_P(f_{1,n}\le t)^2\bigr| \\
&\le
\bigl|\mP_P(f_{1,n}\le t,f_{2,n}\le t)-F_P(t)^2\bigr|
+2\bigl|\mP_P(f_{1,n}\le t)-F_P(t)\bigr|.
\end{align*}
Taking $\sup_{P\in\mathcal P}\sup_{t\in\mathbb R}$, the right-hand side tends to $0$ by assumption. Hence
\begin{align*}
\sup_{P\in\mathcal P}\sup_{t\in\mathbb R}\Var_P\bigl(\tilde F_{B,n}(t)\bigr)\to 0.
\end{align*}
Fix $\eta>0$. By Chebyshev's inequality,
\begin{align*}
\sup_{P\in\mathcal P}\sup_{t\in\mathbb R}
\mP_P\bigl(|\tilde F_{B,n}(t)-\mP_P(f_{1,n}\le t)|>\eta\bigr)\to 0.
\end{align*}
Combining with uniform marginal convergence gives
\begin{align*}
\sup_{P\in\mathcal P}\sup_{t\in\mathbb R}
\mP_P\bigl(|\tilde F_{B,n}(t)-F_P(t)|>2\eta\bigr)\to 0.
\end{align*}
Finally, since $\sup_{t \in \mathbb{R}}|\hat F_{B,n}(t)-\tilde F_{B,n}(t)|\le 1/(B+1)$, the same conclusion holds with $\hat F_{B,n}$ in place of $\tilde F_{B,n}$.

\parheading{Step 2': Uniform quantile convergence.}
For the uniform statement, use the assumed uniform margin condition: for each $\varepsilon>0$,
\begin{align*}
\delta(\varepsilon)
\coloneqq
\inf_{P\in\mathcal P}
\min\bigl\{\alpha-F_P(Q^\star_{\alpha,P}-\varepsilon),\
F_P(Q^\star_{\alpha,P}+\varepsilon)-\alpha\bigr\}
>0.
\end{align*}
Then, for any $P\in\mathcal P$,
\begin{align*}
\{\uSB>Q_{\alpha,P}^\star+\varepsilon\}
&\subseteq
\{\hat F_{B,n}(Q_{\alpha,P}^\star+\varepsilon)\le \alpha\}\\
&\subseteq
\Bigl\{
|\hat F_{B,n}(Q_{\alpha,P}^\star+\varepsilon)
-F_P(Q_{\alpha,P}^\star+\varepsilon)|
\ge \delta(\varepsilon)
\Bigr\},
\end{align*}
and similarly
\begin{align*}
\{\uSB<Q_{\alpha,P}^\star-\varepsilon\}
&\subseteq
\{\hat F_{B,n}(Q_{\alpha,P}^\star-\varepsilon)>\alpha\}\\
&\subseteq
\Bigl\{
|\hat F_{B,n}(Q_{\alpha,P}^\star-\varepsilon)
-F_P(Q_{\alpha,P}^\star-\varepsilon)|
\ge \delta(\varepsilon)
\Bigr\}.
\end{align*}
Therefore,
\begin{align*}
\sup_{P\in\mathcal P}\mP_P\bigl(|\uSB-Q_{\alpha,P}^\star|>\varepsilon\bigr)
&\le
\sup_{P\in\mathcal P}\mP_P\Bigl(|\hat F_{B,n}(Q_{\alpha,P}^\star+\varepsilon)-F_P(Q_{\alpha,P}^\star+\varepsilon)|\ge \delta(\varepsilon)\Bigr)\\
&\quad+
\sup_{P\in\mathcal P}\mP_P\Bigl(|\hat F_{B,n}(Q_{\alpha,P}^\star-\varepsilon)-F_P(Q_{\alpha,P}^\star-\varepsilon)|\ge \delta(\varepsilon)\Bigr)\\
&\le
2\sup_{P\in\mathcal P}\sup_{t\in\mathbb R}\mP_P\bigl(|\hat F_{B,n}(t)-F_P(t)|\ge \delta(\varepsilon)\bigr)
\, \rightarrow\, 0
\end{align*}
by Step~1'. This proves uniform consistency of $\uSB$.
\end{proof}

\subsection{Proof of \Cref{prop: adaptive aggregation effective multiplicity}}\label{sec: proof of adaptive aggregation}

Write
\begin{align*}
f_b \coloneqq p_b^{\min} = \min_{m\in[M]} \ple\bigl(f_b^m\bigr),
\qquad b\in[B]_0,
\end{align*}
so that the data-driven procedure is exactly the SB test applied to the statistic sequence $(f_b)_{b\in[B]_0}$, i.e.,
\begin{align*}
p_{\mathrm{SB},\mathcal{F}}
=\frac{1}{B+1}\sum_{b=0}^B \mathds{1}(f_b \le f_0).
\end{align*}
Let $\uSB$ denote the corresponding SB threshold. By Proposition~\ref{prop: single-batch equivalence},
\begin{equation}\label{eq:uSB_for_f}
\uSB
=
\sup\Bigl\{u\in\mathbb{R}:\ \frac{1}{B+1}\sum_{b=0}^B \mathds{1}(f_b \le u)\le \alpha\Bigr\}.
\end{equation}
Since the SB test rejects when $f_0 < \uSB$ as in \eqref{eq:SB_statistic_view}, we have
\begin{align*}
\mP\bigl(p_{\mathrm{SB},\mathcal{F}}>\alpha\bigr)
=
\mP\bigl(f_0 \ge \uSB\bigr).
\end{align*}
Let $A\coloneqq\{N_{\mathrm{eff}}(\varepsilon)\le N\}$ so that $\mP(A^c)\le \eta$ by
\eqref{eq: Neff_highprob}. On the event $A$, by the definition of $N_{\mathrm{eff}}(\varepsilon)$ there exist indices $m_1,\ldots,m_N\in[M]$ such that for all $b\in[B]_0$,
\begin{align*}
\min_{j\in[N]} \ple\bigl(f_b^{m_j}\bigr)
\le (1+\varepsilon)\min_{m\in[M]} \ple\bigl(f_b^{m}\bigr)
=(1+\varepsilon) f_b,
\end{align*}
or equivalently,
\begin{equation}\label{eq:fb_lower_by_subset}
f_b \ge \frac{1}{1+\varepsilon}\, g_b,
\qquad\text{where }\ 
g_b \coloneqq \min_{j\in[N]} \ple\bigl(f_b^{m_j}\bigr).
\end{equation}

Define the SB threshold for the sequence $(g_b)_{b\in[B]_0}$ by
\begin{align*}
\hat{u}_\alpha^{\mathrm{SB},(g)} \coloneqq
\sup\Bigl\{u\in\mathbb{R}:\ \frac{1}{B+1}\sum_{b=0}^B \mathds{1}(g_b \le u)\le \alpha\Bigr\}.
\end{align*}
From~\eqref{eq:fb_lower_by_subset}, on the event $A$ each $f_b$ is bounded below by $g_b/(1+\varepsilon)$. Consequently, for all $b\in[B]_0$ and all $u$,
\begin{align*}
\mathds{1}(f_b \le u)
\le
\mathds{1}\bigl(g_b \le (1+\varepsilon)u\bigr).
\end{align*}
Therefore, the empirical distribution function of $(f_b)_{b\in[B]_0}$ is dominated by that of $(g_b)_{b\in[B]_0}$ after rescaling the argument by the factor $(1+\varepsilon)$. Invoking the characterization of the SB threshold in
\eqref{eq:uSB_for_f}, we obtain the lower bound
\begin{align*}
\uSB
\ge
\frac{1}{1+\varepsilon}\,\hat{u}_\alpha^{\mathrm{SB},(g)}
\qquad\text{on } A.
\end{align*}
Next, by \Cref{thm:single-batch power} and the fact that the Bonferroni threshold $c_{\alpha,N}=\alpha/N$ satisfies the condition of that theorem for the minimum merging function with $N$ coordinates, we conclude that the corresponding SB threshold satisfies
\begin{align*}
\hat{u}_\alpha^{\mathrm{SB},(g)} > \frac{\alpha}{N}.
\end{align*}
Finally,
\begin{align*}
\mP\bigl(p_{\mathrm{SB},\mathcal{F}}>\alpha\bigr)
=\mP(f_0\ge \uSB)
&\le \mP(A^c)+\mP\!\left(f_0 > \frac{\alpha}{(1+\varepsilon)N},\,A\right)\\
&\le \eta + \mP\!\left(f_0 > \frac{\alpha}{(1+\varepsilon)N}\right)\\
&= \eta + \mP\!\left(\min_{m\in[M]} \ple\bigl(f_0^m\bigr) > \frac{\alpha}{(1+\varepsilon)N}\right)\\
&\le \eta + \min_{m\in[M]}\mP\!\left(\ple\bigl(f_0^m\bigr) > \frac{\alpha}{(1+\varepsilon)N}\right),
\end{align*}
where the last inequality uses $\{\min_m X_m>t\}=\cap_m\{X_m>t\}$. This proves the claim.

\subsection{Proof of \Cref{prop: SB-TB equivalence rate}}\label{sec: proof of SB-TB equivalence rate}
Fix $K$ and work conditionally on the observed data $\mathbf X$. Throughout, we assume that, conditional on $\mathbf X$, the transformations $g_1,\ldots,g_{2B}$ are i.i.d.\ uniform draws from $\mathcal{G}$. For any function $F$, we denote its left limit at $t$ by $F(t-)\coloneqq\lim_{s\uparrow t}F(s)$. 

\parheading{Step 1 (Holdout vs.\ within-testing-batch p-values).}
For each $k\in[K]$, define the empirical CDFs
\begin{align*}
\widehat F_{\mathrm{ref}}^k(t)
\coloneqq
\frac{1}{B}\sum_{i=1}^B \mathds{1}(\tilde T_i^k\le t),
\qquad
\widehat F_{\mathrm{test}}^k(t)
\coloneqq
\frac{1}{B+1}\sum_{j=0}^B \mathds{1}(T_j^k\le t).
\end{align*}
For each $b\in[B]_0$,
\begin{align*}
p_{\mathrm{HO}}(T_b^k)
=
\frac{1}{B+1}\Bigl\{1+\sum_{i=1}^B \mathds{1}(\tilde T_i^k\ge T_b^k)\Bigr\}
=
1-\frac{B}{B+1}\widehat F_{\mathrm{ref}}^k(T_b^k-),
\end{align*}
while the SB (within-batch) permutation p-values satisfy
\begin{align*}
p(T_b^k)
=
\frac{1}{B+1}\sum_{j=0}^B \mathds{1}(T_j^k\ge T_b^k)
=
1-\widehat F_{\mathrm{test}}^k(T_b^k-).
\end{align*}
Hence,
\begin{equation}
\label{eq:pval_diff_bound_SB_TB_new}
\max_{b\in[B]_0}
\bigl|p_{\mathrm{HO}}(T_b^k)-p(T_b^k)\bigr|
\le
\sup_{t\in\mathbb R}
\bigl|\widehat F_{\mathrm{ref}}^k(t-)-\widehat F_{\mathrm{test}}^k(t-)\bigr|
+\frac{1}{B+1}.
\end{equation}

\parheading{Step 2 (Uniform DKW control).}
Let
\begin{align*}
\widehat F_{\mathrm{test,MC}}^k(t)
\coloneqq
\frac{1}{B}\sum_{j=1}^B \mathds{1}(T_j^k\le t).
\end{align*}
Then
\begin{align*}
\sup_{t \in \mathbb{R}}
\bigl|\widehat F_{\mathrm{test}}^k(t-)-\widehat F_{\mathrm{test,MC}}^k(t-)\bigr|
\le \frac{1}{B+1}.
\end{align*}
Conditional on $\mathbf X$, both $\widehat F_{\mathrm{ref}}^k$ and $\widehat F_{\mathrm{test,MC}}^k$ are empirical CDFs of independent i.i.d.\ samples from the same randomization distribution. By the Dvoretzky--Kiefer--Wolfowitz (DKW)
inequality applied to $\widehat F_{\mathrm{ref}}^k$ and $\widehat F_{\mathrm{test,MC}}^k$, and a union bound, for every $\eta>0$,
\begin{align*}
\mP\!\left(
\sup_{t \in \mathbb{R}}
\bigl|\widehat F_{\mathrm{ref}}^k(t-)-\widehat F_{\mathrm{test}}^k(t-)\bigr|
>2\eta+\frac{1}{B+1}
\ \middle|\ \mathbf X
\right)
\le 4e^{-2B\eta^2}.
\end{align*}
A union bound over $k\in[K]$ (with $K$ fixed) yields, for every $\eta>0$,
\begin{align*}
\mP\!\left(
\varepsilon_B>2\eta+\frac{2}{B+1}
\right)
\le 4K e^{-2B\eta^2},
\qquad
\varepsilon_B
\coloneqq
\max_{k\in[K]}\max_{b\in[B]_0}
\bigl|p_{\mathrm{HO}}(T_b^k)-p(T_b^k)\bigr|.
\end{align*}
In particular, for any $\delta>0$, there exist constants $c>0$ and $B_0<\infty$
such that for all $B\ge B_0$,
\begin{align*}
\mP(\varepsilon_B>\delta) \le 4K\,e^{-cB\delta^2}.
\end{align*}
\parheading{Step 3 (From p-values to merged values via uniform continuity).}
Define the TB merged values
\begin{align*}
\tilde f_b
\coloneqq f\bigl(p_{\mathrm{HO}}(T_b^1),\ldots,p_{\mathrm{HO}}(T_b^K)\bigr),
\qquad b\in[B]_0,
\end{align*}
and the SB merged values
\begin{align*}
 f_b
\coloneqq f\bigl(p(T_b^1),\ldots,p(T_b^K)\bigr),
\qquad b\in[B]_0.
\end{align*}
Since $f$ is continuous on the compact set $[0,1]^K$, it is uniformly continuous. Thus, for every $\varepsilon>0$ there exists $\delta_{f}(\varepsilon)>0$ such that whenever $\|x-y\|_\infty\le \delta_{f}(\varepsilon)$, one has $|f(x)-f(y)|\le \varepsilon$. Recalling the definition of $\varepsilon_B$ in Step~2, we therefore have the implication
\begin{align*}
\varepsilon_B\le \delta_f(\varepsilon)
\quad\Longrightarrow\quad
\max_{b\in[B]_0}|\tilde f_b-f_b|\le \varepsilon.
\end{align*}
Consequently, with
\(\Delta_B\coloneqq \max_{b\in[B]_0}|\tilde f_b-f_b|\),
\begin{equation}
\label{eq:DeltaB_epsB_uc}
\mP(\Delta_B>\varepsilon)
\;\le\;
\mP\bigl(\varepsilon_B>\delta_f(\varepsilon)\bigr).
\end{equation}
Combining \eqref{eq:DeltaB_epsB_uc} with the exponential tail bound for $\varepsilon_B$ from Step~2 shows that $\mP(\Delta_B>\varepsilon)$ decays exponentially fast in $B$ (with constants depending only on $f$ and $\varepsilon$).

\parheading{Step 4 (From merged values to thresholds).}
Let $m\coloneqq\lfloor (B+1)\alpha\rfloor+1$ and write $\mathrm{os}_m(\cdot)$ for the $m$-th order statistic. Since order statistics are $1$-Lipschitz with respect to the sup-norm,
\begin{align*}
|\uTB-\uSB|
=
\bigl|\mathrm{os}_m(\tilde f_0,\ldots,\tilde f_B)-\mathrm{os}_m(f_0,\ldots,f_B)\bigr|
\le
\Delta_B.
\end{align*}
Therefore, for every $\varepsilon>0$,
\begin{align*}
\mP\bigl(|\uTB-\uSB|>\varepsilon\bigr)
\;\le\;
\mP(\Delta_B>\varepsilon),
\end{align*}
and the right-hand side decays exponentially fast in $B$ by Step~3. This completes the proof.

\subsection{Proof of \Cref{prop:SBvsTB:finiteB_gap_main}}
\label{app:SBvsTB_finiteB_gap}

We start with an explicit construction underlying \Cref{prop:SBvsTB:finiteB_gap_main} in \Cref{subsubsec: setup}, followed by the assumptions in \Cref{subsubsec: assumptions}. The proof of the limiting rejection probabilities is given in \Cref{subsubsec: proof}. Throughout we use the minimum p-value merging rule, which makes the finite-$B$ discretization effect most transparent.

For notational convenience in the comparative analysis across SB and TB, we redefine the TB procedure by treating the transformations for $b=1,\dots,B$ as the reference batch and the subsequent $B$ transformations for $b=0,B+1,\dots,2B$ as the testing batch. Since the $2B$ sampled transformations are independent and identically distributed conditional on the data, this relabeling is distributionally equivalent to the convention used in \Cref{alg:TB}.

\subsubsection{Setup} \label{subsubsec: setup}

Fix integers $B \ge 1$ and $K \ge 1$. For each $n$, let $\mathbf X^{(n)}$ denote the observed data and let $g_0 = \mathrm{id}$ and $g_1,\ldots,g_{2B}$ be Monte Carlo transformations sampled from the randomization mechanism. For each coordinate $k\in[K]$, define
\begin{align*}
T_{b,n}^k \coloneqq T^k\!\bigl(g_b(\mathbf X^{(n)})\bigr),
\qquad b\in[2B]_0.
\end{align*}
\parheading{SB p-values (first batch).}
For $b\in[B]_0$ and $k\in[K]$, define the usual permutation p-values
\begin{equation}
\label{eq:app_psb}
p_{b,n}^k
\coloneqq
\frac{1}{B+1}\sum_{i=0}^{B}\mathds{1}\!\bigl(T_{i,n}^k \ge T_{b,n}^k\bigr).
\end{equation}
Let the merged SB values be
\begin{align*}
\tilde p_{b,n}^{\mathrm{SB}} \coloneqq \min_{k\in[K]} p_{b,n}^k,
\qquad b\in[B]_0,
\end{align*}
and define the SB aggregation p-value (under the minimum merger) as
\begin{align*}
p_{\mathrm{SB},n}
\coloneqq
\frac{1}{B+1}\sum_{b=0}^{B}\mathds{1}\!\left(\tilde p_{b,n}^{\mathrm{SB}}
\le \tilde p_{0,n}^{\mathrm{SB}}\right).
\end{align*}
\parheading{TB holdout p-values (second batch, first batch as reference).}
For $i\in[B]$ and $k\in[K]$, define the holdout p-values by ranking the second batch against the reference set $\{T_{1,n}^k,\ldots,T_{B,n}^k\}$,
\begin{equation}
\label{eq:app_pho}
p_{B+i,n}^{\mathrm{HO},k}
\coloneqq
\frac{1}{B+1}\biggl\{
1+\sum_{\ell=1}^{B}\mathds{1}\!\bigl(T_{\ell,n}^k \ge T_{B+i,n}^k\bigr)
\biggr\}.
\end{equation}
We also define $p_{0,n}^{\mathrm{HO},k}$ by the same formula with $B+i$ replaced by $0$. Note that $p_{0,n}^{\mathrm{HO},k}=p_{0,n}^k$ by construction.

Let the merged TB testing-batch values be
\begin{align*}
\tilde p_{0,n}^{\mathrm{TB}} \coloneqq \min_{k\in[K]} p_{0,n}^{\mathrm{HO},k},
\qquad
\tilde p_{B+i,n}^{\mathrm{TB}} \coloneqq \min_{k\in[K]} p_{B+i,n}^{\mathrm{HO},k},
\quad i\in[B],
\end{align*}
and define the TB aggregation p-value as the rank p-value of the observed merged value among the $B+1$ testing-batch merged values:
\begin{align*}
p_{\mathrm{TB},n}
\coloneqq
\frac{1}{B+1}\left\{
1+\sum_{i=1}^{B}\mathds{1}\!\left(\tilde p_{B+i,n}^{\mathrm{TB}}
\le \tilde p_{0,n}^{\mathrm{TB}}\right)
\right\}.
\end{align*}
\subsubsection{Assumptions} \label{subsubsec: assumptions}

We impose three simplifying conditions.

\begin{enumerate}
\item[(A1)] \label{assump:A1} \textbf{Perfect rank alignment across coordinates.}
For all $k,k'\in[K]$, all $b,b'\in[2B]_0$, and all $n\in\mathbb{N}$,
\begin{align*}
T_{b,n}^k \le T_{b',n}^k
\quad\Longleftrightarrow\quad T_{b,n}^{k'} \le T_{b',n}^{k'}
\qquad\text{almost surely}.
\end{align*}
In particular, this implies that all permutation and holdout p-values are invariant in $k$, and hence the minimum merger reduces to a single coordinate:
\begin{align*}
\tilde p_{b,n}^{\mathrm{SB}} = p_{b,n}^1,\quad b\in[B]_0,
\qquad
\tilde p_{0,n}^{\mathrm{TB}} = p_{0,n}^{\mathrm{HO},1}=p_{0,n}^1,
\qquad
\tilde p_{B+i,n}^{\mathrm{TB}} = p_{B+i,n}^{\mathrm{HO},1},\quad i\in[B].
\end{align*}
\item[(A2)] \label{assump:A2}
\textbf{Exchangeability and conditioning on full order statistics.}
For each fixed $n$ and $k$, conditional on the data used to compute the statistic, the Monte Carlo draws $T_{1,n}^k,\ldots,T_{2B,n}^k$ are i.i.d.\ from the randomization distribution and are distinct with probability one (e.g., after jittering). Let
\begin{align*}
\mathcal G_n
\coloneqq
\sigma\bigl(T_{(1),n}^k,\ldots,T_{(2B),n}^k\bigr)
\end{align*}
denote the $\sigma$-field generated by the full collection of order statistics. Conditional on $\mathcal G_n$, the ranks of $T_{1,n}^k,\ldots,T_{2B,n}^k$ form a uniformly random permutation of $\{1,\ldots,2B\}$.

\item[(A3)] \label{assump:A3}\textbf{Strong-signal regime.} 
Assume that
\begin{equation*}
\mP \Bigl(T_{0,n}^1 > \max_{1\le b\le B} T_{b,n}^1\Bigr)\;\rightarrow\;1
\qquad\text{as }n\to\infty.
\end{equation*}
\end{enumerate}

\subsubsection{Proof of the statement}
\label{subsubsec: proof}

We work under the setup and assumptions described in
\Cref{subsubsec: setup,subsubsec: assumptions}. We show that, for every $\alpha\in(0,1)$,
\begin{align*}
\mP(p_{\mathrm{SB},n}\le \alpha)
\;\rightarrow\;
\mathds{1}\!\left\{\alpha \ge \frac{1}{B+1}\right\},
\end{align*}
and
\begin{align*}
\mP(p_{\mathrm{TB},n}\le \alpha)
\;\rightarrow\;
\mP\!\left(
W \le \lfloor (B+1)\alpha\rfloor - 1
\right),
\qquad n\to\infty ,
\end{align*}
where $W$ is supported on $\{0,\ldots,B\}$ with
\begin{align*}
\mP(W=k)=\frac{\binom{2B-k-1}{B-1}}{\binom{2B}{B}},
\qquad k=0,\ldots,B.
\end{align*}
Define the event
\begin{align*}
E_n \coloneqq \Bigl\{T_{0,n}^1 > \max_{1\le b\le B} T_{b,n}^1\Bigr\}.
\end{align*}
By \hyperref[assump:A3]{\textup{(A3)}}, we have $\mP(E_n)\to 1$. Moreover, by \hyperref[assump:A1]{\textup{(A1)}}, it suffices to restrict attention to the single coordinate $k=1$.

\bigskip
\noindent\textbf{Step 1 (SB behavior on $E_n$).}
On the event $E_n$, we have $T_{b,n}^1<T_{0,n}^1$ for all $b\in[B]$. Therefore, by \eqref{eq:app_psb},
\begin{align*}
p_{0,n}^1=\frac{1}{B+1}.
\end{align*}
For each $b\in[B]$, the corresponding SB permutation p-value satisfies
\begin{align*}
p_{b,n}^1
=
\frac{1}{B+1}\left\{
\mathds{1}(T_{0,n}^1\ge T_{b,n}^1)
+\sum_{\ell=1}^{B}\mathds{1}(T_{\ell,n}^1\ge T_{b,n}^1)
\right\}
\;\ge\;\frac{2}{B+1}
\qquad\text{on }E_n,
\end{align*}
since $\mathds{1}(T_{0,n}^1\ge T_{b,n}^1)=1$ and the summation term is at least $1$
(it includes $\ell=b$). Consequently, on $E_n$, the merged SB value $\tilde p_{0,n}^{\mathrm{SB}}$ is the unique minimum among $\{\tilde p_{b,n}^{\mathrm{SB}}\}_{b=0}^{B}$, and hence
\begin{align*}
p_{\mathrm{SB},n}=\frac{1}{B+1}
\qquad\text{on }E_n.
\end{align*}
It follows that
\begin{align*}
\mP(p_{\mathrm{SB},n}\le \alpha)
=
\mP(E_n)\,\mathds{1}\!\left\{\frac{1}{B+1}\le \alpha\right\}
+
\mP(p_{\mathrm{SB},n}\le \alpha,\,E_n^c),
\end{align*}
and therefore
\begin{align*}
\mP(p_{\mathrm{SB},n}\le \alpha)
\;\rightarrow\;
\mathds{1}\!\left\{\alpha \ge \frac{1}{B+1}\right\}.
\end{align*}
\bigskip
\noindent\textbf{Step 2 (TB behavior on $E_n$).}
On the event $E_n$, we have $\tilde p_{0,n}^{\mathrm{TB}}=p_{0,n}^{\mathrm{HO},1}=1/(B+1)$. Define
\begin{align*}
W_n
\coloneqq
\sum_{i=1}^{B}
\mathds{1}\!\left\{
p_{B+i,n}^{\mathrm{HO},1}=\frac{1}{B+1}
\right\}
=
\sum_{i=1}^{B}
\mathds{1}\!\left\{
T_{B+i,n}^1 > \max_{1\le \ell\le B} T_{\ell,n}^1
\right\}.
\end{align*}
Let $\mathcal G_n$ denote the $\sigma$-field generated by the order statistics $(T_{(1),n}^1,\ldots,T_{(2B),n}^1)$ of $\{T_{1,n}^1,\ldots,T_{2B,n}^1\}$. Conditional on $\mathcal G_n$, the ranks of $T_{1,n}^1,\ldots,T_{2B,n}^1$ form a uniformly random permutation of $\{1,\ldots,2B\}$ by \hyperref[assump:A2]{\textup{(A2)}}. Consequently, the set of ranks occupied by the reference batch $\{T_{1,n}^1,\ldots,T_{B,n}^1\}$ is a uniformly random subset of size $B$ from $\{1,\ldots,2B\}$.

Let
\begin{align*}
R_n
\coloneqq
\max\!\left\{
\text{ranks of } T_{1,n}^1,\ldots,T_{B,n}^1
\right\}.
\end{align*}
Then $R_n\in\{B,B+1,\ldots,2B\}$ and, conditional on $\mathcal G_n$,
\begin{align*}
\mP(R_n=r\mid\mathcal G_n)
=
\frac{\binom{r-1}{B-1}}{\binom{2B}{B}},
\qquad r=B,\ldots,2B.
\end{align*}
Moreover, since no reference statistic can exceed its own maximum rank $R_n$, all statistics with rank exceeding $R_n$ must belong to the testing batch. It therefore follows deterministically that
\begin{align*}
W_n = 2B - R_n.
\end{align*}
Hence, conditional on $\mathcal G_n$, the distribution of $W_n$ is given by
\begin{align*}
\mP(W_n = k \mid \mathcal G_n)
=
\frac{\binom{2B-k-1}{B-1}}{\binom{2B}{B}},
\qquad k=0,1,\ldots,B,
\end{align*}
which does not depend on the realized values of the order statistics. On the event $E_n$, the TB aggregation p-value satisfies
\begin{align*}
p_{\mathrm{TB},n}
=
\frac{1+W_n}{B+1}.
\end{align*}
Writing $m\coloneqq \lfloor (B+1)\alpha\rfloor$, we therefore have
\begin{align*}
\{p_{\mathrm{TB},n}\le \alpha\}
=
\{W_n \le m-1\}
\quad\text{on }E_n.
\end{align*}
By the law of total probability,
\begin{align*}
\mP(p_{\mathrm{TB},n}\le \alpha)
=
\mP(p_{\mathrm{TB},n}\le \alpha,\,E_n)
+
\mP(p_{\mathrm{TB},n}\le \alpha,\,E_n^c).
\end{align*}
The second term is bounded by $\mP(E_n^c)=o(1)$. For the first term, note that
\begin{align*}
\mP(W_n\le m-1,\,E_n)
=
\mP(W_n\le m-1)
-
\mP(W_n\le m-1,\,E_n^c),
\end{align*}
and since $\mP(W_n\le m-1,\,E_n^c)\le \mP(E_n^c)=o(1)$, we obtain
\begin{align*}
\mP(W_n\le m-1,\,E_n)
=
\mP(W\le m-1)+o(1),
\end{align*}
where $W$ follows a negative hypergeometric random variable with probability mass function
\begin{align*}
\mP(W=k)=\frac{\binom{2B-k-1}{B-1}}{\binom{2B}{B}},
\qquad k=0,\ldots,B,
\end{align*}
corresponding to a population of size $2B$ with $B$ successes, and a stopping rule of $r=1$ failure.
Combining the above displays yields
\begin{align*}
\mP(p_{\mathrm{TB},n}\le \alpha)
=
\mP\!\left(W\le \lfloor (B+1)\alpha\rfloor-1\right)
+ o(1),
\end{align*}
which establishes the stated limit.

Finally, note that
\begin{align*}
\mP(W=B)=\frac{1}{\binom{2B}{B}}>0.
\end{align*}
Since $\lfloor (B+1)\alpha\rfloor \le B$ for all $\alpha\in(0,1)$, the limiting TB rejection probability is strictly less than one. This completes the proof.

\subsection{Proof of \Cref{cor: SB-consistency}}\label{sec: proof of SB-consistency}

We begin by establishing pointwise consistency. Define
\begin{align*}
\bar p_{0,n} \coloneqq \frac{1}{K_n}\sum_{k=1}^{K_n} p_{0,n}^k
\end{align*}
as the average of the individual permutation p-values. By \Cref{thm:single-batch power}, the SB procedure calibrated with any deterministic merging function dominates the corresponding worst-case calibrated test. In particular, since
\begin{align*}
f(u_1,\ldots,u_{K_n}) \coloneqq \frac{2}{K_n}\sum_{k=1}^{K_n} u_k
\end{align*}
is a valid merging function \citep{rueschendorf1982,meng1994posterior}, implying 
\begin{align*}
\mathbb{P}\big(p_{\mathrm{SB,avg}}>\alpha\big)=\mathbb{P}\big(\bar p_{0,n}\geq \hat u^\alpha_{\mathrm{SB,avg}}\big)\leq\mathbb{P}\big(\bar p_{0,n}>\alpha/2\big),
\end{align*}
it suffices to show that $\bar p_{0,n}\to 0$ in probability under the alternative. By Markov's inequality and the layer-cake representation,
\begin{align*}
\mP\!\left(\bar p_{0,n} \ge \alpha/2\right)
&=
\mP\!\left(\frac{1}{K_n}\sum_{k=1}^{K_n} p_{0,n}^k \ge \alpha/2\right) \\
&\le
\frac{2}{\alpha}\,\mE[p_{0,n}^1]
=
\frac{2}{\alpha}\int_0^1 \mP\!\left(p_{0,n}^1>t\right)\,dt,
\end{align*}
where the last equality follows from the identity
\(
\mE[X]=\int_0^1 \mP(X>t)\,dt
\)
for nonnegative random variables bounded by~$1$. By assumption,
\(
\mP(p_{0,n}^1>t)\to 0
\)
for every \(t\in(0,1)\), and since \(0\le p_{0,n}^1\le 1\), the dominated convergence theorem yields
\begin{align*}
\mP\!\left(\bar p_{0,n} \ge \alpha/2\right)\;\rightarrow\;0.
\end{align*}
This establishes pointwise consistency of the SB average aggregation test.

\medskip

We now turn to uniform consistency. Assume that the individual permutation p-values are uniformly consistent over
\(\mathcal P\), i.e.,
\begin{align*}
\sup_{P\in\mathcal P}\mP_P(p_{0,n}^1>t)\;\rightarrow\;0
\qquad\text{for every }t\in(0,1).
\end{align*}
Fix $\alpha\in(0,1)$. Applying Markov's inequality and the layer-cake representation uniformly over $P\in\mathcal P$ gives
\begin{align*}
\sup_{P\in\mathcal P}\mP_P\!\left(\bar p_{0,n}>\alpha/2\right)
\le
\frac{2}{\alpha}
\int_0^1
\sup_{P\in\mathcal P}\mP_P(p_{0,n}^1>t)\,dt.
\end{align*}
The integrand converges pointwise to zero and is uniformly bounded by~$1$. Therefore, the dominated convergence theorem implies
\begin{align*}
\sup_{P\in\mathcal P}
\mP_P\!\left(\bar p_{0,n}>\alpha/2\right)
\;\rightarrow\;0,
\end{align*}
which establishes uniform consistency over~$\mathcal P$ and completes the proof.

\subsection{Proof of \Cref{Lemma: sup equivalence}} \label{sec: proof of sup equivalence}

For simplicity, write 
\begin{align*}
	A \coloneqq \sup\biggl\{t : \frac{1}{n} \sum_{i=1}^n \mathds{1}(Z_i \leq t) \leq \alpha  \biggr\} \quad \text{and} \quad B \coloneqq \sup\biggl\{t : \frac{1}{n} \sum_{i=1}^n \mathds{1}(Z_i < t) \leq \alpha  \biggr\}.
\end{align*}
Since $F_n^{-}(t) \coloneqq \frac{1}{n} \sum_{i=1}^n \mathds{1}(Z_i < t) \leq F_n(t) \coloneqq \frac{1}{n} \sum_{i=1}^n \mathds{1}(Z_i \leq t)$, it is clear that $\{t: F_n(t) \leq \alpha\} \subseteq \{t: F_n^{-}(t) \leq \alpha\}$ and thereby $A \leq B$. 
	
For the reverse direction, i.e., $A \geq B$, assume that $A < B$ for contradiction. Since $B$ is the supremum, there is a sequence $s_k \nearrow B$ with $F_n^{-}(s_k) \leq \alpha$ for all $k$. Choose $k$ sufficiently large so that $s_k > A$ and $s_k$ is not equal to any of $Z_1,\ldots,Z_n$. Then we have 
\begin{align*}
	F_n(s_k) = F_n^{-}(s_k) \leq \alpha. 
\end{align*}
Therefore $F_n(s_k) \leq \alpha$ contradicts the definition of $A$ since $s_k > A$. This contradiction shows that $A \geq B$ and hence $A = B$.

Finally, let $k = \lfloor n\alpha \rfloor + 1$. For any $t < Z_{(k)}$, at most $k-1$ observations are less than or equal to $t$, so $\frac{1}{n} \sum_{i=1}^n \mathds{1}(Z_i \leq t) \leq \frac{k-1}{n} \leq \alpha$. This implies that $t$ is in the set whose supremum defines $A$. Since this holds for all $t < Z_{(k)}$, we have $A \geq Z_{(k)}$. Conversely, for any $t \geq Z_{(k)}$, at least $k$ observations are less than or equal to $t$, so $\frac{1}{n} \sum_{i=1}^n \mathds{1}(Z_i \leq t) \geq \frac{k}{n} > \alpha$. This implies that $t$ is not in the set whose supremum defines $A$. Therefore, we must have $A \leq Z_{(k)}$. Combining both directions gives $A = Z_{(k)}$.

\subsection{Proof of \Cref{Lemma: generalized superuniform lemma}} \label{sec: proof of generalized superuniform lemma}
Let $\mathcal{B}(\overline{\mathbb R})$ denote the Borel $\sigma$-algebra on $\overline{\mathbb R} =\mathbb R\cup\{-\infty,+\infty\}$. Define a finite measure $\nu$ on $(\overline{\mathbb R},\mathcal{B}(\overline{\mathbb R}))$ as the pushforward of $w\,d\mu$ by $t$:
\begin{align*}
\nu(B)
\coloneqq
\int_\Omega w(\omega)\,\mathds{1}\{t(\omega)\in B\}\,d\mu(\omega),
\qquad B\in\mathcal{B}(\overline{\mathbb R}).
\end{align*}
For any $x\in\overline{\mathbb R}$,
\begin{align*}
\nu([x,\infty])
=
\int_\Omega w(\omega)\,\mathds{1}\{t(\omega)\ge x\}\,d\mu(\omega).
\end{align*}
Hence,
\begin{align*}
\int_\Omega w(\omega)\,
\mathds{1}\!\left\{
\int_\Omega w(\omega')\,\mathds{1}\{t(\omega')\ge t(\omega)\}\,d\mu(\omega')
\le \alpha
\right\}
\,d\mu(\omega)
=
\nu\!\left(\{x\in\overline{\mathbb R}:\nu([x,\infty])\le \alpha\}\right).
\end{align*}
Define $G(x)\coloneqq\nu([x,\infty])$. Then $G$ is non-increasing, since $[y,\infty]\subseteq [x,\infty]$ whenever $x<y$, which implies $G(y)\le G(x)$. Let
\begin{align*}
S\coloneqq\{x\in\overline{\mathbb R}: G(x)\le \alpha\},
\qquad x_\alpha\coloneqq\inf S,
\end{align*}
with the convention $\inf\emptyset=+\infty$. Because $G$ is non-increasing, the set $S$ is upward closed: if $x\in S$ and $y\ge x$, then $G(y)\le G(x)\le\alpha$, hence $y\in S$. Consequently, $S$ is of the form $[x_\alpha,\infty]$ or $(x_\alpha,\infty]$.

\medskip If $S=\emptyset$, then $\nu(S)=0\le\alpha$, and the claim holds. Hence assume $S\neq\emptyset$.

\medskip\noindent
\textbf{Case 1: $x_\alpha\in S$.}
In this case, $S\subseteq [x_\alpha,\infty]$, and therefore
\begin{align*}
\nu(S)
\le
\nu([x_\alpha,\infty])
=
G(x_\alpha)
\le
\alpha.
\end{align*}
\medskip\noindent
\textbf{Case 2: $x_\alpha\notin S$.}
Then $S\subseteq (x_\alpha,\infty]$. By the definition of $x_\alpha=\inf S$ and the assumption $S\neq\emptyset$, there exists a sequence $(x_m)_{m\ge1}\subset S$ such that $x_m\downarrow x_\alpha$. By monotonicity of $G$,
\begin{align*}
\nu([x_m,\infty]) = G(x_m)\le \alpha
\qquad\text{for all } m\ge1.
\end{align*}
Moreover, the sets $[x_m,\infty]$ form an increasing sequence and satisfy
\begin{align*}
\bigcup_{m\ge1}[x_m,\infty] = (x_\alpha,\infty].
\end{align*}
By continuity of measures for increasing sequences,
\begin{align*}
\nu((x_\alpha,\infty])
=
\lim_{m\to\infty}\nu([x_m,\infty])
\le \alpha.
\end{align*}
Since $S\subseteq (x_\alpha,\infty]$, we conclude that $\nu(S)\le\alpha$.

\medskip

In both cases, $\nu(S)\le \alpha$, which proves the claim.

\subsection{Proof of \Cref{prop: u_alpha expression}} \label{sec: proof of u_alpha expression}

Observe that the maximum of events can be reduced to a union of events as 
\begin{align*}
	& \max_{k \in [K]} \Bigl\{ T_{B+b}^k - \mathrm{Quantile}_{1-u}\{T_{0}^k,T_{1}^k,\ldots,T_{B}^k\} \Bigr\} > 0  \\
	\Longleftrightarrow \ & \bigcup_{k \in [K]} \Bigl\{ T_{B+b}^k > \mathrm{Quantile}_{1-u}\{T_{0}^k,T_{1}^k,\ldots,T_{B}^k\} \Bigr\}.
\end{align*}
Let $T_{(j)}^k$ be the $j$-th order statistic of $\{T_0^k,T_1^k,\ldots,T_B^k\}$ and then \Cref{Lemma: quantile 4} yields that
\begin{align*}
	\mathrm{Quantile}_{1-u}\{T_{0}^k,T_{1}^k,\ldots,T_{B}^k\} = T_{(\lceil (1-u)(B+1) \rceil)}^k.
\end{align*}
Therefore, we have
\begin{align*}
	T_{B+b}^k > T_{(\lceil (1-u)(B+1) \rceil)}^k \ \Longleftrightarrow \ \lceil (1-u)(B+1) \rceil \leq \sum_{j=0}^{B} \mathds{1}(T_{B+b}^k > T_{j}^k).
\end{align*}
Since $\lceil x \rceil \leq m$ is equivalent to $x \leq m$ for integers $m$, we have
\begin{align*}
	\lceil (1-u)(B+1) \rceil \leq \sum_{j=0}^{B} \mathds{1}(T_{B+b}^k > T_{j}^k) \ \Longleftrightarrow \ (1-u)(B+1) \leq \sum_{j=0}^{B} \mathds{1}(T_{B+b}^k > T_{j}^k),
\end{align*} 
which can be rearranged to
\begin{align*}
	u_{kb} \coloneqq \frac{B+1 - \sum_{j=0}^{B} \mathds{1}(T_{B+b}^k > T_{j}^k)}{(B+1)} = \frac{1}{(B+1)} \sum_{j=0}^B \mathds{1}(T_{B+b}^k \leq T_j^k) \leq u. 
\end{align*}
Recalling $u_b = \min_{k \in [K]} u_{kb}$, we have 
\begin{align*}
	\max_{k \in [K]} \Bigl\{ T_{B+b}^k - \mathrm{Quantile}_{1-u}\{T_{0}^k,T_{1}^k,\ldots,T_{B}^k\} \Bigr\} > 0  \ \Longleftrightarrow \ u_b \leq u.
\end{align*} 
Therefore, the definition of $\tilde{u}_\alpha$ in \eqref{eq: estimated u alpha} can be rewritten as
\begin{align*}
	\tilde{u}_\alpha & = \sup\biggl\{u > 0 : \frac{1}{B} \sum_{b=1}^B \mathds{1}\bigl(u_b \leq u\bigr) \leq \alpha  \biggr\} = u_{(\lfloor{B\alpha\rfloor} + 1)} 
\end{align*}
where the last equality follows from \Cref{Lemma: sup equivalence}. If $u_b=0$ for all $b\in[B]$, then the supremum is taken over an empty set and $\tilde{u}_\alpha$ is set to zero by convention. It then indeed holds that $\tilde{u}_\alpha=0=u_{(\lfloor{B\alpha\rfloor} + 1)}$. Hence the claim follows.

\subsection{Proof of \Cref{prop: general sequential aggregation}}\label{sec: proof of general sequential aggregation}

We prove the type I error bound $\mP(0\in\bigcup_{j=1}^J A_j)\le\alpha$ by establishing exchangeability of the index $0$ with each $b\in[B]$ via a row-permutation equivariance argument, then derive tightness under almost surely distinct scores.

\medskip
\noindent \textbf{Step 1: Reduction.} The sets $A_1,\ldots,A_J$ are pairwise disjoint, since $A_j\subseteq S_{j-1}=[B]_0\setminus\bigcup_{\ell=1}^{j-1}A_\ell$ for each $j\in[J]$, so
\begin{align*}
\mP\biggl(0\in\bigcup_{j=1}^J A_j\biggr)
=
\sum_{j=1}^J \mP(0\in A_j).
\end{align*}
It therefore suffices to show $\mP(0\in A_j)\le\lfloor(B+1)\alpha_j\rfloor/(B+1)$ for each $j\in[J]$ separately.

\medskip
\noindent \textbf{Step 2: Setup.} For $b\in[B]_0$ and $k\in[K]$, define
\begin{align*}
R_b^k \coloneqq p(T_b^k),
\qquad
R_b \coloneqq (R_b^1,\ldots,R_b^K),
\qquad
R \coloneqq (R_b)_{b\in[B]_0}.
\end{align*}
For any permutation $\pi$ of $[B]_0$, let $\pi R$ denote the matrix obtained by permuting the rows of $R$, i.e., $(\pi R)_b\coloneqq R_{\pi(b)}$. Under the group-invariance hypothesis, the rows $R_0,\ldots,R_B$ are exchangeable, so $\pi R\stackrel{d}{=}R$ for every permutation $\pi$ of $[B]_0$.

\medskip
\noindent \textbf{Step 3: Equivariance of survival sets.} The key property of the stage scores is their row-permutation equivariance: for every permutation $\pi$ of $[B]_0$, every $b\in[B]_0$, and every $\ell\in[J]$,
\begin{align}
\label{eq:gen-score-equivariance}
z_{b,\ell}(\pi R)
=
h_\ell\bigl(((\pi R)_b^k)_{k\in I_\ell}\bigr)
=
h_\ell\bigl((R_{\pi(b)}^k)_{k\in I_\ell}\bigr)
=
z_{\pi(b),\ell}(R),
\end{align}
which follows immediately from the definition \eqref{eq: general seq score} and $(\pi R)_b = R_{\pi(b)}$. We use \eqref{eq:gen-score-equivariance} to establish equivariance of the survival sets $S_\ell$ by induction on $\ell$.

Fix $j\in[J]$. Set $S_0(R)\coloneqq [B]_0$, and define recursively, for $\ell\in[j]$,
\begin{align*}
c_\ell(R)
\coloneqq
\sup\Bigl\{u\in\mathbb{R}:
\frac{1}{B+1}\sum_{b\in S_{\ell-1}(R)}
\mathds{1}\bigl(z_{b,\ell}(R)\le u\bigr)\le \alpha_\ell
\Bigr\},
\end{align*}
and
\begin{align*}
A_\ell(R)
\coloneqq
\bigl\{b\in S_{\ell-1}(R): z_{b,\ell}(R)<c_\ell(R)\bigr\},
\qquad
S_\ell(R)\coloneqq S_{\ell-1}(R)\setminus A_\ell(R).
\end{align*}

We establish by induction on $\ell$ that
\begin{equation}\label{eq:seq-gen-equivariance}
S_\ell(\pi R)=\pi^{-1}\!\bigl(S_\ell(R)\bigr)
\end{equation}
for every permutation $\pi$ of $[B]_0$ and every $\ell\in\{0\}\cup[j]$.
The base case $\ell=0$ holds trivially since $S_0(\pi R)=S_0(R)=[B]_0$.

For the inductive step, assume \eqref{eq:seq-gen-equivariance} holds for $\ell-1$, where $\ell\in[j]$. By \eqref{eq:gen-score-equivariance}, $z_{b,\ell}(\pi R)=z_{\pi(b),\ell}(R)$ for all $b\in[B]_0$. Using the inductive hypothesis $S_{\ell-1}(\pi R)=\pi^{-1}(S_{\ell-1}(R))$, we obtain, for every $u\in\mathbb{R}$,
\begin{align*}
\sum_{b\in S_{\ell-1}(\pi R)}
\mathds{1}\bigl(z_{b,\ell}(\pi R)\le u\bigr)
&=
\sum_{b\in \pi^{-1}(S_{\ell-1}(R))}
\mathds{1}\bigl(z_{\pi(b),\ell}(R)\le u\bigr) \\
&=
\sum_{b\in S_{\ell-1}(R)}
\mathds{1}\bigl(z_{b,\ell}(R)\le u\bigr),
\end{align*}
where the last equality uses that $b\mapsto\pi(b)$ is a bijection from $\pi^{-1}(S_{\ell-1}(R))$ onto $S_{\ell-1}(R)$. Hence $c_\ell(\pi R)=c_\ell(R)$. Consequently,
\begin{align*}
b\in A_\ell(\pi R)
&\iff
b\in S_{\ell-1}(\pi R)
\ \text{and}\
z_{b,\ell}(\pi R)<c_\ell(\pi R) \\
&\iff
\pi(b)\in S_{\ell-1}(R)
\ \text{and}\
z_{\pi(b),\ell}(R)<c_\ell(R) \\
&\iff
\pi(b)\in A_\ell(R) \\
&\iff
b\in \pi^{-1}\bigl(A_\ell(R)\bigr),
\end{align*}
so $A_\ell(\pi R)=\pi^{-1}(A_\ell(R))$, and therefore
\begin{align*}
S_\ell(\pi R)
=
S_{\ell-1}(\pi R)\setminus A_\ell(\pi R)
=
\pi^{-1}\bigl(S_{\ell-1}(R)\bigr)\setminus\pi^{-1}\bigl(A_\ell(R)\bigr)
=
\pi^{-1}\bigl(S_\ell(R)\bigr),
\end{align*}
which completes the induction.

\medskip
\noindent \textbf{Step 4: Type I error bound.} Fix $b\in[B]_0$, and let $\pi$ be any permutation of $[B]_0$ with $\pi(0)=b$. Since $\pi(0)=b$ and the induction gives $A_j(\pi R)=\pi^{-1}(A_j(R))$,
\begin{align*}
b\in A_j(R)
\iff
0\in \pi^{-1}\bigl(A_j(R)\bigr)
\iff
0\in A_j(\pi R).
\end{align*}
Since $\pi R\stackrel{d}{=}R$, we have $\mP(b\in A_j)=\mP(0\in A_j)$ for all $b\in[B]_0$, and therefore
\begin{align*}
\mP(0\in A_j)
=
\frac{1}{B+1}\sum_{b=0}^B \mP(b\in A_j)
=
\mE\left[\frac{|A_j|}{B+1}\right].
\end{align*}
Since $z_{b,j}$ takes finitely many values as $b$ ranges over $S_{j-1}$, there exists $\epsilon>0$ such that $(c_j-\epsilon,\,c_j)$ contains no score $z_{b,j}$ with $b\in S_{j-1}$. Hence
\begin{align*}
\frac{|A_j|}{B+1}
=
\frac{1}{B+1}\sum_{b\in S_{j-1}}
\mathds{1}(z_{b,j}<c_j)
=
\frac{1}{B+1}\sum_{b\in S_{j-1}}
\mathds{1}(z_{b,j}\le c_j-\epsilon)
\le \alpha_j
\qquad\text{almost surely,}
\end{align*}
Since $|A_j|$ is integer-valued, the bound $|A_j|/(B+1)\le\alpha_j$ a.s.\ implies $\mP(0\in A_j)\le\lfloor(B+1)\alpha_j\rfloor/(B+1)$. Summing over $j=1,\ldots,J$ and applying subadditivity of the floor function gives
\begin{align}
\label{eq:seq-floor-bound}
\mP\biggl(0\in \bigcup_{j=1}^J A_j\biggr)
&=
\sum_{j=1}^J \mP(0\in A_j) \notag\\
&\le
\frac{1}{B+1}\sum_{j=1}^J \bigl\lfloor(B+1)\alpha_j\bigr\rfloor \notag\\
&\le
\frac{\bigl\lfloor(B+1)\sum_{j=1}^J\alpha_j\bigr\rfloor}{B+1}
\le
\frac{\lfloor(B+1)\alpha\rfloor}{B+1}
\le \alpha.
\end{align}

\noindent \textbf{Step 5: Tightness under almost surely distinct scores.} Suppose now that for every $j\in[J]$, the stage-$j$ scores
$\{z_{b,j}: b\in S_{j-1}\}$ are almost surely distinct. We claim that
\begin{align*}
\mP\biggl(0\in \bigcup_{j=1}^J A_j\biggr)
=
\frac{1}{B+1}\sum_{j=1}^J \lfloor (B+1)\alpha_j\rfloor.
\end{align*}
Set $q_j \coloneqq \lfloor (B+1)\alpha_j\rfloor$ for $j\in[J]$. It suffices to show
$|A_j|=q_j$ almost surely for each $j\in[J]$. Indeed, if this holds, then by the
exchangeability argument above,
\begin{align*}
\mP(0\in A_j)=\frac{\mE|A_j|}{B+1}=\frac{q_j}{B+1},
\qquad j\in[J],
\end{align*}
and since $A_1,\ldots,A_J$ are disjoint by construction,
\begin{align*}
\mP\biggl(0\in \bigcup_{j=1}^J A_j\biggr)
=
\sum_{j=1}^J \mP(0\in A_j)
=
\frac{1}{B+1}\sum_{j=1}^J q_j.
\end{align*}
We record for later use that
\begin{align}\label{eq:q-sum-bound}
\sum_{\ell=1}^J q_\ell
\le
\Bigl\lfloor (B+1)\sum_{\ell=1}^J \alpha_\ell \Bigr\rfloor
\le
\lfloor (B+1)\alpha\rfloor
\le B.
\end{align}

For $j=1$, we have $S_0=[B]_0$, so $|S_0|=B+1\ge q_1+1$. Let
$z_{(1),1}<\cdots<z_{(B+1),1}$ denote the ordered stage-$1$ scores in $S_0$.
Since the scores are almost surely distinct and $q_1=\lfloor (B+1)\alpha_1\rfloor$,
for every $u\in\mathbb R$,
\begin{align*}
\frac{1}{B+1}\#\{b\in S_0:z_{b,1}\le u\}\le \alpha_1
&\quad\Longleftrightarrow\quad
\#\{b\in S_0:z_{b,1}\le u\}\le q_1\\
&\quad\Longleftrightarrow\quad
u<z_{(q_1+1),1}.
\end{align*}
Hence the feasible set in the definition of $c_1$ is exactly
$(-\infty,z_{(q_1+1),1})$, so $c_1=z_{(q_1+1),1}$ and
$A_1=\{b\in S_0:z_{b,1}<z_{(q_1+1),1}\}$. Therefore $|A_1|=q_1$ almost surely.

For the inductive step, let $j\in\{2,\ldots,J\}$ and assume $|A_\ell|=q_\ell$ almost surely for all $\ell<j$. Then
\begin{align*}
|S_{j-1}|
=
B+1-\sum_{\ell=1}^{j-1}|A_\ell|
=
B+1-\sum_{\ell=1}^{j-1}q_\ell
\ge q_j+1,
\end{align*}
by \eqref{eq:q-sum-bound}. Let
$z_{(1),j}<\cdots<z_{(|S_{j-1}|),j}$ denote the ordered stage-$j$ scores in $S_{j-1}$.
Repeating the argument for $j=1$ with $S_{j-1}$ in place of $S_0$,
\begin{align*}
\frac{1}{B+1}\#\{b\in S_{j-1}:z_{b,j}\le u\}\le \alpha_j
\quad\Longleftrightarrow\quad
u<z_{(q_j+1),j}.
\end{align*}
Hence $c_j=z_{(q_j+1),j}$ and
$A_j=\{b\in S_{j-1}:z_{b,j}<z_{(q_j+1),j}\}$, so $|A_j|=q_j$ almost surely.
This completes the induction and therefore the proof of tightness under almost surely distinct scores.

\subsection{Proof of \Cref{lem:rank1_hit_implies_reject}} \label{sec: proof of rank1_hit_implies_reject}
Recall that
\begin{align*}
\ple(f_b^m)
&=\frac{1}{B+1}\sum_{i=0}^B \mathds{1}\{f_i^m \le f_b^m\} \\
&\in\{1/(B+1),2/(B+1),\ldots,1\},
\qquad b\in[B]_0,\ m\in[M].
\end{align*}
On the event \eqref{eq:rank1_hit_event}, we have $p_0^{\min}=1/(B+1)$. Since $p_b^{\min}$ also takes values in the same grid, it follows that for all $b\in[B]_0$,
\begin{align*}
\{p_b^{\min}\le p_0^{\min}\}
=
\{p_b^{\min}=1/(B+1)\}.
\end{align*}
Therefore,
\begin{align*}
p_{\mathrm{SB},\mathcal{F}}
=
\frac{1}{B+1}\sum_{b=0}^B
\mathds{1}\Bigl(p_b^{\min}=\frac{1}{B+1}\Bigr)
=
\frac{W}{B+1},
\end{align*}
where $W = \bigl|\{b\in[B]_0:\ p_b^{\min}=1/(B+1)\}\bigr|$.

It remains to show that $W \le M$. Fix $m\in[M]$ and define the (possibly empty) set of \emph{strict minimizers}
\begin{align*}
S_m
\coloneqq
\Bigl\{b\in[B]_0:\ f_b^m < f_i^m\ \text{for all } i\neq b\Bigr\}.
\end{align*}
By definition, $S_m$ has cardinality at most one.
Moreover, for the usual permutation p-value based on the ordering of $\{f_i^m\}_{i\in[B]_0}$,
\begin{align*}
\ple(f_b^m)=\frac{1}{B+1}
\quad\Longleftrightarrow\quad
b\in S_m,
\end{align*}
because $\ple(f_b^m)=1/(B+1)$ holds if and only if exactly one value in the multiset $\{f_i^m\}_{i\in[B]_0}$ is $\le f_b^m$, i.e., $f_b^m$ is a strict minimum.
Consequently,
\begin{align*}
\Bigl\{p_b^{\min}=\frac{1}{B+1}\Bigr\}
\ \Longrightarrow\ 
\exists\, m\in[M]\ \text{such that }\ b\in S_m.
\end{align*}
Hence
\begin{align*}
W
\le
\Bigl|\bigcup_{m=1}^M S_m\Bigr|
\le
\sum_{m=1}^M |S_m|
\le M.
\end{align*}
If $M\le \lfloor (B+1)\alpha\rfloor$, then
\begin{align*}
p_{\mathrm{SB},\mathcal{F}}
=
\frac{W}{B+1}
\le
\frac{\lfloor (B+1)\alpha\rfloor}{B+1}
\le
\alpha,
\end{align*}
which concludes the proof.

\subsection{Proof of \Cref{prop:aggregation_strictly_better_than_each_rule}} \label{sec: proof of aggregation strictly better than each rule}
We prove the two claims separately.

\smallskip
\noindent\textbf{Claim 1.} Fix $j\in[J]$. By assumption \eqref{eq:component_rank1_hit}, $\ple\bigl(f_0^{m(j)}\bigr)=\tfrac{1}{B+1}$ almost surely under $P_j$. Since $p_0^{\min}=\min_{m\in[M]}\ple(f_0^m)\le \ple(f_0^{m(j)})$, it follows that $p_0^{\min}=\tfrac{1}{B+1}$ almost surely under $P_j$ as well. Applying \Cref{lem:rank1_hit_implies_reject} with $M\le\lfloor(B+1)\alpha\rfloor$, we conclude that $p_{\mathrm{SB},\mathcal{F}}\le\alpha$ almost surely under $P_j$. Since $j\in[J]$ was arbitrary, the same holds under the mixture $P=\sum_{j}\pi_j P_j$, giving $\mP_{P}(p_{\mathrm{SB},\mathcal{F}}\le\alpha)=1$.

\smallskip
\noindent\textbf{Claim 2.} Fix $m\in[M]$. By assumption, there exist $j\in[J]$ with $\pi_j>0$ and $\varepsilon>0$ such that
\begin{align*}
\mP_{P_j}\!\left(\ple\bigl(f_0^{m}\bigr)\le \alpha\right)\le 1-\varepsilon.
\end{align*}
Therefore,
\begin{align*}
\mP_{P}\!\left(\ple\bigl(f_0^{m}\bigr)\le \alpha\right)
&=
\sum_{\ell=1}^J \pi_\ell\, \mP_{P_\ell}\!\left(\ple\bigl(f_0^{m}\bigr)\le \alpha\right) \\
&\le
\pi_j(1-\varepsilon)+\sum_{\ell\neq j}\pi_\ell
\;=\;1-\pi_j\varepsilon
\;<\;1. \qedhere
\end{align*}

\subsection{Proof of \Cref{lem: SB quasi arithmetic finite population}} \label{sec: proof of SB quasi arithmetic finite population}

Fix $t\in(0,1]$. Let $S\coloneqq\{b\in[B]_0:\bar p_{\phi,b}\le t\}$ and $\ell\coloneqq |S|$. The case $\ell=0$ is trivial, so assume $\ell\ge 1$.

\bigskip 
\noindent \textbf{Upper bound.}
By definition of $\bar p_{\phi,b}$ and monotonicity of $\phi$,
$\bar p_{\phi,b}\le t$ implies $\frac1K\sum_{k=1}^K \phi(p_b^{(k)})\le \phi(t)$. Summing over $b\in S$ yields
\begin{equation}\label{eq:SBphi:upper}
\sum_{b\in S}\sum_{k=1}^K \phi\bigl(p_b^{(k)}\bigr)
\le
\ell K \phi(t).
\end{equation}

\bigskip 
\noindent \textbf{Lower bound.}
Fix $k\in[K]$ and let $p_{(1)}^{(k)}\le\cdots\le p_{(B+1)}^{(k)}$ denote the order statistics of $(p_0^{(k)},\ldots,p_B^{(k)})$. By Lemma~\ref{lemma:super-uniform-property}, for $\alpha=(j-1)/(B+1)$ we have
\begin{align*}
\frac{1}{B+1}\sum_{b=0}^B \mathds{1}\!\left(p_b^{(k)}\le \frac{j-1}{B+1}\right)
\le \frac{j-1}{B+1},
\end{align*}
so at most $j-1$ values are $\le (j-1)/(B+1)$. Hence $p_{(j)}^{(k)}>(j-1)/(B+1)$, and since each $p_b^{(k)}$ lies on the grid $\{1/(B+1),\ldots,1\}$, it follows that
\begin{equation}\label{eq:SBphi:orderstat_lb}
p_{(j)}^{(k)}\ge \frac{j}{B+1},
\qquad j\in[B+1].
\end{equation}
Since $\phi$ is increasing and $|S|=\ell$,
\begin{align*}
\sum_{b\in S}\phi\bigl(p_b^{(k)}\bigr)
\ge
\sum_{j=1}^{\ell}\phi\bigl(p_{(j)}^{(k)}\bigr)
\ge
\sum_{j=1}^{\ell}\phi\left(\frac{j}{B+1}\right),
\end{align*}
where the last inequality uses \eqref{eq:SBphi:orderstat_lb}. Summing over $k=1,\ldots,K$ yields
\begin{equation}\label{eq:SBphi:lower}
\sum_{b\in S}\sum_{k=1}^K \phi\bigl(p_b^{(k)}\bigr)
\ge K\sum_{j=1}^{\ell}\phi\left(\frac{j}{B+1}\right).
\end{equation}

\bigskip 
\noindent \textbf{Combine.}
Combining \eqref{eq:SBphi:upper} and \eqref{eq:SBphi:lower} gives
\begin{align*}
\phi(t)\ge \frac{1}{\ell}\sum_{j=1}^{\ell}\phi\left(\frac{j}{B+1}\right),
\end{align*}
hence $t\ge t_{\phi,B}(\ell)$, by monotonicity of $\phi$ and the definition of $t_{\phi,B}(\ell)$. Therefore $\ell\le \max\{\ell':t_{\phi,B}(\ell')\le t\}$, which proves the deterministic inequality
\begin{align*}
\frac{1}{B+1}\sum_{b=0}^{B}\mathds{1}\!\left(\bar p_{\phi,b}\le t\right)
\le G_{\phi,B}(t).
\end{align*}
For the probabilistic statement, under the group-invariance hypothesis the rows $(p_b^{(1)},\ldots,p_b^{(K)})$, $b\in[B]_0$, are exchangeable. Since $\bar p_{\phi,b}$ is a measurable function of the $b$th row, the vector $(\bar p_{\phi,0},\ldots,\bar p_{\phi,B})$ is exchangeable as well. Thus
\begin{align*}
\mP(\bar p_{\phi,0}\le t)
=
\mE\!\left[\frac{1}{B+1}\sum_{b=0}^{B}\mathds{1}(\bar p_{\phi,b}\le t)\right],
\end{align*}
and taking expectations of the deterministic bound yields $\mP(\bar p_{\phi,0}\le t)\le G_{\phi,B}(t)$.

Finally, we prove the validity. By the deterministic inequality above,
\begin{align*}
\frac{1}{B+1}\sum_{b=0}^B \mathds{1}\!\left(\bar p_{\phi,b}\le \bar p_{\phi,0}\right)
\;\le\; G_{\phi,B}(\bar p_{\phi,0})
\end{align*}
holds almost surely. Under the group-invariance hypothesis, the row-wise statistics $(\bar p_{\phi,0},\ldots,\bar p_{\phi,B})$ are exchangeable, and therefore
\begin{align*}
\mP\!\left(G_{\phi,B}(\bar p_{\phi,0})\le \alpha\right)
\le
\mP\!\left(
\frac{1}{B+1}\sum_{b=0}^B \mathds{1}\!\left(\bar p_{\phi,b}\le \bar p_{\phi,0}\right)
\le \alpha
\right)
\le \alpha,
\end{align*}
where the last inequality follows from \Cref{Lemma: permutation p-value}. This shows that $G_{\phi,B}(\bar p_{\phi,0})$ is super-uniform and hence a valid p-value.

\subsection{Proof of \Cref{lemma: compare cr and crB}} \label{sec: proof of compare cr and crB}

Define
\begin{align*}
g_r(\ell)
\coloneqq
\frac{\ell}{\left(\frac{1}{\ell}\sum_{j=1}^{\ell} j^r\right)^{1/r}},
\qquad \ell\in\mathbb{N},
\end{align*}
so that $c_{r,B}=\max_{1\le \ell\le B+1} g_r(\ell)$.

\medskip 
\noindent \textbf{Step 1: upper bound by $c_r$.}
Since $x\mapsto x^r$ is increasing on $[0,\infty)$,
\begin{align*}
\sum_{j=1}^{\ell} j^r
\;>\;
\int_{0}^{\ell} x^r\,dx
=
\frac{\ell^{r+1}}{r+1},
\qquad \ell\ge 1.
\end{align*}
Therefore
\begin{align*}
\frac{1}{\ell}\sum_{j=1}^{\ell} j^r
>
\frac{\ell^{r}}{r+1}
\quad\Longrightarrow\quad g_r(\ell)
<
(r+1)^{1/r}
=c_r.
\end{align*}
Taking the maximum over $\ell\le B+1$ gives $c_{r,B}<c_r$, proving (i).

\medskip 
\noindent \textbf{Step 2: monotonicity and convergence.}
Monotonicity in $B$ is immediate because the maximization set $\{1,\ldots,B+1\}$ increases with $B$.

To identify the limit, note that
\begin{align*}
\frac{1}{\ell}\sum_{j=1}^{\ell}\left(\frac{j}{\ell}\right)^r
\;\rightarrow\;
\int_0^1 x^r\,dx
=
\frac{1}{r+1}
\qquad (\ell\to\infty),
\end{align*}
i.e., a standard Riemann-sum limit. Hence
\begin{align*}
g_r(\ell)
=
\left(\frac{1}{\ell}\sum_{j=1}^{\ell}\left(\frac{j}{\ell}\right)^r\right)^{-1/r}
\;\rightarrow\;
(r+1)^{1/r}
=c_r.
\end{align*}
Given $\varepsilon>0$, choose $\ell$ large with $g_r(\ell)>c_r-\varepsilon$. For all $B$ such that $B+1\ge \ell$,
$c_{r,B}\ge g_r(\ell)>c_r-\varepsilon$. Together with $c_{r,B}<c_r$, this yields $c_{r,B}\uparrow c_r$.
\end{appendix}
\end{document}